%-----------------------------------------------------------------------------------
%-----------LATEX File for writing Papers-------------------------------------------
%-----------------------------------------------------------------------------------

\documentclass[11pt,a4paper]{article}
\usepackage{amssymb}
\usepackage[hmargin=2.5cm,vmargin=2.5cm]{geometry}
\usepackage{amssymb,latexsym,amsmath,amsfonts,amsthm}
\usepackage{graphicx}
\usepackage{tikz}
\usepackage{pgflibraryshapes}
\usetikzlibrary{arrows,decorations.markings}
\usepackage{epsfig}
\usepackage{comment,verbatim}
\usepackage{hyperref}

\DeclareMathOperator{\diag}{diag} \DeclareMathOperator{\Res}{Res}
\DeclareMathOperator{\Airy}{Airy} \DeclareMathOperator{\erfc}{erfc}
\DeclareMathOperator{\Bessel}{Bessel}

\newcommand{\DKM}{\mathrm{NC}}
\newcommand{\DG}{\mathrm{NC}}
\newcommand{\er}{\mathbb{R}}
\newcommand{\cee}{\mathbb{C}}

\newcommand{\R}{\mathbb{R}}
\newcommand{\C}{\mathbb{C}}
\newcommand{\Z}{\mathbb{Z}}
\newcommand{\lam}{\lambda}
\newcommand{\Lam}{\Lambda}

\newcommand{\til}{\tilde}

\renewcommand{\Re}{\mathrm{Re}\,}
\renewcommand{\Im}{\mathrm{Im}\,}
\newcommand{\Ai}{\mathrm{Ai}}

\newcommand{\ud}{\,\mathrm{d}}
\newcommand{\om}{\omega}
\newcommand{\wtil}{\widetilde}
\newcommand{\what}{\widehat}
\newcommand{\err}{\mathcal{R}}
\def\supp{\mathop{\mathrm{supp}}\nolimits}
\newcommand{\D}{D \left(0, \rho n^{-1/3}\right)}
\newcommand{\Pnot}{P^{(0)}}
\newcommand{\Pinf}{P^{(\infty)}}
\newcommand{\Kcr}{K^{\mathrm cr}}

\newtheorem{theorem}{Theorem}[section]
\newtheorem{lemma}[theorem]{Lemma}
\newtheorem{proposition}[theorem]{Proposition}

\newtheorem{rhp}[theorem]{RH problem}
\newtheorem{bvp}[theorem]{Boundary value problem}

\theoremstyle{definition}
\newtheorem{definition}[theorem]{Definition}

\theoremstyle{remark}

\newtheorem{remark}[theorem]{Remark}

\numberwithin{equation}{section}

\hyphenation{pa-ra-me-tri-za-tion}
\hyphenation{pa-ra-me-trix}

\begin{document}

\title{Universality and critical behavior in the chiral two-matrix
model}

\author{Steven Delvaux\footnotemark[1],\quad Dries Geudens\footnotemark[1],\quad Lun Zhang\footnotemark[1]}
\date{\today}

\maketitle
\renewcommand{\thefootnote}{\fnsymbol{footnote}}
\footnotetext[1]{Department of Mathematics, University of Leuven (KU Leuven),
Celestijnenlaan 200B, B-3001 Leuven, Belgium. E-mail: \{steven.delvaux,
dries.geudens, lun.zhang\}\symbol{'100}wis.kuleuven.be.}

%The
%author is a Postdoctoral Fellow
%of the Fund for Scientific Research - Flanders (Belgium).}

\begin{abstract} We study the chiral two-matrix model
with polynomial potential functions $V$ and $W$, which was
introduced by Akemann, Damgaard, Osborn and Splittorff. We show that
the squared singular values of each of the individual matrices in
this model form a determinantal point process with correlation
kernel determined by a matrix-valued Riemann-Hilbert problem. The
size of the Riemann-Hilbert matrix depends on the degree of the
potential function $W$ (or $V$ respectively). In this way we obtain
the chiral analogue of a result of Kuijlaars-McLaughlin for the
non-chiral two-matrix model. The Gaussian case corresponds to $V,W$
being linear.

For the case where $W(y)=y^2/2+\alpha y$ is quadratic, we derive the large
$n$-asymptotics of the Riemann-Hilbert problem by means of the Deift-Zhou steepest descent
method. This proves universality in this case. An important ingredient in the
analysis is a third-order differential equation.

Finally we show that if also $V(x)=x$ is linear, then a multi-critical limit of
the kernel exists which is described by a $4\times 4$ matrix-valued Riemann-Hilbert problem
associated to the Painlev\'e~II equation $q''(x) = xq(x)+2q^3(x)-\nu-1/2$. In
this way we obtain the chiral analogue of a recent result by Duits and the
second author.

\textbf{Keywords}: chiral random matrix theory, two-matrix model,
Riemann-Hilbert problem, universality, multi-criticality, singular values,
determinantal point process.
\end{abstract}

\setcounter{tocdepth}{3} \tableofcontents

\section{Introduction}
\label{section:introduction}

Chiral random matrix theory provides a useful calculational framework for
solving various physical problems, especially in quantum chromodynamics.
Originally this was developed for the chiral 1-matrix model with a Gaussian
potential \cite{Verb1,Verb3,Verb2}.

Recently a chiral 2-matrix model was introduced by
Akemann-Damgaard-Osborn-Splittorff \cite{ADOS}. The model is a
Hermitian analogue of a non-Hermitian model introduced earlier by
Osborn \cite{Osb}. The motivation for this matrix model comes again
from quantum chromodynamics. It turns out that the mixed correlation
functions for $n\to\infty$ strongly depend on a certain constant in
the model, which has a physical interpretation as the \lq pion decay
constant\rq\ $F_{\pi}$. This suggests a way to compute $F_\pi$ via
numerical lattice simulations. A numerical implementation of this
idea is provided in \cite{DegrandSch} and some recent developments
are, e.g., in \cite{LBHW}. The method for calculating $F_{\pi}$ was
originally developed without random matrices \cite{DHSS,DHSST2}. The
chiral two-matrix model yields an elegant theoretical framework that
brings this method to its full power, see also \cite{ADOS,BasAke}
for the connection between the matrix model and the limiting
physical theory.

%Let us now introduce the chiral 2-matrix model in \cite{ADOS}. We will restrict
%ourselves to the \lq quenched\rq\ setting where there are no \lq flavors\rq.
%The model is then given by the probability distribution
The chiral two-matrix model in \cite{ADOS} in its simplest, i.e. \lq
quenched\rq\ form, is defined by the probability distribution
\[ \frac{1}{\widehat Z_n} \exp\left(-n\textrm{Tr}(\Phi^*\Phi+\Psi^*\Psi)\right)
\ud \Phi\ud\Psi, \] defined on pairs of rectangular complex matrices
$(\Phi,\Psi)$, both of size $n\times (n+\nu)$, where $n$ and $\nu$
are integers, the superscript $^*$ stands for the conjugate
transpose, and $\widehat Z_n$ is a normalization constant. Here
$\mathrm d \Phi$ and $\mathrm d \Psi$ are the flat complex Lebesgue
measures on the entries of $\Phi$ and $\Psi$. After the change of
variables, see \cite{ADOS},
\[ \Phi_1=\Phi+\mu_1 \Psi, \quad \Phi_2=\Phi+\mu_2\Psi, \qquad \mu_1,\mu_2 \in \R, \]
 this becomes
\begin{equation}\label{ADOS:model:0} \frac{1}{Z_n}\exp
\left(-n\textrm{Tr}(c_1\Phi_1^*\Phi_1+c_2\Phi_2^*\Phi_2-\tau(\Phi_1^*\Phi_2+\Phi_2^*\Phi_1))\right) \ud
\Phi_1\ud\Phi_2,
\end{equation}
for certain constants $c_1,c_2>0$ and $\tau \in \R$ and a
normalization constant $Z_n$. Here $c_1,c_2,\tau$ are known
functions of $\mu_1$ and $\mu_2$. We assume without loss of
generality that $\tau>0$. For this measure to be finite we also need
$0<\tau^2<c_1c_2$.

As noted in \cite{ADOS}, one can consider the more general model
\begin{equation}\label{ADOS:model} \frac{1}{ Z}
\exp\left(-n\textrm{Tr}(V(\Phi_1^*\Phi_1)+W(\Phi_2^*\Phi_2)-\tau(\Phi_1^*\Phi_2+\Phi_2^*\Phi_1))\right)\ud
\Phi_1\ud\Phi_2,
\end{equation}
where $V,W$ are polynomials with positive leading coefficient. This
is the setting that we will consider in this paper. We assume
without loss of generality that $\nu$ is a nonnegative integer.

%The model in \cite{ADOS} uses an external vector source that has a physical
%interpretation as an imaginary chemical potential.
%This approach was first developed in the chiral Lagrangian framework
%\cite{DHSS,DHSST2}, but inspired by a construction due to Osborn \cite{Osb} who
%introduced a two-matrix chiral theory do describe baryon chemical potential, a
%chiral two-matrix theory was set up to describe imaginary isospin chemical
%potential \cite{ADOS}. In the large-N scaling limit, it retrieves the field
%theory calculations of \cite{DHSS},\cite{DHSST2}.

%The two matrices in the 2-matrix model correspond to 2 Dirac operators. The
%scaling theory describes the Dirac eigenvalue correlations of QCD in a certain
%scaling regime when subjected to imaginary isospin chemical potential. The
%first non-trivial mixed correlation function has a strong dependence on the
%pion decay constant $F_{\pi}$. This makes it possible to use such correlation
%functions to extract $F_{\pi}$ from lattice gauge theory simulations
%\cite{DHSS,DHSST2}. The correspondence was shown in \cite{ADOS} for special
%cases, and more generally in \cite{BasAke}.

We note that the general model in \cite{ADOS} has extra determinantal factors
in the probability distributions \eqref{ADOS:model:0}--\eqref{ADOS:model}. What
we consider here is the \lq quenched\rq\ case where there are no determinants
of this kind. In fact, the correlation functions of the general model can be
expressed in terms of their quenched variants \cite{ADOS,AV}.\smallskip

The chiral two-matrix model can be analyzed with the help of biorthogonal
polynomials \cite{ADOS,EM}. The monic polynomials are denoted $P_{j,n}(x)$ and
$Q_{k,n}(y)$, of degree $j$ and $k$ respectively, and they satisfy the
biorthogonality relations
\begin{equation}\label{biorth}
\int_0^{\infty}\!\!\int_0^{\infty}  w_n(x,y)P_{j,n}(x)Q_{k,n}(y)\ud x \ud y =
\kappa_k\delta_{j,k},\qquad \kappa_k\neq 0,\qquad j,k=0,1,2,\cdots,
\end{equation}
with respect to the weight function
\begin{equation}\label{Bessel:weight}
w_n(x,y) = (xy)^{\nu/2}I_{\nu}(2\tau n\sqrt{xy})e^{-n(V(x)+W(y))},
\end{equation}
see \cite[Eqs.~(2.15)~and~(2.16)]{ADOS}, with $I_{\nu}$ the modified Bessel
function
\begin{equation}\label{def:Inu}
I_{\nu}(x) = \sum_{k=0}^{\infty} \frac{1}{k!\Gamma(k+\nu+1)} (x/2)^{2k+\nu}.
\end{equation}
The weight function is well-defined for all $\nu>-1$, not
necessarily an integer. Following the approach by
Ercolani-McLaughlin \cite{EMcL} one shows that the biorthogonal
polynomials exist, are unique, and have real and simple zeros. In
the case where $V,W$ are both linear, the biorthogonal polynomials
are explicitly known to be given by Laguerre polynomials
\cite{ADOS}.

Incidentally, there also exists a non-Hermitian version of the
chiral two-matrix model, due to Osborn \cite{Osb}. In that case,
$I_{\nu}(x)$ in \eqref{Bessel:weight} is replaced by the modified
Bessel function of the second kind~$K_{\nu}(x)$, which is defined in
\eqref{def:Knu} below.\smallskip

Our interest lies in the singular values of the matrix $\Phi_1$. They form a
determinantal point process with a correlation kernel called $H_N$ in
\cite{ADOS}. We find it more convenient to study the determinantal point
process of the squared singular values of $\Phi_1$ which is then described by
the correlation kernel
\begin{equation}\label{kernel:Kn}
K_n(x_1,x_2) = \sum_{k=0}^{n-1} \frac{1}{\kappa_k}
\left(\int_0^{\infty} w_n(x_1,y)Q_{k,n}(y) \ud y \right)P_{k,n}(x_2),
\end{equation}
with $\kappa_k$ as in \eqref{biorth}. There are three other kernel
functions in \cite{ADOS} (as usual in the Eynard-Mehta setting
\cite{EM}) but we will be only interested in $K_n(x_1,x_2)$. Indeed,
in this paper we will express $K_n(x_1,x_2)$ in terms of a
Riemann-Hilbert (RH) problem. It is an open problem to apply our
method to the mixed correlation functions, which describe the
interaction between the singular values of $\Phi_1$ and $\Phi_2$.

The kernel \eqref{kernel:Kn} describes a well-defined determinantal
point process on the positive real line for any value of $\nu>-1$.
Our results on this point process will also hold for general
$\nu>-1$ and have a random matrix interpretation in case $\nu$ is
integer, i.e.~then the particles correspond to the squared singular
values of the matrix $\Phi_1$ in \eqref{ADOS:model}. In the
following we will often refer to the process described by the kernel
\eqref{kernel:Kn} as the (squared) singular value process, even if
$\nu$ is not an integer.

Summarizing, biorthogonal polynomials allow to obtain the correlation functions
in the chiral two-matrix model. One can also find analytic expressions for the
distributions of the individual singular values, see \cite{AkeDam,AkeIps}.
\smallskip

In this paper we will show that the biorthogonal polynomials
$P_{j,n}(x)$ and $Q_{k,n}(y)$ can be characterized as \emph{multiple
orthogonal polynomials} \cite{KMcL,VAGK} with respect to a suitable
system of weight functions. Consequently we will express the kernel
$K_n(x,y)$ in terms of a RH problem. This yields the chiral analogue
of a result of Kuijlaars-McLaughlin \cite{KMcL}; see also
\cite{BEH,EMcL,Kap} for some RH problems of a different nature for
the non-chiral two-matrix model.

The paper \cite{ADOS} contains a detailed analysis for the case of linear
potentials $V(x)=c_1x$, $W(y)=c_2y$. In the present paper we focus on the
quadratic case
\[
W(y)=y^2/2+\alpha y, \qquad \alpha \in \mathbb{R}.
\]
The other potential $V(x)$ will be allowed to be an arbitrary
polynomial with positive leading coefficient. Under these
assumptions on $V$ and $W$, we will be able to perform a Deift-Zhou
asymptotic analysis of the RH problem, yielding the asymptotics of
the kernel $K_n$. In this way we get the chiral analogues of the
results by Duits-Kuijlaars-Mo \cite{Duits2,DKM,DKM2,Mo}.

Our results imply universality in the case of a quadratic potential
$W(y)$. Universality results for the chiral 1-matrix model were
obtained in \cite{ADMN,KF,KVL}.

Under the additional assumption that $V(x)=x$, we give an $\alpha\tau$-phase
diagram and discuss the phase transitions. In particular the phase diagram
indicates a multicritical point for the values of parameters $\alpha=-1$ and
$\tau=1$. We describe the local behavior of the singular value process near
this multicritical point by means of a triple scaling limit leading to the
chiral version of the main result in \cite{DG}. The new critical kernel that we
obtain in the scaling limit is expressed in terms of a $4\times 4$ matrix-valued RH
problem that was introduced by one of the authors in \cite{Del3} to
describe a critical phenomenon for non-intersecting squared Bessel paths. Our
kernel will be built from the same RH problem, but in an essentially different
way than in \cite{Del3}.

%are described by the chiral
%versions of the Painlev\'e~II \cite{Claeys2}, Pearcey \cite{KMFW2}
%and tacnode kernels \cite{Del3}. For the non-chiral versions of
%these kernels we refer to the papers \cite{AvM1,BK3,OR,TW4} for
%Pearcey, \cite{AFV2,DKZ,DG} for the tacnode and \cite{GZ} for the
%transitions between these two kernels.

\section{Statement of results}
\label{section:statement:results}

Our first results hold in the general case where $V$ and $W$ are
polynomial potentials with positive leading coefficients. We find it
convenient to rewrite \eqref{Bessel:weight} as
\begin{equation}
\label{Bessel:weight2} w_n(x,y) = f_n(xy)e^{-n(V(x)+W(y))},
\end{equation}
with
\begin{equation}\label{def:f}
f_n(x):=x^{\nu/2}I_{\nu}(2\tau n\sqrt{x}) .
\end{equation}
We also introduce the functions
\begin{equation}\label{hl:def}
h_{l,n}(x) := \int_0^{\infty} y^l f_n(xy)e^{-nW(y)}\ \ud y,
\end{equation}
for $l\in \mathbb{N} \cup\{0\}$ and write
\begin{equation}\label{r:def} r:=\deg W-1.\end{equation}

\subsection{Multiple orthogonality relations}

This is our first main result.

\begin{theorem}\label{theorem:MOP} (Multiple orthogonality
relations)

(a) The biorthogonal polynomial $P_{j,n}(x)$ in \eqref{biorth} is the type~II
multiple orthogonal polynomial with respect to the weight functions
\[  e^{-nV(x)}h_{l,n}(x), \qquad l=0,\ldots,2r. \]
More precisely,
\begin{equation}\label{MOP:1}
\int_0^{\infty} P_{j,n}(x)x^{k}e^{-nV(x)}h_{l,n}(x)\ud x = 0,
\end{equation}
for $l=0,\ldots,2r$ and
$k=0,\ldots,\left\lfloor\frac{j-l-1}{2r+1}\right\rfloor$, where we recall the
notations \eqref{hl:def}--\eqref{r:def} and $\left\lfloor\cdot\right\rfloor$
denotes the integer part of a number.

(b) The polynomial $P_{j,n}(x)$ also satisfies the alternative system of
multiple orthogonality relations
\begin{equation}\label{MOP:2} \int_0^{\infty} P_{j,n}(x)x^{k}w_{l,n}(x)\ud x = 0,
\end{equation}
for $l=0,\ldots,2r$ and
$k=0,\ldots,\left\lfloor\frac{j-l-1}{2r+1}\right\rfloor$, with the weight
functions
\begin{equation}\label{vl:def} w_{l,n}(x) :=
\left\{\begin{array}{ll} e^{-nV(x)}h_{l,n}(x),& l=0,\ldots,r,\\
e^{-nV(x)}x h'_{l-r-1,n}(x),& l=r+1,\ldots,2r. \end{array}\right.
\end{equation}
\end{theorem}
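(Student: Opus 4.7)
My plan is to deduce both systems of multiple orthogonality from the biorthogonality relation \eqref{biorth}, exploiting the identity
\[
\int_0^{\infty} P_{j,n}(x)\, x^k e^{-nV(x)} h_{l,n}(x)\,\ud x \;=\; \int_0^{\infty}\!\!\int_0^{\infty} P_{j,n}(x)\, x^k y^l\, w_n(x,y)\,\ud x\,\ud y,
\]
which is immediate from \eqref{Bessel:weight2} and \eqref{hl:def} by Fubini. Since \eqref{biorth} implies that $P_{j,n}$ is biorthogonal to every polynomial in $y$ alone of degree $\leq j-1$, it will suffice to reduce $x^k y^l$ to such a polynomial, modulo a $\partial_y$-total derivative whose contribution vanishes.

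For part (a), I would first extract two auxiliary identities from the Bessel recurrences $I_{\nu-1}(z)-I_{\nu+1}(z)=(2\nu/z)I_\nu(z)$ and $I_\nu'(z)=I_{\nu-1}(z)-(\nu/z)I_\nu(z)$: setting $g_n(x):=x^{(\nu+1)/2}I_{\nu+1}(2\tau n\sqrt{x})$, a short computation gives
\[
\partial_y g_n(xy) \;=\; \tau n\, x\, f_n(xy), \qquad g_n(xy) \;=\; \frac{1}{\tau n}\Bigl[\partial_y\bigl(y f_n(xy)\bigr) - (\nu+1)f_n(xy)\Bigr].
\]
Combining these via two successive integrations by parts in $y$, one checks that for every polynomial $p(y)$,
\[
x\, p(y)\, f_n(xy)\, e^{-nW(y)} \;=\; \partial_y G(x,y) \;+\; (Tp)(y)\, f_n(xy)\, e^{-nW(y)},
\]
where $G$ vanishes at $y=0$ (since $\nu>-1$ forces $f_n(xy),g_n(xy)\to 0$) and at $y=\infty$ (by decay of $e^{-nW(y)}$), and $Tp$ is a polynomial with $\deg Tp\leq\deg p+2r+1$. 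The degree increase comes from one factor of $y$ together with two applications of $\partial_y e^{-nW(y)}=-nW'(y)e^{-nW(y)}$, each contributing at most $r=\deg W'$.

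Iterating this reduction $k$ times starting from $p(y)=y^l$ produces a polynomial $q$ with $\deg q\leq l+k(2r+1)$ such that $x^ky^l f_n(xy)e^{-nW(y)}$ equals $q(y)f_n(xy)e^{-nW(y)}$ modulo vanishing total $y$-derivatives. Integrating against $P_{j,n}(x)e^{-nV(x)}$ and applying \eqref{biorth} kills the double integral whenever $\deg q\leq j-1$, i.e., $k\leq\lfloor(j-l-1)/(2r+1)\rfloor$, proving (a). For part (b), the cases $l\leq r$ coincide with (a). For $l=r+1,\ldots,2r$, set $m=l-r-1\in\{0,\ldots,r-1\}$; a single integration by parts in $y$ applied to $xh'_{m,n}(x)=\int_0^\infty y^{m+1}e^{-nW(y)}\partial_y f_n(xy)\,\ud y$ gives
\[
xh'_{m,n}(x) \;=\; -(m+1)h_{m,n}(x) + n\sum_{i=0}^r a_i\, h_{m+i+1,n}(x),\qquad W'(y)=\sum_{i=0}^r a_iy^i,
\]
which is a linear combination of $h_{s,n}$ with $s\in\{m,\ldots,m+r+1\}\subseteq\{0,\ldots,2r\}$ and extremal index $m+r+1=l$. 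Applying \eqref{MOP:1} to each term yields \eqref{MOP:2} in the stated range of $(k,l)$.

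The main technical hurdle I anticipate is the degree bookkeeping for $T$: verifying that the bound $\deg p+2r+1$ is correct after two integrations by parts (so that no cancellation reduces the advertised degree growth), and uniformly checking that the boundary contributions vanish, where the assumption $\nu>-1$ is essential.
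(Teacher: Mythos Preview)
Your approach is essentially the paper's: both convert multiplication by $x$ into a degree-$(2r{+}1)$ polynomial shift in $y$ using the second-order Bessel equation for $f_n$ --- your pair of first-order identities for $g_n$ is just an equivalent encoding of that ODE (eliminating $g_n$ recovers $x f_n''-(\nu-1)f_n'=(\tau n)^2 f_n$) --- followed by two integrations by parts in $y$, and then iterate; your treatment of part~(b) via the relation $xh'_{m,n}=-(m{+}1)h_{m,n}+n\sum c_j h_{m+j,n}$ is exactly the paper's Lemma~\ref{lemma:wdiff}. One small caveat on the boundary terms you flag: for $-1<\nu\le 0$ it is $g_n(xy)$ and $yf_n(xy)$ (not $f_n(xy)$ itself, which blows up) that vanish at $y=0$, and the paper correspondingly notes that in this range the integrated terms cancel rather than vanish individually.
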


Theorem~\ref{theorem:MOP} will be proved in Section~\ref{section:RHP}. A
similar result holds of course for the biorthogonal polynomials $Q_{j,n}(x)$.

\subsection{Riemann-Hilbert problem and correlation kernel}

Theorem~\ref{theorem:MOP} asserts that the polynomials $P_{j,n}$ satisfy
multiple orthogonality relations of type~II. Hence, they are characterized by
the following RH problem \cite{VAGK}. In the statement of the theorem we will
use the system of multiple orthogonality relations \eqref{MOP:2} rather than
\eqref{MOP:1}. We write $\mathbb{R}^+:=[0,\infty)$.

\begin{rhp} \label{rhp:Y} We look for a $(2r+2)\times (2r+2)$
matrix-valued function $Y : \mathbb C \setminus \mathbb R^+ \to \mathbb
C^{(2r+2) \times (2r+2)}$ satisfying the following conditions.
\begin{enumerate}
\item[\rm (1)] $Y(z)$ is analytic (entrywise) for $z\in\mathbb C\setminus\mathbb R^+$.
\item[\rm (2)] $Y$ has limiting values $Y_{\pm}$ on $\mathbb R^+$,
where $Y_+$ ($Y_-$) denotes the limiting value from the upper (lower)
half-plane, and these limiting values satisfy the jump relation
\begin{equation}\label{defjumpmatrixY}
Y_{+}(x) = Y_{-}(x)
\begin{pmatrix} 1 & w_{0,n}(x) & \cdots & w_{2r,n}(x)\\
0 & 1 &  \cdots &  0 \\
\vdots & \ddots & \ddots  & \vdots \\
0 & 0  & \cdots & 1
\end{pmatrix},
\end{equation}
for $x \in \mathbb R^+$, where $w_{l,n}(x)$, $l=0,\ldots,2r$, is given in
\eqref{vl:def}.
\item[\rm (3)] As $z\to\infty$, we have that
\begin{equation}\label{asymptoticconditionY}
    Y(z) = \left(I+\mathcal O\left(\frac{1}{z}\right)\right)
    \diag(z^{n},z^{-n_0},z^{-n_1},\ldots,z^{-n_{2r}}),
\end{equation}
where $n_l=\left\lfloor\frac{n+2r-l}{2r+1}\right\rfloor$ for $l=0,\ldots,2r$.
\item[\rm (4)] $Y(z)$ has the following behavior near the origin:
  \begin{equation*}
  \begin{aligned}
  Y(z)\diag(1,h(z)^{-1},\ldots,h(z)^{-1})=\mathcal O(1), \\
  Y^{-T}(z)\diag (h^{-1}(z),1,\ldots,1)=\mathcal O(1),
  \end{aligned}
  \end{equation*}
  as $z\to 0$, $z\in\mathbb{C}\setminus\mathbb{R}^+$, where the $\mathcal O$-condition is taken to be
  entrywise,
  the superscript
  ${}^{-T}$ denotes the inverse transpose,   and
  \begin{equation}\label{ha}
  h(z)=\left\{
         \begin{array}{ll}
           |z|^{\nu}, & \quad  \hbox{if $-1<\nu<0$,} \\
           \log|z|, & \quad\hbox{if $\nu=0$,} \\
           1, & \quad \hbox{if $\nu>0$}.
         \end{array}
       \right.
  \end{equation}
\end{enumerate}
\end{rhp}

This RH problem has a unique solution. It is constructed out of the type II
multiple orthogonal polynomials and their Cauchy transforms. In particular, the
$(1,1)$ entry of $Y$ is $P_{n,n}$. The inverse transpose $Y^{-T}$ is given in
terms of the associated type I multiple orthogonal polynomials. We refer to
\cite{VAGK} for the details.

Condition (4) is needed to ensure that the solution of the RH problem is
unique. The exact form of (4) follows from an analysis of the
formulas for $Y$ and $Y^{-T}$ in terms of the multiple orthogonal
polynomials mentioned above. This can be done as in \cite[Proof of
Theorem 2.4]{KMVV2004}. An essential ingredient is that
$w_{l,n}(z)=\mathcal O(z^\nu)$, as $z \to 0$ for $l=0,\ldots,2r$, which
is immediate from \eqref{def:f} and \eqref{vl:def}.

The correlation kernel $K_n$ in \eqref{kernel:Kn} has the
following representation in terms of $Y$, see \cite{DK1}
\begin{equation} \label{eq: kernel in Y}
 K_n(x,y)= \frac{1}{2\pi i(x-y)}\begin{pmatrix}0 & w_{0,n}(y)& w_{1,n}(y)& \cdots &
w_{2r,n}(y)\end{pmatrix}Y_+^{-1}(y)Y_+(x)\begin{pmatrix}1&0&\cdots&0\end{pmatrix}^T,
\end{equation}
where the superscript ${}^T$ denotes the transpose and both row
vectors have length $2r+2$.

This representation allows us to derive the large $n$ limit of the
correlation kernel in the case of a quadratic potential
$W(y)=y^2/2+\alpha y$ by performing a Deift/Zhou steepest descent
analysis on RH problem~\ref{rhp:Y}. As will be clear from the
analysis, this corresponds to a quartic potential in the non-chiral
two-matrix model studied by Duits-Kuijlaars-Mo in
\cite{Duits2,DKM,Mo}. We will largely follow the line of thought in
these works, however, at several places complications will arise.

The key to the steepest descent analysis is a third order differential equation
for $h_{0,n}(x)$ from \eqref{hl:def}, to be stated in Section~\ref{section:thirdorderdiff}. It plays
the same role as the Pearcey equation in \cite{Duits2,DKM} but is considerably
more complicated. The steepest descent analysis itself will be described in
detail in Section~\ref{section:steepestdescent:quadratic}.

\subsection{Vector equilibrium problem and connection with the non-chiral two-matrix model}
\label{subsection:vector:equilibrium:problem}

From this point we will assume that the second potential is quadratic
\begin{equation} \label{eq:quadratic potential W}
W(y)=\frac{y^2}{2}+\alpha y,  \qquad \alpha \in \R,
\end{equation}
and that $V$ is a polynomial with a positive leading coefficient. As one of our
main results we will characterize the limiting mean squared singular value
distribution of $\Phi_1$ in this case using a vector equilibrium problem. More
precisely, it is the first measure of a triplet of measures minimizing an
energy functional under certain conditions. Before we introduce this vector
equilibrium problem we review a related equilibrium problem that arises in the
study of the non-chiral two-matrix model.

\subsubsection*{Vector equilibrium problem for the non-chiral two-matrix model}

In the recent paper \cite{DKM}, Duits, Kuijlaars, and Mo consider a Hermitian
two-matrix model of the form
\begin{equation}\label{twomatrix:DKM}
\frac{1}{Z_n^\DKM}\exp\left(-n\textrm{Tr}(V^{\DKM}(M_1)+W^{\DKM}(M_2)-\tau
M_1M_2)\right)\ud M_1\ud M_2,
\end{equation}
defined on pairs $(M_1,M_2)$ of $n\times n$ Hermitian matrices.
Here, $Z_n^\DKM$ is a normalization constant, $V^{\DKM}$ is a
general even polynomial with a positive leading coefficient,
$W^{\DKM}$ is a quartic polynomial given by
\begin{equation}\label{potential:DKM}
W^{\DKM}(y)=\frac{1}{4}y^4+\frac{\alpha}{2}y^2,\qquad \alpha\in \mathbb{R},
\end{equation}
and $\tau>0$ is the coupling constant. Throughout this paper, we use the
superscript $^{\DKM}$ to distinguish functions and constants related to the
non-chiral two-matrix model from similar notions in the chiral two-matrix
model.

We define, as usual (see \cite{SaffTotikBook}), the logarithmic energy of a
measure $\nu$  by
\begin{equation}
I(\nu)=\iint \log \frac{1}{|x-y|}\ud\nu(x)\ud\nu(y),
\end{equation}
and the mutual energy of two measures $\nu_1$, $\nu_2$ by
\begin{equation}
I(\nu_1,\nu_2)=\iint \log \frac{1}{|x-y|}\ud\nu_1(x)\ud\nu_2(y).
\end{equation}
The main result of \cite{DKM} is that the limiting mean eigenvalue distribution
of $M_1$ can be described by the first component of a triplet
$(\mu_1^{\DKM},\mu_2^{\DKM},\mu_3^{\DKM})$, which are three measures minimizing
the energy functional
\begin{multline}\label{eq:energy functional DKM}
E^{\DKM}(\nu_1,\nu_2,\nu_3):=\sum_{j=1}^3I(\nu_j)-\sum_{j=1}^{2}I(\nu_j,\nu_{j+1})
  +\int V_1^{\DKM}(x)\ud\nu_1(x)+\int V_3^{\DKM}(x)\ud\nu_3(x),
\end{multline}
with $V_1^{\DKM}$ and $V_3^{\DKM}$ being certain symmetric external
fields on $\R$, and where the minimization is among all positive
measures such that
\begin{itemize}
\item[(a)] $\nu_1$ is a measure on $\mathbb{R}$ with total mass $1$;
\item[(b)] $\nu_2$ is a measure on $i\mathbb{R}$ with total mass
$2/3$ that satisfies the constraint
\begin{equation}\label{upper:constaint:DKM}
\nu_2\leq \sigma_2^{\DKM},
\end{equation}
where $\sigma_2^{\DKM}$ is a certain positive symmetric measure on the imaginary axis;
\item[(c)] $\nu_3$ is a measure on $\mathbb{R}$ with total mass
$1/3$;
\item[(d)] $I(\nu_j)< \infty$ for $j=1,2,3$.
\end{itemize}
This equilibrium problem clearly depends on the input data
$V_1^\DKM$, $V_3^\DKM$, and $\sigma_2^\DKM$. For the exact
definitions of these notions we refer to \cite{DKM}. The unique
solvability of the vector equilibrium problem follows from
\cite{DKM,HK}.

\subsubsection*{Vector equilibrium problem for the chiral two-matrix model}

Given the triplet $(V,W,\tau)$ defining the chiral two-matrix model, we define
\begin{equation} \label{eq: relation chiral/hermitian potentials}
V^\DKM(x)=\frac12 V(x^2) \qquad \text{and} \qquad W^\DKM(y)=\frac12 W(y^2).
\end{equation}
Note that this definition is consistent with \eqref{potential:DKM} and
\eqref{eq:quadratic potential W}. The triplet $(V^\DKM,W^\DKM,\tau)$ then
characterizes an associated non-chiral two-matrix model.

The vector equilibrium problem that is appropriate in our chiral setting is a
\lq squared\rq\ version of the above vector equilibrium problem for the
associated non-chiral two-matrix model. More precisely, we consider the energy
functional
\begin{equation}\label{eq:energy functional}
\begin{aligned}
E(\nu_1, \nu_2,\nu_3) := \sum_{j=1}^3I(\nu_j)-\sum_{j=1}^{2}I(\nu_j,\nu_{j+1})+
\int V_1(x)\ud\nu_1(x)+\int V_3(x)\ud\nu_3(x),
\end{aligned}
\end{equation}
and define the input data
\begin{equation}\label{eq:data 2} V_1(x) := 2V_1^{\DKM}(\sqrt{x}),\quad
V_3(x):=2V_3^{\DKM}(\sqrt{x}),\quad \ud\sigma_2(x) :=
2\ud\sigma_2^{\DKM}(i\sqrt{-x}).
\end{equation}
Then $V_1$ and $V_3$ are external fields on $\R^+$ and $\mathrm
\sigma_2$ is a positive measure on $\R^-$. The vector equilibrium
problem is then to minimize $E(\nu_1, \nu_2,\nu_3)$ among all
positive measures $\nu_1$, $\nu_2$ and $\nu_3$ satisfying the
following conditions.
\begin{enumerate}
\item[(a)] $\nu_1$ is a measure on $\mathbb{R}^{+}$ with total mass $1$.
\item[(b)] $\nu_2$ is a measure on $\mathbb{R}^{-}:=(-\infty,0]$ with total mass $2/3$ such that
\begin{equation}\label{eq:upper constraint}
\nu_2\leq \sigma_2,
\end{equation}
where $\sigma_2$ is defined in \eqref{eq:data 2}.
\item[(c)] $\nu_3$ is a measure on $\mathbb{R}^{+}$ with total mass $1/3$.
\item[(d)] $I(\nu_j)< \infty$ for $j=1,2,3$.
\end{enumerate}

This vector equilibrium problem has a unique solution described in the following theorem. We denote the support of a measure $\nu$ by $S(\nu)$.

\begin{theorem} \label{th: equilibrium problem}
The equilibrium problem \eqref{eq:energy functional}--\eqref{eq:upper constraint} has a unique solution $(\mu_1,\mu_2,\mu_3)$. Moreover, if $0 \notin S(\mu_1)$ or $0 \notin S(\sigma_2-\mu_2)$ then
\[
S(\mu_1)=\bigcup_{j=1}^N [a_j,b_j],
\]
for some $N \in \mathbb N$ and $0 \leq a_1 < b_1 < a_2 < \cdots < a_N < b_N$ and on each of the intervals $[a_j,b_j]$ in $S(\mu_1)$ there is a density
\[
\frac{\mathrm d \mu_1}{\mathrm d x}=\rho_1(x)= \begin{cases}
 \frac{1}{\pi}g_j(x) \sqrt{(b_j-x)(x-a_j)},  &  x \in [a_j,b_j], \quad \text{if }a_j>0, \\
 \frac{1}{\pi}g_1(x) \sqrt{(b_1-x)x^{-1}},   &  x \in [0,b_1], \quad \text{if }a_1=0,
 \end{cases}
\]
where $g_j$ is nonnegative and real analytic on $[a_j,b_j]$.
\end{theorem}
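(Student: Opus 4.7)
The plan is to transfer the chiral vector equilibrium problem \eqref{eq:energy functional}--\eqref{eq:upper constraint} to the non-chiral one of Duits--Kuijlaars--Mo via a squaring map, and then invoke the known results in \cite{DKM}. For a measure $\nu$ on $\R^+$ define the symmetrized measure $\hat\nu$ on $\R$ by $\hat\nu = \frac{1}{2}(T_+)_*\nu + \frac{1}{2}(T_-)_*\nu$ with $T_\pm(x)=\pm\sqrt{x}$; analogously, for $\nu_2$ on $\R^-$ define the symmetric $\hat\nu_2$ on $i\R$ via $y\mapsto \pm i\sqrt{-y}$. Using the factorization $|s^2-t^2|=|s-t||s+t|$ and its imaginary counterpart $|s-it'|^2=s^2+(t')^2$, a short direct computation gives
\begin{equation*}
I(\hat\nu_j) = \frac{1}{2} I(\nu_j),\qquad I(\hat\nu_j,\hat\nu_k) = \frac{1}{2} I(\nu_j,\nu_k),
\end{equation*}
for all relevant pairs. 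Combined with the evenness of $V_j^{\DKM}$ and the definitions \eqref{eq:data 2}, this yields $\int V_j^{\DKM}\ud\hat\nu_j = \frac{1}{2}\int V_j\ud\nu_j$ and therefore the identity
\begin{equation*}
E^{\DKM}(\hat\nu_1,\hat\nu_2,\hat\nu_3) = \frac{1}{2}\, E(\nu_1,\nu_2,\nu_3).
\end{equation*}
The masses transform correctly ($1, 2/3, 1/3$ in either setting), and by construction of $\sigma_2$ in \eqref{eq:data 2} the constraint $\nu_2\le \sigma_2$ on $\R^-$ is equivalent to $\hat\nu_2\le \sigma_2^{\DKM}$ on $i\R$.

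Next I observe that symmetrization is a bijection between admissible triples for the chiral problem and \emph{symmetric} admissible triples for the non-chiral one, with inverse given by pushing forward under squaring. Since $V_1^{\DKM}$, $V_3^{\DKM}$ are even and $\sigma_2^{\DKM}$ is symmetric, the non-chiral equilibrium problem is invariant under the reflection $s\mapsto -s$; hence, by the uniqueness of its minimizer proved in \cite{DKM,HK}, the minimizer $(\mu_1^{\DKM},\mu_2^{\DKM},\mu_3^{\DKM})$ is itself symmetric, and therefore lies in the image of symmetrization. The preimage under symmetrization defines the chiral minimizer $(\mu_1,\mu_2,\mu_3)$, and its uniqueness is inherited from the uniqueness in the non-chiral case.

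For the support structure, the known analysis of $\mu_1^{\DKM}$ in \cite{DKM} (see also \cite{Duits2,Mo}) yields, under the regularity hypothesis corresponding to the condition in the theorem, that $S(\mu_1^{\DKM})$ is a finite union of symmetric intervals $\bigcup_{j=1}^{N}\bigl([-b_j,-a_j]\cup[a_j,b_j]\bigr)$ with $0\le a_1<b_1<\cdots<b_N$, and that on each $[a_j,b_j]$ with $a_j>0$ the density of $\mu_1^{\DKM}$ has the form $\rho_1^{\DKM}(s) = \frac{1}{\pi} G_j(s)\sqrt{(b_j^2-s^2)(s^2-a_j^2)}$ with $G_j$ real analytic and nonnegative, while when $a_1=0$ the central interval $[-b_1,b_1]$ carries density $\frac{1}{\pi} G_1(s)\sqrt{b_1^2-s^2}$ with $G_1$ real analytic, nonnegative, and even. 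Pushing forward via $s\mapsto s^2=x$ gives $\rho_1(x) = \rho_1^{\DKM}(\sqrt{x})/\sqrt{x}$ on $[a_j^2,b_j^2]$, which produces the stated formula with $g_j(x) = G_j(\sqrt{x})/\sqrt{x}$ (real analytic on $[a_j^2,b_j^2]$ because $a_j>0$) in the soft-edge case, and $g_1(x) = G_1(\sqrt{x})$ (real analytic on $[0,b_1^2]$ because $G_1$ is an even analytic function of $s$) in the hard-edge case $a_1=0$.

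The main obstacle I anticipate is verifying that the regularity assumption \lq\lq $0\notin S(\mu_1)$ or $0\notin S(\sigma_2-\mu_2)$\rq\rq\ in the chiral setting corresponds to a hypothesis under which the \cite{DKM} support theorem applies on the non-chiral side; this amounts to ruling out the borderline case in which both $0\in S(\mu_1^{\DKM})$ and the upper constraint is saturated at $0\in i\R$, where the square-root edge structure could degenerate. Once this matching is in place the above pushforward argument delivers the conclusion.
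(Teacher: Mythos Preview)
Your approach is essentially the same as the paper's: reduce the chiral equilibrium problem to the non-chiral one in \cite{DKM} via the squaring/symmetrization correspondence, use the energy identity $I(\hat\nu_1,\hat\nu_2)=\tfrac12 I(\nu_1,\nu_2)$ (the paper cites \cite[Th.~IV.1.10(f)]{SaffTotikBook} for this), and then invoke \cite[Theorem~1.1]{DKM}. The paper's proof is much terser---it simply states the pushforward relation \eqref{equil:problems:relation} and declares the theorem a corollary of \cite{DKM}---whereas you spell out the symmetry argument for the DKM minimizer and the explicit density transformation, which is helpful. Your self-identified ``obstacle'' is not a real issue: the hypothesis $0\notin S(\mu_1)$ or $0\notin S(\sigma_2-\mu_2)$ translates directly to the non-chiral side under the squaring map, since $S(\mu_j)$ and $S(\mu_j^{\DKM})$ are related by $x\leftrightarrow s^2$, so the condition needed for \cite[Theorem~1.1]{DKM} is exactly the image of your hypothesis.
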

\begin{proof}
We claim that
\begin{equation}\label{equil:problems:relation}
(\ud\mu_1(x),\ud\mu_2(x),\ud\mu_3(x)):=(2\ud\mu_1^{\DKM}(\sqrt{x}),2\ud\mu_2^{\DKM}(\sqrt{x}),2\ud\mu_3^{\DKM}(\sqrt{x})),
\end{equation}
where the right hand side denotes the solution to the equilibrium problem
\eqref{eq:energy functional DKM}--\eqref{upper:constaint:DKM} with the same
parameters $\tau,\alpha$. To prove \eqref{equil:problems:relation}, let us
denote for any measure $\mu$ symmetric with respect to the origin, the squared
measure $\widehat \mu$ by the rule $\ud \what \mu(x) = 2\ud\mu(\sqrt{x})$.
Since $\mu$ is symmetric we do not have to precise the branch cut of the square
root.
%$$ \mu^{SQ}(B) = \mu(h^{-1}(B)),
%$$
%for any Borel set $B$, where $h$ denotes the map $h:x\mapsto x^2$.
Thus if $\mu$ has a density  $\ud\mu(x) = \rho(x)\ud x$ then
$$\ud \what \mu(x) := 2\rho(\sqrt{x})\ud \sqrt{x} = \frac{\rho(\sqrt{x})}{\sqrt{x}}\ud x.$$
Now the mutual energies of two measures $\nu_1,\nu_2$ and the corresponding squared
measures $\what \nu_1,\what \nu_2$ are related by
$$ I(\what \nu_1,\what \nu_2) = 2I(\nu_1,\nu_2),
$$
see e.g.\ \cite[Th.~IV.1.10(f)]{SaffTotikBook}. With the help of
this relation, and using \eqref{eq:data 2}, \eqref{potential:DKM}
and \eqref{eq:quadratic potential W}, the claimed relation
\eqref{equil:problems:relation} follows.

Given this, the theorem is a corollary of \cite[Theorem 1.1]{DKM}.
\end{proof}

\begin{remark}
If $a_1=0$ the density of $\mu_1$ blows up like an inverse square root at the origin as is the case for the Marchenko-Pastur density \cite{MaPa}.
\end{remark}

\subsection{Classification into cases} \label{sec: classification}
In this model we distinguish a number of regular and singular cases depending
on the supports and densities of the measures $\mu_1$, $\sigma_2-\mu_2$, and
$\mu_3$. The classification in our chiral setting is inherited from the
classification for the associated non-chiral models given in \cite[Section
1.5]{DKM}, e.g. we say that our chiral model belongs to Case~I if the
associated non-chiral model is in Case~I according to \cite[Section 1.5]{DKM}.
This leads us to distinguish five generic cases and 8 singular cases. The
classification of the five generic phases depends on whether $0$ is in the
support of the measures $\mu_1$, $\sigma_2-\mu_2$, and $\mu_3$ or not. We have
the following cases with regular behavior of the supports at zero:
\begin{description}
\item[Case I] $0\in S(\mu_1)$, $0 \notin S(\sigma_2-\mu_2)$ and $0\in S(\mu_3)$,
\item[Case II] $0\notin S(\mu_1)$, $0 \notin S(\sigma_2-\mu_2)$ and $0\in S(\mu_3)$,
\item[Case III] $0\notin S(\mu_1)$, $0 \in S(\sigma_2-\mu_2)$ and $0\notin S(\mu_3)$,
\item[Case IV] $0\in S(\mu_1)$, $0 \notin S(\sigma_2-\mu_2)$ and $0\notin S(\mu_3)$,
\item[Case V] $0\notin S(\mu_1)$, $0 \notin S(\sigma_2-\mu_2)$ and $0\notin S(\mu_3)$.
\end{description}

We discuss only three of the singular cases. For the remaining five
singular cases we refer to~\cite{DKM}. The critical phenomena
corresponding to these remaining five cases can already be found in
the one-matrix model, so we will not discuss them here.
\begin{description}
\item[Singular supports I] $0\in S(\mu_1)\cap S(\sigma_2-\mu_2)$, $0\notin
S(\mu_3)$.
\item[Singular supports II] $0\notin S(\mu_1)$, $0 \in
S(\sigma_2-\mu_2)\cap S(\mu_3)$.
\item[Singular supports III] $0 \in S(\mu_1)\cap S(\sigma_2-\mu_2)\cap
S(\mu_3)$.
\end{description}
The last one is a multi-singular case.

Except for Case V, all these phenomena already occur in the simplest version of
the model in which the potential $V(x)=x$. For this situation we will establish
a phase diagram in Section~\ref{sec: phase diagram}.

The next theorem will only be proved in the regular cases that we define as follows.
\begin{definition} \label{def: regular cases}
The triplet $(V,W,\tau)$ is regular if the associated triplet
$(V^\DKM,W^\DKM,\tau)$ is regular in the sense of \cite[Definition
1.3]{DKM}, with $ V^\DKM$ and $W^\DKM$ as in \eqref{eq: relation
chiral/hermitian potentials}.
\end{definition}
In particular for a regular triplet it holds that $S(\mu_1) \cap
S(\sigma_2-\mu_2)=\emptyset$ and $S(\mu_3) \cap
S(\sigma_2-\mu_2)=\emptyset$. Moreover the functions $g_j$,
$j=1,\ldots,N$, in Theorem \ref{th: equilibrium problem} are nonzero
for all $x \in [a_j,b_j]$, i.e. the density of the measure $\mu_1$
does not vanish in the interior of the support and behaves like a
square root at the nonzero endpoints of the support. The same holds
for the densities of $\sigma_2-\mu_2$ and $\mu_3$. There is an extra
condition on the variational inequality for $\mu_1$ which guarantees
that no extra interval emerges in the support of $\mu_1$ when
continuously varying the potentials.

\subsection{Limiting mean singular value distribution}

The measure $\mu_1$ is the limiting mean squared singular value
distribution of the matrix $\Phi_1$ in the chiral two-matrix model
as $n \to \infty$. This statement holds for $W$ as in
\eqref{eq:quadratic potential W} and general $V$ and $\tau$, but we
will only prove it for the regular cases, see Definition~\ref{def:
regular cases}

\begin{theorem} \label{th: noncrit limit}
Suppose $(V,W,\tau)$ is regular. Let $\mu_1$ be the first component
of the minimizer $(\mu_1,\mu_2,\mu_3)$ of the vector equilibrium
problem \eqref{eq:energy functional}--\eqref{eq:upper constraint}.
Then $\mu_1$ is the limiting mean particle distribution of the
determinantal point process with correlation kernel
\eqref{kernel:Kn} as $n \to \infty$ with $n \equiv 0 \mod 3$.
\end{theorem}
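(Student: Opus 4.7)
The plan is to carry out a Deift-Zhou steepest descent analysis on RH problem~\ref{rhp:Y}, which in the quadratic case $W(y)=y^2/2+\alpha y$ has $r=1$ and hence size $4\times 4$, and then read off the limiting density from the representation \eqref{eq: kernel in Y}. The assumption $n\equiv 0\bmod 3$ is what makes the three lower exponents in \eqref{asymptoticconditionY} coincide, $n_0=n_1=n_2=n/3$, matching the masses $1,\,2/3,\,1/3$ of the three components of the minimizer $(\mu_1,\mu_2,\mu_3)$ of the vector equilibrium problem.

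First I would introduce $g$-functions associated with $\mu_1,\mu_2,\mu_3$ and perform an initial transformation $Y\mapsto X$ that multiplies $Y$ on the right by a diagonal matrix whose entries involve $e^{\pm n g_j(z)}$ together with the appropriate Lagrange constants. Using the Euler-Lagrange equalities and inequalities for the vector equilibrium problem \eqref{eq:energy functional}--\eqref{eq:upper constraint}, this will normalize $X$ at infinity and turn the jumps into oscillatory ones on the bands $S(\mu_1)$, $S(\sigma_2-\mu_2)$, $S(\mu_3)$ and exponentially small ones elsewhere; the upper constraint $\mu_2\le\sigma_2$ is precisely what makes $\sigma_2$ enter the jump structure on $\mathbb{R}^-$. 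Opening lenses around the three bands then converts the oscillations into triangular jumps with off-axis decay.

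Next I would construct a global parametrix $P^{(\infty)}$ from the Riemann surface attached to the spectral curve of the vector equilibrium problem, together with local parametrices at the endpoints: Airy-type at every soft endpoint, and a hard-edge Bessel-type parametrix of order $\nu$ at the origin whenever $0\in S(\mu_1)$ or $0\in S(\mu_3)$. Condition~(4) of RH problem~\ref{rhp:Y}, combined with $w_{l,n}(z)=\mathcal{O}(z^\nu)$ near $z=0$, pins down the local behaviour of $Y$ at the origin in a way that is matched precisely by the Bessel parametrix of index $\nu$. Regularity in the sense of Definition~\ref{def: regular cases} guarantees that no extra band emerges in $S(\mu_1)$, that the densities have square-root vanishing at the nonzero endpoints, and that the supports of $\mu_1$, $\mu_3$ are disjoint from that of $\sigma_2-\mu_2$, so all parametrices match cleanly. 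One then defines $R:=T\bigl(P^{(\infty)}\bigr)^{-1}$ off the local disks and $R:=T\,P_{\mathrm{loc}}^{-1}$ inside them, verifies the resulting small-norm RH problem has jumps of size $\mathcal{O}(n^{-1})$, and concludes $R=I+\mathcal{O}(n^{-1})$ uniformly. Reversing all transformations in \eqref{eq: kernel in Y} yields $\frac{1}{n}K_n(x,x)\to\rho_1(x)$ on $S(\mu_1)$, which gives the asserted weak convergence of the mean particle distribution.

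The principal obstacle, and the core of the analysis, is the construction of the global parametrix. Concretely, one must use the third-order differential equation for $h_{0,n}(x)$ (the chiral analogue of the Pearcey equation used in \cite{Duits2,DKM}) to identify the four-sheeted Riemann surface that replaces the three-sheeted Pearcey surface and to pin down the sheet structure and branch points through the spectral curve of the vector equilibrium problem. The extra sheet and the monodromy introduced by the Bessel factor $f_n(xy)$ mean that the construction of \cite{DKM} cannot simply be transplanted; in particular, controlling the behaviour of the parametrix across the origin in the five regular cases of Section~\ref{sec: classification}, and verifying that the variational inequality built into Definition~\ref{def: regular cases} prevents a spurious band from opening, is where the genuinely new work lies.
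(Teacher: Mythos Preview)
Your outline has a genuine structural gap in the very first transformation. You propose $Y\mapsto X$ by right-multiplying with a \emph{diagonal} matrix of $e^{\pm n g_j}$'s. That works for a standard multiple orthogonal polynomial RH problem with independent weights, but here the three weights in the jump \eqref{defjumpmatrixY} are $w_{0,n}=e^{-nV}p_0$, $w_{1,n}=e^{-nV}zp_0'$, $w_{2,n}=e^{-nV}(zp_0''+(1-\nu)p_0')$ (see \eqref{eq: weight functions}), so they differ by differentiation, not by multiplicative factors. A diagonal conjugation cannot collapse this first row to a single effective weight; you would be stuck with three oscillatory entries whose phases cannot be simultaneously controlled by three $g$-functions.

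What the paper actually does---and this is where the third-order ODE for $h_{0,n}$ enters, not at the global-parametrix stage---is a \emph{non-diagonal} first transformation using the Wronskian matrix $W_n$ built from three solutions $p_0,p_1,p_2$ of \eqref{thirdorderdiffeq:wtilde}; see \eqref{eq: def X}. Because of the identity \eqref{Wronskian:bis}, the block $D^{-1}W_n^{-T}$ kills the three weights in one stroke and leaves a jump with a single $e^{-nV}$ in the $(1,2)$ slot plus a clean $2\times 2$ block coming from the jumps of $W_n$ in Proposition~\ref{proposition:WronskianRHP}. Only \emph{after} this Wronskian step does the $g$-function normalization (the $\lambda$-functions of Lemma~\ref{lemma: lambda functions}) take place, as a second transformation $X\mapsto U$. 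You have therefore misplaced the role of the ODE: it is the engine of the first transformation, not of the global parametrix. Incidentally, the global parametrix in the paper is not built from the ODE either; it is obtained by a Szeg\H{o}-function construction on the four-sheeted surface together with the square-root pullback \eqref{eq:solution of Mnu} to the non-chiral parametrix of \cite{DKM}.
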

This theorem is equivalent to the statement
\[
\lim_{n \to \infty}\frac{1}{n}K_n(x,x)=\rho_1(x)=\frac{\mathrm d
\mu_1}{\mathrm d x}(x), \qquad x>0, \qquad n \equiv 0 \mod 3.
\]
The proof is given in Section \ref{sec: proof noncrit theorem}. It
is based on a lengthy Deift/Zhou steepest descent analysis carried
out in Section \ref{section:steepestdescent:quadratic}. Without too
much extra effort one could also obtain the universal scaling limits
that are typical for unitary random matrix ensembles. More
precisely, if $\rho_1(x^*)>0$, where $\rho_1$ denotes the density of
the measure $\mu_1$, we retrieve the sine kernel as a scaling limit
\[
\lim_{n \to \infty} K_n\left(x^*+\frac{x}{n\rho_1(x^*)},x^*+\frac{y}{n\rho_1(x^*)}\right)=\frac{\sin \pi (x-y)}{\pi (x-y)}.
\]
If $x^*$ is a nonzero endpoint of $S(\mu_1)$, i.e. $x^* \in \{a_1,b_1,\ldots,a_N,b_N\}\setminus \{0\}$, then
\[
\lim_{n \to \infty} \frac{1}{(cn)^{2/3}}K_n\left(x^*\pm\frac{x}{(cn)^{2/3}},x^*\pm\frac{y}{(cn)^{2/3}}\right)=\frac{\Ai(x)\Ai'(y)-\Ai'(x)\Ai(y)}{x-y},
\]
where we choose the $+$-sign ($-$-sign) if $x^*$ is a left (right)
endpoint of the support of $\mu_1$. Recall that we are in the
regular case so that the density vanishes like a square root at $x^*
> 0$. However, if $x^*=a_1=0$ then the density blows up as an
inverse square root and we would obtain the Bessel kernel of order
$\nu$ as a scaling limit
\[
\lim_{n \to \infty} \frac{1}{(cn)^{2}}K_n\left(\frac{x}{(cn)^{2}},\frac{y}{(cn)^{2}}\right)=\frac{J_\nu(\sqrt x)\sqrt y J'_\nu (\sqrt y)-\sqrt x J_\nu'(\sqrt x)J_\nu(\sqrt y)}{2(x-y)},
\]
for $x,y>0$ and a suitable constant $c>0$. We will not discuss this any further.

\subsection{Phase diagram in the quadratic/linear case} \label{sec: phase diagram}

From here we restrict ourselves to the very specific model of quadratic and linear potentials
\begin{equation} \label{eq: quadratic/linear potentials}
V(x)=x, \qquad W(y)=\frac{y^2}{2}+\alpha y, \quad \alpha \in \R.
\end{equation}
For this concrete model we can construct a phase diagram, i.e. determine which values of $(\alpha,\tau)$ correspond to which case of the classification in Section \ref{sec: classification}. It turns out that the case `Singular supports III' occurs and this will get most of our attention. To establish the phase diagram, we first discuss the behavior of the supports of the measures $\mu_1$, $\sigma_2-\mu_2$, and $\mu_3$
and how they depend on $\alpha$ and $\tau$. It follows from
\eqref{equil:problems:relation} and \cite{DGK} that the supports of the
measures $\mu_1$, $\sigma_2-\mu_2$, and $\mu_3$ have the following form
\begin{align*}
    \supp (\mu_1) & =[\beta_1, \beta_0], \\
    \supp(\sigma_2-\mu_2) & = (-\infty,-\beta_2], \\
    \supp(\mu_3) &= [\beta_3,\infty),
\end{align*}
for some $\beta_0 > \beta_1 \geq 0$, $\beta_2,\beta_3 \geq 0$ that all depend
on the values of $\alpha \in \mathbb R$ and $\tau > 0$. We distinguish a number
of cases, depending on whether $\beta_1$, $\beta_2$, or $\beta_3$ are equal to
zero, or not. At least one of these is zero, and generically, no two
consecutive ones are zero. According to the classification in Section \ref{sec: classification} we have
\begin{description}
\item[Case I:] $\beta_1=0$, $\beta_2>0$, and $\beta_3=0$.
\item[Case II:] $\beta_1>0$, $\beta_2 > 0$, and $\beta_3=0$.
\item[Case III:] $\beta_1>0$, $\beta_2 = 0$, and $ \beta_3>0$.
\item[Case IV:] $\beta_1=0$, $\beta_2>0$, and $\beta_3>0$.
\end{description}
Case V does not occur in this specific model.

The phase diagram in Figure \ref{fig: phase diagram} shows which
values of $(\alpha,\tau)$ correspond to these four cases. The different
cases are separated by the curves given by the equations
\begin{align*}
   \tau = \sqrt{\alpha + 2}, \quad -2 \leq \alpha < \infty, \quad \text{and} \quad
   \tau = \sqrt{- \frac{1}{\alpha}}, \quad -\infty < \alpha < 0. \end{align*}
On these critical curves two of the numbers $\beta_1, \beta_2$ and $\beta_3$ are
equal to zero. For example, on the curve between Case III and Case IV, we have
$\beta_1=\beta_2=0$, while $\beta_3>0$. Finally, note the multi-critical point
$(\alpha,\tau)=(-1,1)$ in the phase diagram, where $\beta_1=\beta_2=\beta_3 =
0$. All four cases come together at this point in the $(\alpha,\tau)$-plane.
The nature of this multi-critical point is discussed in the next section.

We do not study the other types of critical behavior in this paper,
but we shortly comment on them now. The geometry of the supports of
$\mu_1$, $\sigma_2-\mu_2$, and $\mu_3$ suggests that the transition
on the curve $\tau=\sqrt{\alpha+2}$, $\alpha>-1$, is described by
the inhomogeneous Painlev\'e II kernel as in \cite{Claeys2}. The
transition on the curve $\tau=\sqrt{-1/\alpha}$, $\alpha<-1$, on the
other hand is described by the chiral version of the Pearcey kernel
as in \cite{KMFW2}. The transitions on the dashed lines in Figure
\ref{fig: phase diagram} are situated on the non-physical sheets of
the underlying Riemann surface and, hence, do not affect the local
correlations of the singular values of $\Phi_1$.

The curve $\tau=\sqrt{\alpha+2}$, $\alpha>-2$ is also critical for
the singular values of $\Phi_2$. The other curve is not critical in
that context.
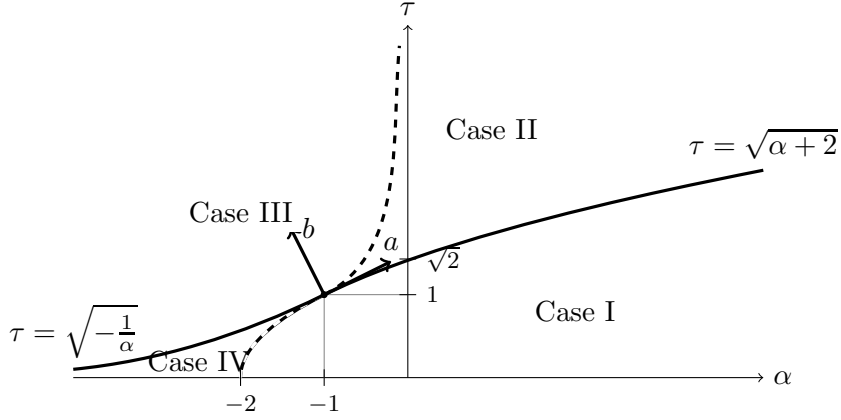
\begin{figure}[t]
\begin{center}
\begin{tikzpicture}[scale=1.1]
\draw[->](0,0)--(0,4.25) node[above]{$\tau$}; \draw[->](-4,0)--(4.25,0)
node[right]{$\alpha$}; \draw[help lines] (-1,0)--(-1,1)--(0,1); \draw[very
thick,rotate around={-90:(-2,0)}] (-2,0) parabola (-4.5,6.25)
node[above]{$\tau=\sqrt{\alpha+2}$}; \draw[very thick,dashed,rotate
around={-90:(-2,0)},color=white] (-2,0) parabola (-3,1) ; \draw[very
thick,dashed] (-1,1)..controls (0,1.5) and (-0.2,3).. (-0.1,4); \draw[very
thick]              (-1,1)..controls (-2,0.5) and (-3,0.2).. (-4,0.1)
node[above]{$\tau=\sqrt{-\frac{1}{\alpha}}$}; \filldraw  (-1,1) circle (1pt);
\draw[very thick,->] (-1,1)--(-0.2,1.4) node[above]{$a$}; \draw[very thick,->]
(-1,1)--(-1.4,1.8) node[right]{$b$}; \draw (0.1,1)
node[font=\footnotesize,right]{$1$}--(-0.1,1); \draw (-1,0.1)--(-1,-0.1)
node[font=\footnotesize,below]{$-1$}; \draw (-2,0.1)--(-2,-0.1)
node[font=\footnotesize,below]{$-2$}; \draw (0.1,1.43) node
[font=\footnotesize,right]{$\sqrt 2$}--(-0.1,1.43); \draw[very thick] (2,0.8)
node[fill=white]{Case I}
                  (-2.5,0.2) node{Case IV}
                  (-2,2) node[fill=white]{Case III}
                  (1,3) node[fill=white]{Case II};
\end{tikzpicture}
\end{center}
\caption{The phase diagram in the $\alpha \tau$-plane: the critical curves $\tau=\sqrt{\alpha+2}$ and $\tau=\sqrt{-\frac{1}{\alpha}}$ separate the four cases. The cases are distinguished by the fact whether $0$ is in the support of the measures $\mu_1$, $\sigma_2-\mu_2$, and $\mu_3$, or not.}
\label{fig: phase diagram}
\end{figure}

Figure \ref{fig: phase diagram cm revisited} shows the phase diagram again together with the shape of the limiting mean squared singular value densities above, below, and on the critical curve.

\begin{figure}[t]
\begin{center}
\begin{tikzpicture}[scale=1]
\draw[white] (-8,-3.5)rectangle (9.5,5);
\begin{scope}[shift={(2.5,0.7)}, scale=0.4]
\draw[->] (0,3)--(0,0)--(6,0);
\draw[very thick] (0.1,3) .. controls (0.3,0) and (5,2) .. (5,0);
\draw (0.1,3) .. controls (0.3,0) and (5,2) .. (5,0);
\draw (0,0) circle (0.5cm) node[below]{$\sim x^{-1/2}$};
\end{scope}
\begin{scope}[shift={(-3.5,2)}, scale=0.4]
\draw[->] (0,3)--(0,0)--(6,0);
\draw[very thick]  (1,0) .. controls (1,2.5) and (2,2) .. (3,2)
                          (3,2) .. controls (3.5,2) and (5,2) .. (5,0);

\draw (1,0) .. controls (1,2.5) and (2,2) .. (3,2)
                          (3,2) .. controls (3.5,2) and (5,2) .. (5,0);

\end{scope}
\begin{scope}[shift={(5,2.3)}, scale=0.3]
\draw[->] (0,3)--(0,0)--(6,0);
\draw[very thick] (0,0) .. controls (0,2.5) and (2,2) .. (3,2)
                              (3,2) .. controls (3.5,2) and (5,2) .. (5,0);
\draw (0,0) circle (0.5cm) node[below]{$\sim x^{1/2}$};
\draw (0,0) .. controls (0,2.5) and (2,2) .. (3,2)
          (3,2) .. controls (3.5,2) and (5,2) .. (5,0);
\end{scope}
\begin{scope}[shift={(-6,-0.3)}, scale=0.3]
\draw[->] (0,3)--(0,0)--(6,0);
\draw[very thick] (0.1,3) .. controls (0.3,0) and (5,2) .. (5,0);
\draw (0.1,3) .. controls (0.3,0) and (5,2) .. (5,0);
\draw (0,0) circle (0.5cm) node[below]{$\sim x^{-1/3}$};
\end{scope}
\begin{scope}[shift={(-2,-2.5)}, scale=0.5]
\draw[->] (0,3)--(0,0)--(6,0);
\draw[very thick] (0.1,3) .. controls (0.3,0) and (5,2) .. (5,0);
\draw(0,0) circle (0.5cm) node[below]{$\sim x^{-1/4}$};
\end{scope}
\draw[->](0,0)--(0,4.25) node[above]{$\tau$};
\draw[->](-4,0)--(4.25,0) node[right]{$\alpha$};
\draw[help lines] (-1,0)--(-1,1)--(0,1);
\draw[very thick,dashed] (-1,1)..controls (-2,0.5) and (-3,0.2).. (-4,0.1);
\begin{scope}
\clip (-1,1) rectangle (7,4);
\draw[very thick,dotted,rotate around={-90:(-2,0)}] (-2,0) parabola (-4.5,6.25);
\end{scope}
\draw (0.1,1) node[font=\footnotesize,right]{$1$}--(-0.1,1);
\draw (-1,0.1)--(-1,-0.1) node[font=\footnotesize,below]{$-1$};
\filldraw  (-1,1) circle (2pt);
\draw[help lines, ->] (-0.8,0.8) arc (20:-20:3.5);
\end{tikzpicture}\end{center}
\vspace{-5mm}\caption[$\alpha\tau$-phase diagram with density plots]{The phase diagram in the $\alpha\tau$-plane with density plots. Below the critical curve the limiting mean squared singular value density of $\Phi_1$ is supported on one interval touching the origin. At the origin the density blows up like an inverse square root. Above this line the singular values cluster on an interval away from zero. On the critical curve the transition between these two regimes occurs. There the density is supported on an interval including zero. On the part of critical curve to the right of the point $(-1,1)$ (dotted) the density vanishes like a square root. To the left of this point (dashed) the exponent at the origin is $-1/3$. At the point $(-1,1)$ we have a transition between these two types of critical behavior. There the density blows up with an exponent $-1/4$.}
\label{fig: phase diagram cm revisited}
\end{figure}
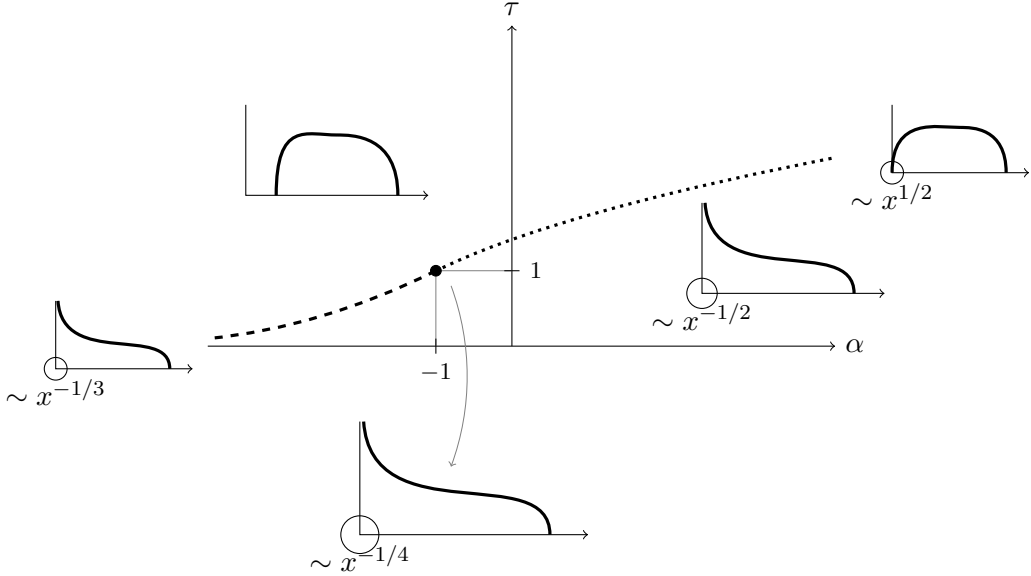

\subsection{A triple scaling limit}

Let us now focus on the squared singular value process of $\Phi_1$ near the critical parameters $\tau=1$ and $\alpha=-1$, by means of a triple scaling limit. To this end, we rescale $\alpha$ and $\tau$ near the critical values in the following way
\begin{equation} \label{eq: scaling alpha tau}
\begin{pmatrix} \alpha \\ \tau \end{pmatrix} = \begin{pmatrix} -1 \\ 1 \end{pmatrix} + a n^{-1/3}\begin{pmatrix} 2 \\ 1 \end{pmatrix} + b n^{-2/3}\begin{pmatrix} -1 \\ 2 \end{pmatrix},
\end{equation}
for  $a,b\in \R$.  We  also scale the space variables with
\begin{align*}
x=u n^{-4/3}, \qquad \text{and} \qquad y=v n^{-4/3}, \qquad u,v>0
\end{align*}
and compute the limiting behavior of $K_n(x,y)$ as $n\to \infty$.

The limiting kernel is characterized by the solution to the
following RH problem introduced in \cite{Del3}. The RH problem has
jumps on a contour in the complex plane consisting of $10$ rays
emanating from the origin. More precisely, we fix two numbers
$\varphi_1,\varphi_2$ such that $0<\varphi_1<\varphi_2<\pi/2$ and
define the half-lines $\Gamma_k$, $k=0,\ldots,9$, by
\begin{equation}\label{def:rays1}
\Gamma_0=\er^+,\quad \Gamma_1=e^{i\varphi_1}\er^+,\quad
\Gamma_2=e^{i\varphi_2}\er^+,\quad \Gamma_3=e^{i(\pi-\varphi_2)}\er^+,\quad
\Gamma_4=e^{i(\pi-\varphi_1)}\er^+,
\end{equation}
and
\begin{equation}\label{def:rays2}
\Gamma_{5+k}=-\Gamma_k,\qquad k=0,\ldots,4.
\end{equation}
All rays $\Gamma_k$, $k=0,\ldots,9$, are oriented towards infinity,
as shown in Figure~\ref{fig:modelRHP}. We also denote by $\Omega_k$
the region in $\cee$ which lies between the rays $\Gamma_k$ and
$\Gamma_{k+1}$, for $k=0,\ldots,9$, where we identify
$\Gamma_{10}:=\Gamma_0$. Now we can state the RH problem.

\begin{rhp}\label{rhp:modelM}
We look for a $4\times 4$ matrix-valued function
$M:\cee\setminus\left(\bigcup_{k=0}^{9}\right)\rightarrow \cee^{4
\times 4}$ (which also depends parametrically on $\til\nu>-1/2$ and
on the complex parameters $r_1,r_2,s,t \in\cee$) satisfying the
following conditions.
\begin{itemize}
\item[(1)] $M(\zeta)$ is analytic (entrywise) for $\zeta\in\cee\setminus\left(\bigcup_{k=0}^{9}
\Gamma_k\right)$.
\item[(2)] For $\zeta\in\Gamma_k$, the limiting values
\[ M_+(\zeta) = \lim_{\substack{z \to \zeta \\ z \textrm{ on $+$-side of }\Gamma_k}} M(z), \qquad
    M_-(\zeta) = \lim_{\substack{z \to \zeta \\ z \textrm{ on $-$-side of }\Gamma_k}} M(z) \]
exist, where the $+$-side and $-$-side of $\Gamma_k$ are the sides which lie on
the left and right of $\Gamma_k$, respectively, when traversing $\Gamma_k$
according to its orientation. These limiting values satisfy the jump relation
\begin{equation*}
M_{+}(\zeta) = M_{-}(\zeta)J_k(\zeta),\qquad k=0,\ldots,9,
\end{equation*}
where the jump matrix $J_k(\zeta)$ for each ray $\Gamma_k$ is shown in
Figure~\ref{fig:modelRHP}.
\item[(3)] As $\zeta\to\infty$ we have
\begin{multline*}
 M(\zeta) =
\left(I+\frac{M_1}{\zeta}+\frac{M_2}{\zeta^2}+O\left(\frac{1}{\zeta^3}\right)\right)
\diag((-\zeta)^{-1/4},\zeta^{-1/4},(-\zeta)^{1/4},\zeta^{1/4})
\\
\times
A\diag\left(e^{-\psi_2(\zeta)+t\zeta},e^{-\psi_1(\zeta)-t\zeta},e^{\psi_2(\zeta)+t\zeta},e^{\psi_1(\zeta)-t\zeta}\right)
\end{multline*}
where the coefficient matrices $M_1,M_2,\ldots$ depend on the parameters
$\til\nu$, $r_1,r_2$, $s$ and $t$, but not on $\zeta$, and where we define
\begin{equation}\label{mixing:matrix}
A:=\frac{1}{\sqrt{2}}\begin{pmatrix} 1 & 0 & -i & 0 \\
0 & 1 & 0 & i \\
-i & 0 & 1 & 0 \\
0 & i & 0 & 1 \\
\end{pmatrix},
\end{equation}
\begin{equation}\label{def:theta1}
\psi_1(\zeta) = \frac{2}{3}r_1\zeta^{3/2}+2s\zeta^{1/2},\qquad \psi_2(\zeta) =
\frac{2}{3}r_2(-\zeta)^{3/2}+2s(-\zeta)^{1/2}.
\end{equation}
%and
%\begin{equation}\label{def:theta2}
%\end{equation}
\item[(4)] $M(\zeta)$ behaves for $\zeta\to 0$ as
\begin{equation*} M(\zeta) = \mathcal O(\zeta^{\til\nu}),\quad M^{-1}(\zeta) =
\mathcal O(\zeta^{\til\nu}),\qquad \hbox{if $\til\nu\leq 0$},
\end{equation*}
and \begin{align*} \left\{\begin{array}{ll}
M(\zeta)\diag(\zeta^{-\til\nu},\zeta^{\til\nu},\zeta^{\til\nu},\zeta^{-\til\nu})=\mathcal
O(1),& \quad
\zeta\in \Omega_1\cup\Omega_8,\\
M(\zeta)\diag(\zeta^{\til\nu},\zeta^{-\til\nu},\zeta^{-\til\nu},\zeta^{\til\nu})=\mathcal
O(1),&\quad
\zeta\in \Omega_3\cup\Omega_6,\\
M(\zeta)=\mathcal O(\zeta^{-\til\nu}),&\quad\zeta\not\in
(\Omega_1\cup\Omega_3\cup\Omega_6\cup\Omega_8),\end{array}\right. \hbox{if
$\til\nu\geq 0$}.
\end{align*}
\end{itemize}
\end{rhp}

\begin{figure}[t]
\vspace{14mm}
\begin{center}
   \setlength{\unitlength}{1truemm}
   \begin{picture}(100,70)(-5,2)
       \put(40,40){\line(1,0){40}}
       \put(40,40){\line(-1,0){40}}
       \put(40,40){\line(2,1){30}}
       \put(40,40){\line(2,-1){30}}
       \put(40,40){\line(-2,1){30}}
       \put(40,40){\line(-2,-1){30}}
       \put(40,40){\line(2,3){18.5}}
       \put(40,40){\line(2,-3){17}}
       \put(40,40){\line(-2,3){18.5}}
       \put(40,40){\line(-2,-3){17}}
       \put(40,40){\thicklines\circle*{1}}
       \put(39.3,36){$0$}
       \put(60,40){\thicklines\vector(1,0){.0001}}
       \put(20,40){\thicklines\vector(-1,0){.0001}}
       \put(60,50){\thicklines\vector(2,1){.0001}}
       \put(60,30){\thicklines\vector(2,-1){.0001}}
       \put(20,50){\thicklines\vector(-2,1){.0001}}
       \put(20,30){\thicklines\vector(-2,-1){.0001}}
       \put(50,55){\thicklines\vector(2,3){.0001}}
       \put(50,25){\thicklines\vector(2,-3){.0001}}
       \put(30,55){\thicklines\vector(-2,3){.0001}}
       \put(30,25){\thicklines\vector(-2,-3){.0001}}

       \put(60,41){$\Gamma_0$}
       \put(60,52.5){$\Gamma_1$}
       \put(47,57){$\Gamma_2$}
       \put(29,57){$\Gamma_3$}
       \put(16,52.5){$\Gamma_4$}
       \put(16,41){$\Gamma_5$}
       \put(16,30.5){$\Gamma_6$}
       \put(29,20){$\Gamma_7$}
       \put(48,20){$\Gamma_8$}
       \put(60,30){$\Gamma_{9}$}

       \put(65,45){$\small{\Omega_0}$}
       \put(58,58){$\small{\Omega_1}$}
       \put(37,62){$\small{\Omega_2}$}
       \put(17.5,58){$\small{\Omega_3}$}
       \put(10,46){$\small{\Omega_4}$}
       \put(10,34){$\small{\Omega_5}$}
       \put(17.5,22){$\small{\Omega_6}$}
       \put(38,17){$\small{\Omega_7}$}
       \put(58,22){$\small{\Omega_8}$}
       \put(65,33){$\small{\Omega_{9}}$}

       \put(78.5,40){$\small{\begin{pmatrix}0&0&1&0\\ 0&1&0&0\\ -1&0&0&0\\ 0&0&0&1 \end{pmatrix}}$}
       \put(69,57){$\small{\begin{pmatrix}1&0&0&0\\ 0&1&0&0\\ 1&0&1&0\\ 0&0&0&1 \end{pmatrix}}$}
       \put(45,74){$\small{\begin{pmatrix}1&0&0&0\\ -e^{\til\nu\pi i}&1&0&0\\ 0&0&1&e^{\til\nu\pi i}\\ 0&0&0&1 \end{pmatrix}}$}
       \put(4,74){$\small{\begin{pmatrix}1&e^{-\til\nu\pi i}&0&0\\ 0&1&0&0\\ 0&0&1&0\\ 0&0&-e^{-\til\nu\pi i}&1 \end{pmatrix}}$}
       \put(-16,57){$\small{\begin{pmatrix}1&0&0&0\\ 0&1&0&0\\ 0&0&1&0\\ 0&-1&0&1\end{pmatrix}}$}
       \put(-26,40){$\small{\begin{pmatrix}1&0&0&0\\ 0&0&0&-1\\ 0&0&1&0\\ 0&1&0&0 \end{pmatrix}}$}
       \put(-16,23){$\small{\begin{pmatrix}1&0&0&0\\ 0&1&0&0\\ 0&0&1&0\\ 0&-1&0&1 \end{pmatrix}}$}
       \put(4,6){$\small{\begin{pmatrix}1&-e^{\til\nu\pi i}&0&0\\ 0&1&0&0\\ 0&0&1&0\\ 0&0&e^{\til\nu\pi i}&1 \end{pmatrix}}$}
       \put(45,6){$\small{\begin{pmatrix}1&0&0&0\\ e^{-\til\nu\pi i}&1&0&0\\ 0&0&1&-e^{-\til\nu\pi i}\\ 0&0&0&1 \end{pmatrix}}$}
       \put(69,23){$\small{\begin{pmatrix}1&0&0&0\\ 0&1&0&0\\ 1&0&1&0\\ 0&0&0&1\end{pmatrix}}$}
  \end{picture}
   \vspace{0mm}
   \caption{The figure shows the jump contours $\Gamma_k$ in the complex $\zeta$-plane and the corresponding jump
   matrix
   $J_k$ on $\Gamma_k$, $k=0,\ldots,9$, in the RH problem for  $M = M(\zeta)$. We denote by
   $\Omega_k$ the region between
    the rays $\Gamma_k$ and $\Gamma_{k+1}$.}
   \label{fig:modelRHP}
\end{center}
\end{figure}
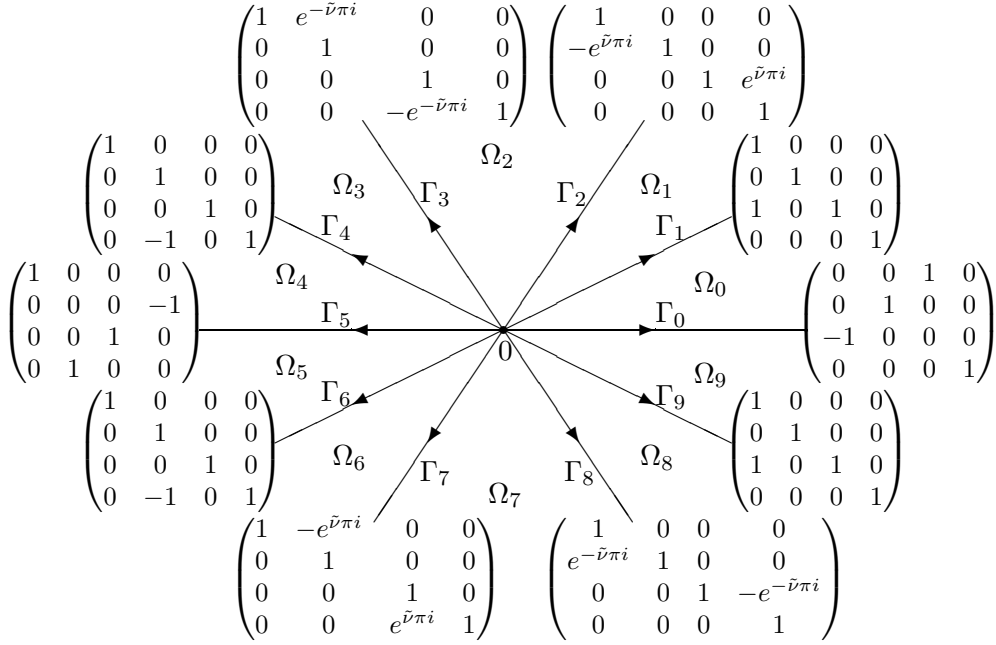

It was proved in \cite{Del3} that this RH problem is solvable for
$r_1=r_2>0$ and $s,t \in \R$. In the same paper this RH problem is
shown to be related to the Hastings-McLeod solution of the
inhomogeneous Painlev\'e II equation $q''(x) = xq(x)+2q^3(x)-\til
\nu-1/2$.

We transform the RH matrix $M(z)$ into a new matrix $\what M(z)$ as follows
\begin{equation}\label{Mhat}
\what M(z) := \diag(z^{1/4},z^{-1/4},z^{1/4},z^{-1/4})\diag\left(\begin{pmatrix} 1 & -1 \\
1 & 1\end{pmatrix},\begin{pmatrix} 1 & 1 \\
-1 & 1\end{pmatrix}\right)M(i z^{1/2}).
\end{equation}
The transformed matrix $\what M(z)$ depends on the same parameters $r_1,r_2,s,t,\til\nu$ as $M(z)$.
The matrix $\what M(z)$ satisfies a RH problem by itself but we will not state
it here.

For $u,v>0$, we now define the critical kernel $\Kcr(u,v;s,t,\nu)$ by
\begin{multline} \label{eq: critical kernel}
\Kcr(u,v;s,t,\nu)=\frac{1}{2\pi i (u-v)}\begin{pmatrix} 0 & 0 & -ie^{-\nu \pi
i} & 1 \end{pmatrix}
\what M\left(v;s,t, \til \nu \right)^{-1}\\ \times \what M\left(u;s,t,\til \nu \right)
 \begin{pmatrix} 0 & 0 & -ie^{\nu \pi i} & 1
 \end{pmatrix}^T,
\end{multline}
where $\what M(z;s,t, \til \nu)$ is defined by \eqref{Mhat} with $M(\zeta)=M(\zeta;s,t, \til \nu)$ the unique solution to RH problem
\ref{rhp:modelM} with parameters
\begin{equation}\label{eq: parameters}
\begin{cases}
r_1=r_2=2, \\
%s_1=s_2=s, \\
s,t\in \R,\\
\til\nu=\nu+1/2.
\end{cases}
\end{equation}
It can be shown as in \cite[Appendix A]{DG} that this kernel is different from the hard-edge tacnode kernel discovered in \cite{Del3}.

We can then state our final main result.

\begin{theorem} \label{th: triple scaling limit}
Assume $\nu>-1$ and let $K_{n}$ be the kernel in \eqref{kernel:Kn} with $V$ and $W$ as in \eqref{eq:
quadratic/linear potentials}. If $\nu$ is integer this kernel describes the squared singular
values of $M_1$ when averaged over $M_2$. Set
\begin{equation*}
\begin{pmatrix} \alpha \\ \tau \end{pmatrix} = \begin{pmatrix} -1 \\ 1 \end{pmatrix} + a n^{-1/3}\begin{pmatrix} 2 \\ 1 \end{pmatrix} + b n^{-2/3}\begin{pmatrix} -1 \\ 2 \end{pmatrix},
\end{equation*}
for  $a,b\in \R$. Then for $n\to \infty$ and $n\equiv 0 \mod 3$
\begin{equation}\label{kernel:crit:thm}
\lim_{n \to
\infty}\frac{1}{n^{4/3}}K_n\left(\frac{u}{n^{4/3}},\frac{v}{n^{4/3}}\right)
=\frac{v^{\nu/2}}{u^{\nu/2}}\Kcr\left(u,v;\tfrac12
(a^2-5b),2a,\nu\right),
\end{equation}
uniformly for $u,v$ in compact
subsets of $\R^+$.
\end{theorem}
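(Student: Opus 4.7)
The plan is to apply the Deift/Zhou steepest descent method to RH problem \ref{rhp:Y} (which in this case is $4\times 4$, since $r=\deg W-1=1$), carrying out a multi-critical version of the analysis of Section \ref{section:steepestdescent:quadratic} at the scaling \eqref{eq: scaling alpha tau}, and then unravelling the transformations in \eqref{eq: kernel in Y} to obtain the limit of $K_n$. This is the chiral counterpart of the analysis of \cite{DG} for the non-chiral model at its multi-critical parameters, together with the hard-edge modifications already encountered in the non-critical analysis of Section \ref{section:steepestdescent:quadratic}.

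First I would perform the global transformations $Y\mapsto X\mapsto U\mapsto T\mapsto S$ exactly as in Section \ref{section:steepestdescent:quadratic}, built from $g$-functions associated to the triplet $(\mu_1,\mu_2,\mu_3)$ of the vector equilibrium problem at the critical parameters $(\alpha,\tau)=(-1,1)$. At these parameters we are in Singular supports III, so the three supports $S(\mu_1)$, $S(\sigma_2-\mu_2)$, $S(\mu_3)$ all pinch at the origin, and the three standard local parametrices (two Airy and one Bessel) would collide there. Away from the origin the global parametrix $P^{(\infty)}$ built from the spectral curve still works, and the usual Airy parametrices suffice at the nonzero endpoints of $S(\mu_1)$ and $S(\mu_3)$; the only delicate region is a small disk $D(0,\delta)$ where a new parametrix is needed.

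The core step is the construction of a local parametrix on $D(0,\delta)$ out of the matrix $M$ of RH problem \ref{rhp:modelM}. Using \eqref{Mhat} I would set, schematically,
\begin{equation*}
P^{(0)}(z) = E(z)\,\widehat M\!\left(n^{4/3}\phi(z);\, s(a,b),\, t(a,b),\, \nu+\tfrac12\right) \mathcal{F}(z),
\end{equation*}
where $\phi$ is a conformal map with $\phi(z)=\kappa z+O(z^{2})$ (so that $x=u/n^{4/3}$ is mapped to $\kappa u+O(n^{-4/3})$), $\mathcal{F}(z)$ collects the exponential prefactors of $S$, and $E(z)$ is an analytic prefactor chosen to ensure the matching $P^{(0)}(z)=(I+O(n^{-1/3}))P^{(\infty)}(z)$ on $\partial D(0,\delta)$. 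The universal value $r_1=r_2=2$ appears because the leading behavior of the phase functions at the multi-critical point is of cubic type $\tfrac43\zeta^{3/2}$, dictated by the spectral curve; the substitution \eqref{eq: scaling alpha tau} then adds quadratic perturbations in the phase functions of precisely the shape $\psi_1,\psi_2$ of \eqref{def:theta1}, and comparing Taylor coefficients yields $s=\tfrac12(a^2-5b)$ and $t=2a$. The shift $\tilde\nu=\nu+\tfrac12$ reflects the hard-edge Bessel factor $f_n(xy)$ in the weight \eqref{Bessel:weight}, combined with the squaring of variables that relates the chiral model to its non-chiral companion (cf.\ the factor $2$ in \eqref{equil:problems:relation}): in $\zeta=iz^{1/2}$ coordinates, the Bessel index $\nu$ of the chiral problem becomes $\nu+\tfrac12$ for $M$, as is consistent with \eqref{Mhat}.

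The final transformation $S\mapsto R$ produces $R=I+O(n^{-1/3})$ uniformly. Inserting this back in \eqref{eq: kernel in Y}, undoing the lens opening and the $g$-function conjugation, and evaluating at $x=u/n^{4/3}$, $y=v/n^{4/3}$, converts the kernel $K_n(x,y)$ into the right-hand side of \eqref{kernel:crit:thm}; the prefactor $v^{\nu/2}/u^{\nu/2}$ arises from the hard-edge factors $(xy)^{\nu/2}$ that appear when opening lenses at the origin, and the specific row/column vectors in \eqref{eq: critical kernel} are exactly what the reduction of \eqref{eq: kernel in Y} through the transformation \eqref{Mhat} produces. The main technical obstacle I expect is the matching of $P^{(0)}$ with $P^{(\infty)}$ on $\partial D(0,\delta)$: one must expand the asymptotics of $M(\zeta)$ at infinity from item (3) of RH problem \ref{rhp:modelM} and check term by term that it cancels the leading behavior of the $g$-functions at the multi-critical point, which in turn relies on the third-order differential equation for $h_{0,n}$ mentioned in Section \ref{section:thirdorderdiff}; a secondary but non-trivial obstacle is the precise identification of $s$ and $t$, which requires expanding the equilibrium data to second order in the double perturbation $an^{-1/3}+bn^{-2/3}$.
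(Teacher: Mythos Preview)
Your overall strategy is the right one and tracks the paper's approach closely, but there are two genuine gaps where the argument as you describe it would not go through.

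First, you cannot take the $g$-functions at the fixed critical parameters $(\alpha,\tau)=(-1,1)$ and work in a fixed disk $D(0,\delta)$. The paper instead uses \emph{modified} $\lambda$-functions (Lemma~\ref{lemma: modified lambda functions}) that depend on $\alpha,\tau$ and hence on $n$ through the scaling \eqref{eq: scaling alpha tau}, and builds the local parametrix in a \emph{shrinking} disk $D(0,\rho n^{-1/3})$. The reason is that on the lips of the lenses the phase differences satisfy only $\Re(\lambda_j-\lambda_k)\le -dn^{-1/2}$ outside this shrinking disk (Lemma~\ref{lemma: estimate lenses}); with fixed critical $\lambda$-functions and a fixed disk the lens jumps do not decay uniformly up to $\partial D(0,\delta)$. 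Correspondingly, the parameters $r_1,r_2,s,t$ entering the model RH problem are not constants but $z$- and $n$-dependent functions \eqref{eq: def r1 crit}--\eqref{eq: def t crit} that converge to $2,2,\tfrac12(a^2-5b),2a$ under the scaling (Lemma~\ref{lemma: rst}).

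Second---and this is the step where your sketch actually fails---the matching on $\partial D(0,\rho n^{-1/3})$ is \emph{not} of the form $I+O(n^{-1/3})$. The paper obtains (Lemma~\ref{lemma: critical matching})
\[
P^{(0)}(z)\bigl(P^{(\infty)}(z)\bigr)^{-1}=I+Z(z)+O(n^{-1/12}),
\]
where $Z(z)=\widehat P_\pm\,\widetilde N_{1,\pm}\,\widehat Q_\pm/(n^{1/3}z)$ is $O(1)$ on the shrinking circle and does \emph{not} decay. Hence the naive $\widetilde R$-problem has a jump that is not close to $I$. The paper resolves this with an extra transformation $\widetilde R\mapsto R$ that exploits a nilpotent structure: since $\widehat Q_\pm\widehat P_\pm=0$ one has $Z(z_1)Z(z_2)=0$ (Lemma~\ref{lemma:nilpotent}), so multiplying by $I+Z(-z)+Z_0/z$ (with $Z_0=\mathrm{Res}(Z,0)$) exactly kills the $O(1)$ term in the jump. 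Only after this nilpotent cancellation does one obtain $R=I+O(n^{-1/12})$ uniformly, and with it the kernel limit. Your claimed step ``$S\mapsto R$ produces $R=I+O(n^{-1/3})$'' therefore hides the main technical difficulty of the proof.
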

This theorem will be proved in Section \ref{sec: proof of triple
scaling limit}. The prefactor $v^{\nu/2}/u^{\nu/2}$ has no influence
on the singular value correlations. The phase diagram in Figure
\ref{fig: phase diagram} suggests that by taking proper scaling
limits, this critical kernel converges to the inhomogeneous
Painlev\'{e} II kernel or to the kernel in \cite{KMFW2}. The
non-chiral version of this statement has been shown in \cite{DG,GZ}.

\subsection{Organization of the paper}

The rest of this paper is organized as follows. In
Section~\ref{section:RHP} we prove Theorem~\ref{theorem:MOP} on the
multiple orthogonality relations for the biorthogonal polynomials
$P_{j,n}(x)$ and $Q_{k,n}(y)$. In
Section~\ref{section:thirdorderdiff} we obtain and analyze a
third-order differential equation related to the chiral two-matrix
model in the case of quadratic $W(y)=y^2/2+\alpha y$. In
Section~\ref{section:steepestdescent:quadratic} we perform a
steepest descent analysis on RH problem \ref{rhp:Y} for quadratic
$W$ leading to the proof of Theorem \ref{th: noncrit limit}. In
Section~\ref{section:triple:scaling} we adapt the steepest descent
analysis of the preceding section to prove Theorem~\ref{th: triple
scaling limit}. The main difference is in the local parametrix at
the origin which is now built using the solution to RH problem
\ref{rhp:modelM}.

\section{Proof of Theorem~\ref{theorem:MOP}}
\label{section:RHP}

\subsection{Preliminary lemma}
\label{subsection:MOP1}

\begin{lemma}
The function $f_n(x)$ in \eqref{def:f} satisfies the differential
equation
\begin{equation}\label{f:diffeq}
xf_n''(x)-(\nu-1)f_n'(x) = (\tau n)^2  f_n(x).
\end{equation}
\end{lemma}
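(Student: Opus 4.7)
The plan is to reduce the identity to the standard modified Bessel equation satisfied by $I_\nu$, namely
\[
u^2 I_\nu''(u) + u I_\nu'(u) - (u^2+\nu^2) I_\nu(u) = 0,
\]
by a change of variables. I would set $u = 2\tau n \sqrt{x}$ and write $f_n(x) = x^{\nu/2} I_\nu(u)$, viewing $f_n$ as a product of a power of $x$ and the Bessel function composed with a square root. The operator $L = x \frac{d^2}{dx^2} - (\nu-1) \frac{d}{dx}$ is of a very special form: one checks easily that
\[
L\bigl(x^{\nu/2}\bigr) = 0,
\]
which suggests that the $h(u)$ terms without derivatives will combine nicely.

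More concretely, I would apply the chain rule twice and expand $L f_n$ as a linear combination of $I_\nu(u)$, $I_\nu'(u)$, and $I_\nu''(u)$ multiplied by suitable powers of $x$. The $I_\nu$ term will carry a coefficient of the form $-\nu^2/4 \cdot x^{\nu/2-1}$ (from the above annihilation), while $I_\nu''(u)$ will appear with a coefficient $\tau^2 n^2 x^{\nu/2}$ and $I_\nu'(u)$ with a half-integer power coefficient. Substituting $I_\nu''(u) = [(u^2+\nu^2)I_\nu(u) - u I_\nu'(u)]/u^2$ from the Bessel equation, with $u^2 = 4\tau^2 n^2 x$, the $\nu^2$ contributions and the $I_\nu'$ contributions will cancel, leaving precisely $(\tau n)^2 x^{\nu/2} I_\nu(u) = (\tau n)^2 f_n(x)$.

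An alternative, even cleaner route is to use the power series \eqref{def:Inu} directly. Substituting gives
\[
f_n(x) = (\tau n)^\nu \sum_{k=0}^{\infty} \frac{(\tau n)^{2k}}{k!\,\Gamma(k+\nu+1)}\, x^{k+\nu},
\]
a convergent series. Differentiating termwise, one finds that the coefficient of $x^{k+\nu-1}$ in $xf_n''(x) - (\nu-1) f_n'(x)$ equals $(k+\nu)[(k+\nu-1) - (\nu-1)] = k(k+\nu)$ times the corresponding coefficient of $f_n$, and the identity $k(k+\nu)/[k!\,\Gamma(k+\nu+1)] = 1/[(k-1)!\,\Gamma(k+\nu)]$ together with a shift of index $j = k-1$ produces exactly $(\tau n)^2 f_n(x)$ on the right.

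There is no real obstacle here: the result is a termwise verification, and the only thing to be slightly careful about is the bookkeeping of the $\nu^2/4$ cancellation in the chain-rule approach (or equivalently the vanishing of the $k=0$ term in the series approach). I would include the series-based derivation in the main text as it is shortest and makes manifest why the differential operator on the left-hand side is the right one to consider.
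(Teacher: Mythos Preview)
Your proposal is correct and your first approach---reducing to the modified Bessel equation via the substitution $u=2\tau n\sqrt{x}$---is exactly what the paper does (the paper's proof is a single sentence citing the Bessel equation). One small slip: $L(x^{\nu/2})$ is not zero but $-\tfrac{\nu^2}{4}x^{\nu/2-1}$; this is in fact where your $-\nu^2/4$ coefficient on the $I_\nu$ term comes from, not from an ``annihilation''. Your alternative power-series verification is also correct and is a slightly different (more self-contained) route that the paper does not take.
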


\begin{proof}
This is immediate from the definition \eqref{def:f} and the fact that $I_\nu(z)$ satisfies the modified Bessel equation \cite{DLMF}
\begin{equation} \label{eq: modified Bessel equation}
z^2I_\nu''(z)+zI_\nu'(z)-(z^2+\nu^2)I_\nu(z)=0. \qedhere
\end{equation}
\end{proof}

\subsection{Proof of Theorem~\ref{theorem:MOP}(a)}
We claim that for any polynomial $P$ and any nonnegative integer $l$, one has
\begin{multline} \label{eq: claim}
\int_0^\infty  x^k P(x) e^{-nV(x)}\int_0^\infty y^l f_n(xy)e^{-nW(y)}\ud y \ud
x  = \int_0^\infty \int_0^\infty P(x) \pi_{k,l}(y) w_n(x,y) \ud x \ud y,
\end{multline}
where $\pi_{k,l}$ is a polynomial of exact degree $k(2r+1)+l$.

We prove the claim by induction on $k$. The case $k=0$ is trivial.
Let us assume that the statement is valid up to $k$ and compute the
left-hand side of \eqref{eq: claim} for $k+1$
\begin{align*}
& \int_0^\infty x^{k+1} P(x) e^{-nV(x)}\int_0^\infty y^l f_n(xy)e^{-nW(y)} \ud y \ud x \\
& = \frac{1}{(\tau n)^2} \int_0^\infty  x^k P(x) e^{-nV(x)}  \int_0^\infty y^{l+1} x^2 f_n''(xy) e^{-nW(y)} \ud y \ud x \\
& \qquad \qquad-\frac{\nu-1}{(\tau n)^2} \int_0^\infty  x^k P(x) e^{-nV(x)}  \int_0^\infty y^l x f_n'(xy) e^{-nW(y)} \ud y \ud x\\
&  = \frac{1}{(\tau n)^2} \int_0^\infty  x^k P(x) e^{-nV(x)} \int_0^\infty \tilde \pi_{l+1+2r}(y) f_n(xy) e^{-nW(y)} \ud y \ud x \\
& \qquad \qquad +\frac{\nu-1}{(\tau n)^2}  \int_0^\infty  x^k P(x)
e^{-nV(x)} \int_0^\infty  \tilde \pi_{l+r}(y)  f_n(xy) e^{-nW(y)}
\ud y \ud x,
\end{align*}
where $\tilde \pi_i$ denotes a polynomial of degree $i$. We used \eqref{f:diffeq} for the first equality. The second
equality is based on integration by parts. Note that the integrated terms vanish for $\nu > 0$ and cancel each other out for $-1<\nu \leq 0$. Now
expanding
\[
\frac{1}{(\tau n)^2} \tilde \pi_{l+1+2r}(y)+\frac{\nu-1}{(\tau n)^2}\tilde
\pi_{l+r}(y)= \sum_{i=0}^{l+1+2r}a_iy^i,
\]
and applying the induction hypothesis yields
\begin{align*}
& \int_0^\infty x^{k+1} P(x) e^{-nV(x)}\int_0^\infty y^l f_n(xy)e^{-nW(y)} \ud y \ud x    \\
& = \sum_{i=0}^{l+1+2r} a_i \int_0^\infty x^k P(x) e^{-nV(x)} \int_0^\infty y^i f_n(xy)e^{-nW(y)} \ud y \ud x  \\
& = \sum_{i=0}^{l+1+2r} a_i \int_0^\infty \int_0^\infty P(x) \pi_{k,i}(y) w_n(x,y) \ud y \ud x \\
& = \int_0^\infty \int_0^\infty P(x) \pi_{k+1,l}(y) w_n(x,y) \ud y \ud
x.
\end{align*}
Here $\pi_{k+1,l}$ is defined as
\[
\sum_{i=0}^{l+1+2r} a_i \pi_{k,i},
\]
and is, therefore, of degree $(k+1)(2r+1)+l$ as it should be. This
proves the claim.

To show \eqref{MOP:1}, note that the set $A=\{\pi_{k,l} \mid k=0,1,2,3,\ldots,
l=0, \ldots, 2r \}$ spans the space of polynomials. If we take
$P(x)=P_{j,n}(x)$ then the right-hand side of \eqref{eq: claim} vanishes for
any $\pi_{k,l} \in A$ satisfying $k(2r+1)+l<j$.  This leads to vanishing of the
left-hand side of \eqref{eq: claim} for every $k=0, \ldots,
\left\lfloor\frac{j-l-1}{2r+1}\right\rfloor$, $l=0, \ldots 2r$. Hence,
$P_{j,n}$ is the $j$-th multiple orthogonal polynomial satisfying
\eqref{MOP:1}.  This proves Theorem~\ref{theorem:MOP}(a).

\subsection{Proof of Theorem~\ref{theorem:MOP}(b)}
\label{subsection:MOP2}

To prove \eqref{MOP:2}, it is sufficient to establish linear
relations between the functions $h_{j,n}$ and their derivatives.

\begin{lemma}\label{lemma:wdiff} The function $xh_{l,n}'(x)$, see \eqref{hl:def} for $h_{l,n}$, can be written as a linear combination of the functions $h_{l+j,n}(x)$, $j=0,\ldots,r+1$,
\begin{equation}\label{w:diff}
x h_{l,n}'(x) = -(l+1)h_{l,n}(x)+n\sum_{j=1}^{r+1}c_{j}h_{l+j,n}(x),
\end{equation}
for any $l\in\mathbb N\cup\{0\}$, where we write the polynomial $W$
in \eqref{Bessel:weight2} as
\begin{equation}\label{V2:sum}
W(y) = \sum_{j=1}^{r+1}\frac{c_j}{j}y^j,\end{equation} with
coefficients $c_k\in\R$, $c_{r+1}>0$. Here we assume without loss
of generality that the constant term of $W$ vanishes.
\end{lemma}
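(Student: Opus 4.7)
The plan is to differentiate $h_{l,n}$ under the integral sign and then use an integration by parts in $y$ to recast the $x$-derivative as an integral whose integrand involves $W'(y)$. The key observation is the identity
\[
x f_n'(xy) = \frac{\mathrm d}{\mathrm d y} f_n(xy),
\]
which lets us convert a factor of $x$ on the outside into a $y$-derivative under the integral sign, opening the door to integration by parts.

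Concretely, I would first differentiate \eqref{hl:def} with respect to $x$ to get
\[
x h_{l,n}'(x) = \int_0^\infty y^{l+1}\, \big(x f_n'(xy)\big)\, e^{-nW(y)}\ud y = \int_0^\infty y^{l+1}\, \frac{\mathrm d}{\mathrm d y}\big[f_n(xy)\big]\, e^{-nW(y)}\ud y.
\]
Then I integrate by parts in $y$. The boundary terms read
\[
\Big[y^{l+1} f_n(xy)\, e^{-nW(y)}\Big]_{y=0}^{y=\infty},
\]
and both vanish: at $y=\infty$ by the decay of $e^{-nW(y)}$ (since $W$ has positive leading coefficient), and at $y=0$ because $f_n(xy) = \mathcal O((xy)^{\nu})$ by \eqref{def:f} and the series \eqref{def:Inu}, so the integrand is $\mathcal O(y^{l+1+\nu})$ with $l+1+\nu > 0$.

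It then remains to expand the $y$-derivative on the other factor, using \eqref{V2:sum} to write $W'(y) = \sum_{j=1}^{r+1} c_j y^{j-1}$:
\[
\frac{\mathrm d}{\mathrm d y}\big[y^{l+1} e^{-nW(y)}\big] = (l+1)y^l e^{-nW(y)} - n \sum_{j=1}^{r+1} c_j\, y^{l+j} e^{-nW(y)}.
\]
Substituting this back and recognizing each resulting integral as an $h_{l+j,n}(x)$ via \eqref{hl:def} yields \eqref{w:diff} directly.

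There is no real obstacle; the only thing worth some care is the verification at the origin, which hinges on the restriction $\nu>-1$ built into the weight. The argument as a whole is short because the identity $x f_n'(xy) = (f_n(xy))_y$ does all the heavy lifting — no use of the Bessel differential equation \eqref{f:diffeq} is needed here, in contrast to the proof of part (a).
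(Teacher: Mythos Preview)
Your proof is correct and follows essentially the same route as the paper: differentiate under the integral, use $x f_n'(xy) = \partial_y f_n(xy)$, integrate by parts, and read off the coefficients from $W'(y)$. Your version is in fact slightly more careful than the paper's, which simply asserts the integration by parts without discussing the vanishing of the boundary terms.
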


\begin{proof}
A straightforward calculation gives
\begin{align*} xh_{l,n}'(x) &:= x\int_0^{\infty}
y^l\frac{\partial }{\partial
x}(f_n(xy))e^{-nW(y)} \ud y\\
&= \int_0^{\infty}y^{l+1} \frac{\partial}{\partial
y}\left(f_n(xy)\right) e^{-nW(y)} \ud y\\
&= -\int_0^{\infty} f_n(xy)\frac{\partial}{\partial
y}\left(y^{l+1} e^{-nW(y)}\right) \ud y\\
&= -(l+1)h_{l,n}(x)+n\sum_{j=1}^{r+1}c_{j}h_{l+j,n}.
\end{align*}
Here the first step follows from the definition \eqref{hl:def}, the second step
is obvious by symmetry, the third step is integration by parts, and the last step uses \eqref{V2:sum}.
\end{proof}

The above lemma then implies the alternative system of weight functions for the
multiple orthogonal polynomials $P_{j,n}(x)$ in \eqref{MOP:2}. This proves
Theorem~\ref{theorem:MOP}(b).

\section{Differential equation for $h_{0,n}(x)$ with quadratic~$W$}
\label{section:thirdorderdiff}

From here we assume again that $W$ is quadratic and depends on the real parameter $\alpha$ as in \eqref{eq:quadratic potential W}. Then $r=\deg W-1 = 1$.

In this section we obtain a third-order differential equation for the function
\begin{equation}\label{w0tilde:def}
h_{0,n}(x) = \int_0^{\infty}f_n(xy)e^{-nW(y)} \ud y,
\end{equation}
see \eqref{hl:def} and \eqref{def:f}. We will study this differential equation in detail and construct a Wronskian matrix from three special solutions to the equation. We will show that this Wronskian matrix satisfies a certain $3\times 3$ RH problem, which should be regarded as a chiral analogue of the $3\times 3$ RH problem associated with the Pearcey differential equation used in \cite{Duits2,DKM}.

\subsection{Third-order differential equation}

In this section we prove the following result.

\begin{proposition}\label{proposition:thirdorderODE} (Third order differential equation)
Let $W$ be quadratic as in \eqref{eq:quadratic potential W}. Then the function
$p(x):=h_{0,n}(x)$ in \eqref{w0tilde:def} satisfies the third order differential
equation
\begin{equation}\label{thirdorderdiffeq:wtilde}
x^2 p'''(x)+(2-2\nu)xp''(x) +(\alpha n^2\tau^2 x+\nu^2 -\nu)p'(x)
-(\tau^4n^3x+\tau^2n^2\nu \alpha)p(x)=0.
\end{equation}
\end{proposition}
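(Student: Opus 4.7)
The plan is to derive \eqref{thirdorderdiffeq:wtilde} from two ingredients: (i) the modified-Bessel equation \eqref{f:diffeq} for $f_n$, and (ii) an integration by parts in $y$ that exploits the fact that for quadratic $W$ the derivative $W'(y)=y+\alpha$ is linear. Together these will give two closed relations linking $p$ and its derivatives to $h_1(x):=h_{1,n}(x)$ and $xh_1'(x)$; eliminating $h_1$ and $h_1'$ will produce a third-order ODE for $p$.

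For step (i), substituting $xy$ for $x$ in \eqref{f:diffeq}, multiplying by $y$, and integrating $\int_0^\infty(\cdot)\,e^{-nW(y)}\,dy$ yields the second-order relation
\[
xp''(x)-(\nu-1)p'(x)=\tau^2 n^2 h_1(x),\qquad (\mathrm{A})
\]
and, after differentiation in $x$ and multiplication by $x$,
\[
x^2 p'''(x)+(2-\nu)xp''(x)=\tau^2 n^2\,xh_1'(x).\qquad (\mathrm{A}')
\]

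The crucial step is to derive a second, independent identity
\[
xh_1'(x)=\tau^2 n\,xp(x)-\alpha\,xp'(x)+\nu h_1(x)+\nu\alpha p(x).\qquad (\star)
\]
I would obtain $(\star)$ by inserting Bessel in the form $xyf_n''(xy)=(\nu-1)f_n'(xy)+\tau^2n^2f_n(xy)$ and integrating against $e^{-nW(y)}\,dy$ (this time without the extra factor of $y$). The resulting left-hand side $x\!\int y\,f_n''(xy)\,e^{-nW(y)}\,dy$ is then rewritten by integration by parts in $y$ using $xf_n''(xy)=\partial_y f_n'(xy)$; the boundary term at $y=\infty$ vanishes by Gaussian decay and the one at $y=0$ vanishes when $\nu>0$. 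What survives involves $p'$, $h_1'$, and the auxiliary quantity $J(x):=\int_0^\infty f_n'(xy)\,e^{-nW(y)}\,dy$, which itself can be reduced via a further integration by parts (now using $\partial_y f_n(xy)=xf_n'(xy)$ and $W'(y)=y+\alpha$) to $xJ(x)=n(h_1(x)+\alpha p(x))$. Eliminating $J$ from these two relations and multiplying by $x/n$ produces $(\star)$.

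The proof concludes by multiplying $(\star)$ by $\tau^2 n^2$ and substituting $\tau^2 n^2 h_1=xp''-(\nu-1)p'$ from $(\mathrm{A})$ and $\tau^2 n^2\,xh_1'=x^2p'''+(2-\nu)xp''$ from $(\mathrm{A}')$; the $h_1$ and $h_1'$ contributions cancel and a short algebraic simplification produces \eqref{thirdorderdiffeq:wtilde}. The main obstacle is the boundary behavior at $y=0$: the integration by parts that yields $xJ=n(h_1+\alpha p)$ actually picks up a term $-f_n(0)$, which vanishes only when $\nu>0$, equals $1$ at $\nu=0$, and diverges for $-1<\nu<0$. I would handle the full range $\nu>-1$ by observing that $p(x)$, $h_1(x)$, $h_1'(x)$ depend holomorphically on $\nu$ for $\nu>-1$ (the integrals converge uniformly on compact subsets of this range) while the coefficients of \eqref{thirdorderdiffeq:wtilde} are polynomial in $\nu$, so that the identity extends from $\nu>0$ to $\nu>-1$ by analytic continuation.
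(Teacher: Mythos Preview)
Your proof is correct and lands on exactly the same two key relations as the paper: the Bessel relation $(\mathrm{A})$, which is the paper's \eqref{Vquad:4} with $l=0$, and the identity $(\star)$, which is the paper's \eqref{diffeq:proof1}. The final elimination step is identical.

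The only difference is in how $(\star)$ is obtained. The paper derives it via a detour through $h_{2,n}$ and $h_{3,n}$: it invokes the recursion \eqref{w:diff} (giving \eqref{Vquad:rel1}--\eqref{Vquad:rel2}) together with a separate integration-by-parts identity \eqref{Vquad:rel3}, and then eliminates $h_{2,n},h_{3,n}$. Your route is more economical: you work only with $p=h_{0,n}$, $h_1=h_{1,n}$, and the auxiliary $J(x)=\int_0^\infty f_n'(xy)e^{-nW(y)}\,dy$, and two integrations by parts give $(\star)$ directly. This is a genuine simplification for the purpose of this proposition, though the paper's building blocks (especially \eqref{w:diff}) are reused elsewhere. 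For the extension to $-1<\nu\le 0$, the paper checks explicitly that the boundary terms cancel; your analytic-continuation argument in $\nu$ is an equally valid alternative.
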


We note that a similar differential equation appears in a Mehler-Heine type formula in \cite{Deschout1}.
To prove Proposition~\ref{proposition:thirdorderODE}, we start with two lemmas.

\begin{lemma} The function $h_{l+1,n}$ in \eqref{hl:def} can be expressed in terms of $h_{l,n}$ as
\begin{equation}\label{Vquad:4}
(\tau n)^2 h_{l+1,n}(x) = x h_{l,n}''(x)-(\nu-1) h_{l,n}'(x),
\end{equation}
for any $l\in\mathbb N\cup\{0\}$.
\end{lemma}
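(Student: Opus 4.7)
The plan is to prove the identity by differentiating under the integral sign and invoking the differential equation \eqref{f:diffeq} for $f_n$.

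First, I would compute the derivatives of $h_{l,n}$ directly from the defining integral \eqref{hl:def}. Differentiation under the integral yields
\[
h_{l,n}'(x) = \int_0^\infty y^{l+1} f_n'(xy)\, e^{-nW(y)}\, dy,\qquad
h_{l,n}''(x) = \int_0^\infty y^{l+2} f_n''(xy)\, e^{-nW(y)}\, dy,
\]
where the derivatives of $f_n$ are taken with respect to its scalar argument. Here I would briefly note that the rapid decay of $e^{-nW(y)}$ (since $W$ is quadratic with positive leading coefficient) together with the growth of $f_n$ and its derivatives justifies moving the derivatives inside the integral.

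Next, I would form the combination $x h_{l,n}''(x) - (\nu-1) h_{l,n}'(x)$. Factoring $y^{l+1}$ out of the integrands gives
\[
x h_{l,n}''(x) - (\nu-1) h_{l,n}'(x) = \int_0^\infty y^{l+1}\bigl[(xy) f_n''(xy) - (\nu-1) f_n'(xy)\bigr] e^{-nW(y)}\, dy.
\]
Now I would apply \eqref{f:diffeq} with the scalar variable $xy$ in place of $x$, which gives
\[
(xy) f_n''(xy) - (\nu-1) f_n'(xy) = (\tau n)^2 f_n(xy).
\]
Substituting this in, the right-hand side becomes $(\tau n)^2 \int_0^\infty y^{l+1} f_n(xy) e^{-nW(y)}\, dy = (\tau n)^2 h_{l+1,n}(x)$, which is exactly \eqref{Vquad:4}.

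There is essentially no obstacle here: the only point worth mentioning is that the proof never uses that $W$ is quadratic, so the same identity in fact holds for any $W$ ensuring convergence of the integrals (the hypothesis on $W$ enters only later, in the third-order ODE of Proposition~\ref{proposition:thirdorderODE}). The argument is thus a one-line application of the Bessel-type equation \eqref{f:diffeq} satisfied by $f_n$, after differentiation under the integral sign.
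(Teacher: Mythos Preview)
Your proof is correct and is exactly the argument the paper has in mind: the paper's proof consists of the single sentence ``This is obvious from \eqref{f:diffeq} and \eqref{hl:def},'' which unpacks precisely to your differentiation-under-the-integral-sign computation followed by an application of the Bessel-type equation \eqref{f:diffeq}. Your remark that the quadratic form of $W$ is not used here is also correct.
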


\begin{proof} This is obvious from \eqref{f:diffeq} and \eqref{hl:def}.
\end{proof}

\begin{lemma} Let $W$ be quadratic as in \eqref{eq:quadratic potential W}.
Then the functions $h_{l,n}$, $l=0,1,2,3$, satisfy the relation
\begin{equation}\label{Vquad:rel3} h_{3,n}(x)+2\alpha h_{2,n}(x)+(\alpha^2-(\nu+2) n^{-1})h_{1,n}(x) -
((\nu+1)n^{-1}\alpha + \tau^2x)h_{0,n}(x)=0.
\end{equation}
\end{lemma}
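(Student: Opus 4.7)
The plan is to use Lemma~\ref{lemma:wdiff} to eliminate $h_{3,n}$ and $h_{2,n}$ in favour of $h_{0,n}$, $h_{1,n}$ and their first derivatives, thereby reducing the claim to a single first-order identity. That identity will then be proved by an integration by parts in the definition of $h_{1,n}'(x)$, combined with the Bessel equation \eqref{f:diffeq}.

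Specialising \eqref{w:diff} to the quadratic potential $W(y)=y^2/2+\alpha y$ (where $c_1=\alpha$, $c_2=1$ and $r=1$) and applying it with $l=0$ and $l=1$ yields
\begin{align*}
h_{2,n}(x)&=\frac{x}{n}\,h_{0,n}'(x)+\frac{1}{n}\,h_{0,n}(x)-\alpha h_{1,n}(x),\\
h_{3,n}(x)&=\frac{x}{n}\,h_{1,n}'(x)+\frac{2}{n}\,h_{1,n}(x)-\alpha h_{2,n}(x).
\end{align*}
Substituting the first into the second, plugging the result into the left-hand side of \eqref{Vquad:rel3} and collecting terms, a short algebra check shows that the coefficients of $h_{1,n}$ and of $h_{0,n}$ without derivatives simplify to $-\nu/n$ and $-\nu\alpha/n-\tau^2 x$ respectively, while the $h_{1,n}'$ and $h_{0,n}'$ terms have coefficients $x/n$ and $\alpha x/n$. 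Hence the left-hand side of \eqref{Vquad:rel3} equals
\begin{equation*}
\frac{1}{n}\Bigl[x h_{1,n}'(x)-\nu h_{1,n}(x)+\alpha\bigl(x h_{0,n}'(x)-\nu h_{0,n}(x)\bigr)-n\tau^2 x h_{0,n}(x)\Bigr],
\end{equation*}
so the proof reduces to showing that the bracket above vanishes.

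To establish this reduced identity I would start from $h_{1,n}'(x)=\int_0^\infty y^2 f_n'(xy)\,e^{-nW(y)}\,\mathrm dy$ and split off one factor of $y$ via $W'(y)=y+\alpha$, i.e.\ $y\,e^{-nW(y)}=-n^{-1}\bigl(e^{-nW(y)}\bigr)'-\alpha e^{-nW(y)}$. The $-\alpha$ piece contributes $-\alpha h_{0,n}'(x)$ directly. Integrating the other piece by parts in $y$ converts it into $n^{-1}\int_0^\infty \frac{\mathrm d}{\mathrm dy}\bigl[y f_n'(xy)\bigr]e^{-nW(y)}\,\mathrm dy$, and the Bessel equation \eqref{f:diffeq}, rewritten as $(xy)f_n''(xy)=(\nu-1)f_n'(xy)+(\tau n)^2 f_n(xy)$, simplifies the inner derivative to $\nu f_n'(xy)+(\tau n)^2 f_n(xy)$. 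The surviving term $\int_0^\infty f_n'(xy)e^{-nW(y)}\,\mathrm dy$ can then be handled by one more integration by parts in $y$, using $(e^{-nW(y)})'=-n(y+\alpha)e^{-nW(y)}$, which identifies it with $(n/x)\bigl(h_{1,n}(x)+\alpha h_{0,n}(x)\bigr)$. Assembling the three contributions and multiplying through by $x$ gives exactly the vanishing of the bracket, and hence \eqref{Vquad:rel3}.

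The main obstacle I anticipate is controlling the boundary terms at $y=0$ in these integrations by parts. Using the small-argument asymptotics $f_n(u)\sim (\tau n)^\nu u^\nu/\Gamma(\nu+1)$ and $f_n'(u)\sim (\tau n)^\nu u^{\nu-1}/\Gamma(\nu)$ as $u\to 0^+$, together with the extra factor $y$ (or $y^2$) present in the integrand, the boundary contributions are easily seen to vanish for $\nu>0$. For $-1<\nu\le 0$ the identity extends by analytic continuation, since both sides of \eqref{Vquad:rel3} depend analytically on $\nu$ for fixed $x>0$.
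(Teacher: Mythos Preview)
Your proof is correct. The route differs from the paper's in an interesting way: the paper computes the combination $h_{3,n}+2\alpha h_{2,n}+(\alpha^2-n^{-1})h_{1,n}$ directly by recognizing $((y+\alpha)^2-n^{-1})e^{-nW(y)}=n^{-2}(e^{-nW(y)})''$ and integrating by parts twice, and similarly handles $h_{1,n}+\alpha h_{0,n}$ via one integration by parts, then invokes \eqref{f:diffeq}. You instead first use Lemma~\ref{lemma:wdiff} to collapse the claim to the identity $x h_{1,n}'-\nu h_{1,n}+\alpha(x h_{0,n}'-\nu h_{0,n})-n\tau^2 x h_{0,n}=0$, and then prove that. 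This reduced identity is precisely the paper's \eqref{diffeq:proof1}, which the paper obtains \emph{from} \eqref{Vquad:rel3} in the subsequent proof of Proposition~\ref{proposition:thirdorderODE}; you have effectively reversed the logical order, showing that \eqref{Vquad:rel3} and \eqref{diffeq:proof1} are equivalent modulo \eqref{w:diff} and proving the latter directly. Your handling of the range $-1<\nu\le 0$ via analytic continuation in $\nu$ is also legitimate (the $h_{l,n}$ are analytic in $\nu$ for $\Re\nu>-1$), whereas the paper instead checks that the nonvanishing boundary terms cancel.
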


\begin{proof}
We first assume that $\nu>0$ and calculate
\begin{equation}\label{tricky:integration:by:parts:1}
\begin{aligned}
h_{3,n}(x)+2\alpha h_{2,n}(x)+(\alpha^2-n^{-1})h_{1,n}(x) & = \int_0^{\infty} y\left((y+\alpha)^2-n^{-1}\right)f_n(xy)e^{-nW(y)} \ud y\\
& = n^{-2}\int_0^{\infty} y f_n(xy)\frac{\partial^2}{\partial y^2}\left(e^{-nW(y)}\right) \ud y\\
& = n^{-2}\int_0^{\infty} x\left( xyf_n''(xy)+2f_n'(xy)\right)e^{-nW(y)} \ud y,
\end{aligned}
\end{equation}
where the first step follows from the definition \eqref{hl:def}, the second
step uses \eqref{eq:quadratic potential W}, and the third step is integration by parts.
Similarly,
\begin{equation}\label{tricky:integration:by:parts:2}
\begin{aligned}
h_{1,n}(x)+\alpha h_{0,n}(x) &= \int_0^{\infty} (y+\alpha)f_n(xy)e^{-nW(y)} \ud y\\
                     &= -n^{-1}\int_0^{\infty} f_n(xy)\frac{\partial}{\partial y}\left(e^{-nW(y)}\right) \ud y\\
                     &= n^{-1}\int_0^{\infty} xf'(xy)e^{-nW(y)} \ud y.
\end{aligned}
\end{equation}
The lemma then follows from \eqref{f:diffeq} by a little calculation. Finally, note that the assumption $\nu> 0$ was needed to make sure that the integration by parts in \eqref{tricky:integration:by:parts:1}--\eqref{tricky:integration:by:parts:2}  creates no integrated terms. However, if $\nu\leq 0$, integrated terms are present, but cancel each other out so that \eqref{Vquad:rel3} remains valid.
\end{proof}

Now we are ready to prove Proposition~\ref{proposition:thirdorderODE}.

\begin{proof}[Proof of Proposition~\ref{proposition:thirdorderODE}]
First observe that \eqref{w:diff} and \eqref{eq:quadratic potential W} yield the
relations
\begin{align}
\label{Vquad:rel1} xh_{0,n}'(x) = -h_{0,n}(x)+n\alpha h_{1,n}(x)+n h_{2,n}(x),\\
\label{Vquad:rel2} x h_{1,n}'(x) = -2 h_{1,n}(x)+n\alpha h_{2,n}(x)+n h_{3,n}(x),
\end{align}
and so on. Multiplying \eqref{Vquad:rel1} with $\alpha n^{-1}$ and
\eqref{Vquad:rel2} with $n^{-1}$, and adding them up, we get
$$ h_{3,n}(x)+2\alpha h_{2,n}(x)+\alpha^2 h_{1,n}(x) = n^{-1}\left(\alpha x h_{0,n}'(x)+\alpha h_{0,n}(x)+x h_{1,n}'(x)+2 h_{1,n}(x)\right).
$$
Substituting this into \eqref{Vquad:rel3}, we find
\begin{equation}\label{diffeq:proof1} x h_{1,n}'(x)-\nu h_{1,n}(x)+\alpha x h_{0,n}'(x) -(\nu \alpha+ \tau^2nx)h_{0,n}(x)=0.
\end{equation}
On account of \eqref{Vquad:4} with $l=0$ this yields
\begin{align*}
(\tau n)^2h_{1,n}(x) = x h_{0,n}''(x)-(\nu-1)h_{0,n}'(x),\\
(\tau n)^2 h_{1,n}'(x) = x h_{0,n}'''(x)-(\nu-2)h_{0,n}''(x).
\end{align*}
Inserting this in \eqref{diffeq:proof1} we get
\begin{multline*}
x(\tau n)^{-2}(x h_{0,n}'''(x)-(\nu-2)h_{0,n}''(x))-\nu (\tau n)^{-2}(x h_{0,n}''(x)
-(\nu-1)h_{0,n}'(x))\\ +\alpha x h_{0,n}'(x) - (\nu \alpha + \tau^2n x)h_{0,n}(x)=0,
\end{multline*}
which with the notation $p(x):=h_{0,n}(x)$ is equivalent to
\eqref{thirdorderdiffeq:wtilde}. This proves
Proposition~\ref{proposition:thirdorderODE}.
\end{proof}

\begin{remark}
A little calculation shows that for any solution $p(x)$ to \eqref{thirdorderdiffeq:wtilde} the transformed function
$$q(x):=x^{(1-4\nu)/3}p(x^2)$$ satisfies the differential equation
\begin{multline*}
x^3 q'''(x) +\left(4\tau^2n^2\alpha x^3-\frac{(2\nu+1)^2}{3}x\right)q'(x) \\
+\left(\frac{16 \nu^3+60\nu^2+12\nu-7}{27} -8\tau^4n^3 x^4-\frac 43   \tau^2n^2\alpha x^2(2\nu+1)\right)q(x) = 0.
\end{multline*}
For the special case $\nu=-1/2$ this reduces to the third order differential equation
\begin{equation*} %\label{Pearceyeq}
q'''(x)+4\tau^2n^2\alpha q'(x) = 8\tau^4n^3x q(x),
\end{equation*}
which appeared before in \cite[eqn. (5.3)]{DKM}. In that paper this ODE is solved using Pearcey integrals.
\end{remark}

\subsection{Three special solutions to \eqref{thirdorderdiffeq:wtilde}}

We now construct three contour integral solutions $p_0$, $p_1$, and $p_2$ to the third-order differential equation \eqref{thirdorderdiffeq:wtilde}. In the next section we will study the asymptotics of these functions as $z\to\infty$.

The function $p_0$ is defined as
\begin{equation}\label{def:p0}
p_0(z):= h_{0,n}(z) = \int_0^{\infty} f_n(zy)e^{-nW(y)} \ud y,
\end{equation}
where we recall the definition of $f_n$ \eqref{def:f}. Then
$p_0$ is analytic in $\cee\setminus\er^-$.

The second solution $p_1$ to \eqref{thirdorderdiffeq:wtilde} is defined as
\begin{equation}\label{def:p1} %\widehat w_0(x) :=
p_1(z):= \frac{i}{\pi}\int_{-\infty}^{\infty}
(zy)^{\nu/2}K_\nu(2\tau n\sqrt{zy})e^{-nW(y)} \ud y,
\end{equation}
where
\begin{equation}\label{def:Knu}
K_\nu(z)=\frac{1}{2}\pi\frac{I_{-\nu}(z)-I_\nu(z)}{\sin(\nu\pi)},
\qquad z\in\mathbb{C}\setminus(-\infty,0]
\end{equation}
is the modified Bessel function of the second kind. The right hand
side of \eqref{def:Knu} is replaced by its limiting value if $\nu$
is an integer or zero. Then $p_1$ is analytic in $\cee\setminus\er$.
%\begin{eqnarray} \nonumber \hat f_n(x,y) &:=& (xy)^{\nu/2}I_{-\nu}(2\tau N\sqrt{xy}) \\
%\label{def:f:mod} &=& (\tau N)^{-\nu}\sum_{k=0}^{\infty}
%\frac{(\tau N)^{2k}(xy)^k}{k!\Gamma(k-\nu+1)}.
%\end{eqnarray}

Finally, the third solution to \eqref{thirdorderdiffeq:wtilde} is defined as
\begin{equation}\label{def:p2}
p_2(z) := \int_{-\infty}^{0} f_n(zy)e^{-nW(y)}\ \ud y.
\end{equation}
This function is analytic in $\cee\setminus\er^+$.

The functions $p_0,p_1,p_2$ all satisfy the differential equation
\eqref{thirdorderdiffeq:wtilde}. For $p_0$ this follows directly
from Proposition~\ref{proposition:thirdorderODE}. The same proof
works for $p_2$. Finally, for $p_1$ we first observe that
$(x)^{\nu/2}K_{\nu}(2\tau n\sqrt{x})$ satisfies the same
differential equation \eqref{f:diffeq} as $f_n(x)$ since also
$K_\nu$ solves the modified Bessel equation \eqref{eq: modified
Bessel equation}. This then allows the proof of
Proposition~\ref{proposition:thirdorderODE} to be applied word for
word.
%The contour of integration in \eqref{def:p1} must be taken from
%$-\infty$ to $\infty$, since the integrated terms in
%\eqref{tricky:integration:by:parts:1}--\eqref{tricky:integration:by:parts:2}
%around $x=0$ do \emph{not} cancel each other in this case.

\begin{lemma}\label{lemma:jumpsp123}
The functions $p_0,p_1,p_2$ satisfy the jump relations
\begin{align*}
p_{0,+}(x) &= e^{2\nu \pi i}p_{0,-}(x),\qquad x\in\er^-,\\
p_{1,+}(x) &= p_{1,-}(x)- e^{-\nu\pi i}p_{2,-}(x),\qquad x\in\er^+,\\
p_{1,+}(x) &= p_{1,-}(x)+ e^{\nu\pi i}p_{0,-}(x),\qquad x\in\er^-,\\
p_{2,+}(x) &= e^{-2\nu \pi i}p_{2,-}(x),\qquad x\in\er^+,
\end{align*}
where the subscripts $+$ and $-$ denote the boundary values obtained from the
upper or lower half plane of $\cee$, respectively.
\end{lemma}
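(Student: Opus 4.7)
The plan is to reduce every jump to the branch structure of $f_n(w)=w^{\nu/2}I_\nu(2\tau n\sqrt w)$ and of $k_n(w):=w^{\nu/2}K_\nu(2\tau n\sqrt w)$. From the power series of $I_\nu$ one has $f_n(w)=(2\tau n)^\nu w^\nu\tilde f(w)$ with $\tilde f$ entire; and from $K_\nu=\frac{\pi}{2\sin(\nu\pi)}(I_{-\nu}-I_\nu)$ (extended by its limit when $\nu\in\mathbb{Z}$) one obtains
\begin{equation*}
k_n(w)=\frac{\pi}{2\sin(\nu\pi)}\bigl(G(w)-f_n(w)\bigr),\qquad G(w):=w^{\nu/2}I_{-\nu}(2\tau n\sqrt w),
\end{equation*}
with $G$ entire. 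Consequently, on $\mathbb{R}^-$ the jump of $f_n$ is purely from $w^\nu$, giving $f_{n,+}(w)=e^{2\nu\pi i}f_{n,-}(w)$, and using $e^{2\nu\pi i}-1=2i\sin(\nu\pi)e^{\nu\pi i}$,
\begin{equation*}
k_{n,+}(w)-k_{n,-}(w)=-i\pi e^{\nu\pi i}f_{n,-}(w),\qquad w\in\mathbb{R}^-.
\end{equation*}
These two identities drive the whole lemma.

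For $p_0$ I would write $p_0(z)=z^\nu H_0(z)$ with $H_0$ entire, using that $(zy)^\nu=z^\nu y^\nu$ for $z\in\mathbb{C}\setminus\mathbb{R}^-$ and $y>0$; the jump $p_{0,+}(x)=e^{2\nu\pi i}p_{0,-}(x)$ on $\mathbb{R}^-$ is then immediate. For $p_2$ I would substitute $y=-u$ to get $p_2(z)=\int_0^\infty f_n(-zu)e^{-nW(-u)}\ud u=(-z)^\nu H_2(z)$ with $H_2$ entire; since $(-z)^\nu$ has its branch cut on $\mathbb{R}^+$ and $(-z)_+^\nu/(-z)_-^\nu=e^{-2\nu\pi i}$ there, the stated jump for $p_2$ follows.

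For $p_1$ I would split $p_1=I_++I_-$ according to $y>0$ and $y<0$ (substituting $y=-u$ in the second piece); then $I_+$ is analytic on $\mathbb{C}\setminus\mathbb{R}^-$ and $I_-$ on $\mathbb{C}\setminus\mathbb{R}^+$, so on each half-line only one piece contributes a jump. On $\mathbb{R}^-$ the boundary values of $k_n(zy)$, $y>0$, respect the natural orientation, so inserting the jump formula for $k_n$ above yields
\begin{equation*}
I_{+,+}(x)-I_{+,-}(x)=\frac{i}{\pi}\int_0^\infty\bigl(-i\pi e^{\nu\pi i}f_{n,-}(xy)\bigr)e^{-nW(y)}\ud y=e^{\nu\pi i}p_{0,-}(x).
\end{equation*}
On $\mathbb{R}^+$ the argument $-zu$ reverses orientation relative to $z$, so $k_n(-zu)\big|_{z=x\pm i0}=k_{n,\mp}(-xu)$; this sign flip produces
\begin{equation*}
I_{-,+}(x)-I_{-,-}(x)=-e^{\nu\pi i}\int_0^\infty f_{n,-}(-xu)e^{-nW(-u)}\ud u=-e^{\nu\pi i}p_{2,+}(x),
\end{equation*}
which equals $-e^{-\nu\pi i}p_{2,-}(x)$ by the $p_2$-jump already proved.

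The only delicate step is the orientation check in the $p_1$-jump on $\mathbb{R}^+$: the extra minus sign inside the argument of $k_n$ swaps the $\pm$-boundary values and converts the naively expected factor $e^{\nu\pi i}$ into $e^{-\nu\pi i}$ once the $p_2$-jump is invoked. Everything else is a mechanical substitution, and the case $\nu\in\mathbb{Z}$ follows from the general case by continuity in $\nu$.
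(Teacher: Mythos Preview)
Your argument is correct and follows exactly the route the paper indicates: use the connection formula $K_\nu=\tfrac{\pi}{2\sin(\nu\pi)}(I_{-\nu}-I_\nu)$, observe that $w^{\nu/2}I_{-\nu}(2\tau n\sqrt{w})$ is entire, and reduce all jumps to the monodromy $f_{n,+}=e^{2\nu\pi i}f_{n,-}$ on $\mathbb{R}^-$ coming from the $w^\nu$ factor. The paper's proof is a one-line reference to these same ingredients; you have simply spelled out the details, including the orientation bookkeeping for $p_1$ on $\mathbb{R}^+$, which is indeed the only place one can slip. (A trivial remark: the constant in $f_n(w)=c\,w^\nu\tilde f(w)$ is $(\tau n)^\nu$ rather than $(2\tau n)^\nu$, but this is irrelevant to the argument.)
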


\begin{proof}
This follows from the definitions \eqref{def:p0}, \eqref{def:p1}, and \eqref{def:p2}; the connection formula \eqref{def:Knu}; and the fact that ${I_\nu}_+(x)={I_\nu}_-(x)e^{2\nu \pi i}$ for $x<0$ which is immediate from \eqref{def:Inu}.
\end{proof}

\subsection{Asymptotics for the solutions to \eqref{thirdorderdiffeq:wtilde}}

Our next goal is to find the large $z$ asymptotics of $p_0$, $p_1$, and $p_2$.
In analogy with \cite{DKM} we first introduce three functions $\theta_j(z)$,
$j=1,2,3$, that will appear in these asymptotics.

First we define the constants $x^*(\alpha)$ and $y^*(\alpha)$ that
depend on $\alpha$ and $\tau$.  We put
\begin{equation} \label{eq: x y star}
x^*(\alpha)=\left\{
         \begin{array}{ll}
           0, & \hbox{$\alpha\geq 0$,} \\
           -\frac{4}{\tau^2}(\frac{\alpha}{3})^3, & \hbox{$\alpha<0$},
         \end{array}
       \right. \qquad
y^*(\alpha)=-x^*(-\alpha).
\end{equation}
So if $\alpha>0$ we have that $x^*(\alpha)=0$ and
$y^*(\alpha)<0$ whereas in the case $\alpha\leq 0$ it holds that
$x^*(\alpha) \geq 0$ and $y^*(\alpha)=0$.

\begin{lemma}\label{lem:theta function}
For every $\alpha \in \R$ and $\tau>0$ there exist analytic functions
$\theta_j: \C \setminus \R \to \C$, $j=1,2,3$, with the following properties.
\begin{itemize}
\item[(a)] The jumps of the functions $\theta_j$ are taken together in terms of the jumps for the diagonal matrix
\begin{equation} \label{Theta}
\Theta(z)= \diag\left(\theta_1(z),\theta_2(z),\theta_3(z)\right), \qquad z \in
\C \setminus \R.
\end{equation}
We have
\[ \begin{cases}
\Theta_+(x)= \begin{pmatrix}1&0&0 \\0&0&1 \\ 0&1&0 \end{pmatrix}
\Theta_-(x) \begin{pmatrix}1&0&0\\0&0&1 \\ 0&1&0 \end{pmatrix}, &x>x^*(\alpha), \\
\Theta_+(x)=\Theta_-(x), & y^*(\alpha)<x<x^*(\alpha), \\
\Theta_+(x)= \begin{pmatrix}0&1&0 \\ 1&0&0 \\ 0&0&1 \end{pmatrix}
\Theta_-(x) \begin{pmatrix}0&1&0 \\ 1&0&0 \\ 0&0&1 \end{pmatrix}, &x<y^*(\alpha), \\
\end{cases} \]
with $x^*(\alpha)$ and $y^*(\alpha)$ as in \eqref{eq: x y star}.
\item[(b)] We have as $z \to \infty$ within $\C^+$
\begin{multline*}
\theta_j(z)=\frac32 \omega^{j-1} \tau^{4/3}z^{2/3}-\alpha \omega^{4-j} \tau^{2/3}z^{1/3}+ \frac{\alpha^2}{3}\\
-\frac{\alpha^3}{27}\omega^{j-1} \tau^{-2/3}z^{-1/3} + D\omega^{4-j} \tau^{-4/3}z^{-2/3} +\mathcal O \left( z^{-1} \right),
\end{multline*}
for a constant $D \in \R$ and $j=1,2,3$. Here $\om:=\exp(2\pi i/3)$. The behavior in $\C^-$ follows from the relation $\theta_j(\overline z)= \overline{\theta_j(z)}$, $j=1,2,3$.
\end{itemize}
\end{lemma}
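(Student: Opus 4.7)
Formally differentiating the desired expansion in (b) and using $\omega^{4-j}=\omega^{-(j-1)}$ gives $\theta_j'(z)\sim\omega^{j-1}\tau^{4/3}z^{-1/3}-\tfrac{\alpha}{3}\omega^{-(j-1)}\tau^{2/3}z^{-2/3}+O(z^{-4/3})$, and a short computation reveals that this is consistent with $\xi:=\theta'$ being a solution of the cubic algebraic equation
\[
z\xi^{3}+\alpha\tau^{2}\xi-\tau^{4}=0.
\]
I would therefore define $\xi_1,\xi_2,\xi_3$ as the three solutions of this cubic which are analytic on $\mathbb{C}^+$, labeled so that $\xi_j(z)\sim\omega^{j-1}\tau^{4/3}z^{-1/3}$ at infinity for the principal branch of $z^{-1/3}$, and extend them to $\mathbb{C}^-$ by Schwarz reflection $\xi_j(\bar z):=\overline{\xi_j(z)}$ (consistent because the cubic has real coefficients).

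\textbf{Construction of $\theta_j$ and asymptotics.} The functions $\theta_j$ are then defined by $\theta_j'=\xi_j$, with the integration constant fixed via the renormalized formula
\[
\theta_j(z)=\tfrac{3}{2}\omega^{j-1}\tau^{4/3}z^{2/3}-\alpha\omega^{-(j-1)}\tau^{2/3}z^{1/3}+\tfrac{\alpha^{2}}{3}+\!\int_{\infty}^{z}\!\!\Bigl[\xi_j(w)-\omega^{j-1}\tau^{4/3}w^{-1/3}+\tfrac{\alpha}{3}\omega^{-(j-1)}\tau^{2/3}w^{-2/3}\Bigr]dw,
\]
where the integrand is $O(w^{-4/3})$ at infinity and the path is taken on the $j$-th sheet. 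Computing the Puiseux series $\xi_j=\sum_{k\geq 0}a_{j,k}z^{-(k+1)/3}$ recursively from the cubic yields $a_{j,0}=\omega^{j-1}\tau^{4/3}$, $a_{j,1}=-\tfrac{\alpha}{3}\omega^{-(j-1)}\tau^{2/3}$, $a_{j,2}=0$, $a_{j,3}=\tfrac{\alpha^{3}}{81}\omega^{j-1}\tau^{-2/3}$, $a_{j,4}=\tfrac{\alpha^{4}}{243}\omega^{-(j-1)}\tau^{-4/3}$; integrating term by term reproduces the expansion in (b) with $D=-\alpha^{4}/162\in\mathbb{R}$. The reality condition $\theta_j(\bar z)=\overline{\theta_j(z)}$ is automatic.

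\textbf{Branch points and jumps.} The $\xi$-discriminant of the cubic is proportional to $(4\alpha^{3}+27\tau^{2}z)/z^{3}$, so its only finite zero on $\mathbb{R}$ is $z_{\star}:=-4\alpha^{3}/(27\tau^{2})$; comparing with \eqref{eq: x y star}, $z_{\star}$ equals $x^{*}(\alpha)$ when $\alpha<0$ and $y^{*}(\alpha)$ when $\alpha>0$, while the remaining critical point of the Riemann surface sits at $z=0$. On the middle interval $(y^{*}(\alpha),x^{*}(\alpha))$ the discriminant is positive, so the cubic has three distinct real roots and each $\xi_j$ extends analytically across this interval, giving no jump. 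On $(x^{*}(\alpha),\infty)$ the cubic has one real root together with a complex-conjugate pair, and the asymptotic labeling identifies the real root with $\xi_1$; Schwarz reflection then yields $\xi_{2,+}=\xi_{3,-}$ and $\xi_{3,+}=\xi_{2,-}$, producing the advertised swap of the second and third entries of $\Theta$. On $(-\infty,y^{*}(\alpha))$ the observation that $\omega^{2}z^{-1/3}\to-|z|^{-1/3}\in\mathbb{R}$ as $\arg z\to\pi^-$ shows that $\xi_3$ is now the real branch, so $\{\xi_1,\xi_2\}$ forms the conjugate pair and the jump gives the remaining permutation.

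\textbf{Main obstacle.} The technical part of the argument is the bookkeeping that matches the labels $j=1,2,3$, defined asymptotically in $\mathbb{C}^+$, with the real-branch / complex-pair roles on the two outer rays; this requires a careful tracking of $\omega^{j-1}z^{-1/3}$ as $\arg z$ moves from $0$ to $\pi$. The confluent case $\alpha=0$, where $x^{*}(\alpha)=y^{*}(\alpha)=0$ and the cubic degenerates to $z\xi^{3}=\tau^{4}$ with a triple branch point at the origin, has to be treated separately, but the asymptotic expansion and jump structure on $\mathbb{R}^{\pm}$ in that case follow by continuity in $\alpha$ from the generic analysis above.
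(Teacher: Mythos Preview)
Your route is genuinely different from the paper's. The paper does not construct the $\theta_j$ from scratch: it simply sets $\theta_j(z)=2\,\theta_j^{\DKM}(\sqrt{z})$, uses the evenness $\theta_j^{\DKM}(-z)=\theta_j^{\DKM}(z)$, and then imports the jump relations and asymptotic expansion from the corresponding results in \cite{DKM}. Your direct construction via the cubic $z\xi^{3}+\alpha\tau^{2}\xi-\tau^{4}=0$ is more self-contained; it is exactly the spectral curve that underlies the DKM $\theta$-functions after the change of variable $\zeta=\sqrt{z}$, and indeed this very cubic appears in the paper's proof of Lemma~\ref{lemma:asygeneric}. Your Puiseux computation, your identification of the discriminant locus with $\{0,z_\star\}=\{x^*(\alpha),y^*(\alpha)\}$, your tracking of which branch is real on each outer ray, and the value $D=-\alpha^{4}/162$ are all correct.

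There is, however, a gap in the verification of the trivial jump on the middle interval. Your argument there is that the three roots are real and distinct, so each $\xi_j$ extends analytically across $(y^*(\alpha),x^*(\alpha))$; but this only gives $\xi_{j,+}=\xi_{j,-}$, hence that $\theta_{j,+}-\theta_{j,-}$ is \emph{constant} on that interval, not that it vanishes. Since you extend $\theta_j$ to $\C^-$ by Schwarz reflection, the constant equals $2i\,\Im\theta_{j,+}(x)$, so what must be shown is that $\theta_{j,+}$ is \emph{real} there. For $j=1$ this follows from the asymptotics on $(x^*(\alpha),\infty)$, where $\xi_{1,+}$ is already real; but for $j=2,3$ the path from $\infty$ to the middle interval necessarily traverses the region where $\xi_{2,+},\xi_{3,+}$ are a complex-conjugate pair, and reality of the integrated value $\theta_{2,+}(x^*)$ is not evident from your renormalized integral. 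The cleanest fix exploits the fact you did not use, namely that the curve has genus zero: one can integrate $\int\xi\,dz$ explicitly to obtain the closed form
\[
\theta_j(z)=-\frac{2\alpha\tau^{2}}{\xi_j(z)}+\frac{3\tau^{4}}{2\xi_j(z)^{2}}+\frac{\alpha^{2}}{2},
\]
the additive constant being fixed by matching the $z^{0}$ term in (b). From this formula, $\theta_{j,+}$ is manifestly real whenever $\xi_{j,+}$ is real, which closes the middle-interval case and simultaneously upgrades your outer-interval permutations from statements about $\xi_j$ to statements about $\theta_j$.
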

\begin{proof}
Define
\begin{equation}\label{eq:theta function}
\theta_j(z)=2 \theta_j^{\DKM}(\sqrt z), \qquad j=1,2,3,
\end{equation}
where the $\theta_j^{\DKM}$ refer to the $\theta$-functions
used in \cite{DKM} with potentials given in \eqref{eq: relation
chiral/hermitian potentials}. From (2.5) and (2.8) in \cite{DKM}, it
is readily seen that
\begin{equation}\label{eq:even of theta DKM}
\theta_j^{\DKM}(-z)=\theta_j^{\DKM}(z).
\end{equation}
This, together with \cite[Corollary 2.2 and Lemma 2.4]{DKM}, implies our lemma.
\end{proof}

We are now ready to investigate the large $z$ asymptotics of $p_0$,
$p_1$, and $p_2$. Let us first give a heuristic argument.

\begin{lemma}\label{lemma:asygeneric}
For any solution $p(z)$ to \eqref{thirdorderdiffeq:wtilde}, there exist $j\in\{0,1,2\}$ and $C\in\cee\setminus\{0\}$ such that
\begin{equation}\label{asy:pz:generic}
p(z)=Cz^{\frac{2\nu-1}{3}}e^{n\theta_{j+1}(z)}\left(1-\frac{\alpha\nu}{3\tau^{2/3}n}
\omega^jz^{-1/3}+\tilde D \omega^{2j}z^{-2/3}+\mathcal
O(z^{-1})\right),
\end{equation}
as $z\to\infty$ in closed sectors of the first quadrant
$\{z\in\cee\mid\Re>0,~ \Im z>0\}$. Here $\tilde D \in \R$ is a
constant, and $\theta_j$ is defined in Lemma \ref{lem:theta
function}.
\end{lemma}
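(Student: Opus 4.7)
The plan is a formal WKB analysis of the third-order ODE \eqref{thirdorderdiffeq:wtilde}, which matches the ``heuristic'' framing of the statement. I would substitute the ansatz
\[
p(z) \;=\; e^{n\theta(z)}\,z^{a}\bigl(1 + c_1 z^{-1/3} + c_2 z^{-2/3} + O(z^{-1})\bigr)
\]
into \eqref{thirdorderdiffeq:wtilde} and match coefficients in descending powers of $z^{-1/3}$ to identify $\theta$, the exponent $a$, and the coefficients $c_k$. Three admissible $\theta$'s will come out of this, producing three formal fundamental solutions; \eqref{asy:pz:generic} then records the dominant one in a sector of the first quadrant, which is how the index $j\in\{0,1,2\}$ and the normalization $C$ enter.

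\medskip

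At leading order in $n$ the ODE \eqref{thirdorderdiffeq:wtilde} collapses (after dividing by $xp$) to the eikonal equation
\[
x\,\theta'(x)^3 + \alpha\tau^2\,\theta'(x) - \tau^4 \;=\; 0.
\]
Its three large-$z$ solution branches are $\theta'(z) = \omega^{j}\tau^{4/3}z^{-1/3} + O(z^{-2/3})$, $j=0,1,2$; integrating, and iterating the eikonal equation to higher order in $z^{-1/3}$, reproduces exactly the expansion of $\theta_{j+1}$ in Lemma~\ref{lem:theta function}(b), including the constant term $\alpha^2/3$ and the $z^{-1/3}$, $z^{-2/3}$ terms. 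This picks out the three admissible exponential factors.

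\medskip

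Fixing $\theta=\theta_{j+1}$, I next plug $p=e^{n\theta}z^{a}g(z)$ with $g(z)=1+c_1 z^{-1/3}+c_2 z^{-2/3}+\cdots$ back into \eqref{thirdorderdiffeq:wtilde} and expand by powers of $z$. Balancing the top-order term of the transport (subleading-in-$n$) equation forces the exponent $a=(2\nu-1)/3$. The next power produces a linear equation for $c_1$ whose solution, after inserting the first two Taylor coefficients of $\theta_{j+1}'$ from Lemma~\ref{lem:theta function}(b) and simplifying using $A_0 A_1 = -\alpha\tau^2/3$ with $A_0 = \omega^{j}\tau^{4/3}$, $A_1 = -\tfrac{\alpha}{3}\omega^{-j}\tau^{2/3}$, gives exactly the coefficient displayed in \eqref{asy:pz:generic}; one more step yields $c_2=\tilde D\omega^{2j}$ for a real constant $\tilde D$. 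The $\omega^{j}$ and $\omega^{2j}$ prefactors simply track the branch: the recursion for the $c_k$'s is invariant under the cyclic relabelling $\omega^{j}\mapsto\omega^{j+1}$ of the three eikonal branches, so the $k$-th coefficient carries a factor $\omega^{jk}$. Higher coefficients are determined by a triangular recursion that is not needed for the stated expansion.

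\medskip

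Finally, to promote the formal series to a genuine asymptotic statement in closed subsectors of the first quadrant, one invokes the classical Hukuhara--Wasow theory of linear ODEs at an irregular singular point: in any sufficiently narrow sector at infinity there is a fundamental system of three actual solutions, each asymptotic to one of the formal solutions constructed above. No Stokes line lies inside the first quadrant, as is visible from the explicit large-$z$ form of $\Re\theta_{j+1}$ given in Lemma~\ref{lem:theta function}(b), so every solution $p$ of \eqref{thirdorderdiffeq:wtilde} is a fixed linear combination of these three, and its asymptotics are governed by the summand of maximal exponential growth whose coefficient is nonzero; that summand supplies the pair $(j,C)$ in \eqref{asy:pz:generic}. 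The main obstacle, beyond the bookkeeping of the fractional powers and branch labels $\omega^{j}$, is verifying this no-Stokes-line claim by comparing $\Re\theta_1,\Re\theta_2,\Re\theta_3$ in the relevant angular range; once that is done, the rest is routine algebra.
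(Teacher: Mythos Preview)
Your approach is essentially the same as the paper's: both perform a WKB substitution into \eqref{thirdorderdiffeq:wtilde} and match coefficients in descending powers of $z^{-1/3}$, the paper writing the ansatz as $p=e^{nF}$ and expanding $f:=F'$ while you separate out the power $z^{a}$ and the correction series from the start. The paper explicitly frames the lemma as a ``heuristic argument'' and stops at the formal expansion, so your Hukuhara--Wasow justification and Stokes-line check go somewhat beyond what the paper itself supplies, but the computational content is identical.
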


\begin{proof}
Substituting $p(z)=e^{nF(z)}$ in \eqref{thirdorderdiffeq:wtilde} we get the
following nonlinear ODE for $f(z):=F'(z)$
\begin{multline*}
\left(z^2f^3(z)-\tau^4z+\alpha \tau^2zf(z)\right)+n^{-1}\left(3z^2f(z)f'(z)-2(\nu-1)zf^2(z)-\nu \tau^2 \alpha\right)\\
+n^{-2}\left(z^2f''(z)-2(\nu-1)zf'(z)+\nu(\nu-1)f(z)\right)=0.
\end{multline*}
This ODE has solutions $f$ with expansions
\begin{multline*}
\tau^{-2} f(z/\tau^2)=\om^jz^{-1/3}-\frac
\alpha3\om^{2j}z^{-2/3}+\frac{2\nu-1}{3n}z^{-1}
+\frac{\alpha(9\nu+n\alpha^2)}{81n}\om^j
z^{-4/3}\\+\frac{\omega^{2j}}{243n^2}\left(27\nu(\nu+1)-9+9(\nu+1)\alpha^2n+\alpha^4n^2
\right)z^{-5/3}+\mathcal O(z^{-2}),
\end{multline*}
as $z\to \infty$, for any $j\in\{0,1,2\}$. If $\Im z >0$, after integration this yields
\begin{equation*}
\begin{aligned}
F(z)&=\theta_{j+1}(z)+\frac{2\nu-1}{3n}\log z+\frac{1}{n}\log
C-\frac{\alpha\nu}{3\tau^{2/3}n}\om^j
z^{-1/3}+D'\omega^{2j}z^{-2/3}+\mathcal O(z^{-1}),
\end{aligned}
\end{equation*}
where $C$ is the integration constant and $D' \in \R$. Then $p(z)=e^{nF(z)}$ satisfies \eqref{asy:pz:generic}.
\end{proof}

Next we specialize Lemma~\ref{lemma:asygeneric} to the functions $p_0$, $p_1$, and $p_2$.

\begin{lemma}\label{lemma:asyp123} For each $j=0,1,2$, we have
\begin{equation*}%\label{eq:asy of pj}
p_{j}(z)= \frac{C_j}{\sqrt 3 n}\tau^{\frac{\nu-2}{3}}e^{\alpha^2
n/3} \omega^{2j} z^{\frac{2\nu-1}{3}}
e^{n\theta_{j+1}(z)}\left(1-\frac{\alpha\nu}{3\tau^{2/3}n}\omega^{j}z^{-1/3}+\tilde
D \omega^{2j}z^{-2/3}+\mathcal O(z^{-1})\right),
\end{equation*}
as $z\to\infty$ in closed sectors of the upper half plane $\{z\in\cee\mid\Im
z>0\}$, where
\begin{equation}
C_0=1,\quad C_1=-ie^{-\frac{\pi i}{6}(2\nu+1)},\quad C_2=e^{-\frac{\pi
i}{3}(2\nu+1)},
\end{equation}
and $\tilde D$ is the constant in \eqref{asy:pz:generic}. In the lower half
plane we have the same expansions but with $\omega$ replaced by $\omega^{-1}$
and $C_0,C_1,C_2$ by $\overline{C_0},-\overline{C_1},\overline{C_2}$
respectively.
\end{lemma}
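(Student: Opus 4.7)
The plan is to evaluate the large-$z$ asymptotics of $p_0$, $p_1$, $p_2$ by applying the classical Laplace/steepest-descent method directly to their integral representations \eqref{def:p0}--\eqref{def:p2}. Since Lemma~\ref{lemma:asygeneric} already catalogs the admissible asymptotic expansions of any solution to \eqref{thirdorderdiffeq:wtilde}, the task reduces to (i) identifying, for each $p_j$, which of the three expansions is realized, and (ii) computing the corresponding leading constant $C_j$ explicitly.

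First I would substitute the standard large-argument asymptotics $I_\nu(w)\sim e^w/\sqrt{2\pi w}$ and $K_\nu(w)\sim\sqrt{\pi/(2w)}\,e^{-w}$ (valid for $|\arg w|<\pi/2$) into the integrands of \eqref{def:p0}--\eqref{def:p2}, reducing each $p_j(z)$ to a Laplace integral
\[
\mathrm{const}\cdot\int_{\gamma_j}(zy)^{(2\nu-1)/4}e^{n\phi_\pm(z,y)}\,\ud y,\qquad \phi_\pm(z,y):=\pm 2\tau\sqrt{zy}-\tfrac12 y^2-\alpha y,
\]
with $\gamma_0=\er^+$ and $\gamma_2=\er^-$ carrying the $+$ sign, and $\gamma_1=\er$ carrying the $-$ sign (together with the $i/\pi$ prefactor from \eqref{def:p1}; the opposite-sign $I_\nu$-piece of $K_\nu$ coming from \eqref{def:Knu} is subdominant). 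The stationary condition $\partial_y\phi_\pm=0$ reduces to the cubic $y(y+\alpha)^2=\tau^2 z$, whose three roots behave as $y_*^{(j)}=\om^j\tau^{2/3}z^{1/3}+\mathcal O(1)$ for $j=0,1,2$. A direct expansion in powers of $z^{-1/3}$ shows that, with the appropriate branch of $\sqrt{zy}$ at each saddle, one has $\phi_\pm(z,y_*^{(j)})=\theta_{j+1}(z)-\alpha^2/3+\mathcal O(z^{-1})$, with $\theta_{j+1}$ as in Lemma~\ref{lem:theta function}(b); the constant $-\alpha^2/3$ then combines with the $+\alpha^2/3$ in the expansion of $\theta_{j+1}$ to produce the factor $e^{\alpha^2 n/3}$ in the statement.

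The Hessian $\phi_\pm''(z,y_*^{(j)})=-3/2+\mathcal O(z^{-2/3})$ is asymptotically constant, so Laplace's formula yields, after combining $\sqrt{2\pi/(n\cdot 3/2)}$ with the $1/\sqrt{4\pi\tau n\sqrt{zy_*^{(j)}}}$ from the Bessel expansion, the universal factor $\frac{1}{\sqrt 3\,n}\tau^{(\nu-2)/3}z^{(2\nu-1)/3}$. The remaining phase contributions---from $(zy_*^{(j)})^{(2\nu-1)/4}$, the steepest-descent direction at each rotated saddle, the $i/\pi$ in $p_1$, the branches of $(zy)^{\nu/2}$ on the various contours, and the $e^{\pm\nu\pi i}$ jump of $I_\nu$ across $\er^-$---collapse to $\om^{2j}$ multiplied by the numerical constants $C_0=1$, $C_1=-ie^{-\pi i(2\nu+1)/6}$, $C_2=e^{-\pi i(2\nu+1)/3}$. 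Once the leading behavior is pinned down, Lemma~\ref{lemma:asygeneric} forces the subleading coefficients $-\frac{\alpha\nu}{3\tau^{2/3}n}\om^j$ and $\tilde D\,\om^{2j}$ automatically, with no further integral analysis. The lower-half-plane statement follows by Schwarz reflection applied to each representation: $\overline{p_0(\bar z)}=p_0(z)$ and $\overline{p_2(\bar z)}=p_2(z)$, while $\overline{p_1(\bar z)}=-p_1(z)$ because of the $i/\pi$ factor in \eqref{def:p1}, which yields the claimed substitution $\om\mapsto\om^{-1}$ and the replacement of $C_j$ by $\overline{C_0},-\overline{C_1},\overline{C_2}$ respectively.

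The principal obstacle is the phase bookkeeping in the penultimate step: five distinct sources of phase must be tracked and collapsed into the compact formulas for $C_1$ and $C_2$, and a miscount in any one would shift these constants by a root of unity. A useful cross-check is provided by the jump relations of Lemma~\ref{lemma:jumpsp123}: the asymptotic expansions of $p_0$, $p_1$, $p_2$ must transform correctly across $\er^+$ and $\er^-$ in accordance with those jumps, which pins down the phases uniquely.
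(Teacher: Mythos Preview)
Your approach is sound and genuinely different from the paper's. The paper does not apply steepest descent directly to the integrals \eqref{def:p0}--\eqref{def:p2}. Instead it first reduces to $\tau=n=1$ by a rescaling, then introduces an auxiliary third-order ODE $zq'''+\nu q''+\alpha q'-q=0$ whose contour-integral solutions $q_j(z)=\int_{\Gamma_j}t^{\nu-3}e^{1/(2t^2)-\alpha/t+zt}\ud t$ have known asymptotics from \cite{KMFW2}, and observes that $z^\nu q_j''(z)$ solves \eqref{thirdorderdiffeq:wtilde}. The heart of the paper's argument is Lemma~\ref{lemma: relation pq}, which establishes the \emph{exact} identities $p_0=c\,z^\nu q_1''$, $p_2=c'\,(-z)^\nu q_2''$, and an analogous formula for $p_1$, via the reciprocal-Gamma contour integral, a change of variables $t=y/s$, Fubini, and a complementary-error-function estimate to kill a spurious term. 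The constants $C_j$ then drop out of these identities combined with the borrowed asymptotics; no phase tracking is done by hand.

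Your direct Laplace route is more self-contained (no appeal to \cite{KMFW2}) and the leading numerics you sketch are correct: the cubic $y(y+\alpha)^2=\tau^2 z$, the Hessian $-3/2$, and the prefactor $\tau^{(\nu-2)/3}z^{(2\nu-1)/3}/(\sqrt 3\,n)$ all check out. The trade-off is exactly what you identify: the constants $C_1,C_2$ require assembling five phase contributions, and one must also argue, for each contour $\gamma_j$, that it can be deformed through the correct saddle $y_*^{(j)}$ and that the other saddles are recessive---this is where Stokes phenomena for $I_\nu$ and $K_\nu$ enter (for $p_2$ with $y<0$, and for the $y<0$ portion of $p_1$, the argument of $\sqrt{zy}$ leaves the sector where the naive Bessel asymptotic is dominant, so the contour deformation has to be made precise). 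Your proposed cross-check via Lemma~\ref{lemma:jumpsp123} is the right safeguard and, in practice, is what pins down any residual sign or root-of-unity ambiguity. The paper's method sidesteps this bookkeeping entirely by outsourcing the saddle analysis to \cite{KMFW2} and reducing everything to exact algebraic identities, at the cost of a somewhat technical auxiliary lemma.
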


\begin{proof}
First assume $\tau=n=1$. Consider the ODE
\begin{equation} \label{eq: ODE KMW}
zq'''(z)+\nu q''(z)+\alpha q'(z)-q(z)=0.
\end{equation}
As observed in \cite{Deschout1}, for any solution $q(z)$ of \eqref{eq: ODE KMW}
\[
p(z)=z^\nu q''(z)
\]
solves \eqref{thirdorderdiffeq:wtilde}. In \cite{KMFW2} solutions to \eqref{eq: ODE KMW} are studied.
The authors consider four solutions of the form
\begin{equation} \label{eq: def q}
q_j(z)=\int_{\Gamma_j}t^{\nu-3}e^{\frac{1}{2t^2}-\frac{\alpha}{t}+zt}\ud t,
\qquad j=1,2,3,4,
\end{equation}
where the contours $\Gamma_j$ are shown in Figure \ref{fig: Gammaj}.
\begin{figure}
\begin{center}
\begin{tikzpicture}[scale=2]
\begin{scope}[decoration={markings,mark= at position 0.99 with {\arrow{stealth}}}]
\draw[very thick,postaction=decorate](0,0).. controls (0,-1) and (1,-0.8)..(1,0);
\draw[very thick](0,0).. controls (0,1) and (1,0.8)..(1,0) node[right]{$\Gamma_1$};
\draw[very thick,postaction=decorate](0,0).. controls (0,-1) and (-1,-0.8)..(-1,0);
\draw[very thick](0,0).. controls (0,1) and (-1,0.8)..(-1,0) node[left]{$\Gamma_2$};
\end{scope}
\begin{scope}[decoration={markings,mark= at position 0.3 with {\arrow{stealth}}}]
\draw[very thick,postaction=decorate](-0.3,1.2)..node[near start, right]{$\Gamma_3$} controls (-0.1,1) and (0,0.6)..(0,0) ;
\draw[very thick,postaction=decorate](-0.3,-1.2)..node[near start, right]{$\Gamma_4$} controls (-0.1,-1) and (0,-0.6)..(0,0) ;
\end{scope}
\filldraw (0,0) circle (.2mm) node[below right]{0};
\end{tikzpicture}
\end{center}
\caption{The contours of integration $\Gamma_j$, $j=1,2,3,4$, used in the definitions \eqref{eq: def q} of the functions $q_j$.}
\label{fig: Gammaj}
\end{figure}
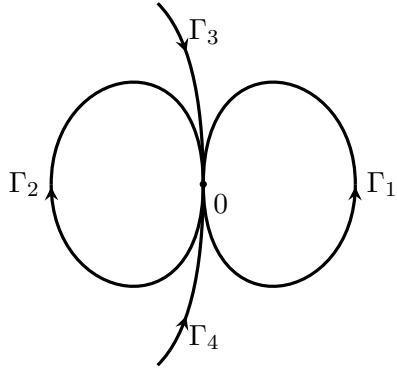

The branch cuts for $t^{\nu-3}$ are specified as follows: for $j=1$
we take $-\pi/2<\arg t < \pi/2$, for $j=2$ we have $\pi/2<\arg t <
3\pi/2$, for $j=3$ we choose $0<\arg t < \pi$, and for $j=4$ we put
$-\pi<\arg t <0$. Note that our definition differs from the one in
\cite{KMFW2} by a multiplicative constant.

We have the following relations
\begin{align}
q_2(z)&=q_3(z)-e^{2\pi \nu i}q_4(z), & \text{for }\Re z > 0,  \label{eq: connection formula 1}\\
q_1(z) &= q_3(z)-q_4(z), & \text{for }\Re z < 0. \label{eq:
connection formula 2}
\end{align}
It is then clear that the functions
\[
z^\nu q''_j(z)=z^\nu
\int_{\Gamma_j}t^{\nu-1}e^{\frac{1}{2t^2}-\frac{\alpha}{t}+zt}\ud t, \qquad
j=1,2,3,4,
\]
generate the three dimensional solution space of
\eqref{thirdorderdiffeq:wtilde}.
Moreover \cite{KMFW2} provides detailed
asymptotics of these functions, e.g. as $z \to \infty$ such that $0<\arg z <
\pi/4$ we have
\begin{align}
z^{\nu}q_1''(z) &= \sqrt{2\pi}e^{-\alpha^2/6}
\tfrac{i}{\sqrt 3}z^{\frac{2\nu-1}{3}}e^{\frac32z^{2/3}-\alpha z^{1/3}}\left(1+\mathcal O(z^{-1/3})\right), \\
z^{\nu}q_3''(z) &= \sqrt{2\pi}e^{-\alpha^2/6}e^{\frac{2\nu\pi i}{3}}\tfrac{i}{\sqrt 3}\omega z^{\frac{2\nu-1}{3}}e^{\frac32\omega z^{2/3}-\alpha\omega^2 z^{1/3}}\left(1+\mathcal O(z^{-1/3})\right), \\
z^{\nu}q_4''(z) &= -\sqrt{2\pi}e^{-\alpha^2/6}e^{-\frac{2\nu\pi
i}{3}}\tfrac{i}{\sqrt 3}\omega^2
z^{\frac{2\nu-1}{3}}e^{\frac32\omega^2 z^{2/3}-\alpha\omega
z^{1/3}}\left(1+\mathcal O(z^{-1/3})\right).
\end{align}

The asymptotic behavior of $z^{\nu}q_2''(z)$ follows using the
connection formula \eqref{eq: connection formula 1}. The idea is now
to express $p_0$, $p_1$, and $p_2$ as linear combinations of the
functions $z^{\nu}q_j''(z)$ to obtain the asymptotic behavior. This
is done in Lemma \ref{lemma: relation pq} below. It leads to the
following results
\begin{align}
p_0(z) &= \tfrac{1}{\sqrt 3} e^{\alpha^2/3}z^{\frac{2\nu-1}{3}}e^{\frac32z^{2/3}-\alpha z^{1/3}}\left(1+\mathcal O(z^{-1/3})\right), \\
p_1(z) &= -\tfrac{i}{\sqrt 3} e^{\alpha^2/3}e^{-\frac{\pi i}{6}(2\nu+1)}\omega^2 z^{\frac{2\nu-1}{3}}e^{\frac32\omega z^{2/3}-\alpha\omega^2 z^{1/3}}\left(1+\mathcal O(z^{-1/3})\right), \\
p_2(z) &= \tfrac{1}{\sqrt 3} e^{\alpha^2/3}e^{-\frac{\pi
i}{3}(2\nu+1)}\omega z^{\frac{2\nu-1}{3}}e^{\frac32\omega^2
z^{2/3}-\alpha\omega z^{1/3}}\left(1+\mathcal O(z^{-1/3})\right),
\end{align}
as $z \to \infty$ such that $0<
\arg z < \pi/4$. This proves the lemma for $\tau=n=1$ and in the
sector $0< \arg z < \pi/4$. The results in the remaining sectors are
proved similarly. The lemma for general values of $\tau$ and $n$
follows by the rescaling
\begin{equation} \label{eq: rescaling}
p_{j}^{\alpha,n,\tau,\nu}(z)=\frac{1}{(\tau n)^\nu \sqrt
n}p_j^{\sqrt n\alpha,n=\tau=1,\nu}\left(n^\frac32\tau^2z \right),
\qquad j=0,1,2.
\end{equation}
This equality follows from the definitions \eqref{def:p0}--\eqref{def:p2} changing variables in the integrals. Note that this rescaling is consistent with the ODE \eqref{thirdorderdiffeq:wtilde}.
\end{proof}

\begin{lemma} \label{lemma: relation pq}
Assuming $\tau=n=1$, the following relations between $p_j$, $j=0,1,2$, and $q_j$, $j=1,2,3$, hold
\begin{align}
p_0(z)&=\tfrac{1}{\sqrt{2\pi}i}e^{\alpha^2/2} z^\nu q''_1(z), \qquad z \in \C \setminus \R^-, \label{eq: relation p0q1} \\
p_1(z)&=\begin{cases} \frac{1}{\sqrt{2\pi}i}e^{-\nu \pi i}e^{\alpha^2/2} z^\nu q_3''(z), & \Im z>0,  \\
                      \frac{1}{\sqrt{2\pi}i}e^{\nu \pi i}e^{\alpha^2/2} z^\nu q_4''(z),  & \Im z<0,
        \end{cases} \label{eq: relation p1q3} \\
p_2(z)&=-\tfrac{1}{\sqrt{2\pi}i}e^{-\nu\pi i}e^{\alpha^2/2}(-z)^\nu q_2''(z)
\qquad z \in \C \setminus \R^+. \label{eq: relation p2q2}
\end{align}
\end{lemma}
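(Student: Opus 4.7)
The plan is to prove each identity by directly manipulating the integral definition of $p_j(z)$ into the form of the contour integral $z^\nu q_k''(z)$. The main ingredients are a Fourier representation of the Gaussian factor $e^{-W(y)}$, a term-by-term Laplace transform of the series defining $f_n$, and a reciprocal substitution that converts the resulting Fourier contour into the prescribed contour $\Gamma_k$.

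Concretely, for \eqref{eq: relation p0q1} I would begin with $e^{-W(y)} = e^{\alpha^2/2}(2\pi)^{-1/2}\int_\R e^{-t^2/2 + it(y+\alpha)}\,dt$. Deforming the $t$-contour slightly into the upper half plane (call it $\gamma_+$) makes the inner $y$-integral absolutely convergent, and evaluating it term by term from the series for $f_n$ gives
\[
\int_0^\infty f_n(zy)\, e^{ity}\,dy \;=\; \sum_k \frac{z^{k+\nu}}{k!\,(-it)^{k+\nu+1}} \;=\; z^\nu(-it)^{-\nu-1} e^{iz/t}, \qquad \Im t>0.
\]
After Fubini one is left with $p_0(z) = \tfrac{z^\nu e^{\alpha^2/2}}{\sqrt{2\pi}} \int_{\gamma_+}(-it)^{-\nu-1} e^{iz/t - t^2/2 + i\alpha t}\,dt$. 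The reciprocal substitution $s = i/t$ sends the exponent to $zs + \tfrac{1}{2s^2} - \tfrac{\alpha}{s}$, the prefactor $(-it)^{-\nu-1}\,dt$ to $-i\, s^{\nu-1}\,ds$, and the contour $\gamma_+$ precisely onto $\Gamma_1$ with matching (counterclockwise) orientation. Collecting the factor $-i = 1/i$ then yields exactly $p_0(z) = \tfrac{1}{\sqrt{2\pi}\,i}\, e^{\alpha^2/2}\, z^\nu q_1''(z)$.

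For \eqref{eq: relation p1q3}, the $y$-integration runs over all of $\R$, so after applying a Hankel-type integral representation of $K_\nu$ (or equivalently the splitting $K_\nu = \tfrac{\pi}{2\sin\nu\pi}(I_{-\nu}-I_\nu)$) the inner $y$-integral becomes a genuine Gaussian over $\R$; a reciprocal substitution analogous to the one above then lands on $\Gamma_3$ when $\Im z>0$ and on $\Gamma_4$ when $\Im z<0$, the case distinction and the phases $e^{\mp\nu\pi i}$ arising from the two branches of $(zy)^{\nu/2}$ on the $y<0$ and $y>0$ pieces. For \eqref{eq: relation p2q2}, the substitution $y\mapsto -y$ converts $p_2$ into an integral over $(0,\infty)$ of the same shape as $p_0$ but with $\alpha\mapsto -\alpha$ and $z\mapsto -z$, and applying the $p_0$ argument together with the elementary identity $-\Gamma_1 = -\Gamma_2$ (i.e.\ $\Gamma_2$ traversed in reverse) produces the factors $(-z)^\nu$ and $-e^{-\nu\pi i}$. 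The principal obstacle is the careful bookkeeping of branch cuts and contour orientations: verifying that the reciprocal substitution lands precisely on $\Gamma_k$ with the prescribed branch cut for $t^{\nu-3}$ requires tracing the image of the Fourier contour endpoint by endpoint, and for $p_1$ it requires keeping the branches of $(zy)^{\nu/2}$ consistent on the two pieces of the $y$-domain. Fubini's theorem for interchanging the integration orders is justified by absolute convergence once the $t$-contour has been deformed off the real axis, while the convergence at the endpoints of $\Gamma_k$ (where $t\to 0$) is guaranteed by the prescribed directional approach that makes $\Re(1/(2t^2))\to -\infty$.
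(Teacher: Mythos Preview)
Your argument for \eqref{eq: relation p0q1} is correct and genuinely different from the paper's. The paper matches power series term by term: it expands both $p_0$ and $z^\nu q_1''$ as convergent series in $z$, invokes the Hankel-contour formula $1/\Gamma(\zeta)=(2\pi i)^{-1}\!\int e^t t^{-\zeta}\,dt$ for each coefficient, and then changes variables $t=y/s$ to turn the Hankel contour into a $\Gamma_1$-type loop. That manoeuvre produces a spurious piece (the half-line $y<0$ of the Gaussian) whose vanishing requires a separate lemma on the complementary error function. Your Fourier route is more streamlined: the reciprocal substitution $s=i/t$ lands directly on a $\Gamma_1$-type circle with the correct orientation and branch, and the decay at the endpoints of $\Gamma_1$ is automatic from $\Im t>0$. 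Your reduction for \eqref{eq: relation p2q2} via $p_2(z)=\tilde p_0(-z)$ (tilde meaning $\alpha\mapsto-\alpha$) is exactly what the paper does.

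For \eqref{eq: relation p1q3} your sketch has a real gap. Neither the splitting $K_\nu=\tfrac{\pi}{2\sin\nu\pi}(I_{-\nu}-I_\nu)$ nor any standard integral representation of $K_\nu$ reduces the $y$-integral to a bare Gaussian over~$\R$: the $I_{\pm\nu}$ pieces still carry $(zy)^{\pm\nu}$, and a Basset-type representation such as $K_\nu(w)=\tfrac12(w/2)^\nu\int_0^\infty u^{-\nu-1}e^{-u-w^2/(4u)}\,du$ leaves $(zy)^\nu$ inside the $y$-integral, so you face a Gaussian moment with a branch cut at $y=0$ rather than a Gaussian. The paper does not attempt this directly. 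Instead it uses the splitting only to observe that
\[
-2i\sin(\nu\pi)\,p_1(z)+p_0(z)+p_2(z)=\int_{-\infty}^{\infty}\sum_{k\ge 0}\frac{(zy)^k}{k!\,\Gamma(k-\nu+1)}\,e^{-W(y)}\,dy
\]
is \emph{entire} in $z$ and hence determined by its value $\sqrt{2\pi}\,e^{\alpha^2/2}/\Gamma(1-\nu)$ at $z=0$. On the $q$-side it builds the entire combination $z^\nu(q_4''+q_1''-q_3'')$ via the change of variables $s=xt$, evaluates it at $z=0$ by the reciprocal-Gamma formula, and matches. The identity \eqref{eq: relation p1q3} then follows from the already-proved \eqref{eq: relation p0q1}, \eqref{eq: relation p2q2} and the connection formula $q_2=q_3-e^{2\pi\nu i}q_4$. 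This indirect route avoids entirely the branch-cut bookkeeping on the two half-lines that your direct approach would have to confront.
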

\begin{proof}
We will prove \eqref{eq: relation p0q1}--\eqref{eq: relation p2q2} for $\nu\in(-1,\infty)\setminus \mathbb{Z}$. When $\nu\in\mathbb{Z}$ the equalities still hold true by continuity in $\nu$.

We start with the proof of \eqref{eq: relation p0q1} using the convergent
series representations
\begin{align*}
p_0(z) &=z^\nu \sum_{k=0}^\infty\frac{z^k}{k!\Gamma(k+\nu+1)}\int_0^\infty y^{k+\nu} e^{-W(y)} \ud y, \\
z^\nu q_1''(z) &= z^\nu \sum_{k=0}^\infty\frac{z^k}{k!}\int_{\Gamma_1}
t^{\nu-1+k}e^{\frac{1}{2t^2}-\frac{\alpha}{t}}\ud t.
\end{align*}
Recall the contour integral formula for the reciprocal of the Gamma function
\cite{DLMF}
\begin{equation} \label{eq: reciprocal Gamma}
\frac{1}{\Gamma(z)}=\frac{1}{2\pi i} \int e^t t^{-z} \ud t,
\end{equation}
where the integration is over a Hankel contour as shown in Figure \ref{fig: Hankel}. The Cauchy theorem allows us to deform this contour of integration into a contour coming from infinity under the angle $-\phi$, avoiding the negative real line, and tending to infinity under the angle $\phi$, for $\pi/2<\phi < 3\pi/4$, see Figure \ref{fig: Hankel}.
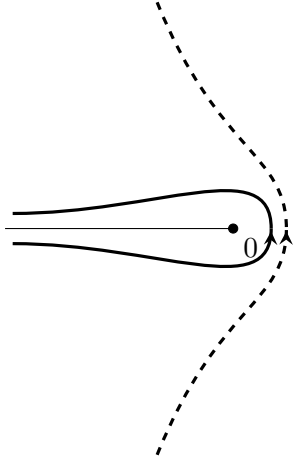
\begin{figure}
\begin{center}
\begin{tikzpicture}
\draw (-3,0)--(0,0);
\begin{scope}[decoration={markings,mark= at position 0.99 with {\arrow{stealth}}}]
\draw[very thick,postaction=decorate](-2.9,-0.2).. controls (-1,-0.2) and (0.5,-1)..(0.5,0);
\draw[very thick](-2.9,0.2).. controls (-1,0.2) and (0.5,1)..(0.5,0);
\draw[dashed,very thick,postaction=decorate](-1,-3).. controls (-0.2,-1) and (0.7,-1)..(0.7,0);
\draw[dashed,very thick](-1,3).. controls (-0.2,1) and (0.7,1)..(0.7,0);
\end{scope}
\filldraw (0,0) circle (.6mm) node[below right]{0};
\end{tikzpicture}
\end{center}
\caption{Hankel contour (solid) used for integration in \eqref{eq: reciprocal Gamma} and its deformation (dashed).}
\label{fig: Hankel}
\end{figure}
Then
\[
\frac{1}{\Gamma(\nu+k+1)}\int_0^\infty y^{\nu+k}e^{-\frac{y^2}{2}-\alpha y} \ud
y=\frac{1}{2\pi i} \int_0^\infty y^{\nu+k} \int t^{-\nu-1-k}
e^{-\frac{y^2}{2}-\alpha y+t} \ud t \ud y,
\]
where the inner integral is taken over the deformed Hankel contour.
In the inner integral (thus, for fixed $y>0$) we change variables $t=y/s$
\[
\frac{1}{\Gamma(\nu+k+1)}\int_0^\infty y^{\nu+k}e^{-\frac{y^2}{2}-\alpha y} \ud
y=\frac{1}{2\pi i} \int_0^\infty  \int_{\Gamma_1^*} s^{\nu-1+k}
e^{-\frac{y^2}{2}-\alpha y+\frac{y}{s}} \ud s \ud y.
\]
The new contour $\Gamma_1^*$ of integration emerges from zero under the angle $-\phi$ and ends in zero under the angle $\phi$, where $\pi/2<\phi < 3\pi/4$. It has counterclockwise orientation, see Figure \ref{fig: gamma1star}.
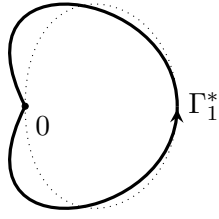
\begin{figure}
\begin{center}
\begin{tikzpicture}[scale=2]
\draw[dotted](0,0).. controls (0,-1) and (1,-0.8)..(1,0);
\draw[dotted](0,0).. controls (0,1) and (1,0.8)..(1,0);
\begin{scope}[decoration={markings,mark= at position 0.99 with {\arrow{stealth}}}]
\draw[very thick,postaction=decorate](0,0).. controls (-0.5,-1) and (1,-0.8)..(1,0);
\draw[very thick](0,0).. controls (-0.5,1) and (1,0.8)..(1,0) node[right]{$\Gamma_1^*$};
\end{scope}
\filldraw (0,0) circle (.2mm) node[below right]{0};
\end{tikzpicture}
\end{center}
\caption{Contour of integration $\Gamma_1^*$ (solid) used in the proof of Lemma \ref{lemma: relation pq} and the contour $\Gamma_1$ (dotted).}
\label{fig: gamma1star}
\end{figure}
Using the Fubini theorem we can change the order of integration to become
\begin{multline} \label{eq: proof asymp1}
\frac{1}{\Gamma(\nu+k+1)}\int_0^\infty y^{\nu+k}e^{-\frac{y^2}{2}-\alpha y} \ud y=\frac{1}{2\pi i}   \int_{\Gamma_1^*} s^{\nu-1+k} \int_{-\infty}^\infty e^{-\frac{y^2}{2}-\alpha y+\frac{y}{s}} \ud y \ud s \\
-\frac{1}{2\pi i}   \int_{\Gamma_1^*} s^{\nu-1+k} \int_{-\infty}^0
e^{-\frac{y^2}{2}-\alpha y+\frac{y}{s}} \ud y \ud s.
\end{multline}
We complete the square in the first term of the right-hand side of
\eqref{eq: proof asymp1} and evaluate the Gaussian integral. Then we
can deform the contour $\Gamma_1^*$ into $\Gamma_1$. Hence the first
term of the right-hand side of \eqref{eq: proof asymp1} equals
\[
\frac{1}{\sqrt{2\pi}i}e^{\frac{\alpha^2}{2}}\int_{\Gamma_1}
t^{\nu-1+k}e^{\frac{1}{2t^2}-\frac{\alpha}{t}}\ud t.
\]
The second term in the right-hand side of \eqref{eq: proof asymp1}
vanishes, since we can contract the contour of integration thanks to
Lemma \ref{lemma: contour contraction}. This completes the proof of
\eqref{eq: relation p0q1}.

Next we prove \eqref{eq: relation p2q2}. We claim that the following
chain of equalities holds
\begin{equation} \label{eq: chain of equalities}
p_2(z)=\tilde p_0(-z)=\tfrac{1}{\sqrt{2\pi}i}e^{\alpha^2/2}(-z)^\nu \tilde
q_1''(-z)=\tfrac{1}{\sqrt{2\pi}i}e^{\pi i (1-\nu)}e^{\alpha^2/2}(-z)^\nu
q_2''(z),
\end{equation}
for $z \in \C \setminus \R^+$. Here, a tilde means that the
parameter $\alpha$ on which the quantity depends has to be replaced
by $-\alpha$. Then \eqref{eq: relation p2q2} is immediate from the
claim. We now discuss the equalities in \eqref{eq: chain of
equalities}. The first one follows from \eqref{def:p0} and
\eqref{def:p2}. The second equality relies on \eqref{eq: relation
p0q1}. The last equality is a consequence of \eqref{eq: def q} for
$j=1,2$, where one has to take special care about the position of
the branch cut.

Finally we prove \eqref{eq: relation p1q3}. Using
\eqref{def:p0}--\eqref{def:p2} and \eqref{def:Inu} we find
\[
-2i\sin(\nu \pi) p_1(z)+p_0(z)+p_2(z)=\int_{-\infty}^\infty \sum_{k=0}^\infty
\frac{1}{k!\Gamma(k-\nu+1)}(zy)^k e^{-y^2/2-\alpha y} \ud y
\]
This function belongs to the one-parameter family of entire solutions of the ODE \eqref{thirdorderdiffeq:wtilde}. Therefore it is characterized by its value at zero
\[
\frac{\sqrt{2\pi}}{\Gamma(1-\nu)}e^{\alpha^2/2}.
\]
The idea is now to build an entire function as a linear combination of the $z^\nu q_j''(z)$, $j=1,2,3,4$ and find its value at zero to obtain the desired equality.

When $x>0$ we have
\begin{equation} \label{eq: analytic combination q 1}
x^\nu(q_4''(x)+q_1''(x)-q_3''(x))=\int
s^{\nu-1}e^{\frac{x^2}{2s^2}-\alpha\frac{x}{s}+s} \ud s,
\end{equation}
where we used \eqref{eq: def q} and made the change of variables
$s=xt$. The integral is taken over the dashed contour in Figure
\ref{fig: Hankel}. Clearly the right-hand side of this expression is
entire in $x$. Therefore the equality is valid in the right half
plane $\{z\in\mathbb{C} \mid \Re z >0 \}$. The analytic continuation
to the left half plane takes the form
\[
-z^\nu(e^{2\pi i \nu}q_4''(z)+q_2''(z)-q_3''(z))=\int
s^{\nu-1}e^{\frac{z^2}{2s^2}-\alpha\frac{z}{s}+s} \ud s, \qquad \Re
z <0,
\]
where the integration is over the same deformed Hankel contour and
the argument of $z$ is chosen in the interval $(-3\pi/2,-\pi/2)$.

We can use the right-hand side of \eqref{eq: analytic combination q
1} to determine its value at zero
\[
\frac{2\pi i}{\Gamma(1-\nu)},
\]
where we also used \eqref{eq: reciprocal Gamma}. Hence we find the equality
\[
-2i\sin(\nu
\pi)p_1(z)+p_0(z)+p_2(z)=\frac{1}{\sqrt{2\pi}i}e^\frac{\alpha^2}{2}z^\nu(q_4''(z)+q_1''(z)-q_3''(z)),
\qquad \Re z >0.
\]
Now apply \eqref{eq: relation p0q1}, \eqref{eq: relation p2q2}
(assuming $\Im z >0$), and \eqref{eq: connection formula 1} to obtain
\eqref{eq: relation p1q3}. In the other quadrants similar relations
can be found.
\end{proof}

\begin{lemma} \label{lemma: contour contraction}
The function
\[
f_\alpha(s)=\int_{-\infty}^0 e^{-\frac{y^2}{2}-\alpha y+\frac{y}{s}}
\ud y
\]
is analytic in $\C \setminus \{0\}$ and satisfies $f_\alpha(s) =
\mathcal O(s)$ as $s \to 0$ uniformly within the sector $|\arg s|
\leq 3\pi/4-\delta$, for any small $\delta>0$.
\end{lemma}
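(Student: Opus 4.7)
My plan is to express $f_\alpha(s)$ explicitly in terms of the complementary error function and then invoke its classical large-argument asymptotics. First, the substitution $u=-y$ gives
\[
f_\alpha(s) = \int_0^\infty e^{-u^2/2+\alpha u-u/s}\,\ud u.
\]
Holomorphy in $\C\setminus\{0\}$ is routine: on any compact $K\subset\C\setminus\{0\}$ the quantity $|1/s|$ is bounded by some $M$, so the integrand is dominated by $e^{-u^2/2+(|\alpha|+M)u}$, which is integrable, and analyticity follows by differentiation under the integral sign.

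Next I would complete the square in the exponent,
\[
-\tfrac12 u^2+\alpha u -u/s = -\tfrac12\bigl(u-(\alpha-1/s)\bigr)^2+\tfrac12\bigl(1/s-\alpha\bigr)^2,
\]
and change variables $v=(u-(\alpha-1/s))/\sqrt 2$. The resulting integration runs along the horizontal ray emerging from $(1/s-\alpha)/\sqrt 2$; since $e^{-v^2}$ is entire and has the needed decay at $+\infty$, this ray can be identified with the standard $\erfc$ contour, yielding the closed-form expression
\[
f_\alpha(s)=\sqrt{\pi/2}\,\exp\!\Bigl(\tfrac12(1/s-\alpha)^2\Bigr)\,\erfc\!\bigl((1/s-\alpha)/\sqrt 2\bigr).
\]

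The main step is then to apply the classical asymptotic
\[
e^{z^2}\erfc(z)=\frac{1}{z\sqrt\pi}\bigl(1+\mathcal O(z^{-2})\bigr),\qquad z\to\infty,\ |\arg z|\le \tfrac{3\pi}{4}-\delta',
\]
(see, e.g., DLMF 7.12.1). For $s\to 0$ with $|\arg s|\le 3\pi/4-\delta$, the point $z=(1/s-\alpha)/\sqrt 2$ goes to $\infty$ with $\arg z\to -\arg s$, so for $|s|$ sufficiently small $|\arg z|\le 3\pi/4-\delta/2$ and the expansion applies. Substituting it into the closed-form expression gives
\[
f_\alpha(s)=\frac{1}{1/s-\alpha}\bigl(1+\mathcal O(s^2)\bigr)=\frac{s}{1-\alpha s}\bigl(1+\mathcal O(s^2)\bigr)=\mathcal O(s),
\]
uniformly in the sector.

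The only delicate point is the sector restriction: the validity region $|\arg z|<3\pi/4$ of the $\erfc$ asymptotic is precisely matched by the sector $|\arg s|\le 3\pi/4-\delta$ in the statement, and outside of it one would cross a Stokes line of $\erfc$ and lose the estimate. This matching is exactly why the bound is stated in this sector, and it is the reason this lemma suffices in the preceding argument where the integration contour $\Gamma_1^*$ lies inside $|\arg s|\le 3\pi/4-\delta$.
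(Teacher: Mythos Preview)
Your proof is correct and follows essentially the same route as the paper: both rewrite $f_\alpha(s)$ as $\sqrt{\pi/2}\,e^{t^2/2}\erfc(t/\sqrt 2)$ with $t=1/s-\alpha$ and then invoke the standard asymptotic $\erfc(z)=\frac{e^{-z^2}}{\sqrt\pi\,z}\bigl(1+\mathcal O(z^{-2})\bigr)$ in the sector $|\arg z|\le 3\pi/4-\delta$. Your write-up is in fact slightly more detailed than the paper's (you spell out the substitution, the analyticity argument, and the contour identification), but the idea is identical.
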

\begin{proof}
We rewrite $f_\alpha(s)$ as
\[
f_\alpha(s)=e^{\frac{t^2}{2}}\sqrt{\frac{\pi}{2}} \erfc \left(
\frac{t}{\sqrt 2} \right), \qquad t=\frac{1}{s}-\alpha,
\]
where the complementary error function is defined as
\[
\erfc(z)=\frac{2}{\sqrt{\pi}}\int_z^\infty e^{-y^2} \ud y.
\]
This function has uniform asymptotics
\[
\erfc(z)=\frac{e^{-z^2}}{\sqrt \pi z}\left(1+\mathcal O \left(z^{-2}\right) \right),
\]
as $z \to \infty$ such that $|\arg z| \leq 3\pi/4-\delta$, for a
small $\delta>0$, see \cite{DLMF}. Then $f_\alpha(s)$ has the
asymptotics
\[
f_\alpha(s)=\frac{s}{1-\alpha s}\left(1+\mathcal
O\left(\frac{s^2}{(1-\alpha s)^2} \right)\right)=\mathcal O(s),
\]
as $s \to 0$ such that $|\arg s| \leq 3\pi/4-\delta$.
This proves the lemma.
\end{proof}

\subsection{Wronskian matrix}

From the three special solutions $p_j$, $j=1,2,3$, introduced in the
previous section we construct the Wronskian matrix
\begin{equation}\label{Wronskian}
W_n(z) = \begin{pmatrix} p_0(z) & p_1(z) & p_2(z)\\
p_0'(z) & p_1'(z) & p_2'(z)\\
p_0''(z) & p_1''(z) & p_2''(z)
\end{pmatrix}\times\diag(1,1,e^{\pm\nu\pi i}),\qquad\hbox{ $\pm\Im z>0$}.
\end{equation}
This Wronskian matrix will be used in the first transformation of
the RH problem for $Y(z)$, see Section~\ref{subsection:firsttransfo}.

\begin{proposition}\label{proposition:WronskianRHP}
The Wronskian matrix $W_n$ satisfies the following RH problem.
\begin{enumerate}
\item[\rm (1)] $W_n(z)$ is analytic for $z\in\cee\setminus\er$.
\item[\rm (2)] $W_n$ has jumps on $\er^+$ and $\er^-$ given by
\begin{align}\label{defjumpmatrixW}
W_{n,+}(x) = W_{n,-}(x)\times\left\{\begin{array}{ll}
\diag\left(1,\begin{pmatrix} 1 & 0 \\
-1 & 1\end{pmatrix}\right),& \qquad x \in \mathbb R^+,
\\
\diag\left(\begin{pmatrix} e^{2\nu\pi i} &  e^{\nu\pi i} \\
0 & 1\end{pmatrix},e^{2\nu\pi i}\right),& \qquad x \in \mathbb R^-.
\end{array}\right.
\end{align}
\item[\rm (3)] As $z\to\infty$, we have that
\begin{multline}\label{asymptoticconditionW1}
    W_n(z) = 3^{-\frac 12}\tau^\frac{\nu+2}{3}e^{\frac{\alpha^2}{3n}}z^{\frac{2\nu-2}{3}}\diag(n^{-1}\tau^{-4/3}z^{1/3},1,n \tau^{4/3}z^{-1/3})
    \\ \times\left(I+ \begin{pmatrix} 0&*&0 \\ 0&0&* \\ *&0&0 \end{pmatrix}z^{-1/3}+\begin{pmatrix} 0&0&* \\ *&0&0 \\
    0&*&0
    \end{pmatrix}z^{-2/3}+O\left(z^{-1}\right)\right)
\begin{pmatrix}1&\om^2&\om \\
    1&1&1\\ 1& \om&\om^2\end{pmatrix}\\
     \times\diag(1,-\sigma^{-1},\sigma)e^{n \Theta(z)},
\end{multline}
as $z\to\infty$ in closed sectors of the upper half plane $\{z\in\cee\mid\Im
z>0\}$. Here the $*$ denote certain constant real entries and
\begin{equation}\label{sigma:constant}\sigma:=e^{\frac{\pi i}{3}(\nu-1)}.\end{equation}
We have the same expression as $z\to\infty$ in closed sectors of the lower half
plane, but then with $\om$ replaced by $\om^{-1}$ and
$\diag(1,-\sigma^{-1},\sigma)$ by $\diag(1,\sigma,\sigma^{-1})$.
%\item[\rm (4)] As $z\to 0$, $z\in\mathbb{C}\setminus\mathbb{R}$, we
%have\marginpar{Do we need this? We should also have a logarithmic term if
%$\nu=0$.}
%\begin{equation}\label{eq:zero behavior of W}
%W(z)=O \begin{pmatrix}|z|^\nu & |z|^{\min (\nu,0)} & |z|^\nu \\
%|z|^{\nu-1} & |z|^{\min (\nu-1,0)} & |z|^{\nu-1} \\
%|z|^{\nu-2} & |z|^{\min (\nu-2,0)} & |z|^{\nu-2}
%\end{pmatrix}.
%\end{equation}
\end{enumerate}
\end{proposition}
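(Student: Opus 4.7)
The proof verifies each of the three RH conditions using the properties of $p_0, p_1, p_2$ established above.

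\textbf{Conditions (1) and (2).} Analyticity on $\C\setminus\R$ is immediate from the definitions \eqref{def:p0}--\eqref{def:p2}: each of $p_0, p_1, p_2$ and their first two derivatives is jointly analytic on the upper and lower half planes, and the piecewise constant factor $\diag(1,1,e^{\pm\nu\pi i})$ preserves analyticity. For the jumps, let $W_n^{\mathrm{raw}}(z)$ denote the Wronskian without the diagonal factor. Applying Lemma~\ref{lemma:jumpsp123} columnwise (only $p_1, p_2$ jump on $\R^+$; only $p_0, p_1$ jump on $\R^-$) and then conjugating by the appropriate $\diag(1,1,e^{\pm\nu\pi i})$ on each side, the phases combine to yield exactly the block jump matrices stated in \eqref{defjumpmatrixW}.

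\textbf{Condition (3).} By Lemma~\ref{lemma:asyp123} we write
\[
p_j(z) = A\, C_j\, \omega^{2j}\, z^{(2\nu-1)/3}\, e^{n\theta_{j+1}(z)}\, R_j(z), \qquad j=0,1,2,
\]
with $A$ a $j$-independent scalar and $R_j(z) = 1 - \frac{\alpha\nu}{3\tau^{2/3}n}\omega^j z^{-1/3} + \tilde D\omega^{2j}z^{-2/3} + O(z^{-1})$. Differentiating, the dominant contribution to $p_j^{(i)}(z)$ comes from iterated differentiation of the exponential via the leading term $n\theta_{j+1}'(z) \sim n\omega^j\tau^{4/3}z^{-1/3}$ from Lemma~\ref{lem:theta function}(b), so the $(i+1,j+1)$-entry of $W_n^{\mathrm{raw}}$ at leading order is proportional to $C_j\omega^{(i+2)j}\,n^i\tau^{4i/3}\,z^{(2\nu-1)/3-i/3}\,e^{n\theta_{j+1}(z)}$. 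These entries factor as a diagonal times the Vandermonde-type matrix
\[
\begin{pmatrix} 1 & \omega^2 & \omega \\ 1 & 1 & 1 \\ 1 & \omega & \omega^2 \end{pmatrix}
\]
times $\diag(C_0,C_1,C_2)\,e^{n\Theta(z)}$. Pulling out $n\tau^{4/3}z^{-1/3}$ rearranges the diagonal into $\diag(n^{-1}\tau^{-4/3}z^{1/3},1,n\tau^{4/3}z^{-1/3})$ and combines scalars into the stated prefactor. A short calculation with $C_0 = 1$, $C_1 = -ie^{-i\pi(2\nu+1)/6}$, $C_2 = e^{-i\pi(2\nu+1)/3}$ and $\sigma = e^{i\pi(\nu-1)/3}$ shows that $\diag(C_0,C_1,C_2 e^{\nu\pi i}) = \diag(1,-\sigma^{-1},\sigma)$ after absorbing the factor $\diag(1,1,e^{\nu\pi i})$ from \eqref{Wronskian}, producing the stated leading asymptotics in the upper half plane; the lower half plane follows identically with $\omega\mapsto\omega^{-1}$ and the diagonal becoming $\diag(1,\sigma,\sigma^{-1})$.

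\textbf{The main obstacle} is to establish the precise sparsity pattern of the correction $(I+\ldots)$ in \eqref{asymptoticconditionW1}. One must expand $p_j, p_j', p_j''$ to two further subleading orders using the higher terms of $R_j$ and of $\theta_{j+1}', \theta_{j+1}''$ from Lemma~\ref{lem:theta function}(b). The driving observation is that every correction coefficient contributing to the $(i+1,j+1)$-entry at order $z^{-k/3}$ has the form $c_{i,k}\,\omega^{(i+2+k)j}$ with $c_{i,k}$ independent of $j$; this uniform $\omega^{(\cdot)j}$-structure is inherited from the $\omega^{\ell j}$-dependence of the $z^{-\ell/3}$ terms in $R_j$ and in $\theta_{j+1}^{(\ell)}$ (using in particular that $\omega^{3-j} = \omega^{2j}$ for $j\in\{0,1,2\}$). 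Since the rows of the Vandermonde factor are precisely the vectors $(\omega^{mj})_{j=0,1,2}$ for $m=0,1,2$, the $z^{-k/3}$ correction in row $i+1$ is proportional to a single row of the Vandermonde—namely row $((i+2+k)\bmod 3)+1$ under cyclic relabeling. Multiplying on the left by the inverse Vandermonde then places a single nonzero entry per row at a cyclically shifted column, producing exactly the permutation-pattern matrices $\begin{pmatrix} 0 & \ast & 0 \\ 0 & 0 & \ast \\ \ast & 0 & 0 \end{pmatrix}$ at order $z^{-1/3}$ and $\begin{pmatrix} 0 & 0 & \ast \\ \ast & 0 & 0 \\ 0 & \ast & 0 \end{pmatrix}$ at order $z^{-2/3}$, as claimed. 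The bookkeeping is elementary but tedious; the conceptual point is that the discrete Fourier structure on $\Z/3\Z$ dictates the sparsity.
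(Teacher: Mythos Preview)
Your proposal is correct and follows essentially the same approach as the paper: both derive the proposition directly from Lemmas~\ref{lemma:jumpsp123} and~\ref{lemma:asyp123}. The paper's proof is a single sentence citing these two lemmas, whereas you have fleshed out the computation---in particular, your discrete-Fourier-on-$\zet/3\zet$ explanation of the cyclic sparsity pattern of the subleading matrices is a genuine clarification of what the paper leaves implicit.
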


\begin{proof} This follows from Lemmas~\ref{lemma:jumpsp123} and
\ref{lemma:asyp123}.
%The behavior near the origin of $W$ in \eqref{eq:zero
%behavior of W} is immediate in view of \eqref{Wronskian}, \eqref{def:p0},
%\eqref{def:p1} and \eqref{def:p2}.
\end{proof}

Note that $W_n(z)$ is not uniquely determined from the above RH
problem, since we did not specify the behavior near the origin
$z=0$. This will be done in \eqref{W:zero} below.

\begin{lemma}\label{lem:asy of W}
The asymptotics of $W_n$ can be rewritten as
\begin{multline}\label{asy:W:bis}
    W_n(z) = -i\tau^\frac{\nu+2}{3}e^{\frac{\alpha^2}{3n}}z^{\frac{2\nu-2}{3}}
    \left[\begin{pmatrix} 1&*&* \\ 0&1&* \\ 0&0&1 \end{pmatrix}+\mathcal  O \left(z^{-1}\right)\right]
    \diag(n^{-1}\tau^{-4/3}z^{1/3},1,n \tau^{4/3}z^{-1/3})
   \\ \times A_{\pm}
     \diag(1,\sigma^{\mp 1},\sigma^{\pm 1})e^{n \Theta(z)},
\end{multline}
as $z\to\infty$ in closed sectors of the half plane
$\{z\in\cee\mid\pm\Im z>0\}$, with $\sigma$ as in
\eqref{sigma:constant} and
\begin{equation}
\label{def:A+}A_+ := \frac{i}{\sqrt3} \begin{pmatrix} 1&-\omega^2 & \omega \\
1&-1&1 \\ 1&-\omega&\omega^2 \end{pmatrix},\qquad
A_- := \frac{i}{\sqrt3} \begin{pmatrix} 1&\omega & \omega^2 \\
1&1&1 \\ 1&\omega^2&\omega \end{pmatrix}.
\end{equation}
%We obtain the same asymptotics \eqref{asy:W:bis} as
%$z\to\infty$ in closed sectors of the lower half plane, but then with $A_+$
%replaced by $A_-$ and $\diag(1,\sigma^{-1},\sigma)$ by
%$\diag(1,\sigma,\sigma^{-1})$.
\end{lemma}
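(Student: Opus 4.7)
The proof is a rearrangement of the asymptotic formula given in Proposition~\ref{proposition:WronskianRHP}. The plan is to push the diagonal factor $D_1(z) := \diag(n^{-1}\tau^{-4/3}z^{1/3}, 1, n\tau^{4/3}z^{-1/3})$ through the correction $I + E_1 z^{-1/3} + E_2 z^{-2/3} + O(z^{-1})$ appearing in \eqref{asymptoticconditionW1} via the identity $D_1[I + X] = [I + D_1 X D_1^{-1}]\,D_1$, and then to absorb the scalar prefactor $3^{-1/2}$ together with the mixing matrix $(\omega^{ij})$ into the new matrix $A_\pm$.

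For the conjugation step, write $D_1 = \diag(d_1, d_2, d_3)$ so that $(D_1 X D_1^{-1})_{ij} = (d_i/d_j)X_{ij}$. The matrix $E_1$ has nonzero entries only at positions $(1,2), (2,3), (3,1)$, where the relevant ratios $d_i/d_j$ have orders $z^{1/3}, z^{1/3}, z^{-2/3}$, respectively. After multiplication by the global factor $z^{-1/3}$, the $(1,2)$ and $(2,3)$ entries become $z$-independent constants while the $(3,1)$ entry is $O(z^{-1})$. Similarly $E_2$ has nonzero entries at positions $(1,3), (2,1), (3,2)$ with ratios $d_i/d_j$ of orders $z^{2/3}, z^{-1/3}, z^{-1/3}$; after multiplication by $z^{-2/3}$ the $(1,3)$ entry becomes a constant while the other two are $O(z^{-1})$. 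Hence the conjugated expansion assumes the form $\begin{pmatrix} 1 & * & * \\ 0 & 1 & * \\ 0 & 0 & 1 \end{pmatrix} + O(z^{-1})$, which is exactly the upper-triangular factor in the statement.

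It remains to verify the algebraic identity
\[
3^{-1/2}\begin{pmatrix} 1 & \omega^2 & \omega \\ 1 & 1 & 1 \\ 1 & \omega & \omega^2 \end{pmatrix}\diag(1,-\sigma^{-1},\sigma) \;=\; -i\,A_+\,\diag(1,\sigma^{-1},\sigma),
\]
which is immediate after factoring $-i/\sqrt{3}$ out of the definition \eqref{def:A+} of $A_+$ and comparing the three columns. The corresponding identity for the lower half plane, producing $A_-$ together with $\diag(1,\sigma,\sigma^{-1})$, follows from the same calculation under the substitution $\omega \mapsto \omega^{-1}$ (and $-\sigma^{-1} \leftrightarrow \sigma$), consistently with the rule stated just after \eqref{asymptoticconditionW1}. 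Since the entire argument reduces to two purely algebraic manipulations, no serious obstacle arises; the only care required is careful bookkeeping of the powers of $z^{1/3}$ in the conjugation step, to confirm that precisely the desired off-diagonal entries survive as constants while everything else is absorbed into the $O(z^{-1})$ remainder.
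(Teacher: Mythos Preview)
Your approach is the natural one and matches what the paper intends (the paper itself states the lemma without proof, treating it as a direct rewriting of \eqref{asymptoticconditionW1}). The conjugation argument for $E_1 z^{-1/3}$ and $E_2 z^{-2/3}$ and the algebraic identity involving $A_\pm$ are both correct.

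There is, however, one point that needs more care. You write that after conjugation ``everything else is absorbed into the $O(z^{-1})$ remainder'', but for a \emph{generic} matrix-valued $O(z^{-1})$ this is false: the $(1,3)$ entry of $D_1\cdot O(z^{-1})\cdot D_1^{-1}$ picks up the factor $d_1/d_3\sim z^{2/3}$ and would only be $O(z^{-1/3})$. What saves the argument is that the remainder in \eqref{asymptoticconditionW1} is not generic: it inherits the same three-cyclic pattern as $E_1,E_2$ at every order, i.e.\ the coefficient of $z^{-k/3}$ has nonzero entries only at positions $(i,j)$ with $j-i\equiv k\pmod 3$. This follows from the $\omega^j$-structure of the $p_j$ asymptotics in Lemma~\ref{lemma:asyp123} (ultimately from the formal series solution of the ODE in Lemma~\ref{lemma:asygeneric}). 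Since $d_i/d_j\propto z^{(j-i)/3}$, conjugation then sends the coefficient of $z^{-k/3}$ at position $(i,j)$ to order $z^{(j-i-k)/3}$, which is an integer power of $z$ and is at most $z^{-1}$ once $k\ge 3$. With this observation your conjugation step is fully justified.
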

The prefactors in \eqref{def:A+} are chosen such that both matrices have
determinant 1.

\begin{remark}
It is easily seen from \eqref{thirdorderdiffeq:wtilde} and
\eqref{Wronskian} that the determinant of $W_n$ satisfies the
following linear differential equation
\begin{equation}
z(\det W_n)'(z)=(2\nu-2)\det W_n(z), \qquad z\in \mathbb{C}\setminus
\mathbb{R}^-.
\end{equation}
Hence, we have for a certain nonzero constant $K$ that
\begin{equation}\label{eq:det of Wronskian}
\det W_n(z)=K z^{2\nu-2}.
\end{equation}
\end{remark}

\section{Steepest descent analysis for $Y(z)$ with quadratic potential $W$: regular cases}
\label{section:steepestdescent:quadratic}

In this section we will perform a Deift/Zhou steepest descent analysis for
$Y(z)$ in the situation that $W$ is quadratic \eqref{eq:quadratic potential W}
and the triplet $(V,W,\tau)$ is regular in the sense of Definition~\ref{def:
regular cases}. This analysis consists of a series of invertible
transformations and results in the proof of Theorem~\ref{th: noncrit limit}.

\subsection{First transformation $Y\mapsto X$}
\label{subsection:firsttransfo}

The idea behind the first transformation $Y\mapsto X$ is inspired by
\cite{Duits2}. We will multiply the RH matrix $Y$ on the right with the inverse
transpose of the Wronskian matrix $W_n$. Recall from \eqref{Wronskian} that
$W_n$ is constructed from the solutions to the third-order differential
equation satisfied by $h_{0,n}$.

In the analysis we will work with the alternative system of weight
functions in \eqref{MOP:2} rather than those in \eqref{MOP:1}.
Thanks to \eqref{vl:def}, \eqref{Vquad:4}, and \eqref{def:p0} these
weight functions are
\begin{equation} \label{eq: weight functions}
\begin{cases}
w_{0,n}(z)=e^{-nV(z)}p_0(z),\\
w_{1,n}(z)=e^{-nV(z)}zp_0'(z),\\
w_{2,n}(z)=e^{-nV(z)}(zp_0''(z)+(1-\nu)p_0'(z)). \end{cases}
\end{equation}
Note that $w_{1,n}$ and $w_{2,n}$ have been interchanged with
respect to \eqref{vl:def}. The labeling \eqref{eq: weight functions}
will be more convenient for us. Defining
\begin{equation} \label{eq: D def}
D(z)=\begin{pmatrix} 1&0&0\\ 0&z&1-\nu \\ 0&0&z\end{pmatrix},
\end{equation}
\eqref{eq: weight functions} leads to the equality
\begin{equation}\label{Wronskian:bis}
e^{-nV(z)}W_n^T(z)D(z)=\begin{pmatrix}
w_{0,n}(z) & w_{1,n}(z) & w_{2,n}(z) \\ *&*&* \\ *&*&* \end{pmatrix},
%e^{-nV(z)}\begin{pmatrix} 1 & 0& 0
%\\0 & z & 0
%\\
%0 & 1-\nu & z
%\end{pmatrix}
%\begin{pmatrix} p_0(z) & p_1(z) & p_2(z)\\
%p_0'(z) & p_1'(z) & p_2'(z)\\
%p_0''(z) & p_1''(z) & p_2''(z)
%\end{pmatrix}.
\end{equation}
where $*$ denotes unimportant matrix entries. This identity motivates the  transformation $Y \mapsto X$ given below.

\begin{definition} \label{def: X}
We define $X$ by
\begin{equation} \label{eq: def X}
X(z)=z^{\nu/2}P_n^{-1} Y(z) \diag\left(1,D^{-1}(z) W_n^{-T}(z) e^{n
\Theta(z)}\right),
\end{equation}
for $z \in \C \setminus \R$, where $\Theta(z)$ is defined in \eqref{Theta} and $D(z)$ in \eqref{eq: D def}. $P_n$ is an invertible matrix to be specified below \eqref{Pn:constant:def}.
\end{definition}

We claim that $X(z)$ is the unique solution to RH problem~\ref{rhp:X}
below. Note that the jump on the positive real axis is simplified at the cost
of creating a jump on the negative real axis.

\begin{rhp}\label{rhp:X}
The $4\times 4$ matrix-valued function $X : \mathbb C \setminus
\mathbb R \to \mathbb C^{4 \times 4}$ defined in \eqref{eq: def X}
satisfies the following conditions.
\begin{enumerate}
\item[\rm (1)] $X(z)$ is analytic for $z\in\mathbb C\setminus\mathbb R$.
\item[\rm (2)] We have
\[
X_+(x)=X_-(x) J_X(x), \qquad x \in \R,
\]
where the jump matrices $J_X$ are given by
\begin{align*}
J_X(x) = \diag \left( 1, e^{-n\Theta_-(x)} \right)
\begin{pmatrix} 1&e^{-nV(x)}&0&0 \\ 0&1&0&0 \\ 0&0&1&1 \\ 0&0&0&1  \end{pmatrix}
\diag \left(1,e^{n\Theta_+(x)}\right),
\end{align*}
%\begin{align*}
%J_X(x) = \begin{pmatrix} 1&e^{-n(V(x)-\theta_{1,+}(x))}&0&0
%\\ 0&e^{n(\theta_{1,+}(x)-\theta_{1,-}(x))}&0&0
%\\ 0&0&e^{n(\theta_{2,+}(x)-\theta_{2,-}(x))}& e^{n(\theta_{3,+}(x)-\theta_{2,-}(x))}
%\\ 0&0&0&e^{n(\theta_{3,+}(x)-\theta_{3,-}(x))} \end{pmatrix},
%\end{align*}
if $x>0$,
\begin{align*}
J_X(x) = \diag \left( 1, e^{-n\Theta_-(x)} \right)
\begin{pmatrix} e^{\nu \pi i}&0&0&0 \\ 0&e^{-\nu \pi i} &0&0 \\ 0&-1&e^{\nu \pi i}&0 \\ 0&0&0&e^{-\nu \pi i} \end{pmatrix}
\diag \left(1,e^{n\Theta_+(x)}\right),
\end{align*}
%\begin{align*}
%J_X(x) = \begin{pmatrix} e^{\nu \pi i}&0&0&0
%\\ 0&e^{-\nu \pi i}e^{n(\theta_{1,+}(x)-\theta_{1,-}(x))} &0&0
%\\ 0&-e^{n(\theta_{1,+}(x)-\theta_{2,-}(x))}&e^{\nu \pi i}e^{n(\theta_{2,+}(x)-\theta_{2,-}(x))}&0
%\\ 0&0&0&e^{-\nu \pi i}e^{n(\theta_{3,+}(x)-\theta_{3,-}(x))} \end{pmatrix},
%\end{align*}
if $x<0$. See \eqref{Theta} for the definition of $\Theta$.
\item[\rm (3)] We have as $z\to \infty$ with $\pm\Im z >0$,
\begin{multline}\label{asy:X}
X(z)= \left[ I+ \mathcal O(z^{-1})\right]
\diag \left(z^n, z^{-\frac{n-1}{3}},z^{-\frac{n+1}{3}},z^{-\frac{n}{3}}\right) \diag\left(z^{\nu/2},z^{-\nu/6} A_{\pm}^{-T} \right)  \\
\times \diag \left( 1,1,\sigma^{\pm 1},\sigma^{\mp 1} \right),
\end{multline}
where $\sigma$ and $A_{\pm}$ are defined in \eqref{sigma:constant}
and \eqref{def:A+}, respectively.
%We have the same formula
%as $z\to \infty$ with $\Im z <0$, but then with $A_{+}^{-T}$ replaced by
%$A_-^{-T}$ and $\diag \left( 1, 1,\sigma,\sigma^{-1} \right)$ by $\diag \left(
%1, 1,\sigma^{-1},\sigma \right)$.

\item[\rm (4)] As $z\to 0$, $z\in\mathbb{C}\setminus\mathbb{R}$, we have
\begin{equation}\label{eq:zero behavior of X}
\left\{
       \begin{array}{lll}
         X(z)\diag(|z|^{-\nu/2},|z|^{\nu/2},|z|^{-\nu/2},|z|^{\nu/2}) = \mathcal O(1),& \hbox{if $\nu>0$,} \\
X(z)\diag(1,(\log|z|)^{-1},1,(\log|z|)^{-1})=\mathcal O(1),& \hbox{if $\nu=0$,} \\
         X(z)=\mathcal O(|z|^{\nu/2}),\qquad X^{-1}(z)=\mathcal O(|z|^{\nu/2}), & \hbox{if $-1<\nu<0$.}
       \end{array}
     \right.
\end{equation}
\end{enumerate}
\end{rhp}

\begin{proof}
The jumps of $X$ follow directly from \eqref{defjumpmatrixW},
\eqref{Wronskian:bis} and the definitions. Next we check the
asymptotics for $z\to\infty$. From \eqref{asymptoticconditionY} and
\eqref{eq: D def} we obtain
\begin{equation*}
Y(z) \diag\left(1,D^{-1}(z)\right) =(I + \mathcal O(z^{-1}))\diag \left(
z^n,z^{-n/3},z^{-n/3-1},z^{-n/3-1}\right),
\end{equation*}
as $z \to \infty$.
%Here, $\left(Y_1\right)_{i,j}$ denotes the
%$(i,j)$ entry of the matrix $Y_1$ appearing in
%\eqref{asymptoticconditionY}.
Moreover, Lemma \ref{lem:asy of W} yields
\begin{multline*}
z^{\nu/2} W_n^{-T}(z) e^{n \Theta(z)}=i\tau^{-\frac{\nu+2}{3}}
e^{-\frac{\alpha^2}{3n}} z^{\frac 23-\frac{\nu}{6}} \left[ \begin{pmatrix} 1&0&0 \\
*&1&0 \\ *&*&1
\end{pmatrix} +\mathcal O \left(z^{-1}\right) \right]\\ \times \diag
\left(\tau^{4/3}nz^{-1/3},1,\tau^{-4/3} n^{-1} z^{1/3} \right)A_{\pm}^{-T}\diag
\left( 1,\sigma^{\pm 1},\sigma^{\mp 1}\right),
\end{multline*}
as $z \to \infty$ in a sector of $\pm\Im z >0$. Combining this we get
\begin{multline}\label{Pn:constant:def}
z^{\nu/2} Y(z) \diag\left(1,D^{-1}(z) W_n^{-T}(z) e^{n \Theta(z)}\right) \\
=
\left[\begin{pmatrix} 1&0&0&0 \\ 0&i\tau^\frac{2-\nu}{3} e^{-\frac{\alpha^2}{3n}}n&*&* \\
0 & 0 &i\tau^{-\frac{2+\nu}{3}} e^{-\frac{\alpha^2}{3n}}& 0
\\ 0 & 0 &*&i\tau^{-\frac{6+\nu}{3}} e^{-\frac{\alpha^2}{3n}}n^{-1} \end{pmatrix}+ \mathcal O(z^{-1})\right]\\
\times \diag \left(z^n,
z^{-\frac{n-1}{3}},z^{-\frac{n+1}{3}},z^{-\frac{n}{3}}\right)
\diag\left(z^{\nu/2},z^{-\nu/6} A_{\pm}^{-T} \right)\diag \left( 1,
1,\sigma^{\pm 1},\sigma^{\mp 1} \right),
\end{multline}
as $z \to \infty$ with $\pm\Im z>0$, for certain constants $*$.
Finally, we define $P_n$ to be the constant matrix in the square
bracket in \eqref{Pn:constant:def}, which yields \eqref{asy:X}.

Next we show the behavior near the origin in \eqref{eq:zero behavior of X}.
Recall the three weight functions $e^{-nV(x)}h_{0,n}(x)$, $ e^{-nV(x)}x h_{0,n}'(x)$,
and $\ e^{-nV(x)}(x h_{0,n}''(x)+(1-\nu) h_{0,n}'(x))$ in the RH problem for $Y$. It
follows from the explicit form of $Y$ that
\begin{equation}\label{eq:asy of Y near 0}
\begin{cases}
Y(z)=\mathcal O(1),                                           & \hbox{if $\nu>0$,}   \\
Y(z)\diag(1,(\log |z|)^{-1},1,(\log |z|)^{-1})=\mathcal O(1), & \hbox{if $\nu=0$,}   \\
Y(z)\diag(1,|z|^{-\nu},|z|^{-\nu},|z|^{-\nu})=\mathcal O(1),  & \hbox{if $-1<\nu<0$,}\\
Y(z)^{-T}\diag(|z|^{-\nu},1,1,1)=\mathcal O(1),               &
\hbox{if $-1<\nu<0$,}
\end{cases}
\end{equation}
as $z\to 0$. On the other hand, we see from the definitions of $p_i$,
$i=0,1,2$, that there exist some constants $A,B,C,D$ that depend on $\nu$ such
that
\begin{equation}\label{W:zero}
\begin{aligned}
p_0(z) &=Az^{\nu}+\mathcal O(|z|^{\nu+1}), \\
p_1(z) &=\begin{cases}
  Bz^{\nu}+\mathcal O(|z|^{\nu+1})+C+\mathcal O(z), & \text{if $\nu\neq 0$,} \\
  B+ \mathcal O(z)+\log z\left(C+\mathcal O(z)\right), & \text{if $\nu= 0$,}
  \end{cases} \\
p_2(z) &= Dz^{\nu}+\mathcal O(|z|^{\nu+1}),
\end{aligned}
\end{equation}
as $z\to 0$. This, together with \eqref{eq: D def} and \eqref{Wronskian} gives
\begin{align*}
D^T(z)W_n(z) =
\begin{pmatrix} Az^{\nu}+\mathcal O(|z|^{\nu+1}) & \mathcal O(|z|^{\nu})+\mathcal O(1)+\mathcal O(\log |z|) & Dz^{\nu}+\mathcal O(|z|^{\nu+1})\\
\nu Az^{\nu}+\mathcal O(|z|^{\nu+1}) & \mathcal
O(|z|^{\nu})+\mathcal O(z) & \nu
Dz^{\nu}+\mathcal O(|z|^{\nu+1})\\
\mathcal O(|z|^{\nu}) & \mathcal O(|z|^{\nu})+\mathcal O(1)+\mathcal
O(\log |z|) & \mathcal O(|z|^{\nu})\end{pmatrix},
\end{align*}
as $z\to 0$, where we understand that the $\mathcal O(\log |z|)$
terms are absent if $\nu\neq 0$. Also in case $\nu=0$ the $(1,2)$
and the $(3,2)$ entry have to be replaced by $\mathcal O(z)$. Note
that the terms of order $\mathcal O(|z|^{\nu-1})$ in the last row
all cancel.

Using the cofactor formula for the inverse transpose and the fact
that
\[ \det(D^T(z)W_n(z))=Kz^{2\nu}, \]
see \eqref{eq:det of Wronskian}, we find
\begin{multline*}
D^{-1}(z)W_n^{-T}(z) \\ =\frac{1}{Kz^{2\nu}}\begin{pmatrix} \mathcal
O(|z|^{2\nu})+\mathcal O(|z|^{\nu}) & \mathcal O(|z|^{2\nu}) &
\mathcal O(|z|^{2\nu})+\mathcal O(|z|^{\nu})
\\
\mathcal O(|z|^{2\nu})+\mathcal O(|z|^{\nu})+\mathcal O(\log |z|) & \mathcal O(|z|^{2\nu})
& \mathcal O(|z|^{2\nu})+\mathcal O(|z|^{\nu})+\mathcal O(\log |z|) \\
\mathcal O(|z|^{2\nu})+\mathcal O(|z|^{\nu}) & \mathcal O(|z|^{2\nu}) & \mathcal O(|z|^{2\nu})+\mathcal O(|z|^{\nu}) \\
\end{pmatrix},
\end{multline*}
where again the $\mathcal O(\log |z|)$ terms are absent if $\nu\neq
0$. Using this with \eqref{eq:asy of Y near 0} and \eqref{eq: def
X}, we arrive at \eqref{eq:zero behavior of X} after a
straightforward calculation.
\end{proof}

\subsection{Second transformation $X\mapsto U$}
\label{subsection:secondtransfo}

The second transformation $X\mapsto U$ serves to (partly) normalize
the behavior at infinity. To do this we will use certain functions
related to the vector equilibrium problem. We call these functions
$\lambda$-functions and define them as a transformed version of the
$\lambda$-functions in \cite{DKM}. More precisely, denoting with
$\lam^\DKM_j(z)$, $j=1,2,3,4$, the  $\lambda$-functions used in
\cite{DKM}, we will work with the square root versions
$\lambda_j(z)=2\lam^\DKM_j(\sqrt{z})$.

\begin{lemma}\label{lemma: lambda functions}
There exist functions $\lam_j$, $j=1,2,3,4$, analytic on $\C \setminus \R$ that
satisfy the following conditions.
\begin{itemize}
\item[\rm (a)] As $z \to \infty$ we have
\begin{align}
\notag \lam_1(z) &= V(z)-\log(z)-\ell_1+\mathcal O \left( z^{-1}\right), \\
\lam_2(z) &= \theta_1(z)+ \tfrac13 \log(z)+Cz^{-1/3}+Dz^{-2/3}+\mathcal O
\left( z^{-1} \right),
\label{eq:asy of lam2} \\
\notag \lam_3(z) &= \theta_2(z)+\begin{cases} \frac13 \log(z)+C\omega z^{-1/3}+D
\omega^2
z^{-2/3}+\mathcal O \left( z^{-1} \right) & \Im z >0, \\
\frac13 \log(z)+C\omega^2 z^{-1/3}+D \omega  z^{-2/3}
+\mathcal O \left( z^{-1} \right) & \Im z <0, \end{cases} \\
\lam_4(z) &= \theta_3(z)+\begin{cases} \frac13 \log(z)+C\omega^2 z^{-1/3}+D \omega  z^{-2/3}+\mathcal O \left( z^{-1} \right) & \Im z >0, \\
\frac13 \log(z)+C\omega z^{-1/3}+D \omega^2  z^{-2/3}+\mathcal O
\left( z^{-1} \right) & \Im z <0, \end{cases}\label{eq:asy of lam4}
\end{align}
where $C,$ $D$ and $\ell_1$ are real constants.
\item[\rm (b)] There exists a positive integer $N$, two sets of ordered
numbers
\[
0=b_0 \leq a_1 < b_1 < a_2 < \cdots < a_N < b_N<a_{N+1}=\infty,
\] \[
1=\alpha_0>\alpha_1>\cdots > \alpha_N=0,
\]
and constants $c_2 \geq 0$ and $c_3 \geq 0$ such that the
$\lam$-functions satisfy the following jump conditions:
\begin{itemize}
\item[\rm (i)] On $\R^+$ we have
\begin{align}
\label{eq:lam1 and lam2}\lam_{1,\pm}-\lam_{2,\mp} &=0 && \text{on $(a_j,b_j)$, $j=1,\ldots,N$,} \\
\notag \lam_{1,+}-\lam_{1,-} &=-2\pi i \alpha_j && \text{on $(b_j,a_{j+1})$, $j=0,\ldots,N$,} \\
\notag \lam_{2,+}-\lam_{2,-} &=2\pi i \alpha_j && \text{on $(b_j,a_{j+1})$, $j=0,\ldots,N$,} \\
\label{eq:lam 3 and lam 4}
\lam_{3,\pm}-\lam_{4,\mp} &=0 && \text{on $(c_3,\infty)$,} \\
\notag \lam_{3,+}-\lam_{3,-} &=-2\pi i/3 && \text{on $(0,c_3)$,} \\
\notag \lam_{4,+}-\lam_{4,-} &= 2\pi i/3 && \text{on $(0,c_3)$.}
\end{align}
\item[\rm (ii)] On $\R^-$ we have
\begin{align}
\lam_{1,+}-\lam_{1,-} &=-2\pi i && \text{on $\R^-$}, \notag \\
\lam_{2,\pm}-\lam_{3,\mp} &=\pm 2\pi i/3 && \text{on $(-\infty,-c_2)$}, \label{eq:relations lam2 and lam3}\\
\lam_{2,+}-\lam_{2,-} &= 2\pi i/3 && \text{on $(-c_2,0)$} \notag\\
\lam_{3,+}-\lam_{3,-} &= 2\pi i/3 && \text{on $(-c_2,0)$} \notag\\
\lam_{4,+}-\lam_{4,-} &= 2\pi i/3 && \text{on $\R^-$}.    \notag
\end{align}
\end{itemize}
\end{itemize}
\end{lemma}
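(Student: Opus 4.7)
The plan is to construct the four $\lambda$-functions by pulling back the $\lambda$-functions of the associated non-chiral two-matrix model via the squaring map, exactly as was done for the $\theta$-functions in \eqref{eq:theta function}. Concretely, I would set
\begin{equation*}
\lambda_j(z) := 2\lambda_j^{\DKM}(\sqrt{z}), \qquad j=1,2,3,4,
\end{equation*}
where $\lambda_j^{\DKM}$ are the $\lambda$-functions of \cite{DKM} associated with the potentials $V^{\DKM},W^{\DKM}$ of \eqref{eq: relation chiral/hermitian potentials} and the constraint $\sigma_2^{\DKM}$.

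The first step is to establish, just as in \eqref{eq:even of theta DKM} for the $\theta$-functions, that the $\lambda_j^{\DKM}$ inherit an evenness $\lambda_j^{\DKM}(-z)=\lambda_j^{\DKM}(z)$ (possibly up to relabeling within the pair $(\lambda_2^{\DKM},\lambda_3^{\DKM})$ that carries the imaginary-axis cut). This symmetry comes from the fact that both $V^{\DKM}$, $W^{\DKM}$ and the constraint $\sigma_2^{\DKM}$ are even, hence the underlying Riemann surface of \cite{DKM} is invariant under $z\mapsto -z$. Once this symmetry is in hand, $\lambda_j(z)$ is unambiguously defined on $\C\setminus\R$ (the branch of $\sqrt{z}$ is irrelevant for $z\notin\R^-$ by evenness, while on $\R^-$ the cut of $\sqrt{z}$ combines with the imaginary-axis cut of $\lambda_j^{\DKM}$ to produce exactly one cut).

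The asymptotics (a) are then immediate. Evenness of $\lambda_j^{\DKM}$ forces its large-$z$ expansion to contain only even powers of $z$, so after the substitution $z\mapsto\sqrt{z}$ one obtains only integer powers of $z$ (no half-integer remainders). Combining this with $2V^{\DKM}(\sqrt{z})=V(z)$, $-2\log\sqrt{z}=-\log z$, and $2\theta_j^{\DKM}(\sqrt{z})=\theta_j(z)$ (Lemma~\ref{lem:theta function}) reproduces \eqref{eq:asy of lam2}–\eqref{eq:asy of lam4} with $\ell_1=2\ell_1^{\DKM}$ and the coefficients $C,D$ inherited from \cite{DKM}.

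For the jump relations (b), I would track how the three supports in \cite{DKM} pull back under squaring. The support of $\mu_1^{\DKM}$ is a symmetric union $\bigcup_j\bigl([-b_j^{\DKM},-a_j^{\DKM}]\cup[a_j^{\DKM},b_j^{\DKM}]\bigr)$, which maps to $\bigcup_j[a_j,b_j]\subset\R^+$ with $a_j=(a_j^{\DKM})^2$ and $b_j=(b_j^{\DKM})^2$; similarly the support of $\mu_3^{\DKM}$ pulls back to $[c_3,\infty)$ and the support of $\sigma_2^{\DKM}$ on $i\R$ pulls back to $(-\infty,-c_2]$ on $\R^-$. Each DKM jump, e.g. $\lambda_{1,\pm}^{\DKM}=\lambda_{2,\mp}^{\DKM}$ on the support of $\mu_1^{\DKM}$, is then transported by the map $z\mapsto\sqrt{z}$ to the corresponding interval on $\R^+$, and the mass-fraction constants $\alpha_j$ as well as the $\pm 2\pi i/3$ factors come directly from \cite[Section 2]{DKM}. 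The purely logarithmic jumps on $(b_j,a_{j+1})$ and on $\R^-$ arise from the multivaluedness of $\log z$ after the square-root substitution.

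The main obstacle is bookkeeping: the DKM $\lambda$-functions have jumps on \emph{both} $\R$ and $i\R$, and under $\sqrt{\cdot}$ these two cut systems get folded onto $\R^+$ and $\R^-$ respectively, with $+$/$-$ sides of $\R^-$ coming from opposite sides of $i\R$ in the DKM picture. Verifying that the indices $(1,2,3,4)$ can be chosen consistently so that all of \eqref{eq:lam1 and lam2}–\eqref{eq:relations lam2 and lam3} hold simultaneously — in particular the pairing $\lambda_{2,\pm}=\lambda_{3,\mp}$ on $(-\infty,-c_2)$, which couples across the negative axis — is the delicate point, and will rely on the explicit Riemann-surface description of the $\lambda_j^{\DKM}$ in \cite{DKM} together with the $z\mapsto-z$ symmetry identified in the first step.
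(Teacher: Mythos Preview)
Your approach is the same as the paper's: define $\lambda_j(z)=2\lambda_j^{\DKM}(\sqrt{z})$ and read everything off the DKM data. Parts (a) and (b)(i) go through essentially as you describe.

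There is, however, a genuine slip in your symmetry step that would derail (b)(ii). The functions $\lambda_j^{\DKM}$ are \emph{not} even: unlike the $\theta_j^{\DKM}$, they are built from the $g$-functions $g_k^{\DKM}(z)=\int\log(z-s)\,d\mu_k^{\DKM}(s)$, and for a symmetric measure the logarithm produces an additive constant under $z\mapsto -z$ rather than exact invariance. Concretely the paper computes, from \cite[Definition~4.1]{DKM},
\[
g_1^{\DKM}(ix)-g_1^{\DKM}(-ix)=\pi i,\qquad g_3^{\DKM}(ix)-g_3^{\DKM}(-ix)=\pi i/3,
\]
and similar relations for $g_2^{\DKM}$, which then give $\lambda_1^{\DKM}(ix)-\lambda_1^{\DKM}(-ix)=-\pi i$, $\lambda_4^{\DKM}(ix)-\lambda_4^{\DKM}(-ix)=\pi i/3$, etc. These nonzero constants are \emph{precisely} the jumps in (b)(ii). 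If $\lambda_1^{\DKM}$ and $\lambda_4^{\DKM}$ were truly even, then since neither has a DKM cut on $i\R$, your pullback would give $\lambda_{1,+}(x)-\lambda_{1,-}(x)=0$ and $\lambda_{4,+}(x)-\lambda_{4,-}(x)=0$ on $\R^-$, contradicting the statement.

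Relatedly, your description of the $\R^-$ jumps as ``the DKM $i\R$-cuts folded over'' is not quite right. With the principal square root, the $+$ and $-$ sides of $x<0$ map to $i|x|^{1/2}$ and $-i|x|^{1/2}$, both approached from the \emph{same} (right) side of $i\R$. So the $\R^-$ jump is a comparison of values at reflected points $\pm iy$, not a jump across $i\R$; the DKM $i\R$-cut only enters in deciding which boundary value of $\lambda_{2,3}^{\DKM}$ is being picked up. The fix is exactly what the paper does: compute the differences $\lambda_j^{\DKM}(ix)-\lambda_j^{\DKM}(-ix)$ directly from the $g$-function formulas and \cite[Definition~4.11]{DKM}.
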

\begin{proof}
Define
\begin{equation}\label{eq:def lambdafunction}
\lam_j(z)=2\lam_j^{\DKM}(\sqrt z), \qquad j=1,2,3,4, \qquad z \in \C
\setminus \R,
\end{equation}
where we take the potentials as in \eqref{eq: relation
chiral/hermitian potentials}. Then (a) follows from \cite[Lemma
4.14]{DKM} and \eqref{eq:theta function}. Note however that error
terms stated there are not optimal. (b)(i) is direct from
\cite[Lemma 4.12]{DKM}. To prove (ii) we need some preparations. By
\cite[Definition 4.1]{DKM}, it follows that for $x \geq 0$
\begin{align*}
g_1^{\DKM}(ix)-g_1^{\DKM}(-ix)&=\pi i, \\
g_{2,\pm}^{\DKM}(ix)-g_{2,\mp}^{\DKM}(-ix)&= 2\pi i/3, \\
g_3^{\DKM}(ix)-g_3^{\DKM}(-ix)&=\pi i/3,
\end{align*}
and
\begin{align*}
g_2^{\DKM}(-z)&=g_2^{\DKM}(z)\mp 2\pi i/3, && \pm\Im z>0,\\
g_3^{\DKM}(-z)&=g_3^{\DKM}(z)\mp \pi i/3, && \pm\Im z>0.
\end{align*}

Then \cite[Definition 4.11]{DKM} and \eqref{eq:even of theta DKM}
yield for $x \geq 0$
\begin{align*}
\lam_1^{\DKM}(ix)-\lam_1^{\DKM}(-ix)&= -\pi i,\\
\lam_{2,-}^{\DKM}(ix)-\lam_{2,-}^{\DKM}(-ix)&= \pm
\pi i/3,&  c_2^{\DKM}<&x<+\infty, \\
\lam_{2,-}^{\DKM}(ix)-\lam_{2,-}^{\DKM}(-ix)&= \pi i/3,& 0< &x < c_2^{\DKM},\\
\lam_{3,-}^{\DKM}(ix)-\lam_{3,-}^{\DKM}(-ix)&= \pi i/3,& 0< &x < c_2^{\DKM},\\
\lam_4^{\DKM}(ix)-\lam_4^{\DKM}(-ix)&= \pi i/3.
\end{align*}
This, together with \eqref{eq:def lambdafunction}, implies (ii) in
(b), where we take $c_2=(c_2^{\DKM})^2$.
\end{proof}
The $\lam$-functions also satisfy a number of inequalities stated in the following lemma.
\begin{lemma}\label{lem:inequality of lamda functions}
\begin{align}
\Re (\lam_{2,+}-\lam_{1,-})<0, \qquad & \text{on $(b_j,a_{j+1})$,
$j=0,\ldots,N$, } \\
\Re (\lam_{2,+}-\lam_{3,-})<0, \qquad & \text{on $(-c_2, 0]$,} \\
\Re (\lam_{4,+}-\lam_{3,-})<0, \qquad & \text{on $[0,c_3)$.}
\end{align}
\end{lemma}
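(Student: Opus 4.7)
The natural strategy is to pull back each of the three inequalities to a variational inequality in the non-chiral setting of \cite{DKM} via the identity $\lambda_j(z) = 2\lambda_j^{\DKM}(\sqrt z)$ stated in \eqref{eq:def lambdafunction}. The corresponding steepest-descent inequalities for $\lambda_j^{\DKM}$ are established in \cite{DKM} as part of their proof of the main asymptotic theorem. Since multiplication by the positive constant $2$ preserves signs of real parts, what remains is purely a careful bookkeeping of how $\pm$ boundary values and intervals transform under the square-root map.

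\textbf{Boundary-value dictionary.} With the principal branch of $\sqrt z$ (branch cut on $\R^-$), for $x>0$ one has $\sqrt{x\pm i0} = \sqrt x \pm i0$, so the $\pm$ boundary values at $x$ in the $z$-plane go to the $\pm$ boundary values at $\sqrt x$ in the $w$-plane. For $x<0$, the limits are
\[
\sqrt{x+i0} = +i\sqrt{-x}, \qquad \sqrt{x-i0} = -i\sqrt{-x},
\]
each approached from $\Re w > 0$, i.e.\ from the right side of the imaginary axis. Thus for $x<0$,
\[
\lambda_{j,+}(x) = 2\lambda_j^{\DKM}\bigl(+i\sqrt{-x} + 0\bigr), \qquad \lambda_{j,-}(x) = 2\lambda_j^{\DKM}\bigl(-i\sqrt{-x} + 0\bigr),
\]
with the $+0$ meaning right-of-axis limits.

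\textbf{Transfer of the three inequalities.} The first inequality concerns a gap $(b_j,a_{j+1})$ of $S(\mu_1)$, which maps to the gap $(\sqrt{b_j},\sqrt{a_{j+1}})$ in $S(\mu_1^{\DKM})$; the non-chiral inequality $\Re(\lambda_{2,+}^{\DKM} - \lambda_{1,-}^{\DKM}) < 0$ on such a gap (part of the variational characterization of $\mu_1^{\DKM}$ in \cite{DKM}) gives the claim after doubling. The third inequality concerns $[0,c_3)$, which under $\sqrt{\cdot}$ becomes $[0,c_3^{\DKM})$, a symmetric gap around the origin in $S(\mu_3^{\DKM})$ (recall $c_3=(c_3^{\DKM})^2$ is read off from comparing the chiral and non-chiral jumps of $\lambda_{3,4}$); the corresponding DKM inequality $\Re(\lambda_{4,+}^{\DKM}-\lambda_{3,-}^{\DKM})<0$ transfers verbatim. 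The second inequality on $(-c_2,0]$ transports to the segment $(0, i c_2^{\DKM}]$ of the positive imaginary axis, approached from its right side; this is exactly the portion of $S(\sigma_2^{\DKM})$ in which the upper constraint is \emph{not} active (in the regular case $\sigma_2^{\DKM}-\mu_2^{\DKM}$ has positive density on its support away from the origin), so the strict DKM inequality $\Re(\lambda_{2,+}^{\DKM}-\lambda_{3,-}^{\DKM})<0$ applies there.

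\textbf{Main obstacle.} The delicate point is the second inequality, where one must ensure that the chiral $+$-boundary value of $\lambda_2$ (resp.\ $-$-boundary value of $\lambda_3$) across $\R^-$ really maps to the correct one-sided limits of $\lambda_j^{\DKM}$ on $i\R^+$, rather than onto $i\R^-$ or onto the wrong side of the imaginary axis. This will be checked by going back to \cite[Definition 4.11]{DKM} and using the $z\mapsto -z$ symmetry of the non-chiral $\lambda$-functions (up to the explicit additive constants $\pm\pi i/3$ already computed in the proof of Lemma \ref{lemma: lambda functions}(b)(ii)). Once this identification is made, consistency with the jump formulas \eqref{eq:relations lam2 and lam3} on $(-\infty,-c_2)$ (where the inequality degenerates to equality because the constraint is saturated, $\sigma_2-\mu_2 \equiv 0$ there) provides a convenient sanity check that the $+/-$ bookkeeping is correct, and the strict inequality on the non-saturated piece $(-c_2,0]$ follows.
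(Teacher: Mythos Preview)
Your approach is essentially the paper's own: both reduce the three inequalities to their non-chiral counterparts via $\lambda_j(z)=2\lambda_j^{\DKM}(\sqrt z)$, with the paper simply invoking \cite[Lemma~4.13]{DKM} in one line while you spell out the square-root boundary-value dictionary. One small slip worth correcting in your explanatory text for the second inequality: on $(-c_2,0]$ the upper constraint $\mu_2\le\sigma_2$ is \emph{active} (saturated), not inactive as you write, since $S(\sigma_2-\mu_2)=(-\infty,-c_2]$ means $\sigma_2-\mu_2\equiv 0$ precisely on $(-c_2,0]$; this does not affect the validity of your reduction, as \cite[Lemma~4.13]{DKM} supplies the strict inequality on the corresponding imaginary-axis segment directly.
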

\begin{proof}
This follows from \eqref{eq:def lambdafunction} and \cite[Lemma 4.13]{DKM}.
\end{proof}

\begin{remark}
For the constants $c_2$ and $c_3$ in Lemma \ref{lemma: lambda
functions}, we have $c_2>-y^*(\alpha)$ and $c_3=0$ if $\alpha\geq
0$, and $c_3<x^*(\alpha)$ if $\alpha<0$, where $x^*(\alpha)$ and
$y^*(\alpha)$ are given in Lemma \ref{lem:theta function}. The
functions $\lam_1$ and $\lam_2$ are defined and analytic on $\C
\setminus (-\infty, b_N]$, whereas $\lam_3$ and $\lam_4$ are defined
and analytic in  $\C \setminus \R$.
\end{remark}

For future reference we specify the behavior of the $\lambda$-functions near the origin.

\begin{lemma} \label{lemma: lambda near 0}
In a neighborhood of the origin the $\lambda$-functions, defined in
Lemma \ref{lemma: lambda functions}, have the following behavior for
$z\to 0$.
\begin{itemize}
\item[(a)] In Cases I and IV there exists a constant $c_1>0$ such
that
\[
(\lambda_1-\lambda_2)(z)=\mp 2\pi i \pm ic_1 z^{1/2}+\mathcal O(z),\qquad \pm
\Im z >0.
\]
\item[(b)] In Case III there exists a constant $c_2>0$ such that
\[
(\lambda_2-\lambda_3)(z)=\pm \frac{4\pi i}{3} + c_2 z^{1/2}+\mathcal
O(z),\qquad \pm \Im z >0.
\]
%as $z \to 0$.
\item[(c)] In Cases I and II there exists a constant $c_3>0$ such that
\[
(\lambda_3-\lambda_4)(z)=\mp \frac{2\pi i}{3} \pm ic_3 z^{1/2}+\mathcal
O(z),\qquad \pm \Im z >0.
\]
%as $z \to 0$.
\end{itemize}
\end{lemma}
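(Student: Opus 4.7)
The strategy is to translate each expansion into a claim about the DKM $\lambda$-functions via the squaring identity $\lambda_j(z) = 2\lambda_j^{\DKM}(\sqrt{z})$ from \eqref{eq:def lambdafunction}. Setting $w := z^{1/2}$ with branch cut on $\mathbb R^-$ (so that $\pm\Im z>0$ corresponds to $\pm\Im w>0$), one has
\[
(\lambda_j - \lambda_k)(z) \;=\; 2\,(\lambda_j^{\DKM} - \lambda_k^{\DKM})(\sqrt z),
\]
and the half-integer power $z^{1/2}$ in the statement is produced precisely by the linear term in $w$ of the corresponding DKM Taylor expansion at $w=0$.

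Because the DKM potentials \eqref{eq: relation chiral/hermitian potentials} are even, each of $\mu_1^{\DKM}$, $\sigma_2^{\DKM}-\mu_2^{\DKM}$ and $\mu_3^{\DKM}$ has support symmetric about the origin (on $\mathbb R$ or on $i\mathbb R$). Under the regularity hypothesis of Definition \ref{def: regular cases}, whenever $0$ belongs to such a support it lies in its interior, and the corresponding density is real-analytic and strictly positive at $0$; for $\mu_1^{\DKM}$ this is essentially Theorem \ref{th: equilibrium problem} applied to the squared measure. In each case the pertinent DKM pair $(\lambda_j^{\DKM},\lambda_k^{\DKM})$ is coupled on the support of the relevant measure by a constant relation of the form $\lambda_{j,\pm}^{\DKM} - \lambda_{k,\mp}^{\DKM} = \mathrm{const}$ (the DKM analogs of \eqref{eq:lam1 and lam2}, \eqref{eq:lam 3 and lam 4}, and \eqref{eq:relations lam2 and lam3}), while the jump of either of the two functions across that support is $\pm 2\pi i\,\rho^{\DKM}(x)\,\mathrm dx$.

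Combining these two facts yields, in each case, an expansion
\[
(\lambda_j^{\DKM} - \lambda_k^{\DKM})(w) \;=\; L_\pm \;\pm\; 2\pi i\,\rho^{\DKM}(0)\,w \;+\; \mathcal O(w^2),\qquad \pm\Im w>0,
\]
where $L_\pm$ is the explicit constant inherited from the coupling relation: $L_\pm=\mp\pi i$ in (a), $L_\pm = \pm 2\pi i/3$ in (b), and $L_\pm = \mp\pi i/3$ in (c). Doubling and substituting $w=z^{1/2}$ reproduces the three claims with $c_1,c_2,c_3$ each equal to $4\pi$ times the value of the relevant density at $0$, hence strictly positive. The one point that needs care is sign and branch bookkeeping: the first- and fourth-quadrant approaches of $\sqrt z$ must pick out the correct $\pm$-boundary values of the DKM $\lambda$-functions, and in part (b) the orientation of the imaginary axis must be reconciled with the $\pm$-sides of $\sigma_2^{\DKM}-\mu_2^{\DKM}$. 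Once that is fixed, the constants $\mp 2\pi i$, $\pm 4\pi i/3$ and $\mp 2\pi i/3$ appearing in (a)--(c) fall out automatically as $2L_\pm$.
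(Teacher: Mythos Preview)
Your approach is essentially the same as the paper's: reduce to the DKM $\lambda$-functions via the squaring identity \eqref{eq:def lambdafunction} and read off the local expansion at the origin from the DKM side. The paper's own proof is a one-line citation of \cite[Lemmas 7.1, 7.2, 7.5]{DKM} together with \eqref{eq:def lambdafunction}, whereas you unpack what those DKM lemmas say (coupling relations plus positivity of the relevant density at $0$); the content is the same, and your bookkeeping of the constants $2L_\pm$ matches the stated $\mp 2\pi i$, $\pm 4\pi i/3$, $\mp 2\pi i/3$.
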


\begin{proof}
(a), (b), and (c) follow from \eqref{eq:def lambdafunction} and
the proofs of \cite[Lemma 7.5, Lemma 7.1, Lemma 7.2]{DKM} respectively.
\end{proof}

We can now define the transformation $X \mapsto U$ in a similar way as in \cite{DKM}.
\begin{definition} \label{def: U}
We define the $4\times 4$ matrix-valued function $U$ by
\begin{equation}\label{eq:def of U}
U(z)=U_0 e^{nL}X(z)e^{nG(z)},
\end{equation}
where $U_0$ is a constant invertible matrix to be determined later,
%in the proof of Lemma \ref{lemma: asymptotics U},
\[
G=\diag \left( \lam_1-V,\lam_2-\theta_1,\lam_3-\theta_2,\lam_4-\theta_3
\right),
\]
and $L= \diag (\ell_1,0,0,0)$ with $\ell_1$ the constant from
Lemma \ref{lemma: lambda functions}.
\end{definition}

\begin{rhp}\label{rhp:U}
The matrix-valued function $U$ defined in \eqref{eq:def of U} is the
unique solution of the following RH problem.
\begin{enumerate}
\item[\rm (1)] $U(z)$ is analytic for $z\in\mathbb C\setminus\mathbb R$.
\item[\rm (2)] We have
\[
U_+(x)=U_-(x)\left\{\begin{array}{ll}\diag\left( \left(J_U
\right)_1(x),\left( J_U
\right)_3 (x)\right),& \qquad x \in \er_+,\\
\diag\left(e^{\nu\pi i},\left( J_U \right)_2(x),e^{-\nu\pi
i}\right),& \qquad x \in \er_-,
\end{array}\right.
\]
where
\begin{align*}
\left( J_U \right)_1 &= \begin{pmatrix}
e^{n(\lam_{1,+}-\lam_{1,-})}&e^{n(\lam_{2,+}-\lam_{1,-})}
\\ 0&e^{n(\lam_{2,+}-\lam_{2,-})}
\end{pmatrix},\\
\left( J_U \right)_2 &= \begin{pmatrix} e^{-\nu \pi
i}e^{n(\lam_{2,+}-\lam_{2,-})} &0
\\ -e^{n(\lam_{2,+}-\lam_{3,-})}&e^{\nu \pi i}e^{n(\lam_{3,+}-\lam_{3,-})}
\end{pmatrix},\\
\left( J_U \right)_3 &= \begin{pmatrix} e^{n(\lam_{3,+}-\lam_{3,-})}
& e^{n(\lam_{4,+}-\lam_{3,-})}
\\ 0&e^{n(\lam_{4,+}-\lam_{4\,-})}
\end{pmatrix}.
\end{align*}

\item[\rm (3)] We can choose the constant matrix $U_0$ such that
\begin{multline}\label{eq: asy of U}
U(z)= \left[ I+ \mathcal O \left(z^{-1}\right)\right]\diag \left(1,
z^{1/3},z^{-1/3},1\right)\\
\times\diag\left(z^{\nu/2},z^{-\nu/6} A_{\pm}^{-T}\right)\diag
\left( 1, 1,\sigma^{\pm},\sigma^{\mp 1} \right),
\end{multline}
as $z\to \infty$ with $\pm\Im z >0$, where $\sigma$ and $A_{\pm}$
are defined in \eqref{sigma:constant} and \eqref{def:A+}
respectively.
%We have the same formula
%as $z\to \infty$ with $\Im z <0$, but then with $A_{+}^{-T}$ replaced by
%$A_-^{-T}$ and $\diag \left( 1, 1,\sigma,\sigma^{-1} \right)$ by $\diag \left(
%1, 1,\sigma^{-1},\sigma \right)$.
\item[\rm (4)] $U(z)$ has the same behavior as $X(z)$ near the origin; see
\eqref{eq:zero behavior of X}.
\end{enumerate}
\end{rhp}

\begin{proof}
Noting that $e^{n(\lam_{1,+}-\lam_{1,-})}=e^{n(\lam_{4,+}-\lam_{4,-})}=1$ on
$\er^-$, due to Lemma \ref{lemma: lambda functions}(b) and our assumption that
$n \equiv 0 \mod 3$, the jump condition for $U$ in item (2) follows from a
straightforward calculation.

To show the asymptotic behavior of $U$ for large $z$ in item (3), we
see from Lemma \ref{lemma: lambda functions}(a) that
\begin{multline*}
e^{nG(z)}=\diag \left( z^{-n}e^{-n \ell_1},z^{n/3},z^{n/3},z^{n/3}\right) \left[ I+nC \begin{pmatrix} 0&0 \\ 0&
\Omega_{\pm}\end{pmatrix}z^{-1/3}\right. \\
\left. +nE\begin{pmatrix} 0&0 \\ 0&
\Omega_{\mp}\end{pmatrix}z^{-2/3} +\diag \left(\mathcal O(z^{-1}),\mathcal O(z^{-1}),\mathcal O(z^{-1}),\mathcal O(z^{-1})\right)\right],
\end{multline*}
as $z \to \infty$ with $\pm \Im z >0$. Here $E=D+C^2/2$ and
\begin{equation}\label{Omega:plusmin}
\Omega_+= \diag(1,\omega,\omega^2), \qquad \Omega_-=\diag (1, \omega^2,\omega).
\end{equation} Then, by the asymptotic behavior of $X$ in \eqref{asy:X} and the definition of $U$ in
\eqref{eq:def of U}, we get
\begin{multline*}
U(z)=U_0\left( I+ \mathcal O \left(z^{-1}\right)\right)  \diag \left(1, z^{1/3},z^{-1/3},1\right)\\ \times \diag\left( z^{\nu/2},z^{-\nu/6}
A_{\pm}^{-T}\right)
\left[ I+nC \begin{pmatrix} 0&0 \\ 0& \Omega_\pm\end{pmatrix}z^{-1/3}
+nE\begin{pmatrix} 0&0 \\ 0& \Omega_\mp\end{pmatrix}z^{-2/3} \right.\\ \left.
+\diag\left(\mathcal O(z^{-1}), \mathcal O(z^{-1}),\mathcal
O(z^{-1}),\mathcal O(z^{-1})\right)\right]  \times \diag( 1,
1,\sigma^{\pm 1},\sigma^{\mp 1}),
\end{multline*}
as $z \to \infty$ with $\pm \Im z>0$. We can move the terms within square
brackets to the front at the expense of an extra constant contribution. This
follows from the observation that
\begin{equation}\label{Aplusmin:commute}
A_\pm^{-T}\Omega_\pm A_\pm^T=\begin{pmatrix} 0&0&1 \\ 1&0&0 \\ 0&1&0
\end{pmatrix} \quad \text{and} \quad A_\pm^{-T}\Omega_\mp
A_\pm^T=\begin{pmatrix} 0&1&0 \\ 0&0&1 \\ 1&0&0 \end{pmatrix}.
\end{equation}
This gives
\begin{multline}
U(z)= U_0\left[ \begin{pmatrix} 1&0&0&0 \\ 0&1&*&* \\ 0&0&1&0 \\ 0&0&*&1 \end{pmatrix}
+ \mathcal O \left(z^{-1}\right)\right]\diag \left(1, z^{1/3},z^{-1/3},1\right)\diag\left(z^{\nu/2},z^{-\nu/6} A_{\pm}^{-T} \right)  \\
\times  \diag( 1, 1,\sigma^{\pm 1},\sigma^{\mp 1}),
\end{multline}
as $z\to \infty$ with $\pm\Im z >0$. By setting $U_0$ to be the
inverse of the constant matrix between the above square brackets, we
arrive at the claim in item (3).
\end{proof}

%\begin{lemma}\label{lem:jump for U}
%We have
%\[
%U_+(x)=U_-(x)\left\{\begin{array}{ll}\diag\left( \left(J_U(x) \right)_1,\left(
%J_U(x)
%\right)_3 \right),& \qquad x \in \er_+,\\
%\diag\left(e^{\nu\pi i},\left( J_U(x) \right)_2,e^{-\nu\pi i}\right),& \qquad x
%\in \er_-,
%\end{array}\right.
%\]
%where
%\begin{align*}
%\left( J_U \right)_1 &= \begin{pmatrix}
%e^{n(\lam_{1,+}-\lam_{1,-})}&e^{n(\lam_{2,+}-\lam_{1,-})}
%\\ 0&e^{n(\lam_{2,+}-\lam_{2,-})}
%\end{pmatrix},\\
%\left( J_U \right)_2 &= \begin{pmatrix} e^{-\nu \pi
%i}e^{n(\lam_{2,+}-\lam_{2,-})} &0
%\\ -e^{n(\lam_{2,+}-\lam_{3,-})}&e^{\nu \pi i}e^{n(\lam_{3,+}-\lam_{3,-})}
%\end{pmatrix},\\
%\left( J_U \right)_3 &= \begin{pmatrix} e^{n(\lam_{3,+}-\lam_{3,-})} &
%e^{n(\lam_{4,+}-\lam_{3,-})}
%\\ 0&e^{n(\lam_{4,+}-\lam_{4\,-})}
%\end{pmatrix}.
%\end{align*}
%\end{lemma}
%\begin{proof}
%Straightforward calculation. We note that
%$e^{n(\lam_{1,+}-\lam_{1,-})}=e^{n(\lam_{4,+}-\lam_{4,-})}=1$ on $\er_-$, due
%to Lemma \ref{lemma: lambda functions}(b) and our assumption that $n \equiv 0
%\mod 3$.
%\end{proof}

Recalling our assumption that $n \equiv 0 \mod 3$, we find from Lemma
\ref{lemma: lambda functions}(b) that
\begin{align}\label{JU1}
\left( J_U \right)_1 &= \begin{cases}
\begin{pmatrix} e^{n(\lam_{1,+}-\lam_{1,-})}&1 \\ 0&e^{n(\lam_{2,+}-\lam_{2,-})}
\end{pmatrix},& \text{on $(a_j,b_j)$, $j=1,\ldots,N$,}\\
\begin{pmatrix} e^{-2\pi in \alpha_j}&e^{n(\lam_{2,+}-\lam_{1,-})}
\\ 0&e^{2\pi in \alpha_j} \end{pmatrix},& \text{on $(b_j,a_{j+1})$, $j=0,\ldots,N$.}
\end{cases}\\
\label{JU2} \left( J_U \right)_2&= \begin{cases}
\begin{pmatrix} e^{-\nu \pi i}e^{n(\lam_{2,+}-\lam_{2,-})} &0
\\ -1&e^{\nu \pi i}e^{n(\lam_{3,+}-\lam_{3,-})} \end{pmatrix},
& \text{on $(-\infty,-c_2)$,}\\
\begin{pmatrix} e^{-\nu \pi i} & 0
\\ -e^{n(\lam_{2,+}-\lam_{3,-})} & e^{\nu \pi i}
\end{pmatrix},
& \text{on $(-c_2,0)$,}
\end{cases}\\
\label{JU3} \left( J_U \right)_3&= \begin{cases}
\begin{pmatrix} 1 & e^{n(\lam_{4,+}-\lam_{3,-})} \\ 0&1
\end{pmatrix},&
\text{on $(0,c_3)$,}\\
\begin{pmatrix} e^{n(\lam_{3,+}-\lam_{3,-})} & 1
\\ 0& e^{n(\lam_{4,+}-\lam_{4,-})} \end{pmatrix},
& \text{on $(c_3,\infty)$.}
\end{cases}
\end{align}
Note that the diagonal terms of $\left( J_U \right)_1$ are 1 on $(0,a_1)$ and
$(b_N,\infty)$.

%------------------------------------------------------------------------------

\subsection{Third transformation $U\mapsto S$}

In the third transformation we open lenses around
$\bigcup_{j=1}^{N}(a_j, b_j)$, $(-\infty,-c_2)$ and $(c_3,
+\infty)$, which are denoted by $L_1$, $L_2$, and $L_3$,
respectively. See Figure \ref{fig: lenses} for a plot of the lenses
in Case IV if $N=3$. These lenses are chosen such that the following
estimates hold.

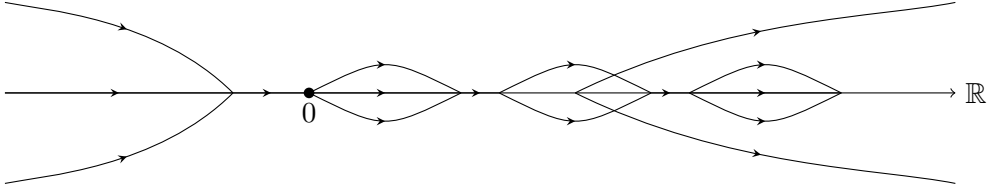
\begin{figure}[t]
\centering
\begin{tikzpicture}[scale=1]
\draw[->] (-4,0)--(8.5,0) node[right]{$\R$};
\begin{scope}[decoration={markings,mark= at position 0.5 with {\arrow{stealth}}}]
\draw[postaction={decorate}]   (-4,0)--(-1,0);
\draw[postaction={decorate}]   (-1,0)--(0,0);
\draw[postaction={decorate}]   (0,0)--(2,0);
\draw[postaction={decorate}]   (2,0)--(2.5,0);
\draw[postaction={decorate}]   (4.5,0)--(5,0);
\draw[postaction={decorate}]   (5,0)--(7,0);
\draw[postaction={decorate}]   (3.5,0) ..  controls (5.5,1) and (7.5,1) .. (8.5,1.2);
\draw[postaction={decorate}]   (3.5,0) .. controls (5.5,-1) and (7.5,-1) .. (8.5,-1.2);
\draw[postaction={decorate}]   (-4,1.2) .. controls (-3.5,1.1) and (-2,1) .. (-1,0);
\draw[postaction={decorate}]   (-4,-1.2) .. controls (-3.5,-1.1) and (-2,-1) .. (-1,0);
\draw[postaction={decorate}] (0,0) .. controls (1,0.5) and (1,0.5) .. (2,0);
\draw[postaction={decorate}] (0,0) .. controls (1,-0.5) and (1,-0.5) .. (2,0);
\draw[postaction={decorate}] (2.5,0) .. controls (3.5,0.5) and (3.5,0.5) .. (4.5,0);
\draw[postaction={decorate}] (2.5,0) .. controls (3.5,-0.5) and (3.5,-0.5) .. (4.5,0);
\draw[postaction={decorate}] (5,0) .. controls (6,0.5) and (6,0.5) .. (7,0);
\draw[postaction={decorate}] (5,0) .. controls (6,-0.5) and (6,-0.5) .. (7,0);
\end{scope}
\fill (0,0) circle (2pt) node[below]{$0$};
\end{tikzpicture}
\caption{Jump contour $\Sigma_S$ for $S$ consisting of the lips of the lenses $L_1$, $L_2$, $L_3$, and the axes. This is a picture of a Case IV situation with $N=3$ since $a_1=0$ and $c_2,c_3>0$. The lips of the lens around $\supp(\mu_3)$ intersect the lips of the lens around $(a_k,b_k)$ if and only if $c_3 \in (a_k,b_k)$.}
\label{fig: lenses}
\end{figure}

\begin{lemma}\label{lem:choice of L2 and L3}
We can find a neighborhood $L_1$ of $\bigcup_{j=1}^{N}(a_j, b_j)$ such that
\begin{equation*}
\Re(\lam_1-\lam_2)(z)<0\qquad \text{for $z\in L_1\setminus \R^+$}.
\end{equation*}

We can find a neighborhood $L_2$ of $(-\infty,-c_2)$ such that
\begin{equation*}
\Re(\lam_3-\lam_2)(z)<0\qquad \text{for $z\in L_2\setminus \R^-$},
\end{equation*}
and such that
\begin{equation*}
\{ z\in\C \mid \Re z<-R, |\Im (z)|<-\varepsilon \Re (z) \} \subset L_2
\end{equation*}
for some $\varepsilon>0$ and $R>0$.

Finally, we can find a neighborhood $L_3$ of $(c_3, +\infty)$ such that
\begin{equation*}
\Re(\lam_3-\lam_4)(z)<0 \qquad \text{for $z\in L_3\setminus \R^+$},
\end{equation*}
and such that
\begin{equation*}
\{ z\in\C \mid  \Re z>R, |\Im (z)|<\varepsilon \Re (z) \} \subset L_3
\end{equation*}
for some $\varepsilon>0$ and $R>0$.

We can assume that $L_1,L_2,L_3$ are symmetric with respect to
complex conjugation. Moreover, we can assume that the intersection
of the boundary $\partial L_j$ with the upper (or lower) half plane
is a collection of smooth curves, which can be oriented so that the
real part strictly increases. We will often refer to $\partial L_j$
as the lips of the lens $L_j$.
\end{lemma}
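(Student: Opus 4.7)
The approach is to derive this lemma by transferring the analogous lens construction for the non-chiral two-matrix model of \cite{DKM} to the present chiral setting via the squaring map. The key observation is the identity \eqref{eq:def lambdafunction}, $\lambda_j(z) = 2\lambda_j^{\DKM}(\sqrt{z})$, which yields
\[
\Re(\lambda_i - \lambda_j)(z) = 2\,\Re(\lambda_i^{\DKM} - \lambda_j^{\DKM})(\sqrt z).
\]
Consequently, any strict negativity of $\Re(\lambda_i - \lambda_j)$ in a set $L \subset \mathbb C$ is equivalent to the negativity of $\Re(\lambda_i^{\DKM} - \lambda_j^{\DKM})$ on $\sqrt L$ in the $w$-plane, with an appropriate branch of the square root.

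First I would quote the non-chiral analogue of this lemma from \cite{DKM} (which appears as part of their steepest descent analysis): there exist lenses $L_1^{\DKM}, L_2^{\DKM}, L_3^{\DKM}$ around the supports of $\mu_1^{\DKM}$, $\sigma_2^{\DKM} - \mu_2^{\DKM}$, and $\mu_3^{\DKM}$ respectively, in which the corresponding non-chiral inequalities on $\Re(\lambda_i^{\DKM} - \lambda_j^{\DKM})$ hold. The evenness of $V^{\DKM}, W^{\DKM}$ forces the equilibrium measures (and hence the $\lambda_j^{\DKM}$ up to additive constants) to be invariant under $w \mapsto -w$, so the $L_j^{\DKM}$ may be assumed symmetric under both $w \mapsto -w$ and complex conjugation, by intersecting with their reflections if necessary.

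I would then define the chiral lenses as images of the non-chiral ones under squaring. Concretely, $L_1$ and $L_3$ are the images of $L_1^{\DKM} \cap \{\Re w > 0\}$ and $L_3^{\DKM} \cap \{\Re w > 0\}$ under $w \mapsto w^2$; while $L_2$ is the image of $L_2^{\DKM} \cap \{\Im w > 0\}$ under the same map, using that $i\mathbb R \mapsto \mathbb R^-$ and $c_2 = (c_2^{\DKM})^2$. The three required inequalities are then immediate from the identity above. The cone conditions at infinity follow because the non-chiral lenses $L_2^{\DKM}$ and $L_3^{\DKM}$ are known to contain cones of the form $\{|\Re w|<\varepsilon^{\DKM}|\Im w|,\ |\Im w|>R^{\DKM}\}$ near $\pm i\infty$ and $\{|\Im w|<\varepsilon^{\DKM}|\Re w|,\ |\Re w|>R^{\DKM}\}$ near $\pm\infty$; these are mapped by $w \mapsto w^2$ to cones of the required form near $-\infty$ and $+\infty$ in the $z$-plane, after a mild adjustment of the aperture. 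Complex conjugate symmetry is preserved throughout.

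The main delicate point is behavior near the origin $z=0$, where the squaring map has a critical point. Here I would rely on the local expansions of Lemma \ref{lemma: lambda near 0}: e.g., in Case IV one has $\Re(\lambda_1-\lambda_2)(z) = \mp c_1\,\Im(z^{1/2}) + O(|z|)$ as $z \to 0$ with $\pm\Im z > 0$, which is strictly negative on a punctured $\mathbb C\setminus\mathbb R^+$ neighborhood of $0$; thus $L_1$ can be taken to be such a neighborhood near $0$. Analogous expansions for $\lambda_2-\lambda_3$ and $\lambda_3-\lambda_4$ handle $L_2$ and $L_3$ near the origin in the remaining regular cases. Finally, smoothness of the boundary curves $\partial L_j$ and the requested monotonic-real-part parametrization in each half-plane can be arranged by a small $C^\infty$ deformation that does not break the strict inequalities, shrinking the lenses slightly if needed.
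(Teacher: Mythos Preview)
Your proposal is correct and takes essentially the same approach as the paper: both invoke the relation $\lambda_j(z)=2\lambda_j^{\DKM}(\sqrt z)$ from \eqref{eq:def lambdafunction} and then cite the corresponding lens-opening lemmas (Lemmas~7.1, 7.2, and 7.5) from \cite{DKM}. The paper simply declares the result ``immediate'' from these ingredients, whereas you spell out the transfer via the squaring map (symmetry, cone images, and the local analysis near $z=0$) in more detail.
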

\begin{proof}
In view of the definition of $\lam$-functions in \eqref{eq:def lambdafunction},
the statements about $L_1$, $L_2$ and $L_3$ are immediate from Lemmas 7.1, 7.2,
and 7.5 in \cite{DKM}.
\end{proof}

To define the third transformation, we observe that in \eqref{JU1}--\eqref{JU3}
we can factorize
\begin{equation} \label{eq:factor of JU3}
\begin{aligned}
(J_U)_1&=\begin{pmatrix} 1 &  0 \\e^{n(\lam_1-\lam_2)_-} & 1\end{pmatrix}
         \begin{pmatrix} 0 & 1 \\ -1 & 0 \end{pmatrix}
         \begin{pmatrix} 1 & 0 \\ e^{n(\lam_1-\lam_2)_+} & 1 \end{pmatrix},
         &&\hbox{on $(a_j, b_j)$},\\
(J_U)_2&=\begin{pmatrix} 1 & -e^{-\nu\pi i}e^{n(\lam_3-\lam_2)_-} \\ 0 & 1 \end{pmatrix}
         \begin{pmatrix} 0 & 1 \\ -1 & 0 \end{pmatrix}
         \begin{pmatrix} 1 & -e^{\nu\pi i}e^{n(\lam_3-\lam_2)_+} \\ 0 & 1 \end{pmatrix},
         &&\hbox{on $(-\infty,-c_2)$},\\
(J_U)_3&=\begin{pmatrix} 1 &  0 \\ e^{n(\lam_3-\lam_4)_-} & 1 \end{pmatrix}
         \begin{pmatrix} 0 & 1 \\ -1 & 0 \end{pmatrix}
         \begin{pmatrix} 1 & 0 \\ e^{n(\lam_3-\lam_4)_+} & 1 \end{pmatrix},
         &&\hbox{on $(c_3,\infty)$},
\end{aligned}
\end{equation}
with the aid of \eqref{eq:lam1 and lam2}, \eqref{eq:lam 3 and lam 4},
\eqref{eq:relations lam2 and lam3} and the fact that $n$ is a multiple of
three.

We then define the third transformation $U\mapsto S$ by successively
setting
\begin{equation}\label{def:UtoT 1}
T(z)=U(z)\times\left\{
\begin{array}{ll}
I \pm e^{\pm \nu\pi i}e^{n(\lam_3-\lam_2)(z)}E_{2,3},
& \hbox{for $z\in L_2\cap\C^{\pm}$,} \\
I \mp e^{n(\lam_3-\lam_4)(z)}E_{4,3}, & \hbox{for $z\in L_3
\cap\C^{\pm}$,}\\
I, & \hbox{elsewhere},
                 \end{array}
               \right.
\end{equation}
and
\begin{equation}\label{def:TtoS 1}
S(z)=\left\{\begin{array}{ll} T(z)\times \left(I \mp
e^{n(\lam_1-\lam_2)(z)}E_{2,1}\right),& \qquad \text{for $z\in
L_1\cap\C^{\pm}$,}\\
T(z),& \qquad \text{elsewhere},\end{array}\right.
\end{equation}
where $\cee^\pm$ denotes the upper or lower half plane respectively,
and $E_{i,j}$ denotes the $4\times 4$ elementary matrix of which all
entries are $0$, expect for the $(i,j)$-th entry, which is $1$. Then
$S$ is the unique solution of the following RH problem.

\begin{rhp}\label{RHP:S}
The matrix-valued function $S$ satisfies the following conditions.
\begin{enumerate}
\item[\rm (1)] $S(z)$ is analytic for $z\in\cee\setminus\Sigma_S$,
where $\Sigma_S$ is the contour consisting of the real axis and the lips of the
lenses $L_i$, $i=1,2,3$, see Figure \ref{fig: lenses}.
\item[\rm (2)] For $z\in \Sigma_S$, $S$ has a jump
\begin{equation*}
S_{+}(z) = S_{-}(z) \left\{
      \begin{array}{ll}
        \diag\left(\left( J_{S}\right)_1(z),\left( J_{S}\right)_3(z)\right), &
\textrm{\parbox[t]{0.4\textwidth}{for $z$ in $\R^+$ and the lips of $L_1,L_3$,}}\\
        \diag\left(e^{\nu\pi i},\left( J_{S} \right)_2(z),e^{-\nu\pi i}\right), &
        \textrm{\parbox[t]{0.4\textwidth}{for $z$ in $\R^-$ and the lips of $L_2$,}}
      \end{array}
    \right.
\end{equation*}
where
\begin{equation*} (J_S)_1=\begin{cases}
            \begin{pmatrix}
            0 & 1 \\ -1 & 0
            \end{pmatrix}, & \hbox{on $(a_j,b_j)$, $j=1,\ldots,N$,}\\
             \begin{pmatrix} e^{-2\pi in \alpha_j}&e^{n(\lam_{2,+}-\lam_{1,-})}
            \\ 0&e^{2\pi in \alpha_j}
            \end{pmatrix},& \hbox{on $(b_j,a_{j+1})$, $j=0,\ldots,N$,}\\
            \begin{pmatrix}
            1 & 0 \\
            e^{n(\lam_1-\lam_2)} & 1
            \end{pmatrix}, & \hbox{on the lips of $L_1$,}\\
            I_2, & \hbox{on the lips of $L_3$,}
          \end{cases}
\end{equation*}
\begin{equation*}
(J_S)_2=\left\{
          \begin{array}{ll}
            \begin{pmatrix}
e^{-\nu \pi i} &0 \\ -e^{n(\lam_{2,+}-\lam_{3,-})}&e^{\nu\pi i}
\end{pmatrix}, & \hbox{on $(-c_2,0)$,} \\
            \begin{pmatrix} 0 & 1 \\
-1 & 0
\end{pmatrix}, & \hbox{on $(-\infty,-c_2)$,}\\
\begin{pmatrix}
1 & -e^{\pm \nu\pi i}e^{n(\lam_3-\lam_2)} \\
0 & 1
\end{pmatrix}, & \hbox{on the upper/lower lip of $L_2$,}
          \end{array}
        \right.
\end{equation*}
and
\begin{equation*}
(J_S)_3=\left\{
\begin{array}{ll}
          \begin{pmatrix} 1 & e^{n(\lam_{4,+}-\lam_{3,-})} \\ 0&1
\end{pmatrix}, & \hbox{on $(0,c_3)$,} \\
            \begin{pmatrix} 0 & 1 \\
-1 & 0
\end{pmatrix}, & \hbox{on $(c_3, +\infty)$,}\\
I_2,& \hbox{on the lips of $L_1$,}\\
\begin{pmatrix}
1 & 0 \\
e^{n(\lam_3-\lam_4)} & 1
\end{pmatrix}, & \hbox{on the upper/lower lip of $L_3$.}
          \end{array}
        \right.
\end{equation*}
\item[\rm (3)] $S(z)$ has the same behavior for $z\to\infty$ as $U(z)$.
%, provided that $z\to\infty$ outside the unbounded lenses $L_2$ and $L_3$.
\item[\rm (4)]  $S(z)$ has the same behavior near the origin as $U(z)$ (and $X(z)$), see
\eqref{eq:zero behavior of X}, provided that $z\to 0$ outside the
lenses that end in $0$.
\end{enumerate}
\end{rhp}

\begin{proof}
The jump condition is straightforward by \eqref{eq:factor of JU3}
and item (2) in RH problem \ref{rhp:U}. The asymptotic behavior of
$S$ follows from the definition in \eqref{def:UtoT
1}--\eqref{def:TtoS 1} and the large $z$ behavior of the $\lam$-functions in \eqref{eq:asy of lam2}--\eqref{eq:asy of lam4}.
\end{proof}

In view of Lemmas \ref{lem:inequality of lamda functions} and \ref{lem:choice
of L2 and L3} it is easily seen that each entry of $J_S$ is either constant or
exponentially small as $n \to +\infty$.

%-------------------------------------------------------------------
\subsection{Global parametrix}\label{section:global}

In this section we look for a global parametrix $S^{(\infty)}$. This
will be a good global approximation of the matrix-valued function
$S$ when $n$ is large. Ignoring all exponentially small entries for
$n\to\infty$ in $J_S$, we are led to the following model RH problem
for $S^{(\infty)}$.

\begin{rhp} \label{rhp: global parametrix}
We look for a $4\times 4$ matrix-valued function $S^{(\infty)}$ that satisfies the following conditions.
\begin{enumerate}
\item[\rm (1)] $S^{(\infty)} (z)$ is analytic for $z\in\cee\setminus \R $.
\item[\rm (2)] For $x\in \R $, $S^{(\infty)}$ has a jump
\begin{equation*}
S^{(\infty)}_{+}(x) = S^{(\infty)}_{-}(x) \left\{
      \begin{array}{ll}
        \diag\left(
\left( J_{S^{(\infty)}} \right)_1(x),\left( J_{S^{(\infty)}} \right)_3(x)\right), & \hbox{for $x$ in $\R^+$,} \\
        \diag\left(e^{\nu\pi i},\left( J_{S^{(\infty)}} \right)_2(x),e^{-\nu\pi i}\right), & \hbox{for $x$ in $\R^-$,}
      \end{array}
    \right.
\end{equation*}
where
\begin{equation*}
\left(J_{S^{(\infty)}}\right)_1=\left\{
          \begin{array}{ll}
            \begin{pmatrix}
            0 & 1 \\ -1 & 0
            \end{pmatrix}, & \hbox{on $(a_j,b_j)$, $j=1,\ldots,N$,}\\
            \diag\left( e^{-2\pi in \alpha_j},e^{2\pi in \alpha_j}\right),&
            \hbox{on $(b_j,a_{j+1})$, $j=1,\ldots,N-1$,}\\
            I_2, & \hbox{on $(0,a_1)\cup(b_N,+\infty)$,}
          \end{array}
        \right.
\end{equation*}
\begin{equation*}
\left(J_{S^{(\infty)}}\right)_2=\left\{
          \begin{array}{ll}
            \diag\left(e^{-\nu \pi i},e^{\nu \pi i}\right), & \hbox{on $(-c_2,0)$,} \\
            \begin{pmatrix} 0 & 1 \\
            -1 & 0
            \end{pmatrix}, & \hbox{on $(-\infty,-c_2)$,}\\
            \end{array}
        \right.
\end{equation*}
and
\begin{equation*}
\left(J_{S^{(\infty)}}\right)_3=\left\{
\begin{array}{ll}
          I_2, & \hbox{on $(0,c_3)$,} \\
           \begin{pmatrix} 0 & 1 \\
           -1 & 0
           \end{pmatrix}, & \hbox{on $(c_3, +\infty)$.}
          \end{array}
        \right.
\end{equation*}
\item[\rm (3)] As $z\to \infty$ with $\pm\Im z>0$, we have
\begin{multline*}
S^{(\infty)} (z)= \left[ I+ \mathcal O \left(z^{-1}\right)\right]\diag \left(1, z^{1/3},z^{-1/3},1\right)\diag\left(z^{\nu/2},z^{-\nu/6} A_{\pm}^{-T} \right)  \\
\times \diag \left( 1,1,\sigma^{\pm 1},\sigma^{\mp 1}\right).
\end{multline*}
%The same but complex conjugated formula holds as $z\to \infty$ and
%$z\in \C^-$.
\end{enumerate}
\end{rhp}

In this section we will construct a solution to the above RH problem.

%If $\gamma:=\min (a_1, c_3)>0$, we indeed have $S^{(\infty)}$ is
%analytic in $\C \setminus \left(\R^-\cup [\gamma,+\infty)\right)$.

\subsubsection*{Transforming the global parametrix: $S^{(\infty)} \mapsto N_\nu$}

We will look for a solution $S^{(\infty)}$ to RH problem \ref{rhp: global parametrix} in the form
\begin{equation}\label{transform:S to Nnu}
S^{(\infty)} (z)=
N_\nu(z)\diag\left(z^{\frac{\nu-1}{2}},z^{\frac{1-\nu}{6}}I_3\right)\diag
\left(1,1,\sigma^{\pm 1},\sigma^{\mp 1}\right), \qquad \pm \ \Im z>0.
\end{equation}
Then $N_\nu$ must satisfy the following RH problem.

\begin{rhp} \label{rhp: Nnu}
We look for a $4 \times 4$ matrix-valued function $N_\nu$ satisfying the following conditions.
\begin{enumerate}
\item[\rm (1)] $N_\nu(z)$ is analytic for $z\in\cee\setminus \R $.
\item[\rm (2)] For $x\in \R $, $N_\nu$ has a jump
\begin{equation*}
N_{\nu,+}(x) = N_{\nu,-}(x) \left\{
      \begin{array}{ll}
        \diag\left(\left( J_{N_\nu} \right)_1(x),\left( J_{N_\nu} \right)_3(x)\right), & \hbox{for $x$ in $\R^+$,} \\
        \diag\left(-1,\left( J_{N_\nu} \right)_2(x),-1\right), & \hbox{for $x$ in $\R^-$,}
      \end{array}
    \right.
\end{equation*}
where
\begin{equation*}
\left(J_{N_\nu}\right)_1=\left\{
          \begin{array}{ll}
            \begin{pmatrix}
            0 & z^{\frac{2\nu-2}{3}} \\ -z^{\frac{2-2\nu}{3}} & 0
            \end{pmatrix}, & \hbox{on $(a_j,b_j)$, $j=1,\ldots,N$,}\\
            \diag\left(e^{-2\pi in \alpha_j},
            e^{2\pi in \alpha_j}\right),& \hbox{on $(b_j,a_{j+1})$, $j=1,\ldots,N-1$,}\\
            I_2, & \hbox{on $(0,a_1)\cup(b_N,+\infty)$,}
          \end{array}
        \right.
\end{equation*}
\begin{equation*}
\left(J_{N_\nu}\right)_2=\left\{
          \begin{array}{ll}
            \diag\left(e^{-\frac{\pi}{3}(1+2\nu)i},e^{\frac{\pi}{3}(1+2\nu)i}\right), & \hbox{on $(-c_2,0)$,} \\
            \begin{pmatrix} 0 & 1 \\
            -1 & 0
            \end{pmatrix}, & \hbox{on $(-\infty,-c_2)$,}\\
            \end{array}
        \right.
\end{equation*}
and
\begin{equation*}
\left(J_{N_\nu}\right)_3=\left\{
\begin{array}{ll}
          \diag\left(-e^{-\frac{\pi}{3}(1+2\nu)i},-e^{\frac{\pi}{3}(1+2\nu)i}\right), & \hbox{on $(0,c_3)$,} \\
           \begin{pmatrix} 0 & 1 \\
           -1 & 0
           \end{pmatrix}, & \hbox{on $(c_3, +\infty)$.}
          \end{array}
        \right.
\end{equation*}
\item[\rm (3)] As $z\to \infty$ with $\pm \ \Im z>0$, we have
\begin{equation}
N_\nu(z)= \left[ I+ \mathcal O \left(z^{-1}\right)\right]\diag
\left(z^{1/2}, z^{1/6},z^{-1/2},z^{-1/6}\right)
\diag\left(1,A_{\pm}^{-T}\right).
\end{equation}
%The same but complex conjugated formula holds as $z\to \infty$ and
%$z\in \C^-$.
\end{enumerate}
\end{rhp}

\subsubsection*{Boundary value problem on a Riemann surface} \label{subsection:Szego:def}

The next step in the construction of the global parametrix is to find a \lq
Szeg\H o function\rq\ on a certain Riemann surface. We define a four-sheeted Riemann surface $\mathcal R$ as follows. We let $\mathcal R_j$, $j=1,2,3,4$, denote
\begin{align*}
\mathcal R_1 & = \C \setminus \bigcup_{k=1}^N [a_k,b_k],  &
\mathcal R_2 & = \C \setminus \left(\bigcup_{k=1}^N [a_k,b_k] \cup (-\infty,-c_2] \right), \\
\mathcal R_3 & = \C \setminus \left((-\infty,-c_2]\cup [c_3,\infty) \right), &
\mathcal R_4 & = \C \setminus [c_3,\infty).
\end{align*}
We connect the sheets $\mathcal R_j$, $j=1,2,3,4$, to each other in
the usual crosswise manner, e.g. $\mathcal R_1$ is connected to
$\mathcal R_2$ along the cuts $[a_k,b_k]$, $k=1,\ldots,N$. The
Riemann surface is compactified by adding two points at infinity:
$\infty_1$ is added to the first sheet while $\infty_2$ is common to
the other sheets. We define $\mathcal B$ as the union of four small
disks, one around the origin of each sheet. We denote by $\mathcal
C\subset\mathcal R$ the contour consisting of the intervals
$[a_1,b_N]$ on the first two sheets, $[-c_2,0]$ on sheets $2$ and
$3$, and $[0,c_3]$ on sheets $3$ and $4$. See Figure \ref{fig:
Riemann surface case V} for an illustration in Case V.

\begin{figure}
\centering
\begin{tikzpicture}[scale=1]
\draw (-4,-0.5)--(6,-0.5)--(8,0.5)--(-2,0.5)--cycle
      (-4,-2)--(6,-2)--(8,-1)--(-2,-1)--cycle
      (-4,-3.5)--(6,-3.5)--(8,-2.5)--(-2,-2.5)--cycle
      (-4,-5)--(6,-5)--(8,-4)--(-2,-4)--cycle;
\filldraw (0,0) circle (1pt) node[below]{$0$}
          (0,-1.5) circle (1pt) node[below]{$0$}
          (0,-3) circle (1pt) node[below]{$0$}
          (0,-4.5) circle (1pt) node[below]{$0$}
          (0.5,0)circle (1pt)(1.5,0)circle (1pt)(2.5,0)circle (1pt)(3.5,0)circle (1pt)(4.5,0)circle (1pt)(5.3,0)circle (1pt)(0.5,-1.5)circle (1pt)(1.5,-1.5)circle (1pt)(2.5,-1.5)circle (1pt)(3.5,-1.5)circle (1pt)(4.5,-1.5)circle (1pt)(5.3,-1.5)circle (1pt)
          (-1.5,-1.5) circle (1pt)(-1.5,-3)circle (1pt) (2,-3)circle (1pt)  (2,-4.5)circle (1pt) ;
\draw[thick] (0.5,0)node[above]{$a_1$}--(1.5,0)node[above]{$b_1$} (2.5,0)node[above]{$a_2$}--(3.5,0)node[above]{$b_2$} (4.5,0)node[above]{$a_3$}--(5.3,0)node[above]{$b_3$}
             (0.5,-1.5)node[below]{$a_1$}--(1.5,-1.5)node[below]{$b_1$} (2.5,-1.5)node[below]{$a_2$}--(3.5,-1.5)node[below]{$b_2$} (4.5,-1.5)node[below]{$a_3$}--(5.3,-1.5)node[below]{$b_3$};
\draw[thick] (-3,-1.5)--(-1.5,-1.5) node[above]{$-c_2$}
             (-3,-3)--(-1.5,-3) node[below]{$-c_2$};
\draw[thick] (2,-3) node[above]{$c_3$}--(7,-3)
             (2,-4.5) node[below]{$c_3$}--(7,-4.5);
\draw[dashed] (0.5,0)--(0.5,-1.5) (1.5,0)--(1.5,-1.5) (2.5,0)--(2.5,-1.5) (3.5,0)--(3.5,-1.5) (4.5,0)--(4.5,-1.5) (5.3,0)--(5.3,-1.5);
\draw[dashed] (-3,-1.5)--(-3,-3) (-1.5,-1.5)--(-1.5,-3);
\draw[dashed] (2,-3)--(2,-4.5)  (7,-3)--(7,-4.5);
\end{tikzpicture}
\caption{Plot of the Riemann surface $\mathcal R$ for Case V (i.e. $a_1,c_2,c_3>0$) and genus $g=2$. }
\label{fig: Riemann surface case V}
\end{figure}
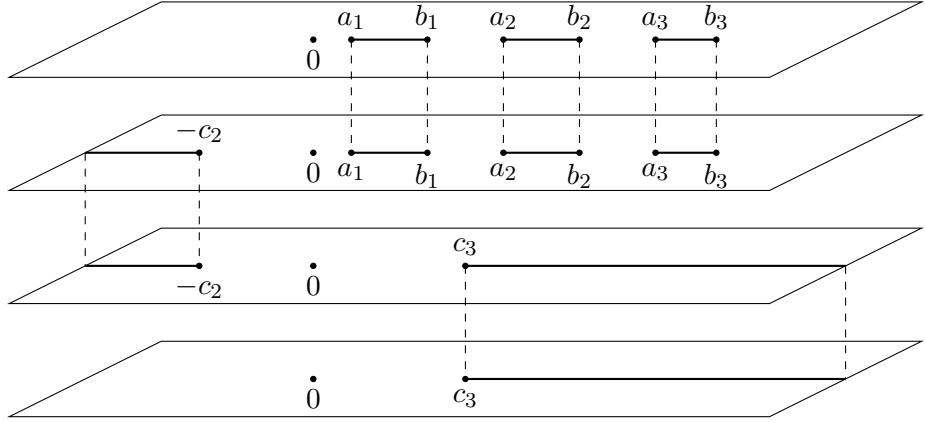

We want to construct a Szeg\H o function on this Riemann surface, i.e. we look for a scalar-valued function $f$ on $\mathcal R$ satisfying a boundary value problem. We denote the restriction of $f$ to the $j$th sheet with $f_j(z)$, $j=1,2,3,4$.

\begin{bvp} \label{bvp: f}
We look for a scalar-valued function $f$ satisfying the following conditions.
\begin{enumerate}
\item[(a)] $f$ is analytic on $\mathcal R\setminus \mathcal C$, and there exist
constants $C_1,C_2$ such that $0<C_1<|f(z)|<C_2<\infty$ on $\mathcal
R\setminus(\mathcal C\cup\mathcal B)$.
\item[(b)] $f$ has the following jumps on $\mathcal C$:
\begin{align}
\label{eq: jump f12 szego} f_{1,\pm}(x) &= f_{2,\mp}(x) x^{\frac{2\nu-2}{3}}, && x\in (a_j,b_j),\\
\label{constant:beta1}f_{1,+}(x) &= f_{1,-}(x) e^{-2\pi i\beta_j}, && x\in (b_j,a_{j+1}),\\
\label{constant:beta2}f_{2,+}(x) &= f_{2,-}(x) e^{2\pi i\beta_j}, && x\in (b_j,a_{j+1}),\\
\nonumber f_{2,+}(x) &= f_{2,-}(x) e^{\frac{\pi i}{3}(1+2\nu)}, && x\in (-c_2,0),\\
\nonumber f_{3,+}(x) &= f_{3,-}(x) e^{-\frac{\pi i}{3}(1+2\nu)}, && x\in (-c_2,0),\\
\nonumber f_{2,\pm}(x) &= f_{3,\mp}(x), && x\in (-\infty,-c_2),\\
\nonumber f_{3,+}(x) &= f_{3,-}(x) e^{\frac{\pi i}{3}(1+2\nu)}, && x\in (0,c_3),\\
\nonumber f_{4,+}(x) &= f_{4,-}(x) e^{-\frac{\pi i}{3}(1+2\nu)}, && x\in (0,c_3),\\
\nonumber f_{3,\pm}(x) &= f_{4,\mp}(x), && x\in (c_3,\infty),
\end{align}
where in \eqref{constant:beta1}--\eqref{constant:beta2} we have certain real
numbers $\beta_j$, $j=1,\ldots,N-1$, to be specified in Lemma \ref{lemma: solution bvp}.
\item[(c)] $f$ is regular at $z=\infty$ in the sense that
$$ f_1(z) = c+\mathcal O(z^{-1}),\qquad c\neq 0,
$$
and
$$ \begin{pmatrix}
f_2(z) \\ f_3(z) \\ f_4(z)
\end{pmatrix} = c_1\begin{pmatrix}1\\ 1\\ 1\end{pmatrix}+c_2 z^{-1/3}\begin{pmatrix}1\\ \om \\ \om^2\end{pmatrix}
+c_3z^{-2/3}\begin{pmatrix}1\\ \om^2 \\
\om\end{pmatrix}+\mathcal O(z^{-1}),\qquad c_1\neq 0,
$$
as $z\to\infty$ in the upper half plane.
%Complex conjugated formulas hold in
%the lower half plane.
\item[(d)] $f$ has the following behavior around the origin of each sheet for $\Im z >0$. The behavior depends on the particular case we deal with, see Section \ref{sec: classification}.
\begin{align}
f_1(z) &= \begin{cases}
\kappa_0 e^{\frac{i \pi}{4}(1-2\nu)}z^{\frac{1}{4}(2\nu-1)}
\left(1+\mathcal O\left(z^{1/2}\right)\right), & \text{in Cases I and IV,} \\
\mathcal O(1), & \text{in Cases II, III, and V,}
\end{cases} \label{eq: asymptotics f1}\\
f_2(z) &= \begin{cases}
\kappa_0 e^{-\frac{i \pi}{4}(1-2\nu)}z^{\frac{1}{12}(5-2\nu)}
\left(1+\mathcal O\left(z^{1/2}\right)\right), & \text{in Cases I and IV,} \\
\kappa_1 z^{\frac{1}{6}(1+2\nu)}(1+\mathcal O(z)), & \text{in Cases II and V,} \\
\kappa_2 z^{-\frac{1}{12}(1+2\nu)}\left(1+\mathcal O\left(z^{1/2}\right)\right), & \text{in Case III;}
\end{cases} \label{eq: asymptotics f2} \\
f_3(z) &= \begin{cases}
\kappa_3 e^{-\frac{i \pi}{12}(1+2\nu)}z^{-\frac{1}{12}(1+2\nu)}
\left(1+\mathcal O\left(z^{1/2}\right)\right), & \text{in Cases I and II,} \\
\kappa_2 e^{\frac{\pi i}{6}(1+2\nu)}z^{-\frac{1}{12}(1+2\nu)}
\left(1+\mathcal O\left(z^{1/2}\right)\right), & \text{in Case III,} \\
\kappa_4 e^{\frac{\pi
i}{6}(1+2\nu)}z^{-\frac{1}{3}(1+2\nu)}(1+\mathcal O(z)), & \text{in
Cases IV and V,}
\end{cases} \label{eq: asymptotics f3} \\
f_4(z) &= \begin{cases}
\kappa_3 e^{\frac{i \pi}{12}(1+2\nu)}z^{-\frac{1}{12}(1+2\nu)}
\left(1+\mathcal O\left(z^{1/2}\right)\right), & \text{in Cases I and II,} \\
\kappa_5 e^{-\frac{\pi
i}{6}(1+2\nu)}z^{\frac{1}{6}(1+2\nu)}(1+\mathcal O(z)), & \text{in
Cases III, IV, and V,}
\end{cases}\label{eq: asymptotics f4}
\end{align}
as $z\to 0$ with $\Im z>0$. The behavior in the lower half plane can be
obtained using the symmetry condition $f(\overline z)=\overline{f(z)}$. The
$\kappa_j$, $j=0,\ldots,5$ are real constants.
\end{enumerate}
\end{bvp}
Note that (c) is not an extra restriction as it is implied by (a).

We will settle the solvability of this boundary value problem at the end of this section.

\subsubsection*{Transforming the global parametrix: $N_\nu \mapsto M_\nu$}
Assuming the solvability of the boundary value problem for the Szeg\H o function $f$ we can further reduce the RH problem for $N_\nu$, i.e. we look for a solution to RH problem \ref{rhp: Nnu} in the form
\begin{equation}\label{globalpar:afterSzego}
N_\nu(z)= C M_\nu(z) \diag\left(\frac{1}{f_1(z)},\frac{1}{f_2(z)},\frac{1}{f_3(z)},\frac{1}{f_4(z)}\right),
\end{equation}
where $C$ denotes the explicit constant matrix
\begin{equation}\label{globalpar:afterSzego:C}
C = \begin{pmatrix}c & 0 & 0 & 0 \\ 0 & c_1 &c_3&c_2 \\
                   0& 0 & c_1 & 0 \\ 0& 0 & c_2 & c_1\end{pmatrix},
\end{equation}
with $c,c_1,c_2,c_3$ the constants from condition (c) in Boundary Value Problem \ref{bvp: f}. The matrix $C$ does not influence the jumps but will serve to get the appropriate asymptotics of $M_\nu(z)$ for large $z$. Putting $\wtil\alpha_j:=\alpha_j+\beta_j/n$ for $j=1,\ldots,N-1$, the matrix-valued
function $M_\nu$ must satisfy the following RH problem.

\begin{rhp} \label{rhp: M nu}
We look for a $4\times 4$ matrix-valued function $M_\nu$ satisfying the following conditions.
\begin{enumerate}
\item[\rm (1)] $M_\nu(z)$ is analytic for $z\in\cee\setminus \R $.
\item[\rm (2)] For $x\in \R $, $M_\nu$ has a jump
\begin{equation}
\left(M_\nu\right)_{+}(x) = \left(M_\nu\right)_{-}(x) \left\{
      \begin{array}{ll}
\diag\left(\left(J_{M_\nu} \right)_1(x),\left(J_{M_\nu} \right)_3(x)\right), & \hbox{for $x$ in $\R^+$,} \\
        \diag\left(-1,\left( J_{M_\nu}\right)_2(x),-1\right), & \hbox{for $x$ in $\R^-$,}
      \end{array}
    \right.
\end{equation}
where
\begin{equation*}
\left(J_{M_\nu}\right)_1=\left\{
          \begin{array}{ll}
            \begin{pmatrix}
            0 & 1 \\ -1 & 0
            \end{pmatrix}, & \hbox{on $(a_j,b_j)$, $j=1,\ldots,N$,}\\
            \diag(e^{-2\pi in\wtil\alpha_j},e^{2\pi in\wtil\alpha_j}),
            & \hbox{on $(b_j,a_{j+1})$, $j=1,\ldots,N-1$,}\\
            I_2, & \hbox{on $(0,a_1)\cup(b_N,+\infty)$,}
          \end{array}
        \right.
\end{equation*}
\begin{equation*}
\left(J_{M_\nu}\right)_2=\left\{
          \begin{array}{ll}
            I_2, & \hbox{on $(-c_2,0)$,} \\
            \begin{pmatrix} 0 & 1 \\
            -1 & 0
            \end{pmatrix}, & \hbox{on $(-\infty,-c_2)$,}\\
            \end{array}
        \right.
\end{equation*}
and
\begin{equation*}
\left(J_{M_\nu}\right)_3=\left\{
\begin{array}{ll}
          -I_2, & \hbox{on $(0,c_3)$,} \\
           \begin{pmatrix} 0 & 1 \\
           -1 & 0
           \end{pmatrix}, & \hbox{on $(c_3, +\infty)$.}
          \end{array}
        \right.
\end{equation*}
\item[\rm (3)] As $z\to \infty$ and $\pm \Im z>0$, we have
\begin{equation}
M_\nu(z)= \left[ I+ \mathcal O \left(z^{-1}\right)\right]\diag
\left(z^{1/2}, z^{1/6},z^{-1/2},z^{-1/6}\right)
\diag\left(1,A_{\pm}^{-T}\right).
\end{equation}
%The same but complex conjugated formula holds as $z\to \infty$ and $z\in \C^-$.
\end{enumerate}
\end{rhp}

Indeed, the conditions (1) and (2) in the RH problem for $M_\nu$ are immediate
from \eqref{globalpar:afterSzego}, RH problem \ref{rhp: Nnu}, and Boundary
Value Problem \ref{bvp: f}. For condition (3) one also uses
the identities \eqref{Omega:plusmin}--\eqref{Aplusmin:commute}
and \eqref{globalpar:afterSzego:C}.

\subsubsection*{Constructing the global parametrix}

We will immediately prove the solvability of RH problem \ref{rhp: M nu} which
then in combination with \eqref{transform:S to Nnu} and
\eqref{globalpar:afterSzego} finishes the construction of the global
parametrix.

We solve RH problem \ref{rhp: M nu} by reducing it to the RH problem for the
global parametrix in the non-chiral two-matrix model described in
\cite[(8.1)--(8.6)]{DKM}. The latter RH problem was solved in the same paper.
We denote that solution here as $M^{\DKM}(z)$. Note that $M^{\DKM}(z)$ depends
on certain parameters $n \in \Z$ and $0< \alpha_1^{\DKM} < \cdots <
\alpha_{N-1}^{\DKM}$ that will be specified later. We will also need the
symmetry relation \begin{equation}\label{DKM:global:symmetry} M^{\DKM}(-z) =
\diag(1,-1,1,-1)M^{\DKM}(z)\diag(1,-1,1,-1),
\end{equation}
which is not hard to verify. Now we claim that,
\begin{equation}\label{eq:solution of Mnu}
M_\nu(z):= KL\diag(-z^{1/2},1,z^{-1/2},1)M^{\DKM}(z^{1/2})\diag(-1,-1,\mp
1,\pm 1), \qquad \pm\Im z>0,
\end{equation}
solves RH problem \ref{rhp: M nu}. Here $K$ is a constant matrix of
the form
\begin{equation}\label{Kmatrix}
K = \begin{pmatrix} 1 & * & * & * \\
0 & 1 & * & 0 \\
0 & 0 & 1 & 0 \\
0 & 0 & * & 1
\end{pmatrix},
\end{equation}
for certain constants $*$, serving to get the correct asymptotics for
$z\to\infty$. Note that an apparent $z^{1/2}$ contribution for $z\to\infty$
vanishes due to \eqref{DKM:global:symmetry}. $L$ is a matrix of the form
\begin{equation}\label{Lmatrix}
L = I+ \kappa z^{-1}E_{3,1},
\end{equation}
for a suitable constant $\kappa$. The matrix $L$ serves to get the
correct behavior as $z\to 0$ and will be constructed in the proof of
Lemma~\ref{lemma:Sinfty:nearzero}. A straightforward verification
shows that $M_\nu$ defined in \eqref{eq:solution of Mnu} indeed
solves RH problem \ref{rhp: M nu} for the right choice of parameters
$n$ and $0<\alpha_1^{\DKM}< \cdots < \alpha_{N-1}^{\DKM}$. Since the
RH problem for $M^{\DKM}$ is solvable for any choice of these
parameters, see \cite[Section 8]{DKM}, we have proved the
solvability of the RH problem for $M_\nu$.

Apart from the solvability of Boundary Value Problem \ref{bvp: f} we have now finished the construction of the global parametrix by \eqref{transform:S to Nnu}, \eqref{globalpar:afterSzego}, and \eqref{eq:solution of Mnu}. We will settle the solvability of Boundary Value Problem \ref{bvp: f} at the very end of this section but first we will discuss the behavior of the global parametrix around the origin.

\subsubsection*{Behavior of $S^{(\infty)}$ near the origin}

In the next lemma we discuss the behavior of $S^{(\infty)}$ near the origin.

\begin{lemma}\label{lemma:Sinfty:nearzero}
The constant matrix $L$ in \eqref{Lmatrix} can be chosen such that
$S^{(\infty)}$ has the following behavior near the origin $z=0$:
\begin{equation}\label{Sinfty:nearzero}
\begin{array}{ll}
S^{(\infty)}(z)\diag(z^{1/4},z^{1/4},z^{1/4},z^{1/4})=\mathcal O(1),&
\textrm{in Case~I,}\\
S^{(\infty)}(z)\diag(z^{-\nu/2},z^{(1+\nu)/2},z^{1/4},z^{1/4})=\mathcal O(1),&
\textrm{in Case~II,}\\
S^{(\infty)}(z)\diag(z^{-\nu/2},z^{1/4},z^{1/4},z^{(1+\nu)/2})=\mathcal O(1),&
\textrm{in Case~III,}\\
S^{(\infty)}(z)\diag(z^{1/4},z^{1/4},z^{-\nu/2},z^{(1+\nu)/2})=\mathcal O(1),&
\textrm{in Case~IV,}\\
S^{(\infty)}(z)\diag(z^{-\nu/2},z^{(1+\nu)/2},z^{-\nu/2},z^{(1+\nu)/2})=\mathcal
O(1),& \textrm{in Case~V.}
\end{array}
\end{equation}
\end{lemma}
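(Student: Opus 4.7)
The plan is to trace $S^{(\infty)}(z)$ back through the chain of transformations. Combining \eqref{transform:S to Nnu}, \eqref{globalpar:afterSzego} and \eqref{eq:solution of Mnu} gives
\begin{equation*}
S^{(\infty)}(z)=CKL\cdot\diag\bigl(-z^{1/2},1,z^{-1/2},1\bigr)\cdot M^{\DKM}\bigl(z^{1/2}\bigr)\cdot\mathcal{D}_\pm(z),\qquad \pm\Im z>0,
\end{equation*}
where $\mathcal{D}_\pm(z)$ collects the remaining diagonal factors
$\diag(-1,-1,\mp1,\pm1)\cdot\diag\bigl(f_j^{-1}(z)\bigr)_{j=1}^{4}\cdot\diag\bigl(z^{(\nu-1)/2},z^{(1-\nu)/6},z^{(1-\nu)/6},z^{(1-\nu)/6}\bigr)\cdot\diag(1,1,\sigma^{\pm1},\sigma^{\mp1})$.
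Since $CK$ is constant and invertible and the constant phases in $\mathcal{D}_\pm(z)$ can be absorbed into the $\mathcal{O}$-statements, we are left with the local problem of balancing powers of $z$ in the product of the three remaining $z$-dependent factors.

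First I would compute $\mathcal{D}_\pm(z)$ entry by entry by inserting the small-$z$ asymptotics \eqref{eq: asymptotics f1}--\eqref{eq: asymptotics f4} of the Szeg\H o functions $f_j$. A short bookkeeping shows that in each of Cases I--V the total power of $z$ on each diagonal position collapses to one of $\pm 1/4$, $-\nu/2$ or $(1+\nu)/2$, precisely matching the exponents on the right-hand sides of \eqref{Sinfty:nearzero}. Next I would insert the known local behavior of $M^{\DKM}(w)$ near $w=0$ from \cite[Section 8]{DKM}: in Case V the origin sits strictly between the branch cuts of the four-sheeted Riemann surface, so $M^{\DKM}$ is analytic and invertible there; in Cases I--IV at least one branch cut terminates at the origin, producing the standard $w^{\pm 1/4}$ edge factors which reproduce the $z^{1/4}$ weights appearing in \eqref{Sinfty:nearzero}.

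The only delicate point, and the sole purpose of the free parameter $\kappa$ in $L=I+\kappa z^{-1}E_{3,1}$, is to tame the $z^{-1/2}$ entry in position $(3,3)$ of the left diagonal factor. Without correction, this entry would produce a $z^{-1/2}$ singularity in the third row of the product that is not absorbed by $\mathcal{D}_\pm(z)$. A direct computation gives
\begin{equation*}
L\cdot\diag(-z^{1/2},1,z^{-1/2},1)=\diag(-z^{1/2},1,z^{-1/2},1)-\kappa\,z^{-1/2}E_{3,1},
\end{equation*}
so that the third row of the product equals $z^{-1/2}\bigl(M^{\DKM}_{3,\cdot}(z^{1/2})-\kappa\,M^{\DKM}_{1,\cdot}(z^{1/2})\bigr)$, and $\kappa$ is chosen so that the leading term in the Taylor expansion of this bracket cancels; the proportionality required for this cancellation follows from the symmetry \eqref{DKM:global:symmetry} evaluated at $w=0$. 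The main obstacle in the argument is the case-by-case bookkeeping of the first two steps: one must match three independent sources of exponents (the $f_j$ asymptotics, the edge behavior of $M^{\DKM}$, and the left diagonal factor) separately in each of the five cases, and each individual comparison, though elementary, relies on having correctly identified the branch structure of the Riemann surface at the origin.
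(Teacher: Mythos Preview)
Your overall strategy—unwinding the chain \eqref{transform:S to Nnu}, \eqref{globalpar:afterSzego}, \eqref{eq:solution of Mnu} and balancing powers of $z$—is exactly what the paper does. But there is a concrete error in your picture of $M^{\DKM}(w)$ near $w=0$. You say that in Cases I--IV ``at least one branch cut terminates at the origin, producing the standard $w^{\pm 1/4}$ edge factors''. This is not so: the supports in the non-chiral model are symmetric about the origin, so whenever $0$ belongs to one of them it lies in the \emph{interior} of the corresponding cut, never at an endpoint. Hence $M^{\DKM}(w)=A+\mathcal O(w)$ as $w\to 0$ in the first quadrant, with $A$ a finite invertible matrix—this is precisely \eqref{DKM:global:zero:1}--\eqref{DKM:global:zero:2} in the paper. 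The $z^{1/4}$ weights in \eqref{Sinfty:nearzero} arise entirely from your diagonal factor $\mathcal D_\pm(z)$ (the Szeg\H o asymptotics combined with $\diag(z^{(\nu-1)/2},z^{(1-\nu)/6}I_3)$), not from $M^{\DKM}$.

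This misconception feeds into your handling of $\kappa$. You claim that $\kappa$ is chosen so that the full leading row $A_{3,\cdot}-\kappa A_{1,\cdot}$ vanishes, with the required proportionality coming from the symmetry \eqref{DKM:global:symmetry}. Rows $1$ and $3$ of $A$ are not proportional. With the paper's choice $\kappa=c/a=A_{3,1}/A_{1,1}$ only the first column is killed (and, in Cases I and IV, also the second, thanks to the special first-two-column structure of $A$ in \eqref{DKM:global:zero:2}). The mechanism that makes the lemma work is a column-by-column interaction: after multiplying on the right by $\mathcal D_\pm(z)\cdot\diag(z^{\text{lemma exponents}})$, the columns in which the third row remains of order $z^{-1/2}$ acquire an extra factor $z^{1/2}$, while the first column—where the cancellation does occur—carries an extra $z^{-1/2}$ that is now harmless because of that cancellation (and because the symmetry forces the relevant next-order coefficients to vanish as well). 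So the argument does not split into three independent steps as your sketch suggests; the partial cancellation from $\kappa$, the zero pattern of $A$ coming from \eqref{DKM:global:symmetry} together with the jump relations, and the column-specific powers from the Szeg\H o asymptotics must be checked jointly.
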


\begin{proof}
By combining the above transformations \eqref{transform:S to Nnu},
\eqref{globalpar:afterSzego} and \eqref{eq:solution of Mnu} we get
\begin{multline}\label{Sinfty:combined}
S^{(\infty)}(z) =
CKL\diag(-z^{1/2},1,z^{-1/2},1)M^{\DKM}(z^{1/2})\diag(-1,-1,\mp 1,\pm 1)\\
\times\diag\left(f_1^{-1}(z),f_2^{-1}(z),\sigma^{\pm 1}
f_3^{-1}(z),\sigma^{\mp 1}f_4^{-1}(z)\right)
\diag\left(z^{\frac{\nu-1}{2}},z^{\frac{1-\nu}{6}}I_3\right),
\end{multline}
for $\pm ~ \Im z>0$.

From the construction in \cite[Sec.~8]{DKM} and the symmetry
\eqref{DKM:global:symmetry} it follows that
\begin{equation}\label{DKM:global:zero:1}
M^{\DKM}(z) = A+\mathcal O(z),
\end{equation}
as $z\to 0$ in the first quadrant of $\cee$, where
\begin{equation}\label{DKM:global:zero:2}
A = \left\{
\begin{array}{ll}
\begin{pmatrix}a&a&*&* \\
b&-b&*&* \\
c&c&*&* \\
d&-d&*&*
\end{pmatrix},& \qquad \textrm{Cases~I, IV},\\
\begin{pmatrix}a&*&*&* \\
0&*&*&* \\
c&*&*&* \\
0&*&*&*
\end{pmatrix},& \qquad \textrm{Cases~II, III, V},\\
\end{array}
\right.
\end{equation}
where $a,b,c,d$ are certain constants with $abcd\neq 0$. We then
define the matrix $L$ as in \eqref{Lmatrix} with the constant
$\kappa$ given by $\kappa=c/a$. The lemma then follows from a
straightforward calculation using
\eqref{Sinfty:combined}--\eqref{DKM:global:zero:2} and the behavior
of the Szeg\H{o} functions $f_i$ near the origin in \eqref{eq:
asymptotics f1}--\eqref{eq: asymptotics f4}.
\end{proof}

\begin{remark} By using a finer analysis of the
structure in \eqref{DKM:global:zero:2}, one can show that each of the terms
$z^{(1+\nu)/2}$ in the Cases~II--V in \eqref{Sinfty:nearzero} can be replaced
by $z^{\nu/2}$.
\end{remark}

\subsubsection*{Solvability of Boundary Value Problem \ref{bvp: f}}
\label{subsection:Szego:exists}

It remains to prove the existence of the Szeg\H o function $f$ as a solution of Boundary Value Problem \ref{bvp: f}. On the Riemann surface
$\mathcal R$ we construct a canonical homology basis consisting of closed
curves $A_1,\ldots,A_g$ and $B_1,\ldots,B_g$ with $g$ the genus of $\mathcal
R$. Here $B_j$ is a closed curve on the first sheet going counterclockwise
around the union of cuts $\bigcup_{k=1}^j [a_k,b_k]$, $j=1,\ldots,g$. On the
other hand, $A_j$ is a closed curve on the first two sheets whose intersection
with the first sheet is a path connecting a point of the cut $(a_j,b_j)$ to a
point of the cut $(a_{j+1},b_{j+1})$, and whose intersection with the second
sheet is the complex conjugate of this path, with the reverse orientation. See Figure \ref{fig: homology basis} for an illustration in Case V.

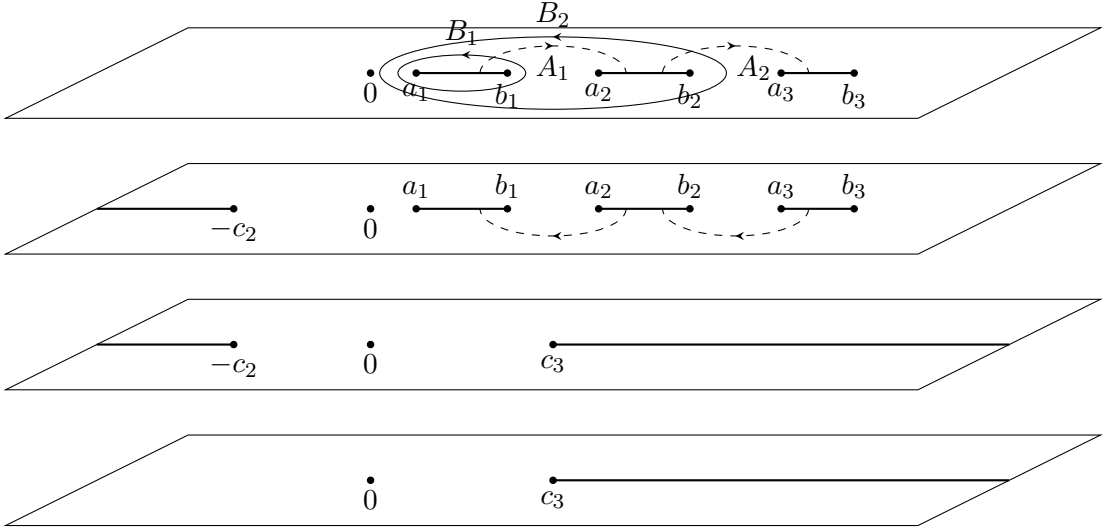
\begin{figure}
\begin{tikzpicture}[scale=1.2]
\draw (-4,-0.5)--(6,-0.5)--(8,0.5)--(-2,0.5)--cycle
      (-4,-2)--(6,-2)--(8,-1)--(-2,-1)--cycle
      (-4,-3.5)--(6,-3.5)--(8,-2.5)--(-2,-2.5)--cycle
      (-4,-5)--(6,-5)--(8,-4)--(-2,-4)--cycle;
\filldraw (0,0) circle (1pt) node[below]{$0$}
          (0,-1.5) circle (1pt) node[below]{$0$}
          (0,-3) circle (1pt) node[below]{$0$}
          (0,-4.5) circle (1pt) node[below]{$0$}
          (0.5,0)circle (1pt)(1.5,0)circle (1pt)(2.5,0)circle (1pt)(3.5,0)circle (1pt)(4.5,0)circle (1pt)(5.3,0)circle (1pt)(0.5,-1.5)circle (1pt)(1.5,-1.5)circle (1pt)(2.5,-1.5)circle (1pt)(3.5,-1.5)circle (1pt)(4.5,-1.5)circle (1pt)(5.3,-1.5)circle (1pt)
          (-1.5,-1.5) circle (1pt)(-1.5,-3)circle (1pt) (2,-3)circle (1pt)  (2,-4.5)circle (1pt) ;
\draw[thick] (0.5,0)node[below]{$a_1$}--(1.5,0)node[below]{$b_1$} (2.5,0)node[below]{$a_2$}--(3.5,0)node[below]{$b_2$} (4.5,0)node[below]{$a_3$}--(5.3,0)node[below]{$b_3$}
             (0.5,-1.5)node[above]{$a_1$}--(1.5,-1.5)node[above]{$b_1$} (2.5,-1.5)node[above]{$a_2$}--(3.5,-1.5)node[above]{$b_2$} (4.5,-1.5)node[above]{$a_3$}--(5.3,-1.5)node[above]{$b_3$};
\draw[thick] (-3,-1.5)--(-1.5,-1.5) node[below]{$-c_2$}
             (-3,-3)--(-1.5,-3) node[below]{$-c_2$};
\draw[thick] (2,-3) node[below]{$c_3$}--(7,-3)
             (2,-4.5) node[below]{$c_3$}--(7,-4.5);
\begin{scope}[decoration={markings,mark= at position 0.25 with {\arrow{stealth}}}]
\draw[postaction={decorate}] (1,0) ellipse(0.7 and 0.2);
\draw[postaction={decorate}] (2,0) ellipse(1.9 and 0.4);
\draw (1,0.2) node[above]{$B_1$} (2,0.4) node [above]{$B_2$};
\end{scope}
\begin{scope}[decoration={markings,mark= at position 0.5 with {\arrow{stealth}}}]
\draw[dashed,postaction={decorate}] (1.2,0) arc(180:0:0.8 and 0.3);
\draw[dashed,postaction={decorate}] (3.2,0) arc(180:0:0.8 and 0.3);
\draw[dashed,postaction={decorate}] (2.8,-1.5) arc(0:-180:0.8 and 0.3);
\draw[dashed,postaction={decorate}] (4.8,-1.5) arc(0:-180:0.8 and 0.3);
\draw (2,0.3) node[below]{$A_1$}  (4.2,0.3) node[below]{$A_2$};
\end{scope}
\end{tikzpicture}
\caption{Plot of the Riemann surface $\mathcal R$ and the canonical homology basis $(A_1,\ldots,A_g,B_1,\ldots,B_g)$ for Case V (i.e. $a_1,c_2,c_3>0$) and genus $g=2$.}
\label{fig: homology basis}
\end{figure}

Let $\ud w_1(z),\ldots,\ud w_g(z)$ be a basis for the space of holomorphic
differentials on $\mathcal R$ normalized with respect to the cycles
$A_1,\ldots,A_g$, i.e., such that
$$ \int_{A_j} \ud w_k(t) = \delta_{jk}, \qquad j,k=1,\ldots,g,
$$
where $\delta_{jk}$ denotes the Kronecker delta.

Let $P$ be a point on the Riemann surface $\err$ which lies in the upper half
plane of one of the sheets $\err_j$, $j=1,\ldots,4$. There exists a
sufficiently small neighborhood $U$ of $P$, lying entirely in the upper half
plane of $\err_j$, such that the holomorphic differential $\ud w_k$ allows the
representation
\begin{equation} \label{eq: merodiff local}
\ud w_k(z) = \rho_k(z)\ \ud z,\qquad z\in U,
\end{equation}
for a certain analytic function $\rho_k$, where we use $z$ as the complex
coordinate on the sheet $\err_j$. Letting $\bar U$ be the set of complex
conjugate points of $U$ lying on the same sheet $\err_j$, we then have from the
symmetry under complex conjugation that
\begin{equation} \label{eq: merodiff local bar}
 \ud w_k(z) = \overline{\rho_k(\bar z)}\ \ud z,\qquad z\in \bar U,
\end{equation}
where we again use $z\in \bar U$ as the complex coordinate on the sheet
$\err_j$.

\smallskip
Now we construct the Szeg\H o function $f$ satisfying (a)--(d) above
following Zverovich \cite{Zverovich}. In the language of
\cite[Page~135]{Zverovich} we must find a solution to
\emph{Riemann's homogeneous problem} where $\Phi(p)$ is our function
$f(z)$, $L$ is our contour $\mathcal C$, and with the divisors
$\mathcal D$ and $\mathcal J$ prescribing the singularities given by
$\mathcal D=1$ and (using additive notation)
\[ \displaystyle
\mathcal J^{-1}= \begin{cases}
\min\{\frac{-1+2\nu}{4},\frac{5-2\nu}{12}\} 0_{1,2}+\frac{-1-2\nu}{12} 0_{3,4},&\quad\textrm{Case~I},\\
\frac{1+2\nu}{6} 0_{2}+\frac{-1-2\nu}{12} 0_{3,4},&\quad\textrm{Case~II},\\
\frac{-1-2\nu}{12} 0_{2,3}
+\frac{1+2\nu}{6} 0_{4},&\quad\textrm{Case~III},\\
\min\{\frac{-1+2\nu}{4},\frac{5-2\nu}{12}\} 0_{1,2}+
\frac{-1-2\nu}{3}
0_{3}+\frac{1+2\nu}{6} 0_{4},&\quad\textrm{Case~IV},\\
\frac{1+2\nu}{6}
0_{2}+\frac{-1-2\nu}{3}
0_{3}+\frac{1+2\nu}{6} 0_{4},&\quad\textrm{Case~V}.
\end{cases}\]
%where we use the \lq floor\rq\ notation $\lfloor x\rfloor$ to denote the largest integer not exceeding $x\in\er$.
Thus the only singularities or poles of $f$ are allowed at
the points lying over the origin. The precise form of the divisor $\mathcal J$
is due to \eqref{eq: asymptotics f1}--\eqref{eq: asymptotics f4}.

Denote by $G(t)$ the multiplicative factors appearing in the jump
conditions in part (b) of the above boundary value problem for $f$.
Note that $G(t)$ satisfies a H\"older condition on each analytic arc
of the contour $\mathcal C$, except possibly at the origin of the
first two sheets, in the case where $a_1=0$. Thus the behavior at
the origin needs to be analyzed separately.

Following \cite[Page~137]{Zverovich} we introduce the piecewise analytic function $X(q)$, $q \in \mathcal R$, as
\begin{equation} \label{eq: X(q)}
X(q):=e^{\frac{1}{2\pi i}\int_{\mathcal C}\ln G(\tau) \ud \hat \omega_{qq_0}(\tau)}.
\end{equation}
Here $\ud \hat \omega_{qq_0}(t)$ is the discontinuous analogue to the Cauchy kernel, $q_0 \not \in \mathcal C$. We will show that $X(q)$ solves Boundary Value Problem \ref{bvp: f} for well-chosen values of the constants $\beta_j$, $j=1,\ldots,g$.

\begin{lemma} \label{lemma: solution bvp}
There exist $\beta_j \in \R$, $j=1,\ldots,g$, such that the function $X(q)$ defined in \eqref{eq: X(q)} solves Boundary Value Problem \ref{bvp: f}.
\end{lemma}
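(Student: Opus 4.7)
The plan is to follow the classical Zverovich approach to Riemann's homogeneous problem on a compact Riemann surface, specialized to our four-sheeted surface $\mathcal R$ and the data of Boundary Value Problem \ref{bvp: f}. First I would show that because $\ud \hat \omega_{qq_0}$ is the discontinuous analogue of the Cauchy kernel, the Sokhotski--Plemelj formula applied to \eqref{eq: X(q)} yields
\[
X_+(q) = X_-(q)\, G(q), \qquad q \in \mathcal C,
\]
on each analytic arc of $\mathcal C$ away from the origin, which is exactly the list of jumps in part (b) with the multiplicative factors $e^{\mp 2\pi i \beta_j}$ still undetermined on $(b_j, a_{j+1})$.

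Next I would account for the multi-valuedness of $X(q)$ produced by analytic continuation around the cycles $A_1,\ldots,A_g,B_1,\ldots,B_g$. Because $\ud \hat \omega_{qq_0}$ is normalized so that its $A$-periods vanish, continuation around any $A_j$ leaves $X$ unchanged. Continuation around $B_j$ multiplies $X$ by a factor $\exp\bigl(\tfrac{1}{2\pi i}\oint_{B_j} \log G(\tau)\,\ud \hat \omega_{qq_0}(\tau)\bigr)$, which using the Riemann bilinear relations can be rewritten as $\exp(c_j)$ for an explicit real constant $c_j$ depending on the $A$-periods of $\log G$. The role of the free parameters $\beta_1,\ldots,\beta_g$ is precisely to absorb this multi-valuedness: writing the desired extra jumps $e^{\mp 2\pi i \beta_j}$ on $(b_j,a_{j+1})$ as additional contributions to $\log G$ on those arcs, the requirement that $X$ become single-valued on $\mathcal R \setminus \mathcal C$ becomes a real linear system in the $\beta_j$ with matrix given by the imaginary part of the period matrix of $\mathcal R$. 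This period matrix is non-degenerate, so the system has a unique real solution, fixing $\beta_j$ for $j=1,\ldots,g$.

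With the $\beta_j$ so chosen, the growth of $X$ at $\infty_1$ and $\infty_2$ and its non-vanishing off the origin follow from the fact that $\ud \hat \omega_{qq_0}$ has only a simple pole at $q$ and $q_0$ and that $\log G$ is bounded on each arc of $\mathcal C$ away from the origin. The one remaining point where the analysis is non-routine is the behavior at the four origin points $0_1, 0_2, 0_3, 0_4$, since, as noted below Boundary Value Problem \ref{bvp: f}, $G$ fails the H\"older condition there and several arcs of $\mathcal C$ meet at $0$. A local analysis at the origin point on each sheet, using the standard formula for a Cauchy-type integral near the junction of several arcs with distinct boundary values of $G$, produces fractional exponents determined by the limits $G(0\pm)$ along each arc. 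These limits are prescribed by \eqref{eq: jump f12 szego} through the end of (b), and a direct computation case by case (Cases I--V) matches them with the divisor $\mathcal J$ and with the exponents $\tfrac{2\nu-1}{4}$, $\tfrac{5-2\nu}{12}$, $-\tfrac{1+2\nu}{12}$, $\tfrac{1+2\nu}{6}$, $-\tfrac{1+2\nu}{3}$ appearing in \eqref{eq: asymptotics f1}--\eqref{eq: asymptotics f4}, with the specific prefactors $\kappa_i$ and phases $e^{\pm i \pi(\cdot)}$ read off from the argument of the origin relative to the arcs. The symmetry property $f(\bar z) = \overline{f(z)}$ is automatic from the symmetry of $\mathcal C$, $G$, and $q_0$ under complex conjugation (one takes $q_0$ on the real axis outside $\mathcal C$).

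The main obstacle is the last step: keeping track of the local monodromy at $0$ while simultaneously having to split into the five geometric cases distinguished by whether $a_1$, $c_2$, $c_3$ vanish, since each case changes which arcs of $\mathcal C$ terminate at the origin and hence which local exponents appear. Modulo this case analysis, the proof reduces to invoking Zverovich's general theorem \cite{Zverovich}, and the constants $\beta_j$ are the unique real solution of the normalization system described above.
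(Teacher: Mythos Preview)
Your overall plan coincides with the paper's: build $X(q)$ via Zverovich's discontinuous Cauchy kernel, then tune $\beta_1,\ldots,\beta_g$ to kill the residual monodromy and read off the local exponents at the origin. Two concrete points need correction.

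\textbf{The cycles are reversed.} In Zverovich's construction the kernel $\ud\hat\omega_{qq_0}$ is cut along the $A$-cycles, so $X(q)$ acquires a constant multiplicative jump $\exp\bigl(-\int_{\mathcal C}\ln G\,\ud w_k\bigr)$ across each $A_k$, not across $B_k$; your claim that ``continuation around any $A_j$ leaves $X$ unchanged'' is the wrong way round. This matters because the resulting linear system for the $\beta_j$ is then \emph{diagonal}, not governed by the period matrix: the two copies of the gap $(b_j,a_{j+1})$ sitting inside $\mathcal C$ on sheets $1$ and $2$ together form a cycle homologous to $A_j$, so by the normalization $\int_{A_j}\ud w_k=\delta_{jk}$ the $\beta$-dependent part of $\int_{\mathcal C}\ln G\,\ud w_k$ is exactly $-\beta_k$. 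The paper then checks, arc by arc via the conjugation symmetry you mention, that the remaining contributions are real, and simply sets $\beta_k$ equal to them. No Riemann bilinear relations enter.

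\textbf{The lower bound in (a) is missing.} You establish boundedness of $X(q)$ away from the origin but not that it is bounded away from zero. The paper handles this by a Liouville trick: solve the companion problem with each jump $G$ replaced by $1/G$ (and the divisor $\mathcal J$ inverted) to produce a bounded $\tilde f$; then $f\tilde f$ is analytic and bounded on all of $\mathcal R$, hence a nonzero constant, and $|f|\geq |C|/\|\tilde f\|_\infty$ follows.

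One simplification relative to your sketch: the explicit local analysis at the origin is only needed in Cases~I and~IV, where $a_1=0$ and the factor $x^{(2\nu-2)/3}$ in \eqref{eq: jump f12 szego} violates the H\"older hypothesis of Zverovich's index formula. There the paper passes to the uniformizer $\zeta=z^{1/2}$ identifying sheets $1$ and $2$ near $0$, and evaluates two explicit Cauchy integrals to extract the exponents in \eqref{eq: asymptotics f1}--\eqref{eq: asymptotics f2}. In Cases~II, III, V the H\"older condition holds and Zverovich's divisor formula $\mathcal E=\mathcal J^{-1}$ applies directly, so no separate case-by-case computation is required.
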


\begin{proof}
We first prove that $X(q)$ satisfies condition (b) of the boundary value problem. The function $X(q)$ is analytic on $\mathcal R \setminus \left( \mathcal C \cup \bigcup_{k=1}^g A_k \right)$ and has the following jumps, see \cite{Zverovich}
\begin{align}
X_+(t) &= X_-(t)G(t), && \text{for }t \in \mathcal C, \\
X_+(t) &= X_-(t)e^{-\int_{\mathcal C}\ln G(\tau) \ud w_k(\tau)}, && \text{for }t \in A_k, \quad k=1,\ldots,g.
\end{align}
Note that the jump on $A_k$, $k=1,\ldots,g$, is constant. We claim that we can define the real constants $\beta_j$ such that these jumps are actually trivial. To see this we first show that
\begin{equation} \label{Jacobi:inversion}
\int_{\mathcal C}\ln G(\tau) \ud w_k(\tau), \qquad k=1,\ldots,g,
\end{equation}
is real. This follows from the symmetry under complex conjugation and the particular
form of the jump factors $G(z)$. Indeed, using \eqref{eq: jump f12 szego} and \eqref{eq: merodiff local}--\eqref{eq: merodiff local bar} the contribution of the cut $[a_j,b_j]$ on \eqref{Jacobi:inversion} can be written as
\begin{multline*}
\frac{1}{2\pi i}\int_{[a_j,b_j]} \frac{2\nu-2}{3}\log (x+) \ud w_k(x+)-\frac{1}{2\pi i}\int_{[a_j,b_j]} \frac{2\nu-2}{3}\log (x-) \ud w_k(x-) \\ =\frac{1}{2\pi i} \frac{2\nu-2}{3}\int_{a_j}^{b_j}\log (x)(\rho_k(x)-\overline{\rho_k(x)})\ud x \in \R,
\end{multline*}
where the first (second) integral is over the interval $[a_j,b_j]$ in the upper (lower) half plane of the first sheet. Also all contributions
\[
\frac{1}{2\pi i}\int_{K} \log G(x) \ud w_k(x),
\]
are real, where $K$ is any of the intervals $(b_j,a_j)$, $j=1,\ldots, N-1$, on the
first or second sheet, $(-c_2,0)$ on the second or third sheet, or $(0,c_3)$ on
the third or fourth sheet. This follows from the fact that on these intervals
the meromorphic differential $\ud w_k$ is real and the particular form of $\log
G(x)$ on these intervals. Moreover, in \eqref{Jacobi:inversion}, the contribution from the gaps $(b_j,a_{j+1})$ on the
first two sheets is given by (use \eqref{constant:beta1}--\eqref{constant:beta2})
\[
\frac{1}{2\pi i} \int_{A_k} (-2\pi i\beta_j)\ \ud w_k(t) = -\beta_j \delta_{k,j}.
\]
Hence, the constants $\beta_j$, $j=1,\ldots,g$, in
\eqref{constant:beta1}--\eqref{constant:beta2} can indeed be chosen so that \eqref{Jacobi:inversion} is zero. Therefore the function $X(q)$ already satisfies condition (b) of Boundary Value Problem \ref{bvp: f}.

Next we prove condition (d). It follows from the precise definition
of $G(z)$ that the quantities $\mathcal\chi_k$, $\mathcal\chi$ in
\cite[Page~138]{Zverovich} are all zero except at the points lying
over the origin. Hence the function $X(q)$ can only have
singularities at these points. The order of these singularities is
described by the divisor $\mathcal E$ in \cite[Page~138]{Zverovich}.
In our case this divisor is precisely the inverse of $\mathcal J$
above: $\mathcal E = \mathcal J^{-1}$. Indeed, this follows from the
formulas in \cite{Zverovich} except for the point $0_{1,2}$ in Cases
I and IV, since there the jump matrices do not satisfy the
boundedness and/or H\"older conditions. In Cases I and IV we have
that $a_1=0$ and $c_2>0$, so in the union of disks $\mathcal B$ the
Szeg\H o function $f$ has the jumps
\begin{align}
f_{1,\pm}(x) &= f_{2,\mp}(x) x^{\frac{2\nu-2}{3}}, && x>0,  \label{eq: jump szego 1}\\
f_{2,+}(x)   &= f_{2,-}(x) e^{\frac{\pi i}{3}(1+2\nu)}, && x<0. \label{eq: jump szego 2}
\end{align}
We define the conformal map
\[
\zeta= \begin{cases}
z^{1/2} & z \in \mathcal R_1 \cap \mathcal B, \\
-z^{1/2} & z \in \mathcal R_2 \cap \mathcal B,
\end{cases}
\]
where the cut of the fractional power is chosen along the positive real line,
i.e. $0<\arg z < 2 \pi$. This map sends $\overline{(\mathcal R_1 \cup \mathcal
R_2)} \cap \mathcal B$ to a disk $\mathcal B^*$ centered at the origin of the complex
$\zeta$-plane. Lifting the Szeg\H o function to the $\zeta$-plane we get
\[
f(\zeta)=\begin{cases} f_1(\zeta^2) & \Im \zeta >0, \\
f_2(\zeta^2) & \Im \zeta<0. \end{cases}
\]
Then \eqref{eq: jump szego 1}--\eqref{eq: jump szego 2} translate into
\begin{align*}
f_+(\zeta) &= f_-(\zeta)|\zeta|^\frac{4\nu-4}{3}, && \zeta \in \R \cap \mathcal B^*, \\
f_+(\zeta) &= f_-(\zeta)e^{\frac{\pi i}{3}(1+2\nu)}, && \zeta \in i\R^- \cap
\mathcal B^*,
\end{align*}
where the part of the real line is oriented from left to right and the part of the imaginary axis from bottom to top. Applying $\log$ at both sides of the equation leads to an additive problem that can be solved using Cauchy transforms and the Sokhotski-Plemelj formula
\begin{equation} \label{eq: szego log f}
\log f(\zeta)=\frac{2\nu-2}{3\pi i} \int_{-1}^1 \frac{\log |t|}{t-\zeta} \ud
t+\frac{1+2\nu}{6} \int_{-i}^0\frac{1}{t-\zeta}\ud t+h(\zeta), \qquad \zeta \in
\mathcal B^* \setminus (\R \cup i\R^-),
\end{equation}
where $h$ is meromorphic in $\mathcal B^*$ with no poles except possibly at
zero. The asymptotic behavior of the integrals in \eqref{eq: szego log f} as
$\zeta \to 0$ can be calculated, e.g. using Mathematica, and is given by
\begin{align*}
\int_{-1}^1 \frac{\log |t|}{t-\zeta} \ud t &= \pm \pi i \log \zeta + \frac{\pi^2}{2}+\mathcal O(\zeta), \qquad \text{as }\zeta \to 0, \zeta \in \C^{\pm} \\
\int_{-i}^0\frac{1}{t-\zeta}\ud t          &= \begin{cases} \log \zeta - \frac{\pi i}{2}+\mathcal O(\zeta), & \text{as }\zeta \to 0, \zeta \in I \cup II \cup IV,\\
                                                            \log \zeta + \frac{3\pi i}{2}+\mathcal O(\zeta) & \text{as }\zeta \to 0, \zeta \in III.\end{cases}
\end{align*}
Here $I,II,III,IV$ denote the open quadrants of the complex $\zeta$-plane.
Plugging in these asymptotics in \eqref{eq: szego log f} we get
\[
\log f(\zeta) = \begin{cases}
\frac12(2\nu-1)\log \zeta+\frac{i\pi}{4}(1-2\nu)+h(\zeta)+\mathcal O(\zeta), & \text{as }\zeta \to 0,\zeta \in I \cup II, \\
\frac16(5-2\nu)\log \zeta+\frac{i\pi}{12}(7+2\nu)+h(\zeta)+\mathcal O(\zeta), & \text{as }\zeta \to 0,\zeta \in III, \\
\frac16(5-2\nu)\log \zeta+\frac{i\pi}{4}(1-2\nu)+h(\zeta)+\mathcal O(\zeta), & \text{as }\zeta \to 0,\zeta \in IV.
\end{cases}
\]
In terms of the original functions $f_1,f_2$ (assuming that $h$ is
analytic) this precisely turns into \eqref{eq: asymptotics
f1}--\eqref{eq: asymptotics f2} for Cases I and IV. This concludes
the proof of the asymptotic behavior of the Szeg\H o function around
the origin.

Summarizing, we have now proved that $f(q):=X(q)$ solves Boundary
Value Problem \ref{bvp: f}, where in (a) we only established the
upper bound $|f(q)|<C_2<+\infty$ on $\mathcal R \setminus \mathcal
B$. To obtain the lower bound $0<C_1<|f(q)|$ on $\mathcal R
\setminus \mathcal B$, let us first define the function $\til f(q)$
on $\mathcal R$ which is the solution to the same boundary value
problem as $f(q)$ except that each of the multiplicative jumps
$G(t)$ is replaced by its inverse $1/G(t)$ and the asymptotic
behavior at 0 is inverted as well. Proceeding as above we find such
a solution $\til f(q)$ which is bounded, i.e., $|\til f(q)|<\til
C_2<+\infty$ on $\mathcal R \setminus \mathcal B$. But then the
product function $f(q)\til f(q)$ is analytic and bounded on
$\mathcal R$ so by Liouville's theorem it must be a constant.
Clearly it cannot be identically zero so it is a nonzero constant
$C\neq 0$. This shows that $\til f(q) = C/f(q)$ and from the
boundedness of $\til f$ we then obtain the desired lower bound
$0<C_1<|f(q)|$ with $C_1=C/\til C_2$.
\end{proof}

\subsection{Local parametrices near the nonzero branch points} \label{sec: local par nonzero}

Near each of the branch points in $\{a_j,b_j~|~j=1,\ldots,N\}\cup\{-c_2,c_3\}\setminus\{0\}$, a local parametrix
$S^{\Airy}$ can be built in the standard way with the help of Airy functions, see e.g.~\cite{DKM}. We omit the details here.

\subsection{Local parametrix near the origin} \label{subsection:local:Bessel}

In this section we construct the local parametrix $S^{(0)}$ near the origin. We will show that the local RH problem can be reduced to the RH problem in \cite[Section 5.6]{DKRZ}. As a first step we perform a preliminary transformation on RH problem \ref{RHP:S}.

We define the $4 \times 4$ matrix-valued function $P$ by
\begin{equation}\label{def:P}
P(z)= S(-z)\diag\left(e^{\pm\nu \pi i/2},e^{\mp\nu \pi
i/2},e^{\pm\nu \pi i/2},e^{\mp\nu \pi i/2}\right)J, \qquad\hbox{for
$\pm\Im z>0$,}
\end{equation}
where
\begin{equation}\label{def:J}
J=
\begin{pmatrix}
0 & 0 & 0 & 1  \\
0 & 0 & 1 & 0  \\
0 & 1 & 0 & 0  \\
1 & 0 & 0 & 0
\end{pmatrix}.
\end{equation}

Let us denote by $\wtil L_i=-L_i$, $i=1,2,3$, the lens around
$\bigcup_{j=1}^{N}(-b_j, -a_j)$, $(c_2,+\infty)$ and $( -\infty,-c_3)$,
respectively. We also introduce
\begin{equation} \label{eq: lambda tilde}
\wtil \lam_i(z)=\lam_{5-i}(-z), \qquad i=1,2,3,4.
\end{equation}
Then $P$ solves the following RH problem.

\begin{rhp}The matrix-valued function $P$ satisfies the following conditions.
\begin{enumerate}
\item[\rm (1)] $P(z)$ is analytic for $z\in\cee\setminus\Sigma_P$,
where $\Sigma_P$ is the contour consisting of the real axis and the lips of the
lenses $\wtil L_i$, $i=1,2,3$.
\item[\rm (2)] For $z\in \Sigma_P$, $P$ has a jump
\begin{equation*}%\label{defjumpmatrixP}
P_{+}(z) = P_{-}(z) \left\{
      \begin{array}{ll}
       \diag\left(1,\left( J_P \right)_2,1\right),
       & \hbox{for $z$ in $\R^+$ and the lips of $\wtil L_2$,} \\
       \diag\left( \left( J_P \right)_1,\left( J_P \right)_3\right),
       & \hbox{for $z$ in $\R^-$ and the lips of $\wtil L_1$, $\wtil L_3$,}
      \end{array}
    \right.
\end{equation*}
where
\begin{equation*}
(J_P)_1=\left\{
\begin{array}{ll}
          \begin{pmatrix} e^{-\nu \pi i} & 0 \\
 -e^{n(\wtil\lam_{1,-}-\wtil\lam_{2,+})} & e^{\nu \pi i}
\end{pmatrix}, & \hbox{on $(-c_3,0)$,} \\
            \begin{pmatrix} 0 & 1 \\
-1 & 0
\end{pmatrix}, & \hbox{on $(-\infty,-c_3)$,}\\
I_2, & \hbox{on the lips of $\wtil L_1$,}\\
\begin{pmatrix}
1 & -e^{\pm \nu\pi i}e^{n(\wtil\lam_2-\wtil\lam_1)} \\
0 & 1
\end{pmatrix}, & \hbox{on the upper/lower lip of $\wtil L_3$,}
          \end{array}
        \right.
\end{equation*}
\begin{equation*}
(J_P)_2=\left\{
          \begin{array}{ll}
            \begin{pmatrix}
1 & e^{n(\wtil\lam_{3,-}-\wtil\lam_{2,+})} \\ 0 &1
\end{pmatrix}, & \hbox{on $(0,c_2)$,} \\
            \begin{pmatrix} 0 & 1 \\
                             -1 & 0
            \end{pmatrix}, & \hbox{on $(c_2,\infty)$,}\\
\begin{pmatrix}
1 & 0 \\
e^{n(\wtil\lam_2-\wtil\lam_3)} & 1
\end{pmatrix}, & \hbox{on the lips of $\wtil L_2$,}
          \end{array}
        \right.
\end{equation*}
and
\begin{equation*}
(J_P)_3=\left\{
          \begin{array}{ll}
            \begin{pmatrix}
            0 & 1 \\ -1 & 0
            \end{pmatrix}, & \hbox{on $(-b_j,-a_j)$, $j=1,\ldots,N$,}\\
             \begin{pmatrix} e^{-\nu \pi i-2\pi in \alpha_j}& 0
            \\ -e^{n(\wtil\lam_{3,-}-\wtil\lam_{4,+})} & e^{\nu\pi i+2\pi in \alpha_j}
            \end{pmatrix},& \hbox{on $(-a_{j+1},-b_j)$, $j=0,\ldots,N$,}\\
            \begin{pmatrix}
            1 & -e^{\pm\nu\pi i}e^{n(\wtil \lam_4-\wtil \lam_3)} \\
            0 & 1
            \end{pmatrix}, & \hbox{on the upper/lower lip of $\wtil L_1$,}\\
            I_2, & \hbox{on the lips of $\wtil L_3$.}
          \end{array}
        \right.
\end{equation*}
%\item[\rm (3)] As $z\to \infty$ and $z\in \C^-$, we have
%\begin{multline}
%P(z)= \left[ I+ \Mathcal O \left(z^{-1}\right)\right]\diag \left(1, z^{-\frac{1}{3}},1,z^{\frac{1}{3}}\right)  \\
%\times  \diag\left(z^{\nu/2},z^{\frac{4-\nu}{6}} A_+^{-T}
%\right)\diag\left(1,C_0^{-1},e^{\nu \pi i}C_1^{-1},C_2^{-1} \right)J.
%\end{multline}
\item[\rm (3)]  If $z\to 0$ outside the lenses that end in $0$,
we have
\begin{equation}\label{eq:zero behavior of P}
\left\{
\begin{array}{ll}
P(z)\diag(|z|^{\nu/2},|z|^{-\nu/2},|z|^{\nu/2},|z|^{-\nu/2})=\mathcal O(1),& \hbox{if $\nu>0$,} \\
P(z)\diag((\log|z|)^{-1},1,(\log|z|)^{-1},1)=\mathcal O(1),& \hbox{if $\nu=0$,}\\
         P(z)=\mathcal O(|z|^{\nu/2}),\qquad P^{-1}(z)=\mathcal O(|z|^{\nu/2}),& \hbox{if $-1<\nu<0$.}
       \end{array}
     \right.
\end{equation}
\end{enumerate}
\end{rhp}

We can apply the transformation \eqref{def:P} to the global parametrix
$S^{(\infty)}$ as well. That is, we define $S^{(\infty)}\mapsto P^{(\infty)}$ by
\begin{equation}\label{Pinfty}
P^{(\infty)}(z)= S^{(\infty)}(-z)\diag\left(e^{\pm\nu \pi
i/2},e^{\mp\nu \pi i/2},e^{\pm\nu \pi i/2},e^{\mp\nu \pi
i/2}\right)J, \qquad\hbox{for $\pm\Im z>0$,}
\end{equation}
with $J$ as in \eqref{def:P}. Then the jumps for $P^{(\infty)}$
equal the jumps for $P$ above, but with all the exponentially
decaying entries of the form $e^{n(\widetilde \lam_j-\widetilde
\lam_k)}$ removed. The behavior of $P^{(\infty)}$ near the origin
follows trivially from \eqref{Sinfty:nearzero}.

%is as follows:
%\begin{equation}\label{Pinfty:nearzero}
%\begin{array}{ll}
%P^{(\infty)}(z)\diag(z^{1/4},z^{1/4},z^{1/4},z^{1/4})=O(1),&
%\textrm{Case~I,}\\
%P^{(\infty)}(z)\diag(z^{1/4},z^{1/4},z^{1/2+\nu/2},z^{1/2-\nu/2})=O(1),&
%\textrm{Case~II,}\\
%P^{(\infty)}(z)\diag(z^{1/2+\nu/2},z^{1/4},z^{1/4},z^{1/2-\nu/2})=O(1),&
%\textrm{Case~III,}\\
%P^{(\infty)}(z)\diag(z^{1/2+\nu/2},z^{-\nu/2},z^{1/4},z^{1/4})=O(1),&
%\textrm{Case~IV,}\\
%P^{(\infty)}(z)\diag(z^{1/2+\nu/2},z^{-\nu/2},z^{1/2+\nu/2},z^{1/2-\nu/2})=O(1),&
%\textrm{Case~V,}
%\end{array}
%\end{equation}
%as $z\to 0$. This is a trivial consequence of

Now we observe that the jumps in the RH problems for $P$ and $P^{(\infty)}$ are
reminiscent of those in \cite[Section 5.6]{DKRZ}, with the variable
$\alpha$ in the latter paper playing the role of our $\nu$. Our construction of the local parametrix $P^{(0)}$ will be inspired by \cite{DKRZ}.

In the construction we have to make a case distinction between
$\nu<0$ and $\nu\geq0$. To that end we define $\mathbf 1_{\nu<0}=1$
if $\nu<0$ and $0$ if $\nu \geq 0$. The local parametrix $P^{(0)}$
is defined in a fixed but sufficiently small disk $D(0,\delta)$
around the origin, with radius $\delta>0$. It satisfies the
following RH problem.

\begin{rhp}We look for a $4 \times 4$ matrix-valued function $P^{(0)}(z):D(0,\delta)\setminus\Sigma_P \to \C^{4 \times 4}$ satisfying the following conditions.
\begin{enumerate}\label{rhp:p0}
\item[\rm (1)] $P^{(0)}(z)$ is analytic for $z\in D(0,\delta)\setminus\Sigma_P$.
\item[\rm (2)] For $z\in D(0,\delta)\cap\Sigma_P$, $P^{(0)}$ has a jump
\begin{equation*}
P^{(0)}_{+}(z) = P^{(0)}_{-}(z) \begin{cases}
  \diag\left( \left(J_{P^{(0)}} \right)_1,\left( J_{P^{(0)}} \right)_3\right),
    & \textrm{\parbox[t]{0.3\textwidth}{for $z$ in $(-\delta,0)$ and the lips of $\wtil L_1$, $\wtil L_3$,}} \\
  \diag\left(1,\left( J_{P^{(0)}} \right)_2,1\right),
    & \textrm{\parbox[t]{0.3\textwidth}{for $z$ in $(0,\delta)$ and the lips of $\wtil L_2$,}}
\end{cases}
\end{equation*}
where
\begin{equation*}
(J_{P^{(0)}})_1=\left\{
\begin{array}{ll}
          \begin{pmatrix} e^{-\nu \pi i} & 0 \\
 -\mathbf 1_{\nu<0}e^{n(\wtil\lam_{1,-}-\wtil\lam_{2,+})} & e^{\nu \pi i}
\end{pmatrix}, & \hbox{on $(-\delta,0)$ in Cases~III,IV,V,} \\
            \begin{pmatrix} 0 & 1 \\
-1 & 0
\end{pmatrix}, & \hbox{on $(-\delta,0)$ in Cases~I,II,}\\
I_2, & \hbox{on $D(0,\delta)\cap$ the lips of $\wtil L_1$,}\\
\begin{pmatrix}
1 & -e^{\pm \nu\pi i}e^{n(\wtil\lam_2-\wtil\lam_1)} \\
0 & 1
\end{pmatrix}, & \hbox{on $D(0,\delta)\cap$ the upper/lower lip of $\wtil L_3$,}
          \end{array}
        \right.
\end{equation*}
\begin{equation*}
(J_{P^{(0)}})_2=\left\{
          \begin{array}{ll}
            \begin{pmatrix}
1 & \mathbf 1_{\nu<0}e^{n(\wtil\lam_{3,-}-\wtil\lam_{2,+})} \\
0 &1
\end{pmatrix}, & \hbox{on $(0,\delta)$ in Cases~I,II,IV,V,} \\
            \begin{pmatrix} 0 & 1 \\
                             -1 & 0
            \end{pmatrix}, & \hbox{on $(0,\delta)$ in Case~III,}\\
\begin{pmatrix}
1 & 0 \\
e^{n(\wtil\lam_2-\wtil\lam_3)} & 1
\end{pmatrix}, & \hbox{on $D(0,\delta)\cap$ the lips of $\wtil L_2$,}
          \end{array}
        \right.
\end{equation*}
and
\begin{equation*}
(J_{P^{(0)}})_3=\left\{
          \begin{array}{ll}
            \begin{pmatrix}
            0 & 1 \\ -1 & 0
            \end{pmatrix}, & \hbox{on $(-\delta,0)$ in Cases~I,IV,}\\
             \begin{pmatrix} e^{-\nu \pi i}& 0
            \\ -\mathbf 1_{\nu<0}e^{n(\wtil\lam_{3,-}-\wtil\lam_{4,+})} & e^{\nu\pi i}
            \end{pmatrix},& \hbox{on $(-\delta,0)$ in Cases~II,III,V,}\\
            \begin{pmatrix}
            1 & -e^{\pm\nu\pi i}e^{n(\wtil \lam_4-\wtil \lam_3)} \\
            0 & 1
            \end{pmatrix}, & \hbox{on $D(0,\delta)\cap$ the upper/lower lip of $\wtil L_1$,}\\
            I_2, & \hbox{on $D(0,\delta)\cap$ the lips of $\wtil L_3$.}
          \end{array}
        \right.
\end{equation*}
\item[\rm (3)]  If $z\to 0$ outside the lenses that end in $0$,
we have
\begin{equation}\label{eq:zero behavior of P0}
\left\{
\begin{array}{ll}
P^{(0)}(z)\diag(|z|^{\nu/2},|z|^{-\nu/2},|z|^{\nu/2},|z|^{-\nu/2})=\mathcal O(1),& \hbox{if $\nu>0$,} \\
P^{(0)}(z)\diag((\log|z|)^{-1},1,(\log|z|)^{-1},1)=\mathcal O(1),& \hbox{if $\nu=0$,} \\
         P^{(0)}(z)=\mathcal O(|z|^{\nu/2}), \qquad \left(P^{(0)}\right)^{-1}(z)=\mathcal O(|z|^{\nu/2}),& \hbox{if $-1<\nu<0$.}
       \end{array}
     \right.
\end{equation}
\item[\rm (4)] On the boundary of $D(0,\delta)$ we have the uniform estimate
\begin{equation}\label{eq:matching condition P0}
 P^{(0)}(z) = P^{(\infty)}(z)(I+\mathcal O(1/n)),\qquad n\to\infty.
\end{equation}
\end{enumerate}
\end{rhp}

Note that the exponentially small entries in the jump matrices on
$(-\delta,0)\cup(0,\delta)$ in the above RH problem are only present
if $\nu<0$; we neglect them if $\nu\geq 0$. The reason for this case
distinction between $\nu<0$ and $\nu\geq0$ is explained in
\cite{DKRZ} (with there $\alpha$ playing the role of our $\nu$); see
also the estimates in Section~\ref{subsection:finaltransfo} below.

%For the convenience of the reader, we give a brief construction of
%$P^{(0)}$ for Case I, following the line in \cite{DKRZ}. We may
%assume that, without loss of generality, the lips of $\wtil L_1$ and
%$\wtil L_3$ coincides with each other in $B_\delta$. To this end,
To construct $P^{(0)}$, we need the model RH problem for the modified Bessel function.
\begin{rhp}\label{rhp:model RHP for Bessel}
Denoting with $\gamma_j$, $j=1,2,3$ the complex rays
$\{\zeta\in\cee\mid\arg\zeta=(j+1)\pi/3\}$, we look for a $2\times 2$
matrix-valued function $\Psi^{\Bessel}$ such that
\begin{enumerate}
\item[\rm (1)] $\Psi^{\Bessel}$ is analytic in $\cee\setminus \bigcup_{j=1}^3
\gamma_j$.
\item[(2)] With the rays $\gamma_j$, $j=1,2,3$ all oriented towards the origin,
$\Psi^{\Bessel}$  has the jumps \begin{equation*} \Psi^{\Bessel}_+ =
\Psi^{\Bessel}_- \times\left\{\begin{array}{ll}\begin{pmatrix} 1 & 0 \\
e^{\nu \pi i} & 1
\end{pmatrix}, &\text{ on } \gamma_1, \\
\begin{pmatrix} 0 & 1 \\ -1 & 0
\end{pmatrix}, &\text{ on } \gamma_2, \\
\begin{pmatrix} 1 & 0 \\ e^{-\nu \pi i} & 1
\end{pmatrix}, &\text{ on } \gamma_3.
\end{array}\right.
\end{equation*}
\item[\rm (3)] Uniformly for $\zeta\to\infty$ we have
\begin{equation}\label{Psi Bessel:asy} \Psi^{\Bessel}(\zeta)
= (2\pi\zeta^{1/2})^{-\sigma_3/2}
\left(\frac{1}{\sqrt{2}}\begin{pmatrix}1 & i \\ i &
1\end{pmatrix}+\mathcal
O(\zeta^{-1/2})\right)e^{2\zeta^{1/2}\sigma_3},
\end{equation}
with $\sigma_3=\diag (1,-1)$.
\item[\rm (4)] As $\zeta\to 0$ in $|\arg\zeta|<2\pi/3$ we have
\begin{equation}\label{Hankel:zero2}
\Psi^{\Bessel}(\zeta) = \left\{\begin{array}{ll}
\mathcal O\begin{pmatrix}\zeta^{\nu/2} & \zeta^{-\nu/2}\\
\zeta^{\nu/2} & \zeta^{-\nu/2}\end{pmatrix}, &\text{if } \nu>0, \\
\mathcal O\begin{pmatrix}
1 & \log|\zeta| \\
1 & \log|\zeta|
\end{pmatrix}, &\text{if } \nu=0,\\
\mathcal O(\zeta^{\nu/2}), &\text{if } \nu<0.
\end{array}\right.
\end{equation}
\end{enumerate}
\end{rhp}
This RH problem has u unique solution which is given in terms of modified Bessel and Hankel functions, see \cite{KMVV2004}.

We are now ready to construct $P^{(0)}$ case by case, following the
lines in \cite{DKRZ}. First we give the construction for
\textbf{Case I}. We may assume without loss of generality that the
lips of $\wtil L_1$ and $\wtil L_3$ coincide within $D(0,\delta)$.
If $\nu\geq 0$, we consider the functions
\begin{equation}
\phi_1(z):=\left(\wtil\lam_1(z)-\wtil\lam_2(z)\pm \frac{2\pi i}{3}\right)^2,
\quad \phi_3(z):=\left(\wtil\lam_3(z)-\wtil\lam_4(z)\pm 2\pi i \right)^2, \quad
\pm\Im z>0.
\end{equation}
It follows from \eqref{eq: lambda tilde} and Lemma \ref{lemma: lambda near 0}
that these functions have analytic continuations to $D(0,\delta)$ that give
conformal maps from a neighborhood of the origin onto itself, such that
$\phi_i(x)$, $i=1,3$, is real and positive for $x\in(0,\delta)$. We deform the
lips of $\wtil L_1$ ($\wtil L_3$) near $0$ such that $\phi_i$ maps the upper
and lower lips of $\wtil L_1$ ($\wtil L_3$) to the rays with angles $2\pi/3$
and $-2\pi/3$, respectively.

We now define
\begin{multline}\label{Q first part Case I}
\what P^{(0)}(z) =E(z)\diag\left(
\sigma_1\Psi^{\Bessel}\left(\frac{n^2\phi_1(z)}{16}\right)\sigma_1,\sigma_1\Psi^{\Bessel}\left(\frac{n^2\phi_3(z)}{16}\right)\sigma_1\right)
\\
\times\diag(\sigma_3 e^{\frac{n}{2}(\wtil\lam_1(z)-\wtil\lam_2(z))\sigma_3},
\sigma_3 e^{\frac{n}{2}(\wtil\lam_3(z)-\wtil\lam_4(z))\sigma_3}),
\end{multline}
where $\sigma_1=\begin{pmatrix} 0 & 1 \\
1 & 0
\end{pmatrix}$ and $\sigma_3=\diag(1,-1)$, and where the prefactor $E(z)$ is analytic in $D(0,\delta)$
and is chosen to satisfy the matching condition on $\partial D(0,\delta)$,
see below. We use the hat superscript to emphasize that we are in the
situation
$\nu\geq 0$. Note that $\what P^{(0)}$ essentially decouples into two blocks
containing the model RH problem for the modified Bessel function. With this
definition, and assuming $n \equiv 0 \mod 3$, the items $(1)$, $(2)$, and $(3)$
in the RH problem for $P^{(0)}$ are satisfied, by virtue of items $(1)$, $(2)$,
and $(4)$ in the RH problem for $\Psi^{\Bessel}$.

To achieve the matching condition in item $(4)$ of the RH problem
for $P^{(0)}(z)$, we take $E(z)$ in \eqref{Q first part Case I} as
\begin{multline}\label{E case II}
E(z) =P^{(\infty)}(z)(-1)^n\diag(\sigma_3,\sigma_3)
\diag\left(\frac{1}{\sqrt{2}}\begin{pmatrix}1 & -i \\ -i & 1\end{pmatrix},
\frac{1}{\sqrt{2}}\begin{pmatrix}1 & -i \\ -i & 1\end{pmatrix}\right)  \\
\times \diag\left( \left(\tfrac{\pi n}{2}\phi_1^{1/2}(z)\right)^{-\sigma_3/2},
\left(\tfrac{\pi n}{2}\phi_3^{1/2}(z)\right)^{-\sigma_3/2}\right).
\end{multline}
Obviously $E(z)$ is analytic for $z\in D(0,\delta)\setminus\er^-$. Moreover,
one checks that $E(z)$ is also analytic across $(-\delta,0)$. Finally, since
$\phi_{1,3}(z)^{1/4}=\mathcal O(z^{1/4})$ as $z\to 0$ it then follows from
\eqref{Sinfty:nearzero} and \eqref{Pinfty} that
\begin{equation}
E(z)=\mathcal O(z^{-1/2}), \qquad z\to 0,
\end{equation}
so $E(z)$ cannot have a pole at zero. We conclude that $E(z)$ is analytic in
the disk $D(0,\delta)$. By virtue of \eqref{Psi Bessel:asy}, the matching
condition \eqref{eq:matching condition P0} in the RH problem for $P^{(0)}$ is
satisfied.

If $-1<\nu<0$, we cannot simply ignore the jumps on the real axis.
The local parametrix $P^{(0)}$ is then constructed in the following
form
\begin{multline}\label{def of P case I}
P^{(0)}(z)=\what P^{(0)}(z)
\diag(e^{-n\wtil \lambda_1(z)},e^{-n\wtil\lambda_2(z)},
      e^{-n\wtil \lambda_3(z)},e^{-n\wtil\lambda_4(z)})Q(z)\\
\times \diag(e^{n\wtil \lambda_1(z)},e^{n\wtil \lambda_2(z)},
             e^{n\wtil \lambda_3(z)},e^{n\wtil \lambda_4(z)}),
\end{multline}
where $\what P^{(0)}$ is the parametrix for the case $\nu\geq 0$ given in
\eqref{Q first part Case I}, and $Q(z)$ is a piecewise constant matrix. More
precisely, following the idea in \cite[Section 5.6.3]{DKRZ}, we have
\begin{equation}\label{eq:S1expli}
Q(z) = I-\frac{e^{-i\pi\nu}}{2i\sin(\pi\nu)}E_{2,3},
\end{equation}
for $z$ in the region bounded by $(0,\delta)$ and the upper lip of $\wtil L_1$,
\begin{equation}
Q(z) = I-\frac{e^{i\pi \nu}}{2i\sin(\pi\nu)}E_{2,3},
\end{equation}
for $z$ in the region bounded by $(0,\delta)$ and the lower lip of $\wtil L_1$,
\begin{equation}
Q(z) = \begin{pmatrix}
1&0&\frac{1}{2i\sin(\pi\nu)}&\frac{e^{\nu\pi i}}{2i\sin(\pi\nu)}\\
0&1&-\frac{e^{-\nu \pi i}}{2i\sin(\pi\nu)}&-\frac{1}{2i\sin(\pi\nu)} \\
0&0&1&0\\ 0&0&0&1 \end{pmatrix},
\end{equation}
for $z$ in the region bounded by $(-\delta,0)$ and the upper lip of
$\wtil L_1$,
\begin{equation}\label{eq:S4}
Q(z) = \begin{pmatrix}
1&0&-\frac{1}{2i\sin(\pi\nu)}&\frac{e^{-\nu\pi i}}{2i\sin(\pi\nu)}\\
0&1&-\frac{e^{\nu \pi i}}{2i\sin(\pi\nu)}&\frac{1}{2i\sin(\pi\nu)} \\
0&0&1&0\\ 0&0&0&1 \end{pmatrix},
\end{equation}
for $z$ in the region bounded by $(-\delta,0)$ and the lower lip of
$\wtil L_1$. With $Q$ given in \eqref{eq:S1expli}--\eqref{eq:S4},
one can check that $P^{(0)}$ defined in \eqref{def of P case I}
indeed satisfies the items (1)--(4) in RH problem \ref{rhp:p0} if
$-1<\nu<0$. For the jump condition (2) we use that $n \equiv 0 \mod
3$. For the matching condition (4) we also need the inequalities
$\Re\wtil \lambda_{3,4}(z)<\Re\wtil\lambda_{1,2}(z)$ for $z$ in a
neighborhood of the origin, see Lemma~\ref{lem:inequality of lamda
functions} and \eqref{eq:lam1 and lam2}--\eqref{eq:lam 3 and lam 4}.
Moreover, this construction actually works as long as
$\sin(\pi\nu)\neq 0$, i.e., $\nu \notin \mathbb{N}\cup\{0\}$.

For \textbf{Case II}, the RH problem for $P^{(0)}$ is exactly the
same as the one considered in \cite[Section 5.6]{DKRZ}. Also the
construction of $P^{(0)}$ in \textbf{Case IV} is similar to Case II.
We thus omit the details for these cases.

For \textbf{Case III}, the Bessel parametrix will appear in the middle block.
More precisely, by setting
\begin{equation}
\phi_2(z):=\left(\wtil\lam_3(z)-\wtil\lam_2(z)\pm \frac{4\pi
i}{3}\right)^2=-c_2^2 z + \mathcal O(z^2), \qquad \pm ~\Im z >0,
\end{equation}
we have
\begin{multline}\label{case III nu>0}
\what P^{(0)}(z)=E(z)\diag\left(
z^{-\nu/2},\Psi^{\Bessel}\left(\frac{n^2\phi_2(z)}{16}\right),z^{\nu/2}\right)
\\
\times \diag\left(1,\sigma_3 e^{\pm \frac{\nu \pi
i}{2}\sigma_3}e^{\frac{n}{2}(\wtil\lam_2(z)-\wtil\lam_3(z))\sigma_3},1 \right),
\qquad \pm ~\Im z>0,
\end{multline}
and the analytic prefactor $E$ is given by
\begin{multline}\label{E case III}
E(z)=P^{(\infty)}(z)
\diag \left(1,\sigma_3 e^{\mp \frac{\nu \pi i}{2}\sigma_3},1\right)\\ \times \diag \left(z^{\nu/2}, \frac{1}{\sqrt{2}}\begin{pmatrix} 1 & -i \\
-i & 1
\end{pmatrix}\left(\frac{n\pi}{2}\phi_2^{1/2}\right)^{\sigma_3/2},
z^{-\nu/2} \right),  \qquad \pm ~\Im z>0.
\end{multline}
This describes the parametrix if $\nu\geq 0$. If $-1<\nu<0$, we
define $P^{(0)}(z)$ by \eqref{def of P case I}, \eqref{case III
nu>0}, where now $Q(z)$ is a piecewise constant matrix given by
\begin{equation}\label{S region 1 Case III}
Q(z)=I+\frac{i}{2\sin(\pi \nu)}(E_{2,1}+E_{4,3}),
\end{equation}
for $z$ in the region outside the lens,
\begin{equation}
Q(z)=\begin{pmatrix} 1 & 0 & 0 & 0 \\
\frac{i}{2\sin(\pi \nu)} & 1 & 0 & 0 \\
\frac{i}{2\sin(\pi \nu)}
&0 & 1 & 0\\
0 & -\frac{i}{2\sin(\pi \nu)} & \frac{i}{2\sin(\pi \nu)} & 1
\end{pmatrix},
\end{equation}
for $z$ in the region bounded by $(0,\delta)$ and the upper lip of
$\wtil L_2$, and
\begin{equation}\label{S region 3 Case III}
Q(z)=\begin{pmatrix} 1 & 0 & 0 & 0 \\
\frac{i}{2\sin(\pi \nu)} & 1 & 0 & 0 \\
-\frac{i}{2\sin(\pi \nu)}
&0 & 1 & 0\\
0 & \frac{i}{2\sin(\pi \nu)} & \frac{i}{2\sin(\pi \nu)} & 1
\end{pmatrix},
\end{equation}
for $z$ in the region bounded by $(0,\delta)$ and the lower lip of
$\wtil L_2$. To check the matching condition (4) in RH
problem~\ref{rhp:p0}  we need the inequalities $\Re\wtil
\lambda_1(z)<\Re\wtil\lambda_{2,3}(z)<\Re\wtil\lambda_4(z)$ for $z$
in a neighborhood of the origin, see Lemma~\ref{lem:inequality of
lamda functions} and \eqref{eq:relations lam2 and lam3}.

Finally we build the local parametrix in \textbf{Case~V}. The
construction is much simpler in this case. We set
\begin{equation}\label{case V nu>0}
\what P^{(0)}(z)=P^{(\infty)}(z).\end{equation} This describes the parametrix
if $\nu\geq 0$. If $-1<\nu<0$, we define $P^{(0)}(z)$ by \eqref{def of P case
I}, \eqref{case V nu>0}, where now
\begin{equation}\label{case V nu<0 bis}
Q(z) = \begin{pmatrix} 1 & 0 &0 &0 \\
\frac{i}{2\sin(\pi\nu)} & 1 & \frac{i e^{\mp\nu\pi i}}{2\sin(\pi\nu)}&0 \\
0&0&1&0 \\
0&0&\frac{i}{2\sin(\pi\nu)} & 1\end{pmatrix},
\end{equation}
if $\pm\Im z>0$. One easily checks that conditions (1)--(4) in RH
problem~\ref{rhp:p0} are satisfied.

Finally, by tracing back the transformation \eqref{def:P}, we obtain the local
approximation $S^{(0)}$ for $S$ from $P^{(0)}$.

\subsection{Final transformation}\label{section:final}
\label{subsection:finaltransfo}

Using the local parametrices described in Sections \ref{sec: local par nonzero}
and \ref{subsection:local:Bessel}, and the global
parametrix $S^{(\infty)}$, we define the final transformation as follows
\begin{equation}
R(z)=\left\{
       \begin{array}{ll}
         S(z)(S^{\Airy}(z))^{-1}, & \text{in the disks around } \{a_j,b_j,c_1,-c_2,c_3\}\setminus\{0\}, \\
         S(z)(S^{(0)}(z))^{-1}, & \text{in the disk } D(0,\delta)
         \text{ around the origin,} \\
        S(z)(S^{(\infty)}(z))^{-1}, & \text{elsewhere,}
       \end{array}
     \right.
\end{equation}
where $S$, $S^{\Airy}$, $S^{(0)}$, and $S^{(\infty)}$ are respectively defined in \eqref{def:TtoS 1}; Section \ref{sec: local par nonzero}; \eqref{def:P}, \eqref{Q first part Case I}, \eqref{def of P case I}, \eqref{case III nu>0}, and \eqref{case V nu>0}; and \eqref{Sinfty:combined}.

From our construction of the parametrices, it follows that $R$ satisfies the
following RH problem.
\begin{enumerate}
\item[(1)] $R$ is analytic in $\mathbb{C} \setminus \Sigma_R$, where the contour $\Sigma_R$ depends on
whether $\nu<0$ or $\nu\geq 0$ and is different for all five cases.

\item[(2)] $R$ has jumps $R_+=R_-J_R$ on $\Sigma_R$ that satisfy
\begin{equation}
J_R(z)=I+\mathcal O(1/n),
\end{equation}
uniformly for $z$ on the boundaries of the disks;
\begin{equation} \label{eq: estimate R nu positive}
J_R(x)=I+\mathcal O(x^{\nu} e^{-cn}),
\end{equation}
on $(-\delta,0)$ (Case I), or on $( 0,\delta)$ (Case III), or on $(-\delta,
0)\cup(0,\delta)$ (Cases II,~IV~and~V), for some constant $c>0$, if $\nu \geq
0$; and
\begin{equation}
J_R(z)=I+\mathcal O(e^{-cn|z|}),
\end{equation}
on the other parts of $\Sigma_R$, for some constant $c>0$.
\item[(3)] $R(z)=I+\mathcal O(1/z)$ as $z\to \infty$.
\end{enumerate}

In case $\nu \geq 0$, the estimate \eqref{eq: estimate R nu positive} is not
trivial. We prove it in Case I.  In that case $R(z)$ is clearly analytic in
$D(0,\delta) \setminus (-\delta,0]$ with the following jump on $(-\delta,0)$
\begin{align*}
R_-(x)^{-1}R_+(x) &= P^{(0)}_+(-x)J_P(-x)^{-1}\Pnot_-(-x)^{-1}, \\
                 &= P^{(0)}_+(-x)\left(I-e^{n(\wtil \lambda_{3,-}-\wtil \lambda_{2,+})(-x)}E_{2,3}\right)\Pnot_-(-x)^{-1}, \\
                 &= I-e^{n(\wtil \lambda_{3,-}-\wtil
                 \lambda_{2,+})(-x)}P^{(0)}(-x)E_{2,3}\Pnot(-x)^{-1}.
\end{align*}
Now for $\nu \geq 0$ the matrix $P^{(0)}(-x)E_{2,3}\Pnot(-x)^{-1}= \mathcal O
(|x|^\nu)$ as $x \to 0$, $x<0$. Indeed (following the proof of \cite[Lemma
5.6]{DKRZ}) we observe that
\[
P^{(0)}(-x)E_{2,3}\Pnot(-x)^{-1}=P^{(0)}(-x)
\begin{pmatrix} 0&1&0&0
\end{pmatrix}^T \begin{pmatrix} 0&0&1&0 \end{pmatrix}\Pnot(-x)^{-1}.
\]
Then if $\nu >0$ we find from \eqref{eq:zero behavior of P0} and the
fact that $\det P^{(0)}(z) \equiv 1$ that both factors
\[
P^{(0)}(-x)\begin{pmatrix} 0&1&0&0 \end{pmatrix}^T \qquad \text{and} \qquad \begin{pmatrix}
0&0&1&0 \end{pmatrix}\Pnot(-x)^{-1}
\]
behave like $\mathcal O(|x|^{\nu/2})$ as $x \to 0$, $x<0$, which
proves the statement for $\nu>0$. In case $\nu=0$, this approach
does not work as it would lead to a bound $\mathcal O(\log |x|)$.
However, we do find that the first factor $\Pnot(-x)\begin{pmatrix}
0&1&0&0
\end{pmatrix}^T$ remains bounded as $x \to 0$, $x<0$. For the second factor
we observe from \eqref{Hankel:zero2} and $\det \Psi^{\mathrm{Bessel}}(\zeta)
\equiv 1$ that
\[
\left( \Psi^{\mathrm{Bessel}}\right)^{-1}(\zeta)=\begin{pmatrix} \mathcal
O(\log |\zeta|) & \mathcal O(\log |\zeta|)
 \\ \mathcal
O(1) & \mathcal O(1)\end{pmatrix}, \qquad \text{as $\zeta \to 0$}.
\]
Combined with \eqref{Q first part Case I} and the fact that $E(z)$ is bounded near the origin we obtain that $\begin{pmatrix} 0&0&1&0 \end{pmatrix}\Pnot(-x)^{-1}$ is also bounded as $x \to 0$, $x<0$, which proves the statement in case $\nu=0$. Cases II, III, IV, and V can be similarly treated.

Then from standard arguments we may conclude that
\begin{equation}\label{large n behavior of R}
R(z)=I+\mathcal O \left(\frac{1}{n(|z|+1)}\right),
\end{equation}
as $n\to\infty$, uniformly for $z$ in the complex plane outside of $\Sigma_R$.

\subsection{Proof of Theorem \ref{th: noncrit limit}}\label{sec: proof noncrit theorem}

We follow the approach in \cite[Section 9.3]{DKM} and start the proof with a lemma.

\begin{lemma}\label{lemma: limit S}
For every $x \in S(\mu_1) \setminus \bigcup_{k=1}^N (D(a_k,\epsilon)
\cup D(b_k,\epsilon) )$ we have
\[
S_+^{-1}(y)S_+(x)= \begin{pmatrix}
I_2+\mathcal O(x-y) & * \\ *&*
\end{pmatrix}, \qquad \text{as $y \to x$,}
\]
uniformly in $n$. The $*$ entries denote unimportant $2 \times 2$ blocks.
\end{lemma}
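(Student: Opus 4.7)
My approach is to exploit the factorization $S(z) = R(z)\,S^{(\infty)}(z)$, valid outside the local disks and $\Sigma_R$. The crucial observation is that on the interior of an interval $(a_j,b_j)\subset S(\mu_1)$ away from the $\epsilon$-disks around $a_j,b_j$, the jump matrices of $S$ and $S^{(\infty)}$ coincide exactly: both have the block $\bigl(\begin{smallmatrix} 0 & 1 \\ -1 & 0 \end{smallmatrix}\bigr)$ in the upper-left $2\times 2$ position and agree in the lower-right $2\times 2$ position (either $I_2$ on $(0,c_3)$ or $\bigl(\begin{smallmatrix} 0 & 1 \\ -1 & 0 \end{smallmatrix}\bigr)$ on $(c_3,\infty)$). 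Consequently $R = S\,(S^{(\infty)})^{-1}$ extends analytically across every such $x$.

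I would then write
\[
 S_+^{-1}(y)\,S_+(x) \;=\; (S^{(\infty)}_+(y))^{-1}\,\bigl[R^{-1}(y)\,R(x)\bigr]\,S^{(\infty)}_+(x),
\]
and estimate the two factors separately. Applying Cauchy's integral formula on a circle of $n$-independent radius around $x$ to the bound \eqref{large n behavior of R} gives $R'(z) = \mathcal O(1/n)$ on a neighborhood of the compact set in question, so $R^{-1}(y)R(x) = I + \mathcal O((x-y)/n)$ uniformly. Combined with the uniform boundedness in $n$ of $S^{(\infty)}_\pm$ on the same set, this yields
\[
 S_+^{-1}(y)\,S_+(x) \;=\; (S^{(\infty)}_+(y))^{-1}\,S^{(\infty)}_+(x) + \mathcal O((x-y)/n),
\]
uniformly in $n$. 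A Taylor expansion of the first term around $y=x$, using that the relevant entries of $S^{(\infty)}_+$ are $C^1$ at $x$ with $n$-uniformly bounded derivative, contributes $I_2 + \mathcal O(x-y)$ in the upper-left $2\times 2$ block, and the lemma follows.

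The main technical point is to verify that $S^{(\infty)}_+$ and its derivative are uniformly bounded in $n$ on the compact set $S(\mu_1)\setminus\bigcup_k(D(a_k,\epsilon)\cup D(b_k,\epsilon))$. Boundedness follows from the explicit construction \eqref{Sinfty:combined}: the $n$-independent prefactors $CKL$ and the diagonal $z^{\pm 1/2}$-terms are of fixed size, $M^{\DKM}(z^{1/2})$ is bounded on the region of interest by \cite[Section 8]{DKM}, and the Szeg\H o functions $f_j$ satisfy $0 < C_1 < |f_j(z)| < C_2 < \infty$ off the disks $\mathcal B$ by part~(a) of Boundary Value Problem~\ref{bvp: f}; the $n$-dependence enters only through the unimodular factors $e^{\pm 2\pi in\wtil\alpha_j}$. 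Derivative bounds then follow from Cauchy's theorem on a complex neighborhood, which is licit because $S^{(\infty)}$ is analytic off $\mathbb R$ and bounded on a thickening of the compact set. A subtlety arises if $c_3\in(a_j,b_j)$, where the lower-right block of $S^{(\infty)}_+$ has a branch-point singularity; however, this singularity is absorbed into the entries marked `$*$' in the statement, which explains why the conclusion is restricted to the upper-left $2\times 2$ block.
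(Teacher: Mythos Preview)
Your approach is the standard one the paper defers to via \cite[Lemma~9.9]{DKM}: factorize $S=R\,S^{(\infty)}$, use \eqref{large n behavior of R} to control $R^{-1}(y)R(x)$, and handle $S^{(\infty)}_+$ through its explicit construction. Two small inaccuracies are worth noting. First, on $(0,c_3)$ the block $(J_S)_3$ is not $I_2$ but carries the exponentially small entry $e^{n(\lambda_{4,+}-\lambda_{3,-})}$ (see \eqref{JU3}), so in Cases~III--V the matrix $R$ does retain a jump on $(a_j,b_j)\cap(0,c_3)$; it is, however, $I+\mathcal O(e^{-cn})$ and does not spoil the argument. Second, any circle of fixed radius around $x\in(a_j,b_j)$ necessarily meets the lips of $L_1\subset\Sigma_R$, so Cauchy's formula cannot be applied on a full circle as written; one instead either deforms the lens locally away from $x$ or invokes the Cauchy-operator representation of $R-I$ to obtain the same bound $R_+'(x)=\mathcal O(1/n)$. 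With these routine fixes your proof goes through.
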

\begin{proof}
The proof is standard, see e.g. \cite[Lemma 9.9]{DKM}.
\end{proof}

Recall the formula \eqref{eq: kernel in Y} that expresses $K_n$ in
terms of $Y$. The idea is then to write this expression in terms of
$R$ instead of $Y$ by applying all respective transformations
\[
X \mapsto U \mapsto T \mapsto S  \mapsto R,
\]
introduced in the steepest descent analysis. Meanwhile, we let $n$ tend to infinity, so that we can exploit the conclusion of the steepest descent analysis \eqref{large n behavior of R}.

Let $x,y>0$. First, unfolding the transformation $Y \mapsto X$ given in
\eqref{eq: def X} we get
\begin{multline*}
K_n(x,y)=\frac{y^{\nu/2}x^{-\nu/2}}{2\pi i (x-y)}\begin{pmatrix} 0 & w_{0,n}(y)  & w_{1,n}(y) & w_{2,n}(y) \end{pmatrix}\\
\times \diag \left(1, D(z)^{-1}W_{n,+}^{-T}(y)e^{n\Theta_+(y)} \right)
X_+^{-1}(y)X_+(x) \begin{pmatrix} 1&0&0&0 \end{pmatrix}^T.
\end{multline*}
Using \eqref{Wronskian:bis} this boils down to
\begin{equation} \label{eq: kernel in X}
K_n(x,y)=\frac{y^{\nu/2}x^{-\nu/2}e^{n(\theta_{1,+}(y)-V(y))}}{2\pi i
(x-y)}\begin{pmatrix} 0 & 1 & 0 & 0 \end{pmatrix} X_+^{-1}(y)X_+(x)
\begin{pmatrix} 1&0&0&0 \end{pmatrix}^T.
\end{equation}
Next, by the transformation $X \mapsto U$ described by \eqref{eq:def of U} we obtain
\[
e^{n(V(y)-V(x))}K_n(x,y)=y^{\nu/2}x^{-\nu/2}\frac{e^{n(\lambda_{2,+}(y)-\lambda_{1,+}(x))}}{2\pi
i (x-y)}\begin{pmatrix} 0 & 1 & 0 & 0 \end{pmatrix} U_+^{-1}(y)U_+(x)
\begin{pmatrix} 1&0&0&0 \end{pmatrix}^T.
\]

The opening of global lenses $U \mapsto T$ in \eqref{def:UtoT 1} does not
effect the above expression. The opening of the local lens $T \mapsto S$ in
\eqref{def:TtoS 1}, however, does have impact
\begin{multline*}
e^{n(V(y)-V(x))}K_n(x,y) =\frac{y^{\nu/2}x^{-\nu/2}}{2\pi i
(x-y)}\\ \times \begin{pmatrix} -\chi_{S(\mu_1)}(y)e^{n\lambda_{1,+}(y)} &
e^{n\lambda_{2,+}(y)} & 0 & 0 \end{pmatrix} S_+^{-1}(y)S_+(x)
\begin{pmatrix}
e^{-n\lambda_{1,+}(x)}& \chi_{S(\mu_1)}(x)e^{-n\lambda_{2,+}(x)}&0&0
\end{pmatrix}^T,
\end{multline*}
where $\chi_{S(\mu_1)}$ denotes the characteristic function of the
set $S(\mu_1)$.

Let $x \in S(\mu_1) \setminus \{a_k,b_k \mid k=1,\ldots,N\}$. The factors
$e^{n(V(y)-V(x))}$ and $y^{\nu/2}x^{-\nu/2}$ disappear as $y \to x$. Then Lemma
\ref{lemma: limit S} yields
\begin{align*}
K_n(x,x) &=\lim_{y \to x} \frac{e^{n(\lambda_{2,+}(y)-\lambda_{2,+}(x))}-e^{n(\lambda_{1,+}(y)-\lambda_{1,+}(x))}}{2\pi i(x-y)}+\mathcal O(1), \\
         &=\frac{n}{2 \pi i} \frac{\mathrm d}{\mathrm d x}(\lambda_{1,+}(x)-\lambda_{2,+}(x))+\mathcal O(1).
\end{align*}
Hence,
\begin{align*}
\lim_{n \to \infty} \frac{1}{n}K_n(x,x) &= \frac{1}{2 \pi i}  \frac{\mathrm d}{\mathrm d x}(\lambda_{1,+}(x)-\lambda_{2,+}(x)) \\
                                        &= \frac{1}{ \pi i}  \frac{\mathrm d}{\mathrm d x}(\lambda^\DKM_{1,+}(\sqrt x)-\lambda^\DKM_{2,+}(\sqrt x)) \\
                                        &= \frac{\rho_1^\DKM (\sqrt x)}{\sqrt x} \\
                                        &= \rho_1(x),
\end{align*}
where the second equality follows from \eqref{eq:def lambdafunction}, the third is taken from \cite[p.~116]{DKM}, and the last one is \eqref{equil:problems:relation}.

\section{Triple scaling limit in the quadratic/linear case}
\label{section:triple:scaling}

In this part we study the very concrete case of the chiral two-matrix model with potentials
\[
V(x)=x, \qquad W(y)=\frac{y^2}{2}+\alpha y.
\]
For this case we were able to construct a phase diagram in the
$(\alpha,\tau)$-plane, see Figure \ref{fig: phase diagram}. Very remarkable is
the occurrence of a multi-critical point for the parameter values
$\alpha=-1,\tau=1$. In this part we will study a triple scaling limit to this
point leading to the chiral version of the main result in \cite{DG}.
An essential point in the proof is the construction of the local parametrix
around zero. In this construction we will make use of the solution to a model
RH problem introduced in \cite{Del3}.

The proof of Theorem \ref{th: triple scaling limit} is again based
on a a steepest descent analysis. This analysis is very similar to
the one performed for the non-critical cases. In fact the first four
transformations are almost exactly equal. The only difference is
that we will need to modify the $\lam$-functions, which will be
introduced first.

\subsection{Modified $\lam$-functions}

Let us first introduce an auxiliary parameter $\gamma$ that is completely
determined by $\alpha$ and $\tau$ but will prove to be convenient for notation.
We define $\gamma=\gamma(\alpha,\tau)$ as the solution of
\begin{equation} \label{eq: gamma}
\alpha \tau^{2/3}=\frac{3}{\gamma}-9\gamma^2+5\tau^{4/3}\gamma,
\end{equation}
that tends to 1 as $\tau \to 1$ and $\alpha \to -1$. In the triple scaling
limit, i.e. we let $\alpha$ and $\tau$ depend on $n$ as in \eqref{eq: scaling
alpha tau} while $n \to \infty$, we have
\begin{equation} \label{eq: scaling gamma}
\gamma=1+\tfrac13 a n^{-1/3} + \left(  \tfrac{11}{144}a^2 + \tfrac{47}{48}b \right) n^{-2/3} + \mathcal O (n^{-1}).
\end{equation}

\begin{lemma}\label{lemma: modified lambda functions}
There exist functions $\lam_j$, $j=1,2,3,4$, analytic on $\C \setminus \R$ that
satisfy the following conditions.
\begin{itemize}
\item[\rm (a)] As $z \to \infty$ we have
\begin{align}
\lam_1(z) &= z-\log(z)+\ell_1+\mathcal O \left( z^{-1}\right), \\
\lam_2(z) &= \theta_1(z)+\frac13 \log z + \ell_2 + C z^{-1/3}+ D z^{-2/3}+\mathcal O \left(z^{-1} \right), \\
\lam_3(z) &= \theta_2(z)+\begin{cases}
\frac13 \log z + \ell_3 + C \omega z^{-1/3}+ D \omega^2 z^{-2/3}+\mathcal O \left(z^{-1} \right) & \text{ in } \C_+,\\
\frac13 \log z + \ell_4 + C \omega^2 z^{-1/3}+ D \omega z^{-2/3}+\mathcal O \left(z^{-1} \right) & \text{ in } \C_-,
\end{cases}\\
\lam_4(z) &= \theta_3(z)+\begin{cases}
\frac13 \log z + \ell_4 + C \omega^2 z^{-1/3}+ D \omega z^{-2/3}+\mathcal O \left(z^{-1} \right) & \text{ in } \C_+,\\
\frac13 \log z + \ell_3 + C \omega z^{-1/3}+ D \omega^2 z^{-2/3}+\mathcal O \left(z^{-1} \right) & \text{ in } \C_-,
\end{cases}\end{align}
where $C,D \in \R$ and $\ell_j,$ $j=1,2,3,4$, are constants satisfying
\begin{equation} \label{eq: ell crit}
\ell_3-\ell_2=\ell_2-\ell_4=\frac43 \pi i.
\end{equation}
\item[\rm (b)] There exists a constant $c>0$ such that the modified $\lam$-functions satisfy the following jump conditions.
\begin{itemize}
\item[\rm (i)] On $\R^+$ we have
\begin{align}
\lam_{1,\pm}&=\lam_{2,\mp}  && \text{on $(0,c)$,} \\
\lam_{j,+} &=\lam_{j,-}  && \text{on $(c,\infty)$}, \quad j=1,2, \\
\lam_{3,\pm} &=\lam_{4,\mp} && \text{on $(0,\infty)$.}
\end{align}
\item[\rm (ii)] On $\R^-$ we have
\begin{align}
\lam_{j,+}&=\lam_{j,-} -2\pi i, && j=1,4, \\
\lam_{2,\pm}&=\lam_{3,\mp} \pm 2\pi i.
\end{align}
\end{itemize}
\item[\rm (c)] In a neighborhood of the origin we have
\begin{align*}
\lam_1(z) &= zK(z)+\begin{cases}
F(z)z^{1/4}+G(z)z^{1/2}+H(z)z^{3/4} & \text{ in } \C_+,\\
iF(z)z^{1/4}-G(z)z^{1/2}-iH(z)z^{3/4}+2\pi i & \text{ in } \C_-,
\end{cases} \\
\lam_2(z) &= zK(z)+\begin{cases}
iF(z)z^{1/4}-G(z)z^{1/2}-iH(z)z^{3/4}+2\pi i & \text{ in } \C_+, \\
F(z)z^{1/4}+G(z)z^{1/2}+H(z)z^{3/4} & \text{ in } \C_-,
\end{cases}  \\
\lam_3(z) &= zK(z)+\begin{cases}
-iF(z)z^{1/4}-G(z)z^{1/2}+iH(z)z^{3/4}+2\pi i & \text{ in } \C_+, \\
-F(z)z^{1/4}+G(z)z^{1/2}-H(z)z^{3/4} & \text{ in } \C_-,
\end{cases} \\
\lam_4(z) &= zK(z)+\begin{cases}
-F(z)z^{1/4}+G(z)z^{1/2}-H(z)z^{3/4} & \text{ in } \C_+,\\
-iF(z)z^{1/4}-G(z)z^{1/2}+iH(z)z^{3/4}+2\pi i & \text{ in } \C_-.
\end{cases}
\end{align*}
where $F,G,H,K$ are {analytic functions} satisfying
\begin{align}
F(0) &= 4 e^{3\pi i/4} \gamma^{1/4} \left( -2 \gamma^2+\tfrac{1}{\gamma}+\tau^{4/3}\gamma \right), \label{eq: F} \\
G(0) &= 2i \gamma^{-1/2}\left( \tfrac32 \gamma^2 - \tfrac{1}{4\gamma}-\tfrac54 \tau^{4/3} \gamma \right), \label{eq: G} \\
H(0) &= 2e^{\pi i/4} \gamma^{-5/4} \left( \tfrac12 \gamma^2- \tfrac{1}{12\gamma}+\tfrac14 \tau^{4/3} \gamma \right) \label{eq: H}.
\end{align}
\end{itemize}
\end{lemma}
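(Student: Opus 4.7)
The plan is to construct these modified $\lambda$-functions by pulling back the corresponding objects from the non-chiral two-matrix model at its multi-critical point, in the same spirit as \eqref{eq:def lambdafunction}. More precisely, I would set
\begin{equation*}
\lambda_j(z) := 2\lambda_j^{\mathrm{DG}}(\sqrt{z}), \qquad j=1,2,3,4,
\end{equation*}
where $\lambda_j^{\mathrm{DG}}$ denote the modified $\lambda$-functions constructed in \cite{DG} for the quartic potential $W^{\DKM}(y)=y^4/4+\alpha y^2/2$ near its multi-critical point, and $\sqrt{z}$ is the principal branch with cut on $\R^-$. The even symmetry $\lambda_j^{\mathrm{DG}}(-w)=\lambda_j^{\mathrm{DG}}(w)$ at the critical point (inherited from \eqref{eq:even of theta DKM} and the analogue of \cite[Lemma~4.12]{DKM} in \cite{DG}) ensures that the square-root pullback produces well-defined functions analytic on $\C\setminus\R$ with the correct jump structure.

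For part (a), I would substitute the large-$w$ expansion of $\lambda_j^{\mathrm{DG}}(w)$ into $\lambda_j(z)=2\lambda_j^{\mathrm{DG}}(\sqrt{z})$. The factor $2$ in front and the change of variable $w=\sqrt z$ convert a $\log w$ into $\tfrac{1}{2}\log z$ multiplied by $2$, and the powers $w^{-k/3}$ become $z^{-k/6}$; combined with the relation \eqref{eq:theta function} between the $\theta_j$-functions, this produces the claimed expansion with a constant $C$, a constant $D$, and phases $\omega,\omega^2$ that arise from the behavior of $\sqrt{z}$ in the upper versus lower half plane. The identities $\ell_3-\ell_2=\ell_2-\ell_4=\tfrac{4\pi i}{3}$ are a direct consequence of the period relations of $\lambda_2^{\mathrm{DG}}, \lambda_3^{\mathrm{DG}}$ across the branch cut $i\R$ at the multi-critical point, which collapse (because the gap closes) to give purely imaginary constants $\tfrac{\pm 2\pi i}{3}$ that double under the factor $2$. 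Part (b) follows by translating the jump relations of the $\lambda_j^{\mathrm{DG}}$ on $i\R$ and on $\R$ across the map $w\mapsto w^2$: jumps on $\R$ at the DG side become jumps on $\R^+$ for the $\lambda_j$, while the even-symmetry/connection formulas at the imaginary axis together with the branch cut of $\sqrt z$ produce the stated jumps on $\R^-$, exactly as in the proof of Lemma~\ref{lemma: lambda functions}(b).

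The main obstacle is part (c), the explicit local expansion near $z=0$ and the identification of the coefficients \eqref{eq: F}--\eqref{eq: H}. The strategy is first to note that the multi-critical point of the quartic non-chiral model is characterized by an algebraic equation for the derivative $(\lambda_j^{\mathrm{DG}})'(w)$, which at criticality has a singular point of Puiseux type $w^{-1/2}$ at the origin (matching the $x^{-1/4}$ density behavior announced in Figure \ref{fig: phase diagram cm revisited}). Integrating this Puiseux expansion gives $\lambda_j^{\mathrm{DG}}(w)$ as an analytic function of $w$ plus half-integer power corrections $w^{1/2}, w, w^{3/2}$ with coefficients depending on $\alpha,\tau,\gamma$, where $\gamma$ is the solution of the critical algebraic equation \eqref{eq: gamma}. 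Setting $w=\sqrt z$ and multiplying by $2$ converts these into the analytic part $zK(z)$ plus the $z^{1/4}, z^{1/2}, z^{3/4}$ contributions with the alternating sign/phase pattern dictated by the four sheets of the square-root map. The four different phase prefactors ($1,i,-i,-1$ times the real branch $F,G,H$) simply track the sheet of $\sqrt z$ relevant to each $\lambda_j$ in each half plane.

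To pin down $F(0),G(0),H(0)$ explicitly, I would use the scaling \eqref{eq: gamma} that parametrizes the algebraic equation near criticality: starting from the spectral curve of the non-chiral two-matrix model at the multi-critical point, one obtains after the substitution $w=\sqrt z$ a Puiseux series in $z^{1/4}$ whose first three coefficients are polynomial expressions in $\gamma,\tau$. A careful bookkeeping of normalization factors and the $\ell_j$-constants yields precisely the values \eqref{eq: F}--\eqref{eq: H}. Modulo this algebraic computation — which is the analogue of the derivation leading to \cite[eqn.~(5.5)]{DG} and is the technical heart of the lemma — the remaining steps are either direct consequences of the non-chiral construction or elementary manipulations with $\sqrt{z}$.
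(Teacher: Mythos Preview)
Your approach is essentially the same as the paper's: define $\lambda_j(z)=2\lambda_j^{\mathrm{DG}}(\sqrt z)$ and read off (a), (b), (c) from the corresponding results in \cite{DG}. Two small points of precision: the $\lambda_j^{\mathrm{DG}}$ are \emph{not} literally even---the relevant relations on the imaginary axis carry additive $\pm\pi i$ shifts (e.g.\ $\lambda_j^{\mathrm{DG}}(ix)=\lambda_j^{\mathrm{DG}}(-ix)-\pi i$ for $j=1,4$), and it is precisely these shifts that produce the $-2\pi i$ and $\pm 2\pi i$ in (b)(ii); and part~(c) is not the obstacle you make it---the local expansion and the values \eqref{eq: F}--\eqref{eq: H} are already established in \cite[Lemma~3.12]{DG}, so one simply sets $F(z)=2F^{\mathrm{DG}}(\sqrt z)$, $G(z)=2G^{\mathrm{DG}}(\sqrt z)$, $H(z)=2H^{\mathrm{DG}}(\sqrt z)$, $K(z)=2K^{\mathrm{DG}}(\sqrt z)$ and no spectral-curve computation needs to be redone here.
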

\begin{proof}
Define
\begin{equation}\label{eq:def modified lambdafunction}
\lam_j(z)=2\lam_j^{\DG}(\sqrt z), \qquad j=1,2,3,4, \qquad z \in \C
\setminus \R,
\end{equation}
where $\lam_j^{\DG}$ denote the modified $\lam$-functions introduced
in \cite[Section 3.4]{DG}. Then (a) follows from \cite[Lemma
3.11]{DG}. (b)(i) is direct from \cite[Lemma 3.10]{DG}. To prove
(ii) we need some symmetry conditions on $\lam_j^{\DG}$. By
\cite[(3.15),(3.23)--(3.24),(3.29)--(3.30)]{DG}, it follows that for
$x > 0$
\begin{align*}
\lam_j^{\DG}(ix) &= \lam_j^{\DG}(-ix) -\pi i, & j=1,4, \\
\lam_{j,\mp}^{\DG}(ix)&=\lam_{j,\pm}^{\DG}(-ix)+ \pi i, & j=2,3.
\end{align*}
This, together with \eqref{eq:def modified lambdafunction}, implies (b)(ii), where we take $c=(c^{\DG})^2$. Finally (c) follows from \cite[Lemma 3.12]{DG} where we put
\[
\begin{cases}
F(z)=2F^\DG(\sqrt z), &G(z)=2G^\DG(\sqrt z), \\
H(z)=2H^\DG(\sqrt z), &K(z)=2K^\DG(\sqrt z),
\end{cases} \qquad z \in \C \setminus \R.
\]
\end{proof}

\begin{remark}
The constants $\gamma$, $c$, etc. and the functions $F,G,H,K$,
$\lambda_j$, etc. all implicitly depend on $\alpha$ and $\tau$ (and
thus also on $n$ via the triple scaling limit). When dealing with
these functions associated with the critical values of the
parameters $\alpha=-1$, $\tau=1$, we add a star to the notation.
Thus, we write $\gamma^*$, $c^*$, $F^*,G^*,H^*,K^*$,
$\lambda_j^*\ldots$
\end{remark}

\subsection{The transformations $Y \mapsto X \mapsto U \mapsto T \mapsto S$}

The first transformations $Y \mapsto X \mapsto U \mapsto T \mapsto S $ of the steepest descent analysis in the critical case are almost the same as
before. The main difference is that we use the modified lambda functions rather than the
ordinary ones. Apart from that also some details have to be changed. We list them here.

\subsubsection*{Transformation $Y \mapsto X$.}
This transformation is exactly as in Section \ref{subsection:firsttransfo}.

\subsubsection*{Transformation $X \mapsto U$.}

Definition \ref{def: U} of $U(z)$ has to be slightly changed.
Besides the fact that we replace the $\lambda$-functions by the
modified $\lambda$-functions, we also have to change the definition
of the diagonal matrix $L$ by
\begin{equation} \label{eq: L crit}
L(z)=\begin{cases}
-\diag (\ell_1,\ell_2,\ell_3,\ell_4) & \text{for }\Im z >0, \\
-\diag (\ell_1,\ell_2,\ell_4,\ell_3) & \text{for }\Im z <0.
\end{cases}
\end{equation}
Then, under the assumption that $n \equiv 0 \mod 3$, $U$ solves RH problem \ref{rhp:U} if we put $N:=1$, $a_1:=0$, $b_1:=c$, and
$c_2=c_3:=0$.

\subsubsection*{Transformations $U \mapsto T \mapsto S$: opening of lenses.}

Here, we open the local lens $L_1$ around $[0,c]$ and unbounded lenses $L_2$ around $\R^-$ and $L_3$ around $\R^+$. We want to do this such that the off-diagonal entries of the jump matrices on the lips of these lenses tend exponentially fast to zero as $n \to \infty$. This is only possible outside a shrinking disk around the origin. It will be sufficient for our purposes to let this disk shrink with speed $n^{-1/3}$
\[
\D=\{z \in \C \mid |z|< \rho n^{-1/3}\},
\]
where $\rho>0$ is a constant that will be chosen sufficiently small later on.

\begin{lemma} \label{lemma: estimate lenses}
The lenses $L_j$, $j=1,2,3$, can be opened (independently of $n$) such that
\begin{align}
\Re (\lambda_1(z)-\lambda_2(z)) & \leq -d n^{-1/2},  && \text{\parbox[t]{.3 \textwidth}{for $z$ on the lips of $L_1$ but outside $\D$,}} \label{eq: estimate 1}\\
\Re(\lambda_3(z)-\lambda_2(z)) & \leq -d n^{-1/2} \max( 1,|z|^{2/3}), && \text{\parbox[t]{.3 \textwidth}{for $z$ on the lips of $L_2$ but outside $\D$,}} \label{eq: estimate 2}\\
\Re(\lambda_3(z)-\lambda_4(z)) & \leq -d n^{-1/2} \max(1,|z|^{2/3}), &&
\text{\parbox[t]{.3 \textwidth}{for $z$ on the lips of $L_3$ but outside
$\D$,}} \label{eq: estimate 3}
\end{align}
for sufficiently large $n$ and a fixed constant $d>0$. Moreover, there exists a constant $d'$ such that
\begin{equation} \label{eq: estimate R}
\Re( \lambda_2(x)-\lambda_1(x) ) \leq -d' x, \qquad x \in
(c^*+\epsilon,\infty),
\end{equation}
where $\epsilon>0$ is a small number.
\end{lemma}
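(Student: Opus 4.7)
The plan is to split each lens lip into a bulk region $\{|z|\ge\delta\}$ (for some small fixed $\delta>0$) and a near-origin region $\{\rho n^{-1/3}\le|z|\le\delta\}$, and to handle the two regions by different techniques. In the bulk, the modified $\lambda$-functions are uniformly close (as $n\to\infty$) to their critical-point limits $\lambda_j^*$, for which via \eqref{eq:def modified lambdafunction} one has the strict $n$-independent inequalities $\Re(\lambda_1^*-\lambda_2^*)<0$ on the lip of $L_1$, $\Re(\lambda_3^*-\lambda_2^*)\le-c\max(1,|z|^{2/3})$ on the lip of $L_2$, and $\Re(\lambda_3^*-\lambda_4^*)\le-c\max(1,|z|^{2/3})$ on the lip of $L_3$, coming from the corresponding inequalities for the non-chiral DG-functions in \cite{DG} (the $|z|^{2/3}$-growth being read off the $\theta$-asymptotics of Lemma \ref{lem:theta function}(b)). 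By continuity in $(\alpha,\tau)$ these bounds persist uniformly for $n$ large, which is stronger than the $-dn^{-1/2}\max(1,|z|^{2/3})$ bound required.

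The hard part will be the near-origin region. There the quantitative input is the behaviour of the coefficients $F(0)$, $G(0)$, $H(0)$ from Lemma \ref{lemma: modified lambda functions}(c) under the triple scaling. Substituting \eqref{eq: scaling gamma} and the expansion $\tau^{4/3}=1+\tfrac{4a}{3}n^{-1/3}+(\tfrac{8b}{3}+\tfrac{2a^2}{9})n^{-2/3}+O(n^{-1})$ into \eqref{eq: F}--\eqref{eq: H}, and using the cancellations forced by $F^*(0)=G^*(0)=0$, I would obtain
\begin{equation*}
F(0)=e^{3\pi i/4}(a^2-5b)\,n^{-2/3}+O(n^{-1}),\quad G(0)=-2ia\,n^{-1/3}+O(n^{-2/3}),\quad H(0)=\tfrac{4}{3}e^{i\pi/4}+O(n^{-1/3}).
\end{equation*}
A direct calculation from Lemma \ref{lemma: modified lambda functions}(c) then gives the local expansions $\lambda_1-\lambda_2=(1-i)Fz^{1/4}+2Gz^{1/2}+(1+i)Hz^{3/4}-2\pi i$, $\lambda_3-\lambda_2=-2iFz^{1/4}+2iHz^{3/4}$, and $\lambda_3-\lambda_4=(1-i)Fz^{1/4}-2Gz^{1/2}+(1+i)Hz^{3/4}+2\pi i$ on the upper lips. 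In each case the leading coefficient of $z^{3/4}$ is a nonzero multiple of $H(0)$ with argument $\pm\pi/2$, so its real part equals (up to an overall sign) $-\tfrac{4\sqrt{2}}{3}|z|^{3/4}\sin(3\arg z/4)$, strictly negative for $\arg z\in(0,\pi)$. I would open $L_1, L_2, L_3$ along directions strictly inside this cone (and symmetrically below). For $z$ on the lip with $\rho n^{-1/3}\le|z|\le\delta$, the $H(0)z^{3/4}$ term is then bounded above by $-c_0\rho^{3/4}n^{-1/4}$, while $|F(0)z^{1/4}|=O(n^{-3/4})$ and $|G(0)z^{1/2}|=O(n^{-1/2})$ are both dominated, yielding a real part $\le-dn^{-1/2}$ for $n$ large and $\rho$ fixed but sufficiently small. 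Patching with the bulk estimates at $|z|=\delta$ completes the proofs of \eqref{eq: estimate 1}--\eqref{eq: estimate 3}.

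For \eqref{eq: estimate R}, I would use the variational inequality at the critical parameters, $\tfrac{d}{dx}(\lambda_2^*-\lambda_1^*)(x)>0$ for $x>c^*$ (the density of $\mu_1^*$ vanishes like a square root at the right endpoint $c^*$), and propagate it by continuity: for $n$ large, $\tfrac{d}{dx}(\lambda_2-\lambda_1)(x)\ge d'>0$ uniformly on $(c^*+\epsilon,\infty)$, giving the linear lower bound after integration. The main obstacle is the simultaneous choice of opening angles for the three lenses near the origin: each local expansion imposes a sign constraint on the corresponding cubic-root term, and one has to verify that these constraints admit a common solution compatible with the geometric requirement that $L_1,L_2,L_3$ enclose $(0,c)$, $\R^-$ and $(c,\infty)$ respectively. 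Moreover the shrinking rate $\rho n^{-1/3}$ for $\D$ is sharp, since the $G(0)z^{1/2}$-contribution is of \emph{exactly} the order $n^{-1/2}$ of the claimed bound and would compete with the dominant $H(0)z^{3/4}$-term below this scale.
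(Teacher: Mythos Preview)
Your approach is correct in spirit and yields the estimates, but it is a substantially different route from the paper's. The paper's proof is essentially a two-line transfer from \cite{DG}: since $\lambda_j(z)=2\lambda_j^{\DG}(\sqrt z)$ by \eqref{eq:def modified lambdafunction}, one simply \emph{defines} the chiral lenses as $L_j=(L_j^{\DG})^2$ and then \eqref{eq: estimate 1}--\eqref{eq: estimate 3} are immediate consequences of \cite[Lemmas~4.6 and~4.9]{DG}, while \eqref{eq: estimate R} comes from the (unstated but elementary) bound $\Re(\lambda_2^{\DG}-\lambda_1^{\DG})(x)\le -dx^2$ for $|x|>(c^*)^{\DG}+\epsilon$ in the \cite{DG} setting. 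No bulk/near-origin decomposition, no local expansion, no scaling of $F(0),G(0),H(0)$ is needed here --- all of that work is already encapsulated in the cited DG lemmas.

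What you are doing is effectively re-proving those DG lemmas from scratch in the squared variable. Your local computations are right (the scaling orders of $F(0),G(0),H(0)$ are correct, and the $H(0)z^{3/4}$ term does dominate on $\rho n^{-1/3}\le|z|\le\delta$ with the ratio $|G(0)z^{1/2}|/|H(0)z^{3/4}|=O(n^{-1/4})$ uniformly there). This buys self-containment at the cost of length. Two small points to fix: in your argument for \eqref{eq: estimate R} the sign is inverted --- you need $\tfrac{d}{dx}\Re(\lambda_2-\lambda_1)\le -d'<0$ (indeed $\lambda_2-\lambda_1\sim -x$ at infinity), not $\ge d'$; and your closing remark about the $n^{-1/3}$ radius being ``sharp'' because of the $G$-term is off --- the $G$-term only competes with the $H$-term at the much smaller scale $|z|\sim n^{-4/3}$, which is the natural scale of the local parametrix.
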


\begin{proof}
Estimates \eqref{eq: estimate 1}--\eqref{eq: estimate 3} are
immediate from \eqref{eq:def modified lambdafunction} and
\cite[Lemmas 4.6 and 4.9]{DG} where we define $L_j=\left( L_j^\DG
\right)^2$. Estimate \eqref{eq: estimate R} follows in the same way
from the following result that holds in the context of \cite{DG} but
was not explicitly mentioned there: there exists $d>0$ such that for
sufficiently large $n$
\[
\Re(\lambda_2^\DG(x)-\lambda_1^\DG(x)) \leq -dx^2, \qquad x \in \left(-\infty,-(c^*)^\DG-\epsilon\right) \cup \left((c^*)^\DG+\epsilon,\infty\right).
\]
\end{proof}

In this way we arrive at the following RH problem for $S$.
\begin{rhp}\label{RHP:S crit}
The matrix-valued function $S$ is the unique solution of the
following RH problem
\begin{enumerate}
\item[\rm (1)] $S(z)$ is analytic for $z\in\cee\setminus\Sigma_S$,
where $\Sigma_S$ is the contour consisting of the real axis and the lips of the
lenses $L_i$, $i=1,2,3$. These lenses are chosen such that the estimates in
Lemma \ref{lemma: estimate lenses} hold.

\item[\rm (2)] For $z\in \Sigma_S$, $S$ has a jump
\begin{equation*}
S_{+}(z) = S_{-}(z) \left\{
      \begin{array}{ll}
        \diag\left(\left( J_{S}\right)_1(z),\left( J_{S}\right)_3(z)\right), &
\textrm{\parbox[t]{0.4\textwidth}{for $z$ in $\R^+$ and the lips of $L_1$, $L_3$,}}\\
        \diag\left(e^{\nu\pi i},\left( J_{S} \right)_2(z),e^{-\nu\pi i}\right), &
        \textrm{\parbox[t]{0.4\textwidth}{for $z$ in $\R^-$ and the lips of $L_2$,}}
      \end{array}
    \right.
\end{equation*}
where
\[
(J_S)_1= \begin{cases}
   \begin{pmatrix} 0 & 1 \\ -1 & 0 \end{pmatrix}, & \hbox{on $(0,c)$,}\\
   \begin{pmatrix} 1&e^{n(\lam_{2,+}-\lam_{1,-})} \\ 0&1 \end{pmatrix},& \hbox{on $(c,\infty)$,}\\
   \begin{pmatrix} 1 & 0 \\ e^{n(\lam_1-\lam_2)} & 1 \end{pmatrix}, & \hbox{on the lips of $L_1$,}\\
   I_2, & \hbox{on the lips of $L_3$,} \end{cases}
\]
\[
(J_S)_2=\begin{cases}
   \begin{pmatrix} 0 & 1 \\ -1 & 0 \end{pmatrix}, & \hbox{on $(-\infty,0)$,}\\
   \begin{pmatrix} 1 & -e^{\pm \nu\pi i}e^{n(\lam_3-\lam_2)} \\ 0 & 1 \end{pmatrix}, & \hbox{on the upper/lower lip of $L_2$,}
\end{cases}
\]
and
\[
(J_S)_3=\begin{cases}
    \begin{pmatrix} 0 & 1 \\ -1 & 0 \end{pmatrix}, & \hbox{on $(0, +\infty)$,}\\
    I_2, & \hbox{on the lips of $L_1$,}\\
    \begin{pmatrix} 1 & 0 \\ e^{n(\lam_3-\lam_4)} & 1 \end{pmatrix}, & \hbox{on the lips of $L_3$.}
\end{cases}
\]

\item[\rm (3)] As $z\to \infty$ with $\pm\Im z >0$, we have
\begin{multline*}
S(z)= \left[ I+ \mathcal O \left(z^{-1}\right)\right]\diag \left(1,
z^{1/3},z^{-1/3},1\right)\diag\left(z^{\nu/2},z^{-\nu/6}
A_{\pm}^{-T}\right)\diag \left( 1, 1,\sigma^{\pm},\sigma^{\mp 1} \right).
\end{multline*}

\item[\rm (4)]  $S(z)$ has the same behavior near the origin as $X(z)$, see
\eqref{eq:zero behavior of X}, provided that $z\to 0$ outside the
lenses that end in $0$.
\end{enumerate}
\end{rhp}

We will construct local and global parametrices for $P(z)$ as described in
Section~\ref{subsection:local:Bessel}, i.e. $P(z)$ is established from the
solution $S(z)$ to RH problem \ref{RHP:S crit} by the transformation
\eqref{def:P}. More precisely, we have

\begin{rhp}\label{RHP:critical case}
The matrix-valued function $P$ satisfies the following RH problem.
\begin{enumerate}
\item[\rm (1)] $P(z)$ is analytic for $z\in\cee\setminus\Sigma_P$,
where $\Sigma_P$ is the contour consisting of the real axis and the
lips of the lenses $\wtil L_i=-L_{i}$, $i=1,2,3$.
\item[\rm (2)]
For $z\in \Sigma_P$, $P$ has a jump
\begin{equation*}%\label{defjumpmatrixP}
P_{+}(z) = P_{-}(z) \left\{
      \begin{array}{ll}
       \diag\left(1,\left( J_P(z) \right)_2,1\right),
       & \textrm{\parbox[t]{0.4\textwidth}{for $z$ in $\R^+$ and the lips of $\wtil L_2$,}} \\
       \diag\left( \left( J_P(z) \right)_1,\left( J_P(z) \right)_3\right),
       & \textrm{\parbox[t]{0.4\textwidth}{for $z$ in $\R^-$ and the lips of $\wtil L_1$, $\wtil L_3$,}}
      \end{array}
    \right.
\end{equation*}
where
\begin{equation*}
(J_P)_1=\left\{
\begin{array}{ll}
\begin{pmatrix} 0 & 1 \\
-1 & 0
\end{pmatrix}, & \hbox{on $(-\infty,0)$,}\\
I_2, & \hbox{on the lips of $\wtil L_1$,}\\
\begin{pmatrix}
1 & -e^{\pm \nu\pi i}e^{n(\wtil\lam_2-\wtil\lam_1)} \\
0 & 1
\end{pmatrix}, & \hbox{on the upper/lower lip of $\wtil L_3$,}
          \end{array}
        \right.
\end{equation*}
\begin{equation*}
(J_P)_2=\begin{cases}
           \begin{pmatrix} 0 & 1 \\
                             -1 & 0
            \end{pmatrix}, & \hbox{on $(0,\infty)$,}\\
\begin{pmatrix}
1 & 0 \\
e^{n(\wtil\lam_2-\wtil\lam_3)} & 1
\end{pmatrix}, & \hbox{on the lips of $\wtil L_2$,}
              \end{cases}
\end{equation*}
and
\begin{equation*}
(J_P)_3=\begin{cases}
            \begin{pmatrix}
            0 & 1 \\ -1 & 0
            \end{pmatrix}, & \hbox{on $(-c,0)$,}\\
             \begin{pmatrix} e^{-\nu \pi i}& 0
            \\ -e^{n(\wtil\lam_{3,-}-\wtil\lam_{4,+})} & e^{\nu\pi i}
            \end{pmatrix},& \hbox{on $(-\infty,-c)$,}\\
            \begin{pmatrix}
            1 & -e^{\pm\nu\pi i+n(\wtil \lam_4-\wtil \lam_3)} \\
            0 & 1
            \end{pmatrix}, & \hbox{on the upper/lower lip of $\wtil L_1$,}\\
            I_2, & \hbox{on the lips of $\wtil L_3$,}
          \end{cases}
\end{equation*}
where
\begin{equation} \label{eq: lambda tilde crit}
\wtil \lam_i(z)=\lam_{5-i}(-z), \qquad i=1,2,3,4.
\end{equation}
\item[\rm (3)]  If $z\to 0$ outside the lenses that end in $0$,
we have
\begin{equation}\label{eq:zero behavior of Pbis}
\left\{
\begin{array}{ll}
P(z)\diag(|z|^{\nu/2},|z|^{-\nu/2},|z|^{\nu/2},|z|^{-\nu/2})=\mathcal O(1),& \hbox{if $\nu>0$,} \\
P(z)\diag((\log|z|)^{-1},1,(\log|z|)^{-1},1)=\mathcal O(1),& \hbox{if $\nu=0$,}\\
         P(z)=\mathcal O(|z|^{\nu/2}),\qquad P^{-1}(z)=\mathcal O(|z|^{\nu/2}), & \hbox{if $-1<\nu<0$.}
       \end{array}
     \right.
\end{equation}
\end{enumerate}
\end{rhp}

\subsection{Global parametrix}
By ignoring all the exponentially small terms in the RH problem for
$P$, we obtain a RH problem for $P^{(\infty)}$. This global
parametrix $P^{(\infty)}$ can be constructed by first constructing
$S^{(\infty)}$ along the lines of Section \ref{section:global} and
then again applying the transformation $S^{(\infty)} \mapsto
P^{(\infty)}$ as given by \eqref{Pinfty}. Its behavior around the
origin is given by
\begin{equation} \label{eq: asym P0 at zero crit}
\begin{cases}
P^{(\infty)}(z)=\what P_\pm z^{-3/8}+\mathcal O(z^{-1/8}), \\
\left(P^{(\infty)}\right)^{-1}(z)=\what Q_{\pm}z^{-3/8}+\mathcal
O(z^{-1/8}),\end{cases} \qquad \text{as }z\to 0, \quad \pm ~\Im z
>0,
\end{equation}
for constant matrices $\what P_\pm,\what Q_\pm$. Clearly \begin{equation}\label{PQ:orthogonal}\what
P_\pm \what Q_\pm=\what Q_\pm \what P_\pm=0.\end{equation} Furthermore, in view
of the jump conditions for $P^{(\infty)}$, it is readily seen that
\begin{align}
\what P_+&=\what P_- \diag \left( 1, \begin{pmatrix}
            0 & 1 \\ -1 & 0
            \end{pmatrix}, 1\right) \label{eq:positive x}
\\
\what P_+&=\what P_- e^{3 \pi i/ 4}\diag \left( \begin{pmatrix}
            0 & 1 \\ -1 & 0
            \end{pmatrix},
            \begin{pmatrix}
            0 & 1 \\ -1 & 0
            \end{pmatrix}\right). \label{eq:negative x}
\end{align}
Eliminating $\what P_+$ from these two formulas we get
\begin{align*}
\what P_-&=\what P_- e^{3 \pi i/ 4}
\begin{pmatrix}
            0 & 0 & -1 & 0 \\
            -1 & 0 & 0 & 0 \\
            0 & 0 & 0 & 1 \\
            0 & -1 & 0 & 0
            \end{pmatrix}.
\end{align*}
Iterating this relation we see
\begin{align}\label{eq:relation P}
\what P_-&=\what P_-
\begin{pmatrix}
            0 & 0 & 0 & i \\
            0 & 0 & -i & 0 \\
            0 & i & 0 & 0 \\
            -i & 0 & 0 & 0
            \end{pmatrix}.
\end{align}
Similarly, we have
\begin{align}
\what Q_+&= \diag \left( 1, \begin{pmatrix}
            0 & -1 \\ 1 & 0
            \end{pmatrix}, 1\right)\what Q_- \label{eq:positive x Q}
\\
\what Q_+&= e^{3 \pi i/ 4}\diag \left( \begin{pmatrix}
            0 & -1 \\ 1 & 0
            \end{pmatrix},
            \begin{pmatrix}
            0 & -1 \\ 1 & 0
            \end{pmatrix}\right)\what Q_-, \label{eq:negative x Q}
\\
\what Q_-&=
\begin{pmatrix}
            0 & 0 & 0 & -i \\
            0 & 0 & i & 0 \\
            0 & -i & 0 & 0 \\
            i & 0 & 0 & 0
            \end{pmatrix}
\what Q_-. \label{eq:relation Q}
\end{align}
These relations will be helpful later.

%We need the following refinement.
%\begin{lemma}\label{lemma:Pinfty:zero:fine} There exists constant matrices $C, \what P_+,\what P_-$ such that
%\begin{equation}\label{Pinfty:zero:fine}
%P^{(\infty)}(z) = C\diag(z^{3/8},z^{-1/8},z^{-3/8},z^{1/8}) \what P_\pm (I+O(z^{1/4}))
%\end{equation}
%as $z\to 0$ with $\pm\Im z
%>0$. We have
%\begin{align}\label{Pinfty:zero:fine:sym}
%\what P_-&=e^{3 \pi i/ 4}\diag(i,-i,1,-1)\what P_-
%\begin{pmatrix}
%            0 & 0 & -1 & 0 \\
%            -1 & 0 & 0 & 0 \\
%            0 & 0 & 0 & 1 \\
%            0 & -1 & 0 & 0
%            \end{pmatrix}.
%\end{align}
%\end{lemma}

\subsection{Local parametrix at $c$}

As in Section \ref{sec: local par nonzero} the local parametrix $S^{(c)}$
around $c$ can be built in the standard way with the help of Airy functions,
see e.g.~\cite{DKM}. We omit the details here.

\subsection{Local parametrix at the origin}

In this section we build a local parametrix $\Pnot$ near the origin. Here the analysis is essentially different from the noncritical situation discussed in the previous section.

\subsubsection*{Transformation of the RH problem for $M(\zeta)$}

To build the local parametrix near the origin we will use a slightly modified
version of RH problem \ref{rhp:modelM}. We put $\til\nu=\nu+1/2$ and set
\begin{equation} \label{eq: def N}
N(\zeta) = \diag\left(\begin{pmatrix}0&i\\
1&0\end{pmatrix},1,i\right)M(\zeta)\diag\left(\begin{pmatrix}0&1\\-i&0\end{pmatrix},1,-i\right).
\end{equation}
The jumps for $N$ are shown in Figure~\ref{fig:modelRHP:2}.

\begin{figure}[t]
\vspace{14mm}
\begin{center}
   \setlength{\unitlength}{1truemm}
   \begin{picture}(100,70)(-5,2)
       \put(40,40){\line(1,0){40}}
       \put(40,40){\line(-1,0){40}}
       \put(40,40){\line(2,1){30}}
       \put(40,40){\line(2,-1){30}}
       \put(40,40){\line(-2,1){30}}
       \put(40,40){\line(-2,-1){30}}
       \put(40,40){\line(2,3){18.5}}
       \put(40,40){\line(2,-3){17}}
       \put(40,40){\line(-2,3){18.5}}
       \put(40,40){\line(-2,-3){17}}
       \put(40,40){\thicklines\circle*{1}}
       \put(39.3,36){$0$}
       \put(60,40){\thicklines\vector(1,0){.0001}}
       \put(20,40){\thicklines\vector(-1,0){.0001}}
       \put(60,50){\thicklines\vector(2,1){.0001}}
       \put(60,30){\thicklines\vector(2,-1){.0001}}
       \put(20,50){\thicklines\vector(-2,1){.0001}}
       \put(20,30){\thicklines\vector(-2,-1){.0001}}
       \put(50,55){\thicklines\vector(2,3){.0001}}
       \put(50,25){\thicklines\vector(2,-3){.0001}}
       \put(30,55){\thicklines\vector(-2,3){.0001}}
       \put(30,25){\thicklines\vector(-2,-3){.0001}}

       \put(60,41){$\Gamma_0$}
       \put(60,52.5){$\Gamma_1$}
       \put(47,57){$\Gamma_2$}
       \put(29,57){$\Gamma_3$}
       \put(16,52.5){$\Gamma_4$}
       \put(16,41){$\Gamma_5$}
       \put(16,30.5){$\Gamma_6$}
       \put(29,20){$\Gamma_7$}
       \put(48,20){$\Gamma_8$}
       \put(60,30){$\Gamma_{9}$}

       \put(65,45){$\small{\Omega_0}$}
       \put(58,58){$\small{\Omega_1}$}
       \put(37,62){$\small{\Omega_2}$}
       \put(17.5,58){$\small{\Omega_3}$}
       \put(10,46){$\small{\Omega_4}$}
       \put(10,34){$\small{\Omega_5}$}
       \put(17.5,22){$\small{\Omega_6}$}
       \put(38,17){$\small{\Omega_7}$}
       \put(58,22){$\small{\Omega_8}$}
       \put(65,33){$\small{\Omega_{9}}$}

       \put(78.5,40){$\small{\begin{pmatrix}1&0&0&0\\ 0&0&1&0\\ 0&-1&0&0\\ 0&0&0&1 \end{pmatrix}}$}
       \put(69,57){$\small{\begin{pmatrix}1&0&0&0\\ 0&1&0&0\\ 0&1&1&0\\ 0&0&0&1 \end{pmatrix}}$}
       \put(45,74){$\small{\begin{pmatrix}1& e^{\nu\pi i}&0&0\\ 0&1&0&0\\ 0&0&1&e^{\nu\pi i}\\ 0&0&0&1 \end{pmatrix}}$}
       \put(4,74){$\small{\begin{pmatrix}1&0&0&0\\ - e^{-\nu\pi i}&1&0&0\\ 0&0&1&0\\ 0&0& -e^{-\nu\pi i}&1 \end{pmatrix}}$}
       \put(-16,57){$\small{\begin{pmatrix}1&0&0&0\\ 0&1&0&0\\ 0&0&1&0\\ -1&0&0&1\end{pmatrix}}$}
       \put(-26,40){$\small{\begin{pmatrix}0&0&0&-1\\ 0&1&0&0\\ 0&0&1&0\\ 1&0&0&0 \end{pmatrix}}$}
       \put(-16,23){$\small{\begin{pmatrix}1&0&0&0\\ 0&1&0&0\\ 0&0&1&0\\ -1&0&0&1 \end{pmatrix}}$}
       \put(4,6){$\small{\begin{pmatrix}1&0&0&0\\ - e^{\nu\pi i}&1&0&0\\ 0&0&1&0\\ 0&0& -e^{\nu\pi i}&1 \end{pmatrix}}$}
       \put(45,6){$\small{\begin{pmatrix}1& e^{-\nu\pi i}&0&0\\ 0&1&0&0\\ 0&0&1& e^{-\nu\pi i}\\ 0&0&0&1 \end{pmatrix}}$}
       \put(69,23){$\small{\begin{pmatrix}1&0&0&0\\ 0&1&0&0\\ 0&1&1&0\\ 0&0&0&1\end{pmatrix}}$}

  \end{picture}
   \vspace{0mm}
   \caption{The figure shows the jump
   matrices
   $J_k$, $k=0,\ldots,9$, in the RH problem for  $N = N(\zeta)$.}
   \label{fig:modelRHP:2}
\end{center}
\end{figure}
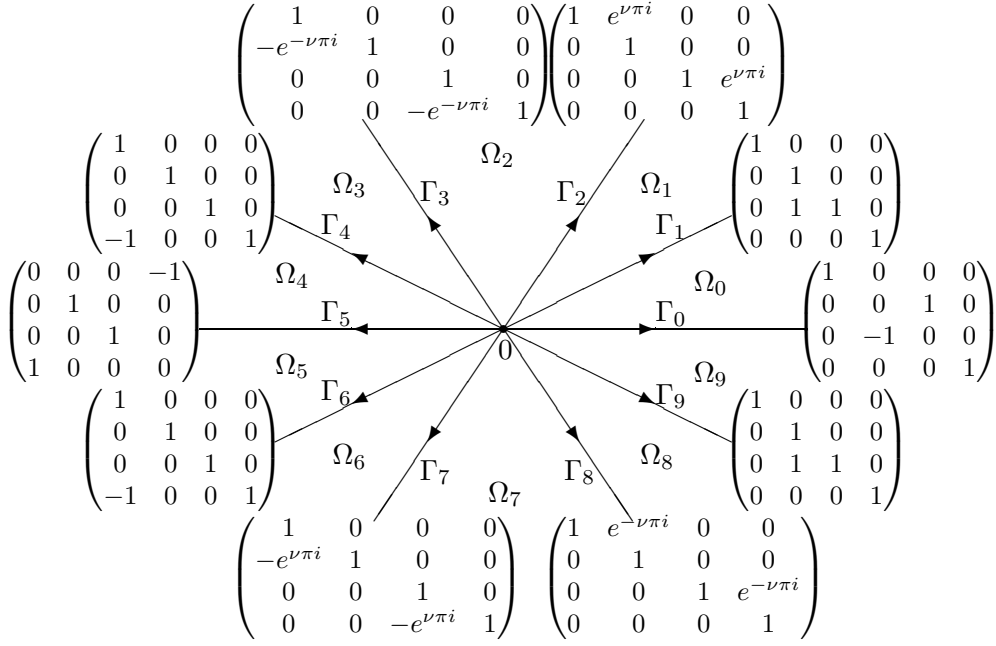

The asymptotics of $N$ as $\zeta \to \infty$ is given by
\begin{multline}
\label{N:asymptotics} N(\zeta) =
\left(I+\frac{N_1}{\zeta}+\frac{N_2}{\zeta^2}+\mathcal
O\left(\frac{1}{\zeta^3}\right)\right)
\diag(\zeta^{-1/4},(-\zeta)^{-1/4},(-\zeta)^{1/4},\zeta^{1/4})
\\
\times \wtil
A\diag\left(e^{-\psi_1(\zeta)-t\zeta},e^{-\psi_2(\zeta)+t\zeta},e^{\psi_2(\zeta)+t\zeta},e^{\psi_1(\zeta)-t\zeta}\right)
\end{multline}
with
\begin{equation}\label{mixing:matrix2}
\wtil A:=\frac{1}{\sqrt{2}}\begin{pmatrix} 1 & 0 & 0 & i \\
0 & 1 & -i & 0 \\
0 & -i & 1 & 0 \\
i & 0 & 0 & 1 \\
\end{pmatrix},
\end{equation}
and the behavior of $N(\zeta)$ for $\zeta\to 0$ is given by
\[
 N(\zeta) = \mathcal O(\zeta^{\til\nu}),\quad N^{-1}(\zeta) =
\mathcal O(\zeta^{\til\nu}), \qquad \hbox{if $\til\nu\leq 0$},
\]
and
\begin{align*} \left\{\begin{array}{ll}
N(\zeta)\diag(\zeta^{\til\nu},\zeta^{-\til\nu},\zeta^{\til\nu},\zeta^{-\til\nu})=\mathcal
O(1),& \quad
\zeta\in \Omega_1\cup\Omega_8,\\
N(\zeta)\diag(\zeta^{-\til\nu},\zeta^{\til\nu},\zeta^{-\til\nu},\zeta^{\til\nu})=\mathcal
O(1),&\quad
\zeta\in \Omega_3\cup\Omega_6,\\
N(\zeta)=\mathcal O(\zeta^{-\til\nu}),&\quad\zeta\not\in
(\Omega_1\cup\Omega_3\cup\Omega_6\cup\Omega_8),\end{array}\right.
\hbox{if $\til\nu\geq 0$}.
\end{align*}
Note that the behavior at infinity can be rewritten as
\begin{multline}
\label{N:asymptotics2} N(\zeta) =
\diag(\zeta^{-1/4},(-\zeta)^{-1/4},(-\zeta)^{1/4},\zeta^{1/4}) \wtil A
\left(I+\frac{\wtil N_{1,\pm}}{\zeta^{1/2}}+\frac{\wtil N_{2,\pm}}{\zeta}+\mathcal O\left(\frac{1}{\zeta^{3/2}}\right)\right) \\
\times
\diag\left(e^{-\psi_1(\zeta)-t\zeta},e^{-\psi_2(\zeta)+t\zeta},e^{\psi_2(\zeta)+t\zeta},e^{\psi_1(\zeta)-t\zeta}\right)
\end{multline}
as $\zeta \to \infty$ within the upper/lower half plane. Here
\begin{equation} \label{eq: def tilde N1}
\wtil N_{1,\pm}=\wtil A^{-1} \diag \left(1,e^{\mp \pi i/4},0,0 \right) N_1 \diag \left(0,0,e^{\mp \pi i/4},1 \right) \wtil A.
\end{equation}

For further use, we record the symmetry relation (see also \cite{Del3})
\begin{multline}\label{symmetry:N}
N(-\zeta;r_1,r_2,s,t) = \diag\left(\begin{pmatrix} 0 & 1 \\ -1 & 0
\end{pmatrix},\begin{pmatrix} 0 & 1 \\ -1 & 0
\end{pmatrix}\right)\\
\times N(\zeta; r_2,r_1,s,t)\diag\left(\begin{pmatrix} 0 & -1 \\ 1 & 0
\end{pmatrix},\begin{pmatrix} 0 & -1 \\ 1 & 0
\end{pmatrix}\right).
\end{multline}
Note that the order of $r_1$ and $r_2$ differs at both sides of the equality.

A corollary of this relation and \eqref{N:asymptotics2} is
\begin{multline} \label{eq: symmetry tilde N1}
\frac{\wtil N_{1,\mp}(r_1,r_2,s,t)}{(-\zeta)^{1/2}} =
\diag\left(\begin{pmatrix} 0 & 1 \\ -1 & 0
\end{pmatrix},\begin{pmatrix} 0 & 1 \\ -1 & 0
\end{pmatrix}\right)\\
\times \frac{\wtil N_{1,\pm}(
r_2,r_1,s,t)}{\zeta^{1/2}}\diag\left(\begin{pmatrix} 0 & -1 \\ 1 & 0
\end{pmatrix},\begin{pmatrix} 0 & -1 \\ 1 & 0 \end{pmatrix}\right), \quad \pm\Im \zeta >0.
\end{multline}

\subsubsection*{Construction of the local parametrix}

Here we construct the local parametrix $\Pnot$ around the origin. We will do this inside a shrinking disk $\D$, i.e. we want to
solve the following RH problem.

\begin{rhp} \label{rhpforP0} We look for $P^{(0)}$
satisfying the following conditions.
\begin{enumerate}
\item[\rm (1)] $P^{(0}(z)$ is analytic for $z\in D(0, \rho n^{-1/3}) \setminus \Sigma_P$,
where $D(0,\rho n^{-1/3})$ denotes the disk of radius $\rho
n^{-1/3}$ around $0$.
\item[\rm (2)] $P^{(0)}$ satisfies the jumps
\begin{align*}
    P^{(0)}_+ = P^{(0)}_- J_P, \qquad \text{on } \Sigma_P \cap D(0, n^{-1/3}),
\end{align*}
where $J_P$ is the jump matrix in RH problem \ref{RHP:critical case}.
  \item[\rm (3)] As $n\to\infty$, we have that
\begin{equation}\label{matching condition}
    P^{(0)}(z) =   \left(I+ Z(z) + \mathcal O(n^{-1/12}) \right) P^{(\infty)}(z),
    \quad \text{uniformly for } |z| = \rho n^{-1/3},
\end{equation}
where
\begin{equation} \label{eq:NisO1}
    Z(z) = \mathcal O(1), \qquad \text{as } n \to \infty, \quad \text{uniformly for } |z| = \rho n^{-1/3}.
\end{equation}
%\item[\rm (4)] $P^{(0)}$ has the same behavior as $P$ when $z\to 0$; see
%\eqref{eq:zero behavior of P}.
\end{enumerate}
\end{rhp}

We define the local parametrix as follows (compare with \cite{Del3}):
\begin{equation}\label{local:P:multicrit}
\Pnot(z) =
E_n(z)N\left(n^{2/3}f(z);r_1(z),r_2(z),n^{2/3}s(z),n^{1/3}t(z)\right)\Lam(z),
\end{equation}
where
$$ \Lam(z)=\diag\left(e^{n\wtil\lam_{1}(z)},\ldots,e^{n\wtil\lam_{4}(z)}\right), $$
and, as before, $\wtil \lambda_j(z)=\lambda_{5-j}(-z)$, $j=1,2,3,4$.
The parameters $r_1$, $r_2$, $s$, and $t$ depend  in a mild way on $z$ and also on $n$, although this is not indicated in the notation. The prefactor $E_n(z)$ does depend on $n$ and is analytic in a neighborhood of the origin except on the negative real line where it has a jump.

We set
\begin{equation} \label{eq: def f}
f(z)=z^{1/2},
\end{equation}
where we put the branch cut along $(-\infty,0]$. Furthermore we define for $z \in D(0,\rho n^{-1/3})\setminus (-\rho n^{-1/3},0]$
\begin{align}
r_1(z)&= \frac32 e^{-\pi i/4}H(-z)-\frac32e^{\pi i/4}(F(-z)-F(0))z^{-1/2}, \label{eq: def r1 crit} \\
r_2(z)&= \frac32 e^{-\pi i/4}H(-z)+\frac32e^{\pi i/4}(F(-z)-F(0))z^{-1/2}, \label{eq: def r2 crit} \\
s(z) &= \frac12 e^{-3\pi i/4}F(0), \label{eq: def s crit} \\
t(z) &= iG(-z), \label{eq: def t crit}
\end{align}
where $F$, $G$, and $H$ are the analytic functions introduced in item (c)
of Lemma \ref{lemma: modified lambda functions}. Clearly $t(z)$ and
$s(z)$ are analytic in the disk $D(0,\rho n^{-1/3})$ for $n$ large
enough. $r_1(z)$ and $r_2(z)$ are only analytic in $D(0,\rho
n^{-1/3})\setminus (-\rho n^{-1/3},0]$, but they are each others
analytic continuation across the interval $(-\rho n^{-1/3},0)$.

In the construction of the parametrix, apart from the analytic structure of the parameters $r_1(z),$ $r_2(z)$, $s(z)$, and $t(z)$, we will also need their asymptotic behavior as $n \to \infty$.

\begin{lemma}\label{lemma: rst}
The functions $s(z)$ and $t(z)$ are analytic for $z\in \D$ and $n$ sufficiently large. The functions $r_1(z)$ and $r_{2}(z)$ are analytic in $\D \setminus (-\rho n^{-1/3},0]$ and satisfy
\begin{equation} \label{eq: jump r12}
r_{1,\pm}(x)=r_{2,\mp}(x), \qquad -\rho n^{-1/3} <x<0.
\end{equation}
Moreover, there exists a constant $\hat G \in \C$, independent of $z$ and $n$, such that\begin{align}\label{eq:limitr0s0t0}
\begin{cases}
\lim_{n \to \infty} r_{j}(n^{-1/3}z)=2,\\
\lim_{n \to \infty} n^{2/3}s( n^{-1/3}z)=\frac12(a^2-5b),\\
\lim_{n \to \infty} n^{1/3}t( n^{-1/3}z)=2a+\hat G z,
\end{cases}\end{align}
for $|z|< \rho$ and $j=1,2$. \end{lemma}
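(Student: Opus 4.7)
The plan is to verify the four assertions in order. The analyticity and jump statements are structural, depending only on the form of the definitions \eqref{eq: def r1 crit}--\eqref{eq: def t crit} and the analyticity of $F,G,H$ from Lemma \ref{lemma: modified lambda functions}(c). The asymptotic limits \eqref{eq:limitr0s0t0} require Taylor expansions at the critical parameter values $(\alpha^*,\tau^*)=(-1,1)$, using \eqref{eq: scaling alpha tau} and \eqref{eq: scaling gamma}.

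First I would handle the analyticity. The function $s(z)=\tfrac12 e^{-3\pi i/4}F(0)$ is constant in $z$, hence trivially analytic; $t(z)=iG(-z)$ inherits analyticity on $D(0,\rho n^{-1/3})$ from the analyticity of $G$ on a fixed neighborhood of the origin guaranteed by Lemma \ref{lemma: modified lambda functions}(c) (for $n$ large the shrinking disk lies inside this fixed neighborhood). For $r_1,r_2$, the first term $\tfrac32 e^{-\pi i/4}H(-z)$ is analytic for the same reason, whereas $F(-z)-F(0)$ is analytic with a zero at $z=0$, and $z^{-1/2}$ has its cut along $(-\infty,0]$, giving the claimed domain. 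The jump relation \eqref{eq: jump r12} then follows immediately from $(z^{-1/2})_+=-(z^{-1/2})_-$ on the negative real axis, since $H(-x)$ and $F(-x)-F(0)$ are continuous across this axis.

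The heart of the proof is the evaluation of the limits. At the critical point one has $\gamma^*=1$ by \eqref{eq: gamma}, and direct substitution into \eqref{eq: F}--\eqref{eq: H} yields the \emph{criticality identities}
\[
F^*(0)=0,\qquad G^*(0)=0,\qquad H^*(0)=\tfrac{4}{3}e^{\pi i/4}.
\]
The limit for $r_j$ is then immediate: the constant term contributes $\tfrac32 e^{-\pi i/4}H^*(0)=2$, while a first-order Taylor expansion of $F$ at the origin gives $(F(-n^{-1/3}z)-F(0))/(n^{-1/3}z)^{1/2}=\mathcal O(n^{-1/6})$ (using that $F'(0)$ is bounded uniformly in $n$, a consequence of the analytic dependence of $F$ on $\alpha,\tau$). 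For $s$ and $t$, I would expand the bracketed expressions in \eqref{eq: F} and \eqref{eq: G} in powers of $n^{-1/3}$ using \eqref{eq: scaling gamma} and the parametrization \eqref{eq: scaling alpha tau}. The cancellations of the $n^0$ and $n^{-1/3}$ contributions in $-2\gamma^2+\gamma^{-1}+\tau^{4/3}\gamma$ follow from \eqref{eq: gamma} applied at higher order, and a direct computation gives
\[
-2\gamma^2+\gamma^{-1}+\tau^{4/3}\gamma=\tfrac14(a^2-5b)\,n^{-2/3}+\mathcal O(n^{-1}),\qquad
\tfrac32\gamma^2-\tfrac{1}{4\gamma}-\tfrac54\tau^{4/3}\gamma=-a\,n^{-1/3}+\mathcal O(n^{-2/3}),
\]
which yields $F(0)=e^{3\pi i/4}(a^2-5b)n^{-2/3}+\mathcal O(n^{-1})$ and $G(0)=-2ia\,n^{-1/3}+\mathcal O(n^{-2/3})$. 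Substituting into $s$ and $t$ produces the stated limits $\tfrac12(a^2-5b)$ and $2a$; the $z$-dependent linear correction in $t$ arises from the first-order Taylor expansion of $G$ around $0$, yielding $\hat G=-iG^{*\prime}(0)$.

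The main obstacle is purely computational: the bookkeeping required to expand $\gamma^{1/4}$, $\gamma^{-1/2}$, $\gamma^{\pm 1}$, $\gamma^2$ and $\tau^{4/3}$ to two nontrivial orders in $n^{-1/3}$, verify that the $n^0$ and $n^{-1/3}$ contributions cancel as expected from the defining relation for $\gamma$, and collect the $n^{-2/3}$ and $n^{-1/3}$ coefficients. Conceptually no new ideas are needed beyond checking these cancellations and using the explicit critical values of $F,G,H$.
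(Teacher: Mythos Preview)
Your proposal is correct and follows essentially the same approach as the paper's proof: both verify analyticity and the jump relation directly from the definitions, then compute the limits by expanding the explicit expressions \eqref{eq: F}--\eqref{eq: H} in powers of $n^{-1/3}$ via \eqref{eq: scaling alpha tau} and \eqref{eq: scaling gamma}, with the $z$-dependent term in $t$ arising from a first-order Taylor expansion of $G$ at the origin (giving $\hat G=-iG^{*\prime}(0)$). Your version is in fact more explicit than the paper's, which declares the $r_j$ limit ``obvious'' and omits the intermediate expansions you spell out.
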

\begin{proof}
The analytic structure of these functions for sufficiently large $n$ was already discussed. The limiting behavior for $r_{j}$ is obvious. To obtain the limiting behavior for $t$ we write
\[ \begin{cases}
G(z)=G(0)+G_1 z +\mathcal O(z^2), \\
G^*(z)=G^*(0)+G_1^* z +\mathcal O(z^2), \\
\end{cases}\qquad \text{as }z \to 0. \]
Hence
\[
n^{1/3}t(n^{-1/3}z)=in^{1/3}G(0)-iG_1z+\mathcal O(n^{-1/3}),
\]
as $n \to \infty$, which follows from \eqref{eq: def t crit}. Using this, \eqref{eq: G}, and the fact that $G_1 \to G_1^*$ as $n \to \infty$, we obtain
\[
\lim_{n \to \infty}n^{1/3} t( n^{-1/3}z)=-2\lim_{n \to \infty}n^{1/3} \gamma^{-1/2}\left( \tfrac32 \gamma^2-\tfrac{1}{4 \gamma}-\tfrac54 \tau^{4/3}\gamma \right)-i G_1^* z.
\]
By inserting the limiting behavior for $\gamma$ and $\tau$ as given in \eqref{eq: scaling gamma} and  \eqref{eq: scaling alpha tau}, we obtain the statement.

The proof for $s(z)$ is easier. Using \eqref{eq: def s crit} we obtain
\[
n^{2/3}s( n^{-1/3}z)=\tfrac{1}{2}n^{2/3}e^{\frac{-3\pi i}{4}}F(0).
\]
Using \eqref{eq: F} we get
\[
\lim_{n \to \infty} n^{2/3}s( n^{-1/3}z)=2\lim_{n \to \infty}n^{2/3}
\gamma^{1/4} \left( -2 \gamma^2+ \tfrac{1}{\gamma} +
\tau^{4/3}\gamma \right),
\]
which in combination with \eqref{eq: scaling gamma} and  \eqref{eq: scaling alpha tau}, finishes the proof.
\end{proof}

From Lemma \ref{lemma: rst} it follows that \eqref{local:P:multicrit} is well-defined (postponing the definition of $E_n$ for a moment). Indeed, it follows from standard arguments that if the solution to RH problem  \ref{rhp:modelM} exists, it depends analytically on the parameters $r_j$, $s$, and $t$. Combining this observation with Lemma \ref{lemma: rst} we see that for the choice of $r_j,s$ and $t$ we made, the solution to RH problem \ref{rhp:modelM} exists and hence \eqref{local:P:multicrit} is well-defined for sufficiently large $n$.

\begin{lemma} \label{lemma: matching crit 0}
Given definitions \eqref{eq: def r1 crit}--\eqref{eq: def t crit}, the following formulas hold modulo $2\pi i$.
\begin{equation}
  \begin{cases}
    n(\wtil\lambda_1(z)+zK(-z))=\psi_1(n^{2/3}z^{1/2};r_1(z),n^{2/3}s(z))+nt(z)z^{1/2},  \\
    n(\wtil\lambda_2(z)+zK(-z))=\psi_2(n^{2/3}z^{1/2};r_2(z),n^{2/3}s(z))-nt(z)z^{1/2}, \\
    n(\wtil\lambda_3(z)+zK(-z))=-\psi_2(n^{2/3}z^{1/2};r_2(z),n^{2/3}s(z))-nt(z)z^{1/2}, \\
    n(\wtil\lambda_4(z)+zK(-z))=-\psi_1(n^{2/3}z^{1/2};r_1(z),n^{2/3}s(z))+nt(z)z^{1/2}.
  \end{cases}
\end{equation}
\end{lemma}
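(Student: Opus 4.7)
The plan is to verify each of the four identities by expanding both sides as explicit linear combinations of $F(-z)z^{1/4}$, $G(-z)z^{1/2}$, $H(-z)z^{3/4}$ near the origin and matching coefficients. The clause ``modulo $2\pi i$'' arises because the expansions of $\lambda_1$ and $\lambda_4$ on $\C_-$ in Lemma~\ref{lemma: modified lambda functions}(c) each carry an additive $2\pi i$, contributing $2\pi in \equiv 0 \pmod{2\pi i}$ on the left-hand side when $z \in \C_+$.

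I will work out the first identity in detail for $z \in \C_+$; the three remaining ones are completely analogous, and $z \in \C_-$ is handled by the same computation after swapping the $\C_+$- and $\C_-$-expansions. With $\zeta = n^{2/3}z^{1/2}$ taken in the principal branch, the elementary branch identities
\[
\zeta^{1/2}=n^{1/3}z^{1/4},\qquad \zeta^{3/2}=nz^{3/4},\qquad (-\zeta)^{1/2}=-in^{1/3}z^{1/4},\qquad (-\zeta)^{3/2}=inz^{3/4}
\]
hold for $z \in \C_+$, so that \eqref{def:theta1} yields
\[
\psi_1\bigl(n^{2/3}z^{1/2};\,r_1(z),\,n^{2/3}s(z)\bigr) = \tfrac{2n}{3}r_1(z)z^{3/4} + 2ns(z)z^{1/4}.
\]
Inserting \eqref{eq: def r1 crit} and \eqref{eq: def s crit} produces one $z^{3/4}$-term and three $z^{1/4}$-terms; the two contributions proportional to $F(0)z^{1/4}$ cancel via $e^{i\pi/4}+e^{-3i\pi/4}=0$, leaving
\[
\psi_1\bigl(n^{2/3}z^{1/2};\,r_1(z),\,n^{2/3}s(z)\bigr) = ne^{-i\pi/4}H(-z)z^{3/4} - ne^{i\pi/4}F(-z)z^{1/4}.
\]
Adding $nt(z)z^{1/2} = inG(-z)z^{1/2}$ from \eqref{eq: def t crit} gives the right-hand side of the identity.

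For the left-hand side, $\wtil\lambda_1(z) = \lambda_4(-z)$ with $-z \in \C_-$, so Lemma~\ref{lemma: modified lambda functions}(c) gives
\[
\lambda_4(-z) = -zK(-z) - iF(-z)(-z)^{1/4} - G(-z)(-z)^{1/2} + iH(-z)(-z)^{3/4} + 2\pi i.
\]
The branch identities $(-z)^{1/4}=e^{-i\pi/4}z^{1/4}$, $(-z)^{1/2}=-iz^{1/2}$, $(-z)^{3/4}=e^{-3i\pi/4}z^{3/4}$ (valid for $z \in \C_+$) simplify the right-hand side to
\[
-zK(-z) - e^{i\pi/4}F(-z)z^{1/4} + iG(-z)z^{1/2} + e^{-i\pi/4}H(-z)z^{3/4} + 2\pi i.
\]
Multiplying by $n$ and adding $nzK(-z)$ reproduces exactly the expression derived above, up to the integer multiple of $2\pi i$.

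The remaining three identities follow by the identical three-step scheme. For $\wtil\lambda_2(z)=\lambda_3(-z)$ one expands $\psi_2(\zeta;r_2,s) = \tfrac{2}{3}r_2(-\zeta)^{3/2}+2s(-\zeta)^{1/2}$, and the $F(0)$-cancellation now uses $e^{3i\pi/4}+e^{-i\pi/4}=0$. The identities for $\wtil\lambda_3=\lambda_2(-z)$ and $\wtil\lambda_4=\lambda_1(-z)$ come with overall minus signs in front of $\psi_2$ and $\psi_1$, which precisely reflect the fact that the $\C_-$-expansions of $\lambda_2$, $\lambda_1$ carry the opposite sign pattern to those of $\lambda_3$, $\lambda_4$ in front of $F$ and $H$. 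The sole technical difficulty throughout is bookkeeping of branches and signs; once this is in place, the proof reduces to elementary algebra involving only \eqref{eq: def r1 crit}--\eqref{eq: def t crit}, Lemma~\ref{lemma: modified lambda functions}(c), and the two eighth-root-of-unity identities displayed above.
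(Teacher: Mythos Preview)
Your proof is correct and is precisely the ``straightforward verification'' that the paper invokes without detail; you have simply spelled out the branch bookkeeping and the $F(0)$-cancellation explicitly. The approach is identical to the paper's.
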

\begin{proof}
This is a straightforward verification.
\end{proof}

It remains to define the prefactor
\begin{multline}\label{def:Ez}
E_n(z) = e^{nzK(-z)}\Pinf(z) \wtil A^{-1}\\
\times \diag\left(n^{1/6}f(z)^{1/4},n^{1/6}(-f(z))^{1/4},n^{-1/6}(-f(z))^{-1/4},n^{-1/6}f(z)^{-1/4}\right)
\end{multline}
with $\wtil A$ in \eqref{mixing:matrix2}. This prefactor is not analytic in the full disk $\D$, but has a jump as indicated in the following lemma.

\begin{lemma}\label{lemma:jump:E}
$E_n(z)$ is analytic in $D(0,\rho n^{-1/3})\setminus (-\rho
n^{-1/3},0]$ with a jump
\begin{equation}\label{jump:E}
E_{n,+}(x) = E_{n,-}(x)\diag\left(\begin{pmatrix}0&1\\ -1& 0\end{pmatrix},\begin{pmatrix}0&1\\
-1& 0\end{pmatrix}\right),\qquad -\rho n^{-1/3} <x< 0.
\end{equation}
\end{lemma}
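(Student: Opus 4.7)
Decompose
\[ E_n(z) = e^{nzK(-z)}\cdot \Pinf(z)\cdot \wtil A^{-1}\cdot D(z), \]
where
\[ D(z) := \diag\bigl(n^{1/6}f(z)^{1/4},\, n^{1/6}(-f(z))^{1/4},\, n^{-1/6}(-f(z))^{-1/4},\, n^{-1/6}f(z)^{-1/4}\bigr). \]
The scalar $e^{nzK(-z)}$ is entire, $\wtil A^{-1}$ is constant, and the global parametrix $\Pinf$ has jumps only on $\mathbb R$ inside $\D$, since by construction its lens jumps are all the identity. Because $f(z)=z^{1/2}$ is taken with its principal branch, $D(z)$ is analytic on $\C\setminus(-\infty,0]$. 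Consequently $E_n$ is automatically analytic on $\D\setminus\mathbb R$, and the only task is to compute its jumps on $\mathbb R^+$ and on $(-\rho n^{-1/3},0)$.

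Two algebraic identities for $\wtil A$ from \eqref{mixing:matrix2}, both verified by direct $4\times 4$ multiplication, drive the entire computation. Writing $J_{22}:=\begin{pmatrix}0&1\\-1&0\end{pmatrix}$, they read
\[ (\mathrm{i})\ \ \wtil A\,\diag(1,J_{22},1)\,\wtil A^{-1}=\diag(1,i,-i,1),\qquad (\mathrm{ii})\ \ \wtil A\,\diag(J_{22},J_{22})=\diag(J_{22},J_{22})\,\wtil A. \]
These reflect the fact that $\wtil A$ is built from the local Bessel-diagonalizer $\tfrac{1}{\sqrt 2}\bigl(\begin{smallmatrix}1&i\\i&1\end{smallmatrix}\bigr)$ in two different block arrangements of the four coordinates.

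On $\mathbb R^+$ the jump of $\Pinf$ is $\diag(1,J_{22},1)$. Branch tracking shows that $f(z)^{1/4}$ extends continuously across $\mathbb R^+$, while $(-f(z))^{1/4}$ picks up the factor $e^{-i\pi/2}$, because $-f(z)$ crosses the principal cut of the fourth root. Therefore $D_+(x)=D_-(x)\diag(1,-i,i,1)$, and identity (i) shows that the combined jump from $\Pinf$ and $D$ is the identity, so $E_n$ extends analytically across $\mathbb R^+$.

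On $(-\rho n^{-1/3},0)$ the jump of $\Pinf$ is $\diag(J_{22},J_{22})$. From $f_\pm(x)=\pm i|x|^{1/2}$ one reads off $D_+(x)=D_-(x)\,\diag(e^{i\pi/4},e^{-i\pi/4},e^{i\pi/4},e^{-i\pi/4})$. Using identity (ii) to commute $\diag(J_{22},J_{22})$ past $\wtil A^{-1}$, the jump of $E_n$ collapses to
\[ D_-(x)^{-1}\diag(J_{22},J_{22})\,D_-(x)\cdot\diag(e^{i\pi/4},e^{-i\pi/4},e^{i\pi/4},e^{-i\pi/4}). \]
The conjugation $D_-^{-1}(\cdots)D_-$ places the ratios $(d_2/d_1)_-=(d_4/d_3)_-=e^{i\pi/4}$ into the off-diagonal entries of $\diag(J_{22},J_{22})$, and multiplication by the phase diagonal on the right precisely cancels the surviving exponentials, recovering $\diag(J_{22},J_{22})$ and hence \eqref{jump:E}. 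The argument is essentially algebraic; the only step demanding real care is maintaining a consistent convention for the fractional powers of $f(z)$ and $-f(z)$ on either side of the cut, so I do not anticipate a substantive obstacle.
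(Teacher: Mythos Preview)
Your proof is correct and follows essentially the same approach as the paper: both argue that $E_n$ is analytic off $\mathbb R$ by construction, then compute the jump of $\Pinf$ and of the diagonal factor $D(z)$ on $\mathbb R^+$ and on $(-\rho n^{-1/3},0)$, verifying cancellation on the former and the claimed jump \eqref{jump:E} on the latter. The paper's proof is a sketch (``one checks that the jumps \ldots\ cancel each other out'' and ``a similar calculation yields \eqref{jump:E}''), whereas you have supplied the concrete algebraic identities (i) and (ii) for $\wtil A$ and the explicit branch tracking of $f(z)^{1/4}$ and $(-f(z))^{1/4}$ that make this check go through; these details are exactly what the paper leaves to the reader.
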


\begin{proof}
It follows from the definition of $E_n(z)$ and the fact that
$P^{(\infty)}(z)$ is analytic for $z\in D(0,\rho n^{-1/3})\setminus
(-\rho n^{-1/3},\rho n^{-1/3})$ that $E_n(z)$ is also analytic for
$z\in D(0,\rho n^{-1/3})\setminus\er$. For $x\in(0,\rho n^{-1/3})$,
one checks that the jumps of $P^{(\infty)}$ and of the rightmost
factor in \eqref{def:Ez} cancel each other out so that $E(x)$ is
analytic for $x\in (0,\rho n^{-1/3})$. A similar calculation yields
the jump \eqref{jump:E} of $E_n(x)$ for $x\in (-\rho n^{-1/3},0)$.
\end{proof}

Now let us check that $\Pnot(z)$ in \eqref{local:P:multicrit} has the correct
jumps. This is straightforward for the jumps on the 5 rays which are not
$\er^-$. To check the jump for $x\in\er^-$, we calculate (see also \cite{Del3})
\begin{align*}
\Pnot_{+}(x) &= E_{n,-}(x) \diag\left(\begin{pmatrix}0&1\\ -1&
0\end{pmatrix},\begin{pmatrix}0&1\\-1& 0\end{pmatrix}\right)
\\& \qquad \times N(n^{2/3}f_+(x);r_{1,+}(x),r_{2,+}(x),n^{2/3}s(x),n^{1/3}t(x))\Lam_+(x)\\
&=
E_{n,-}(x)N(-n^{2/3}f_+(x);r_{2,+}(x),r_{1,+}(x),n^{2/3}s(x),n^{1/3}t(x))
\\
& \qquad \times \Lam_-(x)
\diag\left(\begin{pmatrix}0&1\\ -1& 0\end{pmatrix},\begin{pmatrix}0&1\\
-1& 0\end{pmatrix}\right)\\
&= \Pnot_{-}(x)\diag\left(\begin{pmatrix}0&1\\ -1& 0\end{pmatrix},\begin{pmatrix}0&1\\
-1& 0\end{pmatrix}\right),
\end{align*}
where in the first step we used \eqref{local:P:multicrit} and \eqref{jump:E}, and in the second step we used
\eqref{symmetry:N} and commuted with $\Lam(x)$. The final equality follows from \eqref{eq: jump r12} and \eqref{local:P:multicrit}. This yields the required jumps on $\er^{-}$.

The following lemma states how the local parametrix matches with the
outer parametrix on the boundary of the shrinking disk $\D$,
settling item $(3)$ in the local RH problem \ref{rhpforP0}.
\begin{lemma} \label{lemma: critical matching} For $z$ on the
shrinking circle the following matching condition uniformly holds as
$n\to \infty$
\begin{equation} \label{eq: matching P}
\Pnot(z) \left( \Pinf(z) \right)^{-1}=I+Z(z)+\mathcal O(n^{-1/12}), \qquad |z|=\rho n^{-1/3},
\end{equation} where
\begin{equation} \label{eq: def Z}
Z(z)=\frac{ \what P_\pm \wtil
N_{1,\pm}(r_{1}(z),r_{2}(z),n^{2/3}s(z),n^{1/3}t(z)) \what
Q_\pm}{n^{1/3}z}, \quad \pm ~\Im z >0,
\end{equation}
and where $\wtil N_{1,\pm}$, $\what P_\pm$ and $\what Q_\pm$ are
defined in \eqref{eq: def tilde N1} and \eqref{eq: asym P0 at zero
crit}, respectively.
\end{lemma}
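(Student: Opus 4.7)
The plan is to substitute the definitions \eqref{local:P:multicrit} and \eqref{def:Ez} into $\Pnot(z)(\Pinf(z))^{-1}$ and apply the large-$\zeta$ expansion \eqref{N:asymptotics2} with $\zeta=n^{2/3}f(z)=n^{2/3}z^{1/2}$. On the shrinking circle $|z|=\rho n^{-1/3}$ we have $|\zeta|\sim n^{1/2}\to\infty$, so the expansion applies uniformly in the parameters $(r_1(z),r_2(z),n^{2/3}s(z),n^{1/3}t(z))$, because Lemma \ref{lemma: rst} shows these stay in a compact neighborhood of the limit values $(2,2,\tfrac12(a^2-5b),2a)$ for which RH problem~\ref{rhp:modelM} is solvable, and standard arguments on the analytic dependence of RH problem solutions on parameters yield uniform bounds on $\wtil N_{1,\pm}$, $\wtil N_{2,\pm}$, and so on.

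\textbf{Cancellations.} Two cancellations then occur in the product $E_n(z)\,N(\zeta;\cdots)\,\Lambda(z)\,(\Pinf(z))^{-1}$. First, the right-hand exponential matrix $\diag(e^{-\psi_1-t\zeta},e^{-\psi_2+t\zeta},e^{\psi_2+t\zeta},e^{\psi_1-t\zeta})$ in \eqref{N:asymptotics2} combines with $\Lambda(z)=\diag(e^{n\wtil\lambda_j})$ to produce $e^{-nzK(-z)}I$: this is precisely the content of Lemma \ref{lemma: matching crit 0}, the equalities modulo $2\pi i$ becoming genuine equalities after exponentiation. Second, the diagonal prefactor $D_L=\diag(\zeta^{-1/4},(-\zeta)^{-1/4},(-\zeta)^{1/4},\zeta^{1/4})$ in \eqref{N:asymptotics2} is engineered to be the exact inverse of the rightmost diagonal in \eqref{def:Ez} (direct verification: each diagonal entry of the product is $1$), and the two adjacent copies of $\wtil A^{\pm1}$ annihilate one another. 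The remaining $e^{\pm nzK(-z)}$ factors coming from $E_n$ and from $\Lambda$ cancel, leaving
\begin{equation*}
\Pnot(z)(\Pinf(z))^{-1} = \Pinf(z)\left(I+\frac{\wtil N_{1,\pm}}{n^{1/3}z^{1/4}}+\mathcal O\bigl(n^{-2/3}z^{-1/2}\bigr)\right)(\Pinf(z))^{-1}.
\end{equation*}

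\textbf{Extraction of $Z(z)$ and error bounds.} I would then plug in the two-term expansions \eqref{eq: asym P0 at zero crit} for $\Pinf(z)$ and $(\Pinf(z))^{-1}$ and expand the product. The orthogonality relations $\what P_\pm\what Q_\pm=0$ from \eqref{PQ:orthogonal} kill the would-be singular term $z^{-3/4}\what P_\pm\what Q_\pm$, so $\Pinf(z)(\Pinf(z))^{-1}=I$ and the only surviving leading contribution is
\begin{equation*}
\bigl(\what P_\pm z^{-3/8}\bigr)\frac{\wtil N_{1,\pm}}{n^{1/3}z^{1/4}}\bigl(\what Q_\pm z^{-3/8}\bigr)=\frac{\what P_\pm\wtil N_{1,\pm}\what Q_\pm}{n^{1/3}z}=Z(z).
\end{equation*}
Using $|z|=\rho n^{-1/3}$, the remaining contributions are of three types: (i) the $\wtil N_{1,\pm}/\zeta^{1/2}$ term paired with one leading and one subleading factor of $\Pinf^{\pm1}$, bounded by $|z|^{-3/8}\cdot n^{-1/3}|z|^{-1/4}\cdot |z|^{-1/8}=\mathcal O(n^{-1/3}|z|^{-3/4})=\mathcal O(n^{-1/12})$; (ii) the same term with two subleading factors, of order $\mathcal O(n^{-1/6})$; and (iii) the next term $\wtil N_{2,\pm}/\zeta$ in the $N$-expansion, bounded by $|z|^{-3/4}\cdot n^{-2/3}|z|^{-1/2}=\mathcal O(n^{-1/4})$. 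All three are $\mathcal O(n^{-1/12})$ or smaller, which gives \eqref{eq: matching P}.

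\textbf{Main obstacle.} The hard part will be the careful bookkeeping of the fractional-power branches: the factors $(\pm\zeta)^{\pm1/4}$ and $(\pm z)^{\pm1/4}$ must be cut consistently in the upper and lower half-planes so that \emph{both} the diagonal cancellation against the rightmost factor of \eqref{def:Ez} \emph{and} the exponential matching via Lemma \ref{lemma: matching crit 0} hold simultaneously in each half-plane. This sign dependence is precisely why the asymptotic expansion \eqref{N:asymptotics2} and the identity \eqref{eq: def tilde N1} carry $\pm$-subscripts, and ultimately why the transformed matrix $N$ of \eqref{eq: def N} (rather than the original $M$) is used in the construction of the local parametrix.
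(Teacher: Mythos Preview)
Your proposal is correct and follows the same route as the paper's proof: substitute \eqref{local:P:multicrit} and \eqref{def:Ez}, invoke the large-$\zeta$ expansion \eqref{N:asymptotics2} with $\zeta=n^{2/3}z^{1/2}$ (valid since $|\zeta|\sim n^{1/2}$ on the shrinking circle), use Lemma \ref{lemma: matching crit 0} for the exponential cancellation, and then conjugate by $\Pinf(z)$ using \eqref{eq: asym P0 at zero crit}. Your explicit breakdown of the error contributions into types (i)--(iii) is more detailed than the paper, which simply asserts that the combination of the displayed formula with \eqref{eq: asym P0 at zero crit} yields \eqref{eq: matching P}; your calculation correctly identifies the dominant $\mathcal O(n^{-1/12})$ contribution as the cross term pairing $\wtil N_{1,\pm}/\zeta^{1/2}$ with one leading and one subleading factor of $\Pinf^{\pm1}$. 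One cosmetic remark: in your sentence ``The orthogonality relations $\what P_\pm\what Q_\pm=0$ \ldots\ kill the would-be singular term \ldots, so $\Pinf(z)(\Pinf(z))^{-1}=I$'', the identity $\Pinf(\Pinf)^{-1}=I$ is of course automatic and does not require orthogonality---the orthogonality is what makes the leading-times-leading piece of the $\wtil N_{1,\pm}$ term survive as the clean expression $Z(z)$ rather than something larger.
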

\begin{proof}
From \eqref{local:P:multicrit}, it follows that
\[
\Pnot(z) \left( \Pinf(z) \right)^{-1}=
E_n(z)N(n^{2/3}f(z);r_{1}(z),r_{2}(z),n^{2/3}s(z),n^{1/3}t(z))\Lambda(z)
\left(\Pinf(z) \right)^{-1}.
\]
If $|z|=\rho n^{-1/3}$ then $\left|n^{2/3}z^{1/2}\right|\to \infty$
as $n \to \infty$. Hence, we can use \eqref{N:asymptotics2},
\eqref{def:Ez} and Lemma \ref{lemma: matching crit 0} to obtain, for
$\pm \Im z >0$,
\begin{align*}
& \Pnot(z) \left( \Pinf(z) \right)^{-1}
\nonumber\\
&=\Pinf (z)\left( I+\frac{\wtil
N_{1,\pm}(r_{1}(z),r_{2}(z),n^{2/3}s(z),n^{1/3}t(z))}{n^{1/3}z^{1/4}}+\mathcal
O\left(\frac{1}{n^{2/3}z^{1/2}}\right)\right) \left( \Pinf
(z)\right)^{-1}.
\end{align*}
This, together with \eqref{eq: asym P0 at zero crit}, leads us to
\eqref{eq: matching P} and \eqref{eq: def Z}.
%Finally \eqref{eq: asym P0 at zero crit} leads to
%\[
%\Pnot(z) \left( \Pinf(z) \right)^{-1}= I+\frac{ \what P_\pm \wtil
%N_{1,\pm}(r_{1}(z),r_{2}(z),n^{2/3}s(z),n^{1/3}t(z)) \what
%Q_\pm}{n^{1/3}z}+\mathcal O(n^{-1/12}), \quad \pm \Im z >0,
%\]
%as $n \to \infty$ and $|z|=\rho n^{-1/3}$.
\end{proof}
Note that $Z(z)$, as defined in \eqref{eq: def Z}, remains bounded as $n \to \infty$ with $|z|=\rho n^{-1/3}$.

\subsection{Transformation $S \mapsto \wtil R$}
Using the global parametrix $S^{(\infty)}(z)$ and the local
parametrices $S^{(c)}$ and $S^{(0)}$ around $c$ and $0$ we define
the transformation $S \mapsto \wtil R$ as
\begin{equation} \label{eq: def V}
\wtil R=\begin{cases}
S (S^{(0)})^{-1}, & \text{in the disk $\D$ around 0,} \\
S (S^{(c)})^{-1}, & \text{in a fixed disk around $c$,} \\
S (S^{(\infty)})^{-1}, & \text{elsewhere.}
\end{cases}
\end{equation}

Then $\wtil R$ is defined and analytic outside $\Sigma_S$ and the
two disks around $0$ and $c$, with an analytic continuation across
those parts of $\Sigma_S$ on which the jumps of the parametrices
coincide with those of $S$. What remains are the jumps on a contour
$\Sigma_{\wtil R}$ that consists of the two circles, the part of the
interval $[c^*,\infty)$ outside the disk and the lips of local and
global lenses outside the disks.

By setting the orientations of circles to be clockwise, we have that
$\wtil R$ satisfies the following RH problem.

\begin{rhp} The matrix-valued function $\wtil R$ satisfies the following RH problem.
\begin{itemize}
\item[(1)] $\wtil R$ is analytic in $\C \setminus \Sigma_{\wtil R}$.
\item[(2)] $\wtil R$ satisfies the jump relation $\wtil R_+=\wtil R_- J_{\wtil R}$ on
$\Sigma_{\wtil R}$ with jump matrices
\[
J_{\wtil R}= \begin{cases}
S^{(0)}(S^{(\infty)})^{-1} & \text{on the boundary of $\D$,} \\
S^{(c)}(S^{(\infty)})^{-1} & \text{on the boundary of the disk around $c$,} \\
S^{(\infty)}J_S(S^{(\infty)})^{-1} & \text{elsewhere on $\Sigma_{\wtil R}$.}
\end{cases}
\]
\item[(3)] As $z \to \infty$, we have
\[
\wtil R(z)=I+\mathcal O(1/z).
\]
\end{itemize}
\end{rhp}

The jump matrix $J_{\wtil R}$ is not close to the identity matrix on
the shrinking circle around 0, since by Lemma \ref{lemma: critical
matching} we have
\[
J_{\wtil
R}(z)=S^{(0)}(z)(S^{(\infty)}(z))^{-1}=P^{(0)}(-z)(P^{(\infty)}(-z))^{-1}=I+Z(-z)+\mathcal
O(n^{-1/12}), \qquad \text{as $n \to \infty$}
\]
uniformly for $|z|=\rho n^{-1/3}$, with $Z(z)=\mathcal O(1)$.

The other jump matrices, however, are close to the identity matrix as $n$ gets large.

\subsection{Final transformation $\wtil R \mapsto R$: nilpotent structure.}
The presence of the bounded term  $Z(z)$ in Lemma \ref{lemma:
critical matching} requires an extra transformation.

\begin{lemma}\label{lemma:nilpotent}
The function $Z(z)$, defined in \eqref{eq: def Z}, has the following properties.
\begin{itemize}
\item[\rm (a)] $Z(z)$ is meromorphic in a neighborhood of zero with a simple pole in zero. Hence we can write
\[
Z(z)=\frac{Z_0}{z}+\left(Z(z)-\frac{Z_0}{z} \right),
\]
where $Z_0=\Res (Z, 0)$ is independent of $z$, and $Z(z)-Z_0/z$ is
analytic in $z$.
\item[\rm (b)] $Z(z)$ is nilpotent of degree two, moreover
\[
Z(z_1)Z(z_2)= 0, \qquad \text{for any $z_1,z_2$ in a neighborhood of the origin.}
\]
\end{itemize}
\end{lemma}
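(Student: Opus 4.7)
My plan is to dispose of part (b) first, since it will follow almost immediately from the orthogonality \eqref{PQ:orthogonal}, leaving part (a) as the main content.

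For part (b), observe that for $z_1,z_2$ in the same open half-plane
\[
Z(z_1)Z(z_2)=\frac{\what P_\pm\,\wtil N_{1,\pm}(r_1(z_1),\ldots)\,\what Q_\pm\,\what P_\pm\,\wtil N_{1,\pm}(r_1(z_2),\ldots)\,\what Q_\pm}{n^{2/3}z_1z_2},
\]
which vanishes because the middle factor $\what Q_\pm\what P_\pm$ is zero by \eqref{PQ:orthogonal}. Once (a) is established, $Z$ is a single-valued meromorphic function, so $Z(z_1)Z(z_2)$ is jointly meromorphic in $(z_1,z_2)$ and vanishes identically on an open set, hence everywhere in a punctured bidisk about the origin.

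For part (a), the explicit factor $1/z$ in \eqref{eq: def Z} accounts for the simple pole, so it suffices to show that the numerator
\[
A(z):=\what P_\pm\,\wtil N_{1,\pm}\bigl(r_1(z),r_2(z),n^{2/3}s(z),n^{1/3}t(z)\bigr)\,\what Q_\pm
\]
extends to an analytic function in a full neighborhood of $0$. Analyticity on $\D\setminus\R$ is immediate: $\what P_\pm,\what Q_\pm$ are constant on each half-plane; $s,t$ are analytic on $\D$ and $r_1,r_2$ are analytic on $\D\setminus(-\rho n^{-1/3},0]$ by Lemma~\ref{lemma: rst}; and the unique solution of RH problem \ref{rhp:modelM} depends analytically on its parameters (and hence so does $\wtil N_{1,\pm}$), as established in \cite{Del3}. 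Near $z=0$ the apparent $z^{-1/2}$ singularities in \eqref{eq: def r1 crit}--\eqref{eq: def r2 crit} cancel because $F(-z)-F(0)=\mathcal O(z)$, so $r_1,r_2$ have finite limits and $A(z)$ is bounded; Riemann's removable singularity theorem then upgrades single-valuedness to analyticity at $0$, giving the claimed simple pole structure for $Z$.

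The crux of the argument, and the main obstacle, is verifying the single-valuedness of $A(z)$ across $\R^-$, where three distinct non-analyticities must conspire to cancel: the swap $r_{1,\pm}(x)=r_{2,\mp}(x)$ on $(-\rho n^{-1/3},0)$ from \eqref{eq: jump r12}; the symmetry \eqref{eq: symmetry tilde N1}, which implements the same swap at the level of $\wtil N_{1,\pm}$ via conjugation by $\sigma:=\diag(\sigma_2,\sigma_2)$ with $\sigma_2=\bigl(\begin{smallmatrix}0&1\\-1&0\end{smallmatrix}\bigr)$ up to an explicit scalar; and the relations \eqref{eq:negative x}--\eqref{eq:negative x Q} between $\what P_+,\what P_-$ and $\what Q_+,\what Q_-$. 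Substituting all three into $A_+(x)$ and collecting the constant prefactors produces the product $i\cdot e^{3\pi i/2}=1$, yielding $A_+(x)=A_-(x)$. Analyticity across $\R^+$ is handled by an analogous but simpler calculation using \eqref{eq:positive x}--\eqref{eq:positive x Q} together with the $\R^+$-jump of $N$ from Figure~\ref{fig:modelRHP:2}; alternatively it can be read off from the built-in analyticity of $\Pnot(\Pinf)^{-1}$ across $\R^+$, which was enforced in the construction of the local parametrix. The only delicate part is the phase bookkeeping in the negative-axis cancellation.
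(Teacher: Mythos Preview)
Your approach matches the paper's: part~(b) is immediate from $\what Q_\pm\what P_\pm=0$, and part~(a) reduces to showing the numerator has no jump across $\R^+$ and $\R^-$. Your treatment of the negative axis is correct, and the phase bookkeeping $i\cdot e^{3\pi i/2}=1$ is exactly the cancellation that occurs (the paper packages it differently by splitting $z^{-1}=z^{-3/8}\cdot z^{-1/4}\cdot z^{-3/8}$, but the content is the same).

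The positive-axis case, however, needs sharper ingredients than you cite. The relation between $\wtil N_{1,+}$ and $\wtil N_{1,-}$ does \emph{not} come from the $\R^+$-jump of $N$ in Figure~\ref{fig:modelRHP:2}; those are jumps of $N(\zeta)$ at finite $\zeta$, whereas $\wtil N_{1,\pm}$ are asymptotic coefficients at $\zeta\to\infty$ and their relation follows from the definition \eqref{eq: def tilde N1} (yielding an explicit conjugation identity that the paper writes out). More importantly, after inserting \eqref{eq:positive x}--\eqref{eq:positive x Q} and that conjugation, the cancellation is \emph{not} automatic: one must also invoke the iterated identities \eqref{eq:relation P}--\eqref{eq:relation Q} for $\what P_-$ and $\what Q_-$, which you do not mention. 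Your alternative argument via analyticity of $\Pnot(\Pinf)^{-1}$ is not rigorous as stated: $Z(z)$ is defined directly by \eqref{eq: def Z} for all $z\in\D\setminus\R$, and it only coincides with the leading term of an asymptotic expansion of $\Pnot(\Pinf)^{-1}$ on the shrinking circle, so analyticity of the latter does not immediately transfer. The fix is straightforward---use \eqref{eq: def tilde N1} and \eqref{eq:relation P}--\eqref{eq:relation Q}---but it should be stated.
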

\begin{proof}
To prove (a) note that it is clear from \eqref{eq: def Z} that $Z(z)$ is
analytic in a neighborhood of zero with cut along the real line. It is then
sufficient to show that there is no jump on the real line. When $x\in(0,\rho)$,
there are no jumps for the functions $r_1$, $r_2$, $s$, and $t$. Hence, it
follows from \eqref{eq: def tilde N1} and \eqref{mixing:matrix2} that
\begin{align}
\wtil N_{1,+} =\frac{1}{4}
\begin{pmatrix}
1 & 0 & 0 & i \\
0 & -i & -1 & 0 \\
0 & 1 & -i & 0 \\
-i & 0 & 0 & 1
\end{pmatrix}
\wtil N_{1,-}\begin{pmatrix}
1 & 0 & 0 & -i \\
0 & -i & 1 & 0 \\
0 & -1 & -i & 0 \\
i & 0 & 0 & 1
\end{pmatrix}.
\end{align}
On account of \eqref{eq:positive x} and \eqref{eq:positive x Q}, we
obtain from \eqref{eq: def Z} that
\begin{align*}
Z_{+}(x) &= \frac{1}{4 n^{1/3}x}\what P_- \begin{pmatrix}
1 & 0 & 0 & i \\
0 & 1 & -i & 0 \\
0 & i & 1 & 0 \\
-i & 0 & 0 & 1
\end{pmatrix} \wtil N_{1,-}
\begin{pmatrix}
1 & 0 & 0 & -i \\
0 & 1 & i & 0 \\
0 & -i & 1 & 0 \\
i & 0 & 0 & 1
\end{pmatrix} \what Q_-= Z_{-}(x),
\end{align*}
where the second equality follows from \eqref{eq:relation P} and
\eqref{eq:relation Q}. On the negative real line we also need
\eqref{eq: symmetry tilde N1}. Indeed, for $-\rho<x<0$ and
$\zeta_{\pm}:= n^{2/3}x_{\pm}^{1/2}$, we have
\begin{align*}
Z_{+}(x) &= \frac{\what P_+}{x_+^{3/8}}
\frac{\wtil N_{1,+}(r_{1,+}(x),r_{2,+}(x),n^{2/3}s(x),n^{1/3}t(x)}{n^{1/3}x_+^{1/4}}\frac{\what Q_+ }{x_+^{3/8}} \\
&= \frac{\what P_-}{x_-^{3/8}} \frac{\wtil
N_{1,-}(r_{2,+}(x),r_{1,+}(x),n^{2/3}s(x),n^{1/3}t(x)}{(-\zeta_+)^{1/2}}\frac{\what
Q_- }{x_-^{3/8}},
\end{align*}
where we have used \eqref{eq: symmetry tilde N1} and the relations
\eqref{eq:negative x} and \eqref{eq:negative x Q}.
%\[
%\Pinf_+(x)=\Pinf_- \diag\left(\begin{pmatrix}0&1 \\
%-1&0\end{pmatrix},\begin{pmatrix}0&1 \\ -1&0\end{pmatrix}\right), \qquad
%-\rho<x<0.
%\]
Combining this with \eqref{eq: jump r12} and the fact that $\zeta_+=-\zeta_-$
we see that $Z(z)$ is continuous across $(-\rho, 0)$ and, thus, also analytic
in a punctured neighborhood of the origin. Recalling \eqref{eq: def Z} we then
obtain (a).

(b) follows from \eqref{eq: def Z} and the observation that $\what Q_\pm \what P_\pm=0$, see \eqref{PQ:orthogonal}.
\end{proof}

As a corollary we also get
\begin{equation} \label{eq: nilpotentency 2}
Z(z)Z_0=0, \qquad Z_0Z(z)=0, \qquad Z_0^2=0.
\end{equation}

Then we define the transformation $\wtil R \mapsto R$ as
\begin{equation} \label{eq: def R crit}
R(z)=\begin{cases}
\wtil R(z)\left(I+Z(-z)+\frac{Z_0}{z} \right), & \text{for $z \in \D \setminus \Sigma_{\wtil R}$,} \\
\wtil R (z)\left(I+\frac{Z_0}{z} \right), & \text{for $z \in \C
\setminus (\D \cup \Sigma_{\wtil R})$.}
\end{cases}
\end{equation}

Then $R$ is defined and analytic in $\C \setminus \Sigma_R$ where
$\Sigma_R=\Sigma_{\wtil R}$ and $R$ satisfies a RH problem of the
following form.

\begin{rhp} \textrm{}
\begin{itemize}
\item[\rm (1)] $R$ is analytic in $\C \setminus \Sigma_R$.
\item[\rm (2)] $R$ satisfies the jump conditions $R_+=R_-J_R$ on $\Sigma_R$, with $J_R$ described below.
\item[\rm (3)] $R(z)=I+\mathcal O(1/z)$ as $z \to \infty$.
\end{itemize}
\end{rhp}

The jump matrix $J_R$ for $|z|=\rho n^{-1/3}$ is by \eqref{eq: def R crit} and \eqref{eq: nilpotentency 2}
\begin{align*}
J_R(z) &= \left(I+Z(-z)+\frac{Z_0}{z} \right)^{-1}J_{\wtil R}(z)\left( I+\frac{Z_0}{z}\right)\\
&=\left(I-Z(-z)-\frac{Z_0}{z} \right)\left(I+Z(-z)+\mathcal
O(n^{-1/12})\right)\left( I+\frac{Z_0}{z}\right)=I+\mathcal
O(n^{-1/12}),
\end{align*}
where we also use the fact that $Z(z)$ and $Z_0/z$ are bounded for $|z|=\rho n^{-1/3}$.

The transformation \eqref{eq: def R crit} does not change the jump matrices on the other parts of $\Sigma_R$ in an essential way. Hence $J_R$ tends to the identity matrix on these parts as well, with a rate of convergence that is the same as that for $J_V$.

We have now achieved the goal of the steepest descent analysis. $R(z)$ tends to the identity matrix as $z \to \infty$ and the jump matrices for $R$ tend to the identity matrix as $n\to \infty$, both uniformly on $\Sigma_R$ and in $L^2(\Sigma_R)$. By standard arguments, see \cite{Dei} and in particular \cite{BK3} for the case of a moving contour, this leads to the conclusion of our steepest descent analysis
\begin{equation}\label{eq: conclusion steepest descent crit}
R(z)=I+\mathcal O\left( \frac{1}{n^{1/12}(1+|z|)}\right),
\end{equation}
as $n \to \infty$, uniformly for $z \in \C \setminus \Sigma_R$.

\subsection{Proof of Theorem \ref{th: triple scaling limit}}
\label{sec: proof of triple scaling limit}

%We start with the formula \eqref{eq: kernel in X} that expresses
%$K_n$ in terms of $X$. The idea is to write this expression in terms
%of $R$ instead of $Y$ by applying all respective transformations
%\[
%X \mapsto U \mapsto T \mapsto S \mapsto V \mapsto R,
%\] introduced in the steepest descent analysis. Meanwhile, we let $n$ tend to infinity, so that we can exploit the conclusion of the steepest descent analysis \eqref{eq: conclusion steepest descent crit}.
%
%Let $x,y\in \R$. First, unfolding the transformation $X \mapsto U$ described by \eqref{eq:def of U} and \eqref{eq: L crit} we obtain
%\[
%e^{n(y-x)}K_n(x,y)=y_+^{\nu/2}x_+^{-\nu/2}\frac{e^{n(\lambda_{2,+}(y)-\lambda_{1,+}(x))}}{2\pi i (x-y)}\begin{pmatrix} 0 & 1 & 0 & 0 \end{pmatrix} U_+^{-1}(y)U_+(x) \begin{pmatrix} 1\\0\\0\\0 \end{pmatrix}.
%\]
%where we also used \eqref{eq: ell crit} and the fact that $n \equiv 0 \mod 3$.
%From here we assume that $x,y>0$. Then the opening of global lenses $U \mapsto T$ in \eqref{def:UtoT 1} does not effect the above expression.
%If we assume $0<x,y<c$ the opening of the local lens $T \mapsto S$ in \eqref{def:TtoS 1}, however, does have impact
The idea of the proof is similar to that of the proof of Theorem \ref{th:
noncrit limit}, i.e.~we write the expression for the correlation kernel in
terms of $R$ instead of $Y$ by unfolding all transformations performed in the
steepest descent analysis. Since the first few transformations are the same as
in the regular cases, we see from the arguments in Section \ref{sec: proof
noncrit theorem} that if $0<x,y<c$,
\begin{multline} \label{eq: K in terms of S}
e^{n(y-x)}K_n(x,y) =\frac{y^{\nu/2}x^{-\nu/2}}{2\pi i (x-y)}\begin{pmatrix}
-e^{n\lambda_{1,+}(y)} & e^{n\lambda_{2,+}(y)} & 0 & 0 \end{pmatrix}
\\ \times S_+^{-1}(y)S_+(x) \begin{pmatrix} e^{-n\lambda_{1,+}(x)}&
e^{-n\lambda_{2,+}(x)}&0& 0 \end{pmatrix}^T.
\end{multline}
Moreover, for $x,y\in \D$, we have by \eqref{eq: def V} and \eqref{def:P}
\begin{multline*}
S_+(z)=\wtil R_+(z) S_+^{(0)}(z)=\wtil R_+(z)P_-^{(0)}(-z)
 J\diag (e^{\nu \pi i/2}, e^{-\nu \pi i/2}, e^{\nu \pi i/2}, e^{-\nu \pi
i/2}), \qquad z=x,y,
\end{multline*}
where $J$ is given in \eqref{def:J}.
It then follows that
\begin{multline*}
e^{n(y-x)}K_n(x,y) =\frac{y^{\nu/2}x^{-\nu/2}}{2\pi i (x-y)}\begin{pmatrix} 0 & 0 & e^{n\lambda_{2,+}(y)+\nu \pi i/2} & -e^{n\lambda_{1,+}(y)-\nu \pi i/2} \end{pmatrix}\\
\times \left(P_-^{(0)}(-y)\right)^{-1}\wtil R^{-1}(y)\wtil R(x)P_-^{(0)}(-x)
\begin{pmatrix} 0& 0& e^{-n\lambda_{2,+}(x)-\nu \pi i/2}&
e^{-n\lambda_{1,+}(x)+\nu \pi i/2}
\end{pmatrix}^T.
\end{multline*}
By \eqref{local:P:multicrit}, \eqref{eq: def f}, and \eqref{eq:
lambda tilde crit} this becomes
\begin{multline} \label{eq: K in terms of V}
e^{n(y-x)}K_n(x,y) = \frac{y^{\nu/2}x^{-\nu/2}}{2\pi i (x-y)}\begin{pmatrix} 0 & 0 & e^{\nu \pi i/2} & -e^{-\nu \pi i/2} \end{pmatrix} \\
\times N_-\left(n^{2/3}(-y)^{1/2}; r_{1,-}(-y),r_{2,-}(-y), n^{2/3}s(-y), n^{1/3}t(-y)\right)^{-1}  E_{n,-}^{-1}(-y)\wtil R^{-1}(y)\wtil R(x)E_{n,-}(-x) \\
\times   N_{-}\left(n^{2/3}(-x)^{1/2};r_{1,-}(-x),
r_{2,-}(-x),n^{2/3}s(-x),n^{1/3}t(-x)\right)
\begin{pmatrix} 0& 0& e^{-\nu \pi i/2}& e^{\nu \pi i/2}
\end{pmatrix}^T.
\end{multline}
Now we scale $x$ and $y$ with $n$ such that
\begin{equation}\label{xy:scaled}
x=\frac{u}{n^{4/3}} \qquad \textrm{ and } \qquad y=\frac{v}{n^{4/3}},
\end{equation}
where $u,v >0$. Then for large $n$, $x$ and $y$ belong to the disk $\D$, so
that \eqref{eq: K in terms of V} holds. We want to take the limit as $n \to
\infty$. Note that under these conditions
\[
\lim_{n \to \infty} e^{n(y-x)}=1,
\]
and by \eqref{eq:limitr0s0t0}
\begin{align*}
r_j(z) &\to 2, && j=1,2, \\
n^{2/3}s(z) &\to \tfrac12 (a^2-5b),  \\
n^{1/3}t(z) & \to 2a,
\end{align*}
as $n \to \infty$ and $z=x,y$.
Furthermore, it follows from \eqref{eq: conclusion steepest descent crit} and Cauchy's formula that
\begin{equation}\label{eq:cauchyonR}
R^{-1}(y)(R(y)-R(x))= \mathcal O \left(
\frac{x-y}{n^{1/12}}\right)=\mathcal O \left(n^{-17/12}\right),
\end{equation}
as $n\to \infty$ where the constant is uniform for $u,v$ in compact subsets of $\R$. Then also
\begin{align}\nonumber
\wtil R^{-1}(y)\wtil R(x)&=\left(I+Z(-y)+\frac{Z_0}{y}\right)R^{-1}(y)R(x)\left(I+Z(-x)+\frac{Z_0}{x}\right)^{-1} \\
           \label{proof:triple:On}  &=\left(I+Z(-y)+\frac{Z_0}{y}\right)R^{-1}(y)R(x)\left(I-Z(-x)-\frac{Z_0}{x}\right)  = I+\mathcal O(n^{-1}),
\end{align}
%as $n \to \infty$ where we used \eqref{eq: nilpotentency 2} to invert the rightmost matrix and the fact that $Z(z)-\frac{Z_0}{z}$ tends to a constant as $n\to %\infty$ for $z=x,y$ as this function is analytic.
as $n \to \infty$, where the constant is again uniform for $u,v$ in compact subsets of $\R$.
Here  we used \eqref{eq: nilpotentency 2} to invert the rightmost matrix. To prove the last equality in \eqref{proof:triple:On}, note that the matrix function
$Z(z)-\frac{Z_0}{z}$, which is analytic by Lemma~\ref{lemma:nilpotent}(a),
can be written as a power series in the variable $n^{1/3}z$ with coefficients having a limit for $n\to\infty$,
thanks to \eqref{eq: def Z} and \eqref{eq:limitr0s0t0}.
%Here we use that $n^{2/3}s(z)$, $n^{-1/3}t(z)$ are power series in the variable $n^{1/3}z$ with coefficients having a limit for $n\to\infty$,
%and similarly $r_1(z)$, $r_2(z)$ are power series in the variable $n^{1/6}z^{1/2}$ with coefficients having a limit for $n\to\infty$ (only the constant term has %nonzero limit).
Applying this with $z=x,y$ given in \eqref{xy:scaled} we get the claimed $\mathcal O(n^{-1})$ estimate in \eqref{proof:triple:On}.
In fact, the same reasoning yields the following more precise version %$\wtil R^{-1}(y)\wtil R(x)=I+O(n^{-1})$
of \eqref{proof:triple:On},
%can also be written in a more precise form,
% $\mathcal O(n^{-1})$ term in the right hand side of in \eqref{proof:triple:On} has either $\what P_+$ on the left or $\what Q_+$ on the right:
\begin{equation}\label{En:bound:1a}
\wtil R^{-1}(y)\wtil R(x) = I +  \what P_+\mathcal O(n^{-1}) + \mathcal O(n^{-1})\what Q_+ + o(n^{-1}),\qquad n\to\infty,
\end{equation}
where the matrices $\what P_+$, $\what Q_+$ originate from \eqref{eq: def Z}. (We could also write $\what P_-$, $\what Q_-$ instead.)
%for $n\to\infty$.
%We also see that the $\mathcal O(n^{-1})$ term is a sum of two contributions, one having a factor $P_+$ on the left and the other having
%$Q^+$ on the right.

Next we estimate the factor $E_n(z)$ given in \eqref{def:Ez}.
We claim that the transformed matrix \begin{equation}\label{En:tilde}
\wtil E_n(z) := E_n(z)\diag\left(\begin{pmatrix}1&i\\ i& 1\end{pmatrix},\begin{pmatrix}1&i\\
i& 1\end{pmatrix}\right)\diag(\zeta^{-1/4},\zeta^{1/4},\zeta^{-1/4},\zeta^{1/4}),\qquad \zeta=n^{4/3}z,
\end{equation}
is analytic near $z=0$. Indeed it has no jumps,
by virtue of Lemma~\ref{lemma:jump:E}, and moreover it behaves as $\mathcal O(z^{-3/4})$ as $z\to 0$ so there is no pole at $z=0$.

\begin{lemma} We have
\begin{equation}
\lim_{n \to \infty} \wtil E_{n}^{-1}(-y)\wtil R^{-1}(y)\wtil
R(x)\wtil E_{n}(-x)=I,
\end{equation}
uniformly for $u,v$ in compact subsets of $\er$.
\end{lemma}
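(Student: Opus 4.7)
The plan is to substitute the expansion \eqref{En:bound:1a} into the product $\wtil E_n^{-1}(-y)\wtil R^{-1}(y)\wtil R(x)\wtil E_n(-x)$ and estimate each of the resulting four terms, using two key ingredients: (i) uniform (in $n$) bounds on $\wtil E_n$ and $\wtil E_n^{-1}$ in a neighborhood of $0$, and (ii) the Taylor expansion of the analytic function $\wtil E_n$ at $0$, together with $x-y = (u-v)/n^{4/3}$.

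First, I would establish the needed control on $\wtil E_n$. Since $\wtil E_n(z)$ is analytic near $z=0$ (as noted just after \eqref{En:tilde}), and since all ingredients entering its definition \eqref{def:Ez}--\eqref{En:tilde} — namely $K$, $P^{(\infty)}$, $\wtil A^{-1}$ and the combined diagonal factor $\diag(n^{1/6}f^{1/4},\ldots)\,B\,D_\zeta(n^{4/3}z)$ — depend on $n$ only through the parameters $\alpha,\tau,\gamma$, which converge under the triple scaling \eqref{eq: scaling alpha tau}--\eqref{eq: scaling gamma}, the matrix-valued function $\wtil E_n$ converges to an invertible analytic limit $\wtil E_\infty$ on a fixed neighborhood of $0$. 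In particular there exist constants $C,c>0$, independent of $n$, such that
\[
\|\wtil E_n(z)\|\le C,\qquad \|\wtil E_n(z)^{-1}\|\le C,\qquad \|\wtil E_n'(z)\|\le C,
\]
for $|z|\le c$ and all sufficiently large $n$; the $O(n^{-4/3})$ ball around the origin, where $-x,-y$ live, certainly lies in this region.

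Next, I would substitute \eqref{En:bound:1a} and split:
\begin{align*}
\wtil E_n^{-1}(-y)\wtil R^{-1}(y)\wtil R(x)\wtil E_n(-x)
 &= \wtil E_n^{-1}(-y)\wtil E_n(-x)\\
 &\quad + \wtil E_n^{-1}(-y)\,\what P_+\,\mathcal O(n^{-1})\,\wtil E_n(-x)\\
 &\quad + \wtil E_n^{-1}(-y)\,\mathcal O(n^{-1})\,\what Q_+\,\wtil E_n(-x)\\
 &\quad + \wtil E_n^{-1}(-y)\,o(n^{-1})\,\wtil E_n(-x).
\end{align*}
For the first summand, analyticity and the bound on $\wtil E_n'$ give
\[
\wtil E_n(-x) = \wtil E_n(-y) + (y-x)\,\wtil E_n'(-y) + \mathcal O((x-y)^2),
\]
so $\wtil E_n^{-1}(-y)\wtil E_n(-x) = I + \mathcal O(|x-y|) = I + \mathcal O(n^{-4/3})$. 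For the second and third summands, all factors are $\mathcal O(1)$ uniformly in $n$ and $u,v$ in compact sets (recall that $\what P_+$ and $\what Q_+$ arise from \eqref{eq: asym P0 at zero crit} and are likewise bounded uniformly under the triple scaling), so these terms are $\mathcal O(n^{-1})$. The last summand is trivially $o(n^{-1})$. Adding everything yields $I + \mathcal O(n^{-1}) \to I$, with all estimates uniform for $u,v$ in compact subsets of $\R^+$.

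The main obstacle is the boundedness claim on $\wtil E_n$ and $\wtil E_n^{-1}$ at the scale $|z|\sim n^{-4/3}$: the individual factors in \eqref{def:Ez}--\eqref{En:tilde} are severely singular at $z=0$ (e.g.\ $P^{(\infty)}(z)\sim \what P_\pm z^{-3/8}$, while the diagonal factor contains $z^{\pm 1/8}$ and $\zeta^{\pm 1/4}$), and several carry different powers of $n$. These singularities and $n$-scalings are designed to cancel so as to produce an analytic limit, but verifying this cancellation cleanly requires combining $P^{(\infty)}$ with $\wtil A^{-1}\,D_n B\,D_\zeta$ entry by entry, grouping powers of $z$ with powers of $n$ through the identity $\zeta = n^{4/3}z$, and invoking the convergence of $\gamma$, $F(0), G(0), H(0)$ and $K(0)$ under the triple scaling limit. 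Once that bookkeeping is done, the rest of the argument is a routine Taylor-plus-triangle-inequality estimate.
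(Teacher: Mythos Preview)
Your boundedness claim on $\wtil E_n$ is false, and without it the argument collapses. The point where you go wrong is the assertion that $\wtil E_n$ ``depends on $n$ only through the parameters $\alpha,\tau,\gamma$'': in fact $E_n$ in \eqref{def:Ez} carries explicit factors $n^{\pm 1/6}$ and the right multiplication in \eqref{En:tilde} introduces $\zeta^{\pm 1/4}=n^{\pm 1/3}z^{\pm 1/4}$, sandwiched by a non-diagonal matrix. These do \emph{not} all cancel. A direct computation of, say, the $(1,2)$ entry of $D_1 B D_2$ (your notation) gives a factor $n^{1/2}z^{3/8}$, so at $z\sim n^{-4/3}$ one has $\wtil E_n(z)=\mathcal O(n^{1/2})$, confirmed by \eqref{eq: asym P0 at zero crit}: $P^{(\infty)}(z)\sim \what P_\pm z^{-3/8}$ contributes $n^{1/2}$ at that scale while all remaining factors are $\mathcal O(1)$. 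Likewise $\wtil E_n^{-1}(z)=\mathcal O(n^{1/2})$.

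With the correct growth, your second and third summands become $\mathcal O(n^{1/2})\cdot\mathcal O(n^{-1})\cdot\mathcal O(n^{1/2})=\mathcal O(1)$, which does not tend to zero. The missing idea is the \emph{structure} of the blow-up, not just its size: the paper shows $\wtil E_n(z)=\what P_+\,\mathcal O(n^{1/2})+o(n^{1/2})$ and $\wtil E_n^{-1}(z)=\mathcal O(n^{1/2})\,\what Q_+ + o(n^{1/2})$. Feeding \eqref{En:bound:1a} between these produces leading terms of the form $\mathcal O(n^{1/2})\,\what Q_+\what P_+\,\mathcal O(n^{-1})\,\what P_+\,\mathcal O(n^{1/2})$ and its companion, both of which vanish because $\what Q_+\what P_+=0$ by \eqref{PQ:orthogonal}; the remainder is $o(1)$. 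Your first-term estimate survives (with the weaker rate $\mathcal O((x-y)n)=\mathcal O(n^{-1/3})$), but the cross terms genuinely require this rank-one cancellation, which your proposal never invokes.
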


\begin{proof} In the proof below all the $\mathcal O$ and $o$ terms will be uniform for $u,v$
in compact subsets of $\er$.
We start by writing
\begin{multline}\label{En:bound:3}  \wtil E_{n}^{-1}(-y)\wtil R^{-1}(y)\wtil
R(x)\wtil E_{n}(-x) = \wtil E_{n}^{-1}(-y)\wtil E_{n}(-x)\\ + \wtil E_{n}^{-1}(-y)(\wtil R^{-1}(y)\wtil
R(x)-I)\wtil E_{n}(-x).
\end{multline}
Let us estimate the first term in the right hand side.
We have $\wtil E_n(z)^{\pm 1}=\mathcal O\left(n^{1/2} \right)$, $z=x,y$,  as $n\to \infty$, which is a
special case of \eqref{En:bound:1} below. Then by the analyticity of $\wtil E_n(z)$ we obtain
\begin{equation}\label{En:bound:2}
\wtil E_{n}^{-1}(y)(\wtil E_n(y)-\wtil E_{n}(x))  = \mathcal O \left( (x-y)n\right)=\mathcal O \left( n^{-1/3} \right),
\end{equation}
as $n \to \infty$.
Consequently the first term in the right hand side of \eqref{En:bound:3} goes to the identity matrix for $n\to\infty$.

Next we estimate the second term in the right hand side of \eqref{En:bound:3}.
On account of \eqref{xy:scaled}, \eqref{def:Ez} and \eqref{eq: asym P0 at zero crit} we have
%We already used above
%that $\wtil E_n(z)^{\pm 1}=\mathcal O\left(n^{1/2} \right)$, $z=x,y$,  as $n\to \infty$.
%In fact we have the more precise estimate
\begin{equation}\label{En:bound:1} \wtil E_n(z) = \what P_+ \mathcal O\left(n^{1/2}\right) + o\left(n^{1/2}\right),\qquad
\wtil E_n(z)^{-1} = \mathcal O\left(n^{1/2}\right)\what  Q_+  + o\left(n^{1/2}\right),\qquad z=x,y, \end{equation}
as $n\to \infty$.
From \eqref{En:bound:1} and \eqref{En:bound:1a} we see that
\begin{multline*}  \wtil E_{n}^{-1}(-y)(\wtil R^{-1}(y)\wtil
R(x)-I)\wtil E_{n}(-x) \\ =  \mathcal O\left(n^{1/2}\right)\left(\what  Q_+ \what P_+\mathcal O(n^{-1})\what P_+  + \what  Q_+ \mathcal O(n^{-1})\what Q_+\what P_+ \right) \mathcal O\left(n^{1/2}\right)+ o(1) =  o(1),
\end{multline*}
for $n\to\infty$, where the second equality follows from the
orthogonality relation $\what  Q_+ \what P_+=0$ in \eqref{PQ:orthogonal}. Hence
 the second
term in the right hand side of \eqref{En:bound:3} goes to zero for $n\to\infty$.
\end{proof}

To use the above lemma, we should first express the matrix $E_n$ in \eqref{eq: K in terms of V}
in terms of its transformed counterpart $\wtil E_n$ in \eqref{En:tilde}.
This substitution releases an extra factor which multiplies from the left the matrix $N$ in \eqref{eq: K in terms of V}.
By combining this with the above estimates we find
\begin{multline}\label{proof:crit:1}
\lim_{n \to \infty}\frac{1}{n^{4/3}}K_n\left(\frac{u}{n^{4/3}},\frac{v}{n^{4/3}}\right) =\frac{u^{-\nu/2}v^{\nu/2}}{2\pi i (u-v)}\begin{pmatrix} 0 & 0 & e^{\nu \pi i/2} & -e^{-\nu \pi i/2} \end{pmatrix}\\
\times \what N\left(-iv^{1/2};2,2,\tfrac12 (a^2-5b),2a\right)^{-1}\what N\left(-iu^{1/2};2,2,\tfrac12 (a^2-5b),2a\right)
\begin{pmatrix} 0& 0& e^{-\nu \pi i/2}& e^{\nu \pi i/2}
\end{pmatrix}^T,
\end{multline}
with
\begin{equation*}
\what N\left(-i\zeta^{1/2}\right) := \diag(\zeta^{1/4},\zeta^{-1/4},\zeta^{1/4},\zeta^{-1/4})
\diag\left(\begin{pmatrix}1&-i\\ -i& 1\end{pmatrix},\begin{pmatrix}1&-i\\
-i& 1\end{pmatrix}\right)N\left(-i\zeta^{1/2}\right),
\end{equation*}
for $\zeta=u,v$. Equivalently, by \eqref{eq: def N}, \eqref{symmetry:N} and \eqref{Mhat},
% \begin{multline*}\wtil N\left(-i\zeta^{1/2}\right) = \diag(-i\zeta^{1/4},\zeta^{-1/4},-\zeta^{1/4},-i\zeta^{-1/4})\diag\left(\begin{pmatrix}1& 1\\ %-1&1\end{pmatrix},\begin{pmatrix}1&-1\\
%1& 1\end{pmatrix}\right)M\left(i\zeta^{1/2}\right) \diag\left(i,1,\begin{pmatrix}0& i\\ 1&0\end{pmatrix}\right)
%\end{multline*}
$$ \what N\left(-i\zeta^{1/2}\right) = \diag(1,-i,i,1)\what M(\zeta)\diag\left(1,i,\begin{pmatrix}0& -1\\ -i&0\end{pmatrix}\right).
$$
Inserting this in \eqref{proof:crit:1} we get
\begin{multline*}
\lim_{n \to
\infty}\frac{1}{n^{4/3}}K_n\left(\frac{u}{n^{4/3}},\frac{v}{n^{4/3}}\right)
=\frac{u^{-\nu/2}v^{\nu/2}}{2\pi i (u-v)}\begin{pmatrix} 0 & 0 &
e^{-\nu \pi i/2} & ie^{\nu \pi i/2} \end{pmatrix}\\
 \times \what M\left(v;\tfrac12 (a^2-5b),2a,\nu \right)^{-1} \what M\left(u;\tfrac12 (a^2-5b),2a,\nu \right)
 \begin{pmatrix} 0& 0& -e^{\nu \pi i/2} & -ie^{-\nu \pi i/2}
 \end{pmatrix}^T,
\end{multline*}
where we recall that $\what M(z)=\what M\left(z;s,t,\nu\right)$ is defined in \eqref{Mhat}
with $M(\zeta)=M(\zeta;s,t, \nu)$ denoting the unique solution to RH
problem \ref{rhp:modelM} with parameters given in \eqref{eq:
parameters}. The above kernel is clearly equal to the right hand side of \eqref{kernel:crit:thm}.
This completes the proof of Theorem \ref{th: triple scaling limit}.

%To finish the proof it remains to show that
%\begin{multline*}
%\Kcr (u, v; \tfrac12 (a^2-5b),-2a,\nu) =\frac{1}{2\pi i
%(u-v)}\begin{pmatrix} 0 & 0 & e^{\nu \pi i/2} &
%-ie^{-\nu \pi i/2} \end{pmatrix}\\
% \times M\left(-iv^{1/2};\tfrac12 (a^2-5b),2a,\nu \right)^{-1} M\left(-iu^{1/2};\tfrac12 (a^2-5b),2a,\nu \right)
% \begin{pmatrix} 0& 0& e^{-\nu \pi i/2}& -ie^{\nu \pi i/2}
% \end{pmatrix}^T,
%\end{multline*}
%where $\Kcr (u, v; s,t,\nu)$ is defined in \eqref{eq: critical kernel}. This
%can be checked directly with the aid of the following symmetry properties of
%$M$ established in \cite[Lemma 5.1]{Del3}
%\begin{align*}
%M(-\zeta;r_1,r_2,s,t)&=
%\begin{pmatrix}
%0 & 1 & 0 & 0 \\
%1 & 0 & 0 & 0 \\
%0 & 0 & 0 & -1 \\
%0 & 0 & -1 & 0
%\end{pmatrix}M(\zeta;r_2,r_1,s,t)\begin{pmatrix}
%0 & 1 & 0 & 0 \\
%1 & 0 & 0 & 0 \\
%0 & 0 & 0 & -1 \\
%0 & 0 & -1 & 0
%\end{pmatrix}, \\
%M(\zeta;r_1,r_2,s,t)^{-T}&=
%\begin{pmatrix}
%0 & 0 & 0 & 1 \\
%0 & 0 & 1 & 0 \\
%0 & -1 & 0 & 0 \\
%-1 & 0 & 0 & 0
%\end{pmatrix}M(\zeta;r_1,r_2,s,-t)\begin{pmatrix}
%0 & 0 & 0 & -1 \\
%0 & 0 & -1 & 0 \\
%0 & 1 & 0 & 0 \\
%1 & 0 & 0 & 0
%\end{pmatrix}.
%\end{align*}
%This completes the proof of Theorem \ref{th: triple scaling limit}.

\section*{Acknowledgment}
We thank Gernot Akemann and Arno Kuijlaars for their helpful
comments. SD and LZ are Postdoctoral Fellows of the Fund for
Scientific Research - Flanders (FWO), Belgium. DG is a Research
Assistant of the Fund for Scientific Research - Flanders (FWO),
Belgium.

\end{document}